\documentclass[12pt]{article}

\usepackage[top=2.5cm, bottom=3.5cm, left=2.5cm, right=2.5cm]{geometry}
\usepackage{setspace, dsfont}
\usepackage{amssymb, amsmath, mathrsfs, amsthm, stmaryrd, upgreek, mathtools}
\usepackage{pgfplots, caption, subfigure, graphicx}
\usepackage{ifthen}
\usepackage{url, hyperref, cite}
\usepackage{titlesec}
\usepackage{MyMnSymbol}
%\usepackage{tikz,pgfplots}
%\usetikzlibrary{matrix,patterns, calc, decorations.fractals, arrows, external, pgfplots.colormaps, 3d, fadings}
%\usepgfplotslibrary{fillbetween}
%\pgfplotsset{compat=1.9}
\usepackage{atbegshi}
\AtBeginDocument{\AtBeginShipoutNext{\AtBeginShipoutDiscard}}

% true to get colors, false otherwise
\newcommand{\versionColor}{true}

% true if PDF-figures already generated, false otherwise
\newcommand{\pdfFigs}{true}

% yes to recompile PDF-figures, no to use tikz normally (to recompile figures: Use command line ' pdflatex -shell-escape SuperresolutionJournal.tex ')
\newcommand{\generateFigs}{no}

\ifthenelse{\equal{\generateFigs}{yes}}{
	\renewcommand{\pdfFigs}{false}
	\usepgfplotslibrary{external}
	\tikzexternalize[prefix=figs/]
}{}

\hypersetup{
	bookmarks = true,
	pdffitwindow = true,
	colorlinks = true,
	linkcolor = black,
	urlcolor = black,
	citecolor = black,
	pdftitle = {Super-Resolution from Short-Time Fourier Transform Measurements},
	pdfauthor = {Celine Aubel, David Stotz, Helmut Boelcskei},
	pdfsubject = {Journal version},
	pdfkeywords = {Super-resolution, inverse problems in measure spaces, short-time Fourier transform},
	pdfcreator = {LaTeX}
}

\linespread{1.15} 

\AtBeginDocument{
  \addtolength\abovedisplayskip{-2.7pt}
  \addtolength\belowdisplayskip{-2.7pt}
}

% !TEX root = SuperresolutionJournal.tex

% Theorems
\newtheorem{thm}{Theorem}
\newtheorem{cor}[thm]{Corollary}   % Turned off theorem numbering
\newtheorem{prop}{Proposition}
\newtheorem{defn}{Definition}

\newtheorem{lem}[thm]{Lemma}

% Mathematical functions

\DeclareMathOperator{\sinc}{sinc}
\DeclareMathOperator{\real}{Re}
\DeclareMathOperator{\imag}{Im}
\renewcommand{\Re}[1]{\real\left\{#1\right\}}
\renewcommand{\Im}[1]{\imag\left\{#1\right\}}
\DeclareMathOperator{\supp}{supp}
\DeclareMathOperator{\cont}{cont}
\DeclareMathOperator{\dom}{dom}
\newcommand{\abs}[1]{\left|#1\right|}
\DeclareMathOperator*{\argmin}{arg\,min}

\DeclareMathOperator{\erf}{erf}
\DeclareMathOperator{\tr}{Tr}

% Triangleq
\renewcommand{\triangleq}{\vcentcolon=}
\newcommand{\reversetriangleq}{=\vcentcolon}

% Minimization problems
\DeclareMathOperator*{\minimize}{minimize\ }
\DeclareMathOperator*{\maximize}{maximize\ }
\DeclareMathOperator{\st}{\ subject\ to\ }
\DeclareMathOperator{\fourierProb}{(SR)}
\DeclareMathOperator{\dualFourierProb}{(D-SR)}
\DeclareMathOperator{\gaborProb}{(STFT-SR)}
\DeclareMathOperator{\gaborProbg}{STFT-SR}
\newcommand{\gaborProbN}{(\gaborProbg_N)}
\DeclareMathOperator{\dualGaborProb}{(PD-STFT-SR)}
\DeclareMathOperator{\dualGaborProbg}{PD-STFT-SR}
\newcommand{\dualGaborProbN}{(\dualGaborProbg_N)}

% Sets
\newcommand{\C}{\mathbb{C}}
\newcommand{\R}{\mathbb{R}}
\newcommand{\Z}{\mathbb{Z}}
\newcommand{\N}{\mathbb{N}}
\newcommand{\torus}{\mathbb{T}}

% Inequalities
\renewcommand{\leq}{\leqslant}
\renewcommand{\geq}{\geqslant}

% Measures
\newcommand{\meas}{\mu}
\newcommand{\measProb}{\nu}
\newcommand{\measOpt}{\measProb_0}
\newcommand{\measErr}{h}
\newcommand{\measSpace}[1]{\mathcal{M}(#1)}
\newcommand{\group}{G}
\newcommand{\dualGroup}{{\widehat{G}}}
\newcommand{\freq}{f}
\newcommand{\borelAlgebra}[1]{\mathcal{B}(#1)}
\newcommand{\measSupport}{T}
\newcommand{\measSupportIdx}{\Omega}
\newcommand{\measAmplitude}[1]{a_{#1}}
\newcommand{\measSpikeIdx}{\ell}
\newcommand{\measSpikeIdxBis}{m}
\newcommand{\dmeas}[2]{#1\langle#2\rangle}

% Norms and inner products
\newcommand{\innerProd}[2]{\left\langle #1, #2 \right\rangle}
\DeclareMathOperator*{\tv}{TV}
\newcommand{\normTV}[1]{\left\|#1\right\|_{\tv}}
\newcommand{\normOp}[1]{\left\|#1\right\|}
\newcommand{\normInfty}[1]{\left\|#1\right\|_{\infty}}

\newcommand{\normLtwo}[1]{\left\|#1\right\|_{L^2}}
\newcommand{\normLone}[1]{\left\|#1\right\|_{L^1}}
\newcommand{\normLinfty}[1]{\left\|#1\right\|_{L^\infty}}
\newcommand{\normlinfty}[1]{\left\|#1\right\|_{\ell^\infty}}

% Measurements and Fourier/Gabor operators
\newcommand{\measurements}{y}
\newcommand{\gaborOp}{\mathcal{A}_\window}
\newcommand{\window}{g}
\newcommand{\autocorrelation}{R}
\newcommand{\STFT}{\mathcal{V}_\window}
\newcommand{\schwartzSpace}[1]{\mathcal{S}(#1)}

\begin{document}

\title{A Theory of Super-Resolution from \\ Short-Time Fourier Transform Measurements}

\author{{\normalsize C\'eline Aubel, David Stotz, and Helmut B\"olcskei} \\[0.1cm]
{\normalsize Dept.~IT~\&~EE, ETH Zurich, Switzerland} \\[-0.2cm]
{\normalsize Email: \{aubelc, dstotz, boelcskei\}@nari.ee.ethz.ch}
\thanks{Part of the material in this paper was presented at the IEEE International Conference on Acoustics, Speech, and Signal Processing (ICASSP), Florence, Italy, May 2014 \cite{Aubel2014}.}
}

\date{}

\renewcommand\footnotemark{}
 
\maketitle

% !TEX root = SuperresolutionJournal.tex

%%%%%%%%%%%%%%%%%%%%%%%%%%%%%%%
% Abstract
%%%%%%%%%%%%%%%%%%%%%%%%%%%%%%%
\begin{abstract}
	While spike trains are obviously not band-limited, the theory of super-resolution tells us that perfect recovery of unknown spike locations and weights from low-pass Fourier transform measurements is possible provided that the minimum spacing, $\Delta$, between spikes is not too small. Specifically, for a measurement cutoff frequency of $\freq_c$, Donoho~\cite{Donoho1991} showed that exact recovery is possible if the spikes (on $\R$) lie on a lattice and $\Delta > 1/\freq_c$, but does not specify a corresponding recovery method.
Cand\`es and Fernandez-Granda~\cite{Candes2012, FernandezGranda2015} provide a convex programming method for the recovery of periodic spike trains (i.e., spike trains on the torus $\torus$), which succeeds provably if $\Delta > 2/\freq_c$ and $\freq_c \geq 128$ or if $\Delta > 1.26/\freq_c$ and $\freq_c \geq 10^3$, and does not need the spikes within the fundamental period to lie on a lattice. 
In this paper, we develop a theory of super-resolution from short-time Fourier transform (STFT) measurements. Specifically, we present a recovery method similar in spirit to the one in~\cite{Candes2012} for pure Fourier measurements. For a STFT Gaussian window function of width $\sigma = 1/(4\freq_c)$ this method succeeds provably if $\Delta > 1/\freq_c$, without restrictions on $\freq_c$. Our theory is based on a measure-theoretic formulation of the recovery problem, which leads to considerable generality in the sense of the results being grid-free and applying to spike trains on both $\R$ and $\torus$. The case of spike trains on $\R$ comes with significant technical challenges. For recovery of spike trains on $\torus$ we prove that the correct solution can be approximated---in weak-* topology---by solving a sequence of finite-dimensional convex programming problems.
\end{abstract}

\bigskip

\noindent \textbf{Keywords}~ Super-resolution, sparsity, inverse problems in measure spaces, short-time Fourier transform

\medskip

\noindent \textbf{Mathematics Subject Classification}~ 28A33 $\cdot$ 46E27 $\cdot$ 46N10 $\cdot$ 42B10 $\cdot$ 32A10 $\cdot$ 46F05

\medskip
% !TEX root = SuperresolutionJournal.tex

%%%%%%%%%%%%%%%%%%%%%%%%%%%%%%%
% Introduction
%%%%%%%%%%%%%%%%%%%%%%%%%%%%%%%
\section{Introduction}
\label{section: introduction}

The recovery of spike trains with unknown spike locations and weights from low-pass Fourier measurements, commonly referred to as super-resolution, has been a topic of long-standing interest~\cite{Logan1965, Logan1977, Beurling1966, Beurling1989-1, Beurling1989-2, Donoho1992, Vetterli2002, Dragotti2007}, with recent focus on $\ell^1$-minimization-based recovery techniques~\cite{Tang2012, Candes2012}. 
It was recognized in~\cite{Donoho1991, Bredies2012, Castro2012, Candes2012, Duval2013} that a measure-theoretic formulation of the super-resolution problem in continuous time not only leads to a clean mathematical framework, but also to results that are ``grid-free''~\cite{Tang2012}, that is, the spike locations are not required to lie on a lattice. This theory is inspired by Beurling's seminal work on the ``balayage of measures'' in Fourier transforms~\cite{Beurling1966, Beurling1989-1} and on interpolation using entire functions of exponential type~\cite{Beurling1989-2}. Specifically, de Castro and Gamboa~\cite{Castro2012} and Cand\`es and Fernandez-Granda~\cite{Candes2012} propose to recover a periodic discrete measure (modeling the spike train), that is, a measure on the torus $\torus$, from low-pass Fourier measurements by solving a total variation minimization problem. Despite its infinite-dimensional nature this optimization problem can be solved explicitly, as described in~\cite{Candes2012, Duval2013, Bredies2012}. Concretely, it is shown in~\cite{Castro2012, Candes2012} that the analysis of the Fenchel dual problem leads to an interpolation problem, which can be solved explicitly provided that the elements in the support set of the discrete measure to be recovered are separated by at least $2/\freq_c$, where $\freq_c$ is the cutoff frequency of the low-pass measurements, and $\freq_c \geq 128$. More recently, Fernandez-Granda~\cite{FernandezGranda2015} improved the minimum distance condition to $\Delta > 1.26/\freq_c$, but had to impose the additional constraint $\freq_c \geq 10^3$. Donoho~\cite{Donoho1991} considers the recovery of a spike train on $\R$ and proves that a separation of $1/\freq_c$ is sufficient for perfect recovery provided that the spikes lie on a lattice; a concrete method for reconstructing the measure is, however, not provided. In a different context, Kahane~\cite{Kahane2011} showed that recovery of spike trains on $\R$ under a lattice constraint can be accomplished by solving an infinite-dimensional minimal extrapolation problem, provided that the minimum separation between spikes is at least $\frac{5}{\freq_c}\sqrt{\log(1/\freq_c)}$.
In \cite{Dragotti2007, Vetterli2002} the recovery of periodic spike trains is considered in the context of sampling of signals with finite rate of innovation. The main result in \cite{Vetterli2002} states that a periodic spike train with $K$ spikes per period can be recovered from $2K+1$ Fourier series coefficients without imposing a minimum separation condition. The corresponding recovery algorithm falls into the category of subspace-based methods such as, e.g., MUltiple SIgnal Classification (MUSIC)~\cite{Schmidt1986} and Estimation of Signal Parameters via Rotational Invariance Techniques (ESPRIT)~\cite{Roy1986}, algorithms widely used for direction-of-arrival estimation in array processing.

\subsection{Contributions} In practical applications signals are often partitioned into short time segments and windowed for acquisition (as done, e.g., in spectrum analyzers) so that one has access to windowed Fourier transform, i.e., short-time Fourier transform (STFT), measurements only. Moreover, the frequency characteristics of the spike train to be recovered often vary over time, i.e., the spike locations can be more packed in certain time intervals, as illustrated in Fig.~\ref{fig: example spike to recover}. Time-localized spectral information, as provided by the STFT, can therefore be expected to lead to improved reconstruction quality for the same measurement band-limitation. 
This motivates the development of a theory of super-resolution from STFT measurements, which is the goal of the present paper. Inspired by~\cite{Donoho1991, Bredies2012, Castro2012, Candes2012, Duval2013, AuYeung2015}, we consider the continuous time case and employ a measure-theoretic formulation of the recovery problem. Our main result shows that exact recovery through total variation minimization is possible, for a Gaussian STFT window function of width $\sigma = 1/(4\freq_c)$, provided that the minimum spacing between spikes, i.e., the elements in the support set of the discrete measure to be recovered, exceeds $1/\freq_c$. While our theory applies to general window functions---from the Schwartz-Bruhat space---that are extendable to entire functions, the recovery condition $\Delta > 1/\freq_c$ is obtained by particularizing to a Gaussian window function of width $\sigma = 1/(4\freq_c)$. Similar recovery thresholds can be obtained for other window functions and for Gaussian window functions of different widths, but this would require adapting the computational parts of our proofs, in particular Appendices~\ref{section: proof main result} and \ref{sec: proof torus theorem exact recovery}.
Our theory applies to spike trains on $\R$ and to periodic spike trains, i.e., spike trains on $\torus$, and we do not need to impose a lattice constraint on the spike locations. The case of general (i.e., without a lattice constraint) spike trains on $\R$ comes with significant technical challenges. For recovery of spike trains on $\torus$ we prove that the correct solution can be approximated---in weak-* topology---by solving a sequence of finite-dimensional convex programming problems. 
We finally present numerical results, which demonstrate an improvement for recovery from STFT measurements over recovery from pure Fourier measurements in the sense of the minimum spacing between spikes being allowed to be smaller for the same cutoff frequency. This improvement comes, however, at the cost of increased computational complexity owing to the redundancy in the STFT, which leads to an increased number of measurements and thereby a larger optimization problem size.

\pgfplotsset{
    dirac/.style={
        mark=triangle*,
        mark options={solid,scale=1},
        ycomb,
        scatter,
        visualization depends on={y/abs(y)-1 \as \sign},
        scatter/@pre marker code/.code={\scope[rotate=90*\sign,yshift=-2pt]}
    }
}

\begin{figure}
	\centering
	\ifthenelse{\equal{\versionColor}{true}}{
		\ifthenelse{\equal{\pdfFigs}{true}}{
			\includegraphics[width = 0.85\textwidth]{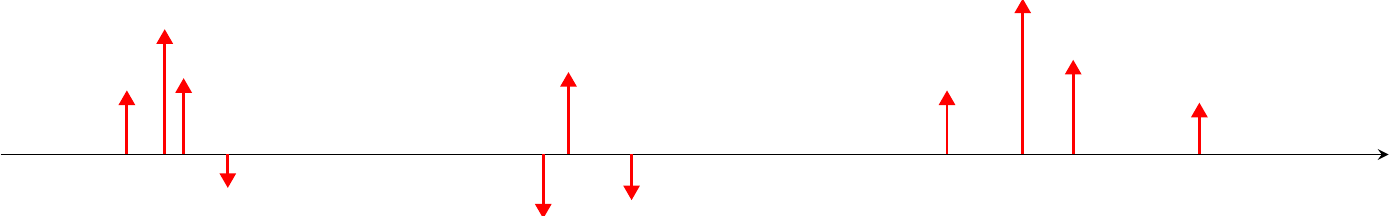}
		}{
            		\ifthenelse{\equal{\generateFigs}{yes}}{\tikzsetnextfilename{exampleSpikeTrainColor}}{}
                    	\begin{tikzpicture}
                    		\tikzstyle{every node}=[font=\footnotesize]
                    		\begin{axis}[
                    			axis x line = middle, axis y line = none,
                    			xmin = -0.05,xmax = 1.05,
                    			ymin = -1, ymax = 3.5, 
                    			grid = none, 
                    			height = 2.8cm, width = 0.85\textwidth, 
                    			scale only axis, 
                    			enlarge x limits = false, enlarge y limits = false,
                    			ytick = \empty, 
                    			xtick = \empty, 
                    			legend columns = 2, legend cell align  = left, legend pos = north west,
                    			]
                    			\addplot+[thick, dirac, red] file {data/exampleMeasure/measure.dat};
                    		\end{axis}
                    	\end{tikzpicture}
		}
	}{
		\ifthenelse{\equal{\pdfFigs}{true}}{
			\includegraphics[width = 0.85\textwidth]{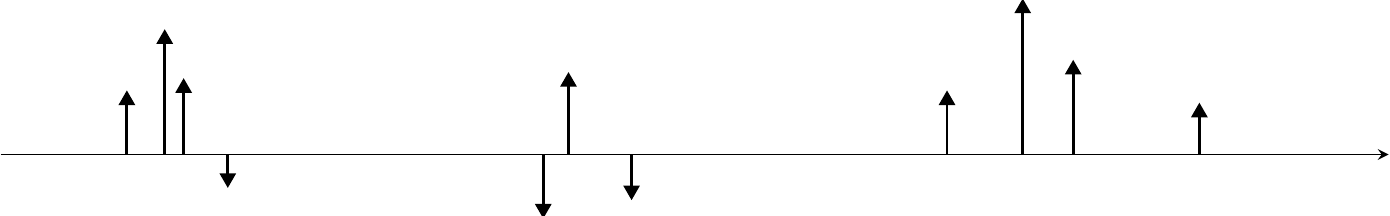}
		}{
            		\ifthenelse{\equal{\generateFigs}{yes}}{\tikzsetnextfilename{exampleSpikeTrainBW}}{}
                        	\begin{tikzpicture}
                        		\tikzstyle{every node}=[font=\footnotesize]
                        		\begin{axis}[
                        			axis x line = middle, axis y line = none,
                        			xmin = -0.05,xmax = 1.05,
                        			ymin = -1, ymax = 3.5, 
                        			grid = none, 
                        			height = 2.8cm, width = 0.85\textwidth, 
                        			scale only axis, 
                        			enlarge x limits = false, enlarge y limits = false,
                        			ytick = \empty, 
                        			xtick = \empty, 
                        			legend columns = 2, legend cell align  = left, legend pos = north west,
                        			]
                        			\addplot+[thick, dirac, black] file {data/exampleMeasure/measure.dat};
                        		\end{axis}
                        	\end{tikzpicture}
		}
	}
	\caption{Example of a spike train. The spikes are more densely packed in certain intervals than in others.}
	\label{fig: example spike to recover}
\end{figure}

% !TEX root = SuperresolutionJournal.tex

\subsection{Notation and preparatory material} The complex conjugate of $z \in \C$ is denoted by $\overline{z}$. The first and second derivatives of the function $\varphi$ are designated by $\varphi'$ and $\varphi''$, respectively, for $n \in \N\!\setminus\! \{0, 1, 2\}$, its $n$th derivative is written as $\varphi^{(n)}$. The $\sinc$ function is defined as $\sinc(t) \triangleq \sin(t)/t$, $t \neq 0$, and $\sinc(0) \triangleq 1$. $\delta_{0, \ell}$ is the Kronecker delta, i.e., $\delta_{0, \ell} = 1$ for $\ell = 0$ and $\delta_{0, \ell} = 0$, else.
Uppercase boldface letters stand for matrices. The entry in the $k$th row and $\ell$th column of the matrix $\mathbf{M}$ is $m_{k,\ell}$. The superscript $^H$ denotes Hermitian transposition. 
We define the real inner product of the matrices $\mathbf{X}, \mathbf{Y} \in \C^{M \times N}$ as $\innerProd{\mathbf{X}}{\mathbf{Y}} \triangleq \Re{\tr(\mathbf{Y}^H\mathbf{X})}$. 
For a finite or countable set $\Omega$, $\ell^\infty(\Omega)$ denotes the space of bounded sequences $\alpha = \{\alpha_\ell\}_{\ell \in \Omega}$, with corresponding norm $\normlinfty{\alpha} = \sup_{\ell \in \Omega} \abs{\alpha_\ell}$.
Linear operators are designated by uppercase calligraphic letters. For $u \in \R$, $\mathcal{T}_u$ denotes the translation operator, i.e., $(\mathcal{T}_u\varphi)(t) \triangleq \varphi(t - u)$, for all $t \in \R$. For $\freq \in \R$, $\mathcal{M}_\freq$ is the modulation operator, i.e., $(\mathcal{M}_\freq\varphi)(t) \triangleq \varphi(t)e^{2\pi i\freq t}$, for all $t \in \R$.
Let $X$ and $Y$ be topological vector spaces, and $X^*$ and $Y^*$ their topological duals. The adjoint of the linear operator $\mathcal{L} \colon X \rightarrow Y$ is denoted by $\mathcal{L}^* \colon Y^* \rightarrow X^*$. For an extended-valued function $\varphi \colon X \rightarrow \R \cup \{\infty\}$, we use $\dom \varphi$ to denote its domain, i.e., the subset of $X$ where $\varphi$ takes finite value, and $\cont \varphi$ stands for the subset of $X$ where $\varphi$ takes finite value and is continuous. The function $\varphi \colon X \rightarrow \R \cup \{\infty\}$ is lower semi-continuous if, for every $\alpha \in \R$, the set $\{x \in X \colon \varphi(x) \leq \alpha\}$ is closed. Let $\innerProd{x}{x^*}$ be a dual pairing between $x \in X$ and $x^* \in X^*$. If $\varphi \colon X \rightarrow \R \cup \{\infty\}$ is a convex function, its Fenchel convex conjugate is the function $\varphi^* \colon X^* \rightarrow \R \cup \{\infty\}$ defined by $\varphi^*(x^*) \triangleq \sup_{x \in X} \left\{\innerProd{x}{x^*}  - \varphi(x) \right\}$, for all $x^* \in X^*$. If $\varphi$ is not identically equal to $\infty$ and if $x_0 \in X$, $\partial \varphi(x_0)$ denotes the subdifferential of $\varphi$ at the point $x_0$, i.e., $\partial \varphi(x_0) \triangleq \left\{x^* \in X^* \colon \varphi(x) \geq \varphi(x_0) + \innerProd{x - x_0}{x^*}, \text{ for all } x \in X\right\}$.
The set of all solutions of an optimization problem $(\mathrm{P})$ is denoted by $\mathrm{Sol}\{(\mathrm{P})\}$. 
For a measure space $(X, \Sigma, \meas)$ and a measurable function $\varphi \colon X \rightarrow \C$, the integral of $\varphi$ with respect to the measure $\meas$ is written as $\int_X \varphi(x)\dmeas{\meas}{x}$, where we set $\mathrm{d}x \triangleq \dmeas{\lambda}{x}$ if $\lambda$ is the Lebesgue measure. 
For $p \in [1, \infty)$, $L^p(X, \Sigma, \meas)$ denotes the space of measurable functions $\varphi \colon X \rightarrow \C$ such that $\|\varphi\|_{L^p} \triangleq \left(\int_X \abs{\varphi(x)}^p\dmeas{\meas}{x}\right)^{1/p} < \infty$. The space $L^\infty(X, \Sigma, \meas)$ contains all measurable functions $\varphi \colon X \rightarrow \C$ such that $\|\varphi\|_{L^\infty} \triangleq \inf\{C > 0 \colon \abs{\varphi(x)} \leq C, \text{ for }\meas\text{-almost all }x \in X\} < \infty$. 
For $\varphi \in L^p(X, \Sigma, \meas)$ and $\psi \in L^q(X, \Sigma, \meas)$ with $p, q \in [1, \infty]$ satisfying $1/p+1/q = 1$, we define the complex inner product $(\varphi|\psi) \triangleq \int_X \overline{\varphi(t)}\psi(t)\dmeas{\mu}{t}$ and the real inner product $\innerProd{\varphi}{\psi} \triangleq \Re{\int_X \overline{\varphi(x)}\psi(x)\dmeas{\meas}{x}}$.
For a separable locally compact Abelian topological group $\group$ (e.g., the additive group $\R$ or the torus $\torus \triangleq \R/\Z$, both endowed with the natural topology), we write $L^p(G)$ in the particular case where $\Sigma = \mathcal{B}(\group)$ is the Borel $\sigma$-algebra of $G$ and $\meas$ the Haar measure on $\group$. $\schwartzSpace{\group}$ is the space of Schwartz-Bruhat functions~\cite{Osborne1975}. We will not need the entire formalism of Schwartz-Bruhat functions, as we exclusively consider the cases $\group = \R$ and $\group = \torus$. Specifically, $\schwartzSpace{\R}$ is the space of Schwartz functions, i.e., functions $\varphi \colon \R \rightarrow \C$ that are infinitely often differentiable and satisfy $\sup_{t \in \R} \abs{t^m}\abs{\varphi^{(n)}(t)} < \infty$, for all $n, m \in \N$. $\schwartzSpace{\torus}$ is the space of infinitely often differentiable functions. 
A complex-valued bounded finitely additive measure $\meas$ on $\group$ is a function $\meas \colon \borelAlgebra{\group} \rightarrow \C$ such that for all disjoint finite collections $\{B_\ell\}_{\ell = 1}^L$ of sets in $\borelAlgebra{\group}$, 
\begin{equation}
	\meas\left(\bigcup_{\ell = 1}^{L} B_\ell\right) = \sum_{\ell = 1}^{L} \meas(B_\ell), \label{eq: def complex measure series}
\end{equation}
and for every $B \in \borelAlgebra{\group}$ we have $\abs{\meas(B)} < \infty$.
A complex-valued bounded finitely additive measure is said to be regular if for each $B \in \borelAlgebra{\group}$ and each $\varepsilon > 0$, there exists a closed set $F \subseteq B$ and an open set $G \supseteq B$ such that for every $C \in G \setminus F$, $\abs{\meas(C)} < \varepsilon$.
We denote the space of complex-valued regular bounded finitely additive measures on $\group$ by $\measSpace{\group}$.
For $t \in \group$, $\delta_t \in \measSpace{\group}$ designates the Dirac measure at $t$, defined as follows: for $B \in \borelAlgebra{\group}$, $\delta_t(B) = 1$, if $t \in B$, and $\delta_t(B) = 0$, else. The support $\supp(\meas)$ of $\meas \in \measSpace{\group}$ is the largest closed set $C \subseteq \group$ such that for every open set $B \in \borelAlgebra{\group}$ satisfying $B \cap C \neq \emptyset$, it holds that $\meas(B \cap C) \neq 0$.
We define the total variation (TV) of $\meas \in \measSpace{\group}$ as the measure $\abs{\meas}$ satisfying 
\begin{equation*}
	\forall B \in \borelAlgebra{\group}, \quad |\meas|(B) \triangleq \sup_{\pi \in \Pi(B)} \sum_{A \in \pi} |\meas(A)|,
\end{equation*}
where $\Pi(B)$ denotes the set of all finite disjoint partitions of $B$, i.e., the set of all disjoint finite collections $\{B_\ell\}_{\ell = 1}^L$ of sets in $\borelAlgebra{\R}$ such that $B = \bigcup_{\ell = 1}^L B_\ell$. Throughout the paper, we equip the space $\measSpace{\group}$ with the TV norm $\normTV{\meas} \triangleq |\meas|(\group)$.
By \cite[Thm.~IV.6.2]{Dunford1988}, $\measSpace{\group}$ is the dual of the space $C_b(\group)$ of complex-valued bounded continuous functions $\varphi$. $C_b(\group)$ is equipped with the supremum norm $\normInfty{\varphi} = \sup_{t \in \group} |\varphi(t)|$.
By analogy with the real inner product in $L^2(\group)$, we define the complex and real dual pairing of the measure $\meas \in \measSpace{\group}$ and the function $\varphi \in C_b(\group)$ as $(\varphi|\meas) \triangleq \int_\group \overline{\varphi(t)}\dmeas{\mu}{t}$ and $\innerProd{\varphi}{\meas} \triangleq \Re{\int_\group \overline{\varphi(t)}\dmeas{\meas}{t}}$, respectively.
We endow $\measSpace{\group}$ with the weak-* topology~\cite[Chap.~3]{Brezis2010}, i.e., the coarsest topology on $\measSpace{\group}$ for which every linear functional $\mathcal{L}_\varphi \colon \measSpace{\group} \rightarrow \R$ defined by $\meas \mapsto \mathcal{L}_\varphi(\meas) = \innerProd{\varphi}{\meas}$, with $\varphi \in C_b(\group)$, is continuous.
The Pontryagin dual group of $\group$ is denoted as $\dualGroup$, and the Fourier transform of $\mu \in \measSpace{\group}$ is the function $\hat{\mu} \colon \dualGroup \rightarrow \C$ defined by $\hat{\mu}(f) = \displaystyle\int_{\group} e^{-2\pi ift} \dmeas{\mu}{t}$, for all $f \in \dualGroup$. If $\dualGroup$ is equipped with the metric $\abs{\cdot}$, we denote by $B_r(\dualGroup)$ the ball in $\dualGroup$---with respect to the metric $\abs{\cdot}$---that is centered at $0$ and has radius $r$, i.e., $B_r(\dualGroup) \triangleq \{\freq \in \dualGroup \colon \abs{\freq} \leq r\}$.
% !TEX root = SuperresolutionJournal.tex

%%%%%%%%%%%%%%%%%%%%%%%%%%%%%%%
% Statement of the problem
%%%%%%%%%%%%%%%%%%%%%%%%%%%%%%%
\section{The problem statement}
\label{section: statement of the problem}

We first state the recovery problem considered and then discuss its relation to prior work.
Let $\group = \R$ or $\group = \torus$ (for $\group = \R$, we have $\dualGroup = \R$, and for $\group = \torus$, we get $\dualGroup = \Z$). We model the spike train with weight $\measAmplitude{\measSpikeIdx} \in \C\!\setminus\!\{0\}$ attached to the point $t_\measSpikeIdx \in \group$ by a measure in $\measSpace{\group}$ of the form
\begin{equation}
	\meas = \sum_{\measSpikeIdx \in \measSupportIdx} \measAmplitude{\measSpikeIdx} \delta_{t_\measSpikeIdx},
	\label{eq: original measure}
\end{equation}	
where $\measSupportIdx$ is a finite or countably infinite index set.
The measure is supported on the set $\measSupport \triangleq \{t_\measSpikeIdx\}_{\measSpikeIdx \in \measSupportIdx}$, assumed closed and uniformly discrete, i.e., there exists a $\delta > 0$ such that $\abs{t_\ell - t_{\ell'}} \geq \delta$, for all $\measSpikeIdx, \measSpikeIdx' \in \measSupportIdx$. We have $\normTV{\meas} = \sum_{\measSpikeIdx \in \measSupportIdx} \abs{\measAmplitude{\measSpikeIdx}}$. Moreover, since $\meas \in \measSpace{\group}$, we also have $\normTV{\meas} < \infty$.
Henceforth $\meas$ exclusively designates the measure defined in~\eqref{eq: original measure}.

Suppose we have measurements of $\meas$ in the time-frequency domain of the form
\begin{equation*}
	\measurements(\tau, \freq) = (\STFT\meas)(\tau, \freq), \quad \tau \in \group,\ \freq \in B_{\freq_c}(\dualGroup),
\end{equation*}
where $\freq_c$ is the cutoff frequency (when $\group = \torus$, the cutoff frequency $\freq_c$ becomes an integer, which we denote by $K$) and
\begin{equation}
	(\STFT\meas)(\tau, \freq) \triangleq \int_{\group} \overline{\window(t - \tau)}e^{-2\pi i\freq t}\dmeas{\meas}{t}
	\label{eq: STFT measure}
\end{equation}
denotes the STFT~\cite{Groechenig2000} of $\meas$ with respect to the window function $\window$ taken to be a Schwartz-Bruhat function which, in addition, is assumed to be extendable to an entire function (examples of such functions are the Gaussian function, Hermite functions, and the Fourier transform of smooth functions with compact support).

We are interested in recovering the unknown measure~$\meas$ from the STFT measurements $\measurements$ through the following optimization problem:
\begin{equation*}
	\gaborProb \quad \minimize_{\measProb \in \measSpace{\group}} \normTV{\measProb} \st \measurements = \gaborOp\measProb,
\end{equation*}
 where $\gaborOp \colon \measSpace{\group} \rightarrow L^1(\group \times \dualGroup)$ maps $\measProb \in \measSpace{\group}$ to the function $\rho \in L^1(\group \times \dualGroup)$ given by
\begin{equation}
	\forall (\tau, \freq) \in \group \times \dualGroup, \quad \rho(\tau, \freq) = \begin{cases} (\STFT\measProb)(\tau, f), & f \in B_{\freq_c}(\dualGroup) \\
		0, & \text{otherwise}. \end{cases}
	\label{eq: definition gabor operator}
\end{equation}
The idea of minimizing the TV norm to recover $\meas$ from $\measurements$ originates from the seminal work of Beurling~\cite{Beurling1989-2}, who studied so-called ``minimal extrapolation''. 
The reason for hoping that $\gaborProb$ delivers the correct solution lies in an observation made in~\cite{Beurling1989-2,Bredies2012,Candes2012,Duval2013,Tang2012}, which states that the TV norm $\normTV{\cdot}$ acts as an atomic-norm regularizer as a consequence of the extreme points of the unit ball $\{\measProb \in \measSpace{\group} \colon \normTV{\measProb} \leq 1\}$ being given by the Diracs $\delta_x$, $x \in \group$. Minimizing $\normTV{\cdot}$ therefore forces the solution to be discrete much in the same way as minimizing the $\ell^1$-norm for vectors in $\C^N$ enforces sparsity. For more details we refer to \cite{Bredies2012}, which contains a general and rigorous analysis of inverse problems in spaces of measures.

% !TEX root = SuperresolutionJournal.tex

%%%%%%%%%%%%%%%%%%%%%%%%%%%%%%%
% Previous work: Candes
%%%%%%%%%%%%%%%%%%%%%%%%%%%%%%%
\section{Relation to previous work}
\label{section: previous work}

Super-resolution theory dates back to the pioneering work by Logan~\cite{Logan1965, Logan1977} and by Donoho~\cite{Donoho1991}. Specifically, \cite{Donoho1991} considers recovery of the complex measure
\begin{equation}	
	\meas = \sum_{\measSpikeIdx \in \measSupportIdx} \measAmplitude{\measSpikeIdx} \delta_{t_\measSpikeIdx}
	\label{eq: previous work measure}
\end{equation}
in $\measSpace{\R}$, where $\measAmplitude{\measSpikeIdx} \in \C\!\setminus\!\{0\}$ and $t_\measSpikeIdx \in \alpha\Z$, for all $\measSpikeIdx \in \measSupportIdx$, with $\alpha > 0$, from pure Fourier measurements
\begin{equation*}
	y(\freq) = \hat{\meas}(\freq) \triangleq \int_\R e^{-2\pi i\freq t} \dmeas{\meas}{t} = \sum_{\measSpikeIdx \in \Z} \measAmplitude{\measSpikeIdx} e^{-2\pi i\freq t_\measSpikeIdx}, \qquad f \in [-\freq_c, \freq_c],
\end{equation*}
where $\freq_c$ is the measurement cutoff frequency. The main result in \cite{Donoho1991} is as follows. If the support $\supp(\meas) = \{t_\measSpikeIdx\}_{\measSpikeIdx \in \measSupportIdx}$ of $\meas$ has Beurling density
\begin{equation*}
	D^+(\supp(\meas)) \triangleq \limsup_{r \rightarrow \infty} \frac{n^+(\supp(\meas), r)}{r} < \freq_c,
\end{equation*}
where for $r > 0$, $n^+(\supp(\meas), r)$ denotes the largest number of points of $\supp(\meas)$ contained in the translates of $[0, r]$, then $\measurements$ uniquely characterizes $\meas$ among all discrete complex measures $\measProb \in \measSpace{\R}$ of support $\supp(\measProb) \subseteq \alpha\Z$ with Beurling density strictly less than $\freq_c$. 

In \cite{Candes2012} Cand\`es and Fernandez-Granda deal with the recovery of periodic spike trains of the form \eqref{eq: previous work measure}.
The resulting measure $\meas = \sum_{\measSpikeIdx = 1}^L \measAmplitude{\measSpikeIdx} \delta_{t_\measSpikeIdx}$, with $t_\measSpikeIdx \in \torus$, for all $\measSpikeIdx \in \{1, 2, \ldots, L\}$, is in $\measSpace{\torus}$ (i.e., $\group = \torus$), and the goal is to recover $\meas$ from the band-limited pure Fourier measurements
\begin{equation}
	y_k = \hat{\meas}(k), \qquad k \in \{-K, \ldots, K\},
	\label{eq: measurements in the case of Fourier measurements}
\end{equation}
where $K$ corresponds to the (integer) cutoff frequency. This is accomplished by solving the following optimization problem:
\begin{equation*}
	\fourierProb \quad \minimize_{\measProb \in \measSpace{\torus}} \normTV{\measProb} \st \mathbf{\measurements} = \mathcal{F}\measProb,
\end{equation*}
where $\mathbf{y} \triangleq \{y_k\}_{k = -K}^K \in \C^{2K+1}$ and $\mathcal{F} \colon \measSpace{\torus} \rightarrow \C^{2K+1}$ maps $\measProb \in \measSpace{\torus}$ to the vector $\mathbf{x} \triangleq \{x_k\}_{k = -K}^K \in \C^{2K+1}$ with $x_k \triangleq \hat{\measProb}(k)$.
The main result in \cite{Candes2012} establishes that $\meas$ is the unique solution of the problem $\fourierProb$ if the minimum wrap-around distance
\begin{equation*}
	\Delta \triangleq \min_{n \in \Z} \min_{\substack{1 \leq \measSpikeIdx, \measSpikeIdxBis \leq L \\ \measSpikeIdx \neq \measSpikeIdxBis}} \abs{t_\measSpikeIdx - t_\measSpikeIdxBis + n}
\end{equation*}
between points in $\measSupport = \{t_\measSpikeIdx\}_{\measSpikeIdx = 1}^L$ obeys $\Delta > 2/K$. 
Moreover, it is shown in \cite{Candes2012} that, under $\Delta > 2/K$, the support $T = \{t_\measSpikeIdx\}_{\measSpikeIdx = 1}^L$ of $\meas$ can be recovered by solving the dual problem
\begin{equation*}
	\dualFourierProb\qquad  \maximize_{\mathbf{c} \in \C^{2K+1}} \innerProd{\mathbf{c}}{\mathbf{\measurements}} \st \normInfty{\mathcal{F}^*\mathbf{c}} \leq 1
\end{equation*}
of $\fourierProb$, where the adjoint operator $\mathcal{F}^*$ is given by 
\begin{equation*}
	\forall \mathbf{c} \in \C^{2K+1}, \quad (\mathcal{F}^*\mathbf{c})(t) = \sum_{k = -K}^K c_k e^{2\pi ikt}.
\end{equation*}
The dual problem $\dualFourierProb$ is equivalent to a finite-dimensional semi-definite program. 
The $t_\measSpikeIdx$ are determined by locating the roots of a polynomial of finite degree on the unit circle, built from a solution of $\dualFourierProb$. More recently, Fernandez-Granda~\cite{FernandezGranda2015} improved the minimum distance condition to $\Delta > 1.26/\freq_c$, but had to impose the additional constraint $\freq_c \geq 10^3$.

In summary, while Donoho considers super-resolution of measures on $\R$ with possibly infinitely many $t_\measSpikeIdx$ to be recovered and finds that $\Delta > 1/\freq_c$ is sufficient for exact recovery, he imposes a lattice constraint on the locations of the $t_\measSpikeIdx$ and does not provide a concrete recovery method. Cand\`es and Fernandez-Granda~\cite{Candes2012, FernandezGranda2015} on the other hand provide a concrete recovery method which succeeds provably for $\Delta > 2/\freq_c$ and $\freq_c \geq 128$ or $\Delta > 1.26/\freq_c$ and $\freq_c \geq 10^3$, but is formulated for measures on $\torus$ only and hence needs the number of unknown locations $t_\measSpikeIdx$ to be finite. The main contribution of the present paper is to show that recovery from STFT measurements through convex programming succeeds for $\Delta > 1/\freq_c$, both for measures on $\R$ and $\torus$, without imposing a lattice constraint and without additional assumptions on $\freq_c$. The rigorous treatment of the case of measures on $\R$ comes with significant technical challenges as the number of atoms of the measure to be recovered can be infinite and no lattice constraint is imposed.

A polynomial root-finding algorithm for the recovery of the spike locations $t_\measSpikeIdx \in \torus$ was presented in the context of sampling of signals with finite rate of innovation~\cite{Vetterli2002}. Specifically, the algorithm proposed in \cite{Vetterli2002} can be cast as a subspace algorithm and does not need a minimum spacing constraint; corresponding performance results for the noisy case were reported in \cite{Blu2008, Barbotin2014}. 

Very recently \cite{Benedetto2016} considered a super-resolution problem similar to $\fourierProb$, aimed at the recovery of a complex Radon measure $\mu$ on $\torus^d$, $d \in \N \!\setminus\! \{0\}$, from its Fourier coefficients $\widehat{\mu}(k)$, for $k \in \Lambda$, where $\Lambda$ is an arbitrary finite subset of $\Z^d$. The main result of \cite{Benedetto2016} is an adaptation (to the torus) and a generalization to higher dimensions of Beurling's theorem on minimal extrapolation. On a conceptual level \cite{Benedetto2016} establishes a connection between Beurling's theory of minimal extrapolation~\cite[Thm.~2]{Beurling1989-2} and the work by Cand\`es and Fernandez-Granda on super-resolution~\cite{Candes2012, FernandezGranda2015}.

% !TEX root = SuperresolutionJournal.tex

%%%%%%%%%%%%%%%%%%%%%%%%%%%%%%%
% Reconstruction from complete measurements
%%%%%%%%%%%%%%%%%%%%%%%%%%%%%%%
\section{Reconstruction from full STFT measurements}
\label{section: reconstruction from complete measurements}

Before considering the problem of recovering $\meas$ from $\freq_c$-band-limited measurements $\measurements$, we need to convince ourselves that reconstruction is possible from STFT measurements with full frequency information, i.e., for $\freq_c = \infty$. 

It is well known that the STFT of a function $\varphi \in L^2(\R)$ with respect to the (nonzero) window function $\window$, defined as
\begin{equation}
	\forall (\tau, \freq) \in \R^2, \quad (\STFT\varphi)(\tau, \freq) \triangleq (\mathcal{M}_f\mathcal{T}_\tau\window|\varphi) = \int_\R \varphi(t)\overline{\window(t - \tau)}e^{-2\pi i\freq t}\mathrm{d}t,
	\label{eq: definition STFT for functions}
\end{equation}
uniquely determines $\varphi$ in the sense of $\STFT{\varphi} = 0$ implying $\varphi = 0$. The signal $\varphi \in L^2(\R)$ can be reconstructed from $\STFT{\varphi}$ via the following inversion formula~\cite{Groechenig2000}:
\begin{equation}
	\varphi = \frac{1}{\normLtwo{\window}^2}\int_\R\int_\R (\STFT\varphi)(\tau, \freq) \mathcal{M}_f\mathcal{T}_\tau\window\ \mathrm{d}\tau\mathrm{d}\freq.
	\label{eq: STFT inversion formula for L^2}
\end{equation}
Since for $\group = \torus$, $\dualGroup = \Z$, the STFT of a function $\varphi \in L^2(\torus)$ is given by
\begin{equation*}
	\forall (\tau, k) \in \torus \times \Z, \quad (\STFT\varphi)(\tau, k) \triangleq (\mathcal{M}_k\mathcal{T}_\tau\window|\varphi) = \int_0^1 \varphi(t)\overline{g(t - \tau)}e^{-2\pi ikt}\mathrm{d}t
\end{equation*}
with the corresponding reconstruction formula
\begin{equation*}
	\varphi =  \frac{1}{\normLtwo{\window}^2}\sum_{k \in \Z}\int_0^1 (\STFT\varphi)(\tau, k)\mathcal{M}_k\mathcal{T}_\tau\window\ \mathrm{d}\tau.
\end{equation*}

Gr\"ochenig~\cite{Groechenig2000} extended the classical definition of the STFT for functions in $L^2(\group)$ to tempered distributions. As measures in $\measSpace{\group}$ are special cases of tempered distributions, this extended definition applies to the present case as well. The remainder of this section is devoted to particularizing the STFT inversion formula \cite[Cor.~11.2.7]{Groechenig2000} to measures in $\measSpace{\group}$. In the process, we obtain a simple and explicit inversion formula for discrete measures, Proposition~\ref{prop: reconstruction of the measure in R}, whose proof will be presented in detail as it provides intuition for the proof of our main result in the case of incomplete measurements.

The STFT of a measure $\measProb \in \measSpace{\group}$ is obtained by replacing the complex inner product on $L^2(\group)$ in \eqref{eq: definition STFT for functions} by the complex dual pairing between measures $\measProb \in \measSpace{\group}$ and functions $\varphi \in C_b(\group)$, that is, $(\varphi|\measProb) \triangleq \int_\group \overline{\varphi(t)} \dmeas{\measProb}{t}$. 

\begin{defn}[STFT of measures]
	\label{defn: STFT of measures}
	The STFT of the measure $\measProb \in \measSpace{\group}$ with window function $\window \in \schwartzSpace{\group} \!\setminus\! \{0\}$ is defined as
	\begin{equation*}
		(\STFT\measProb)(\tau, \freq) \triangleq (\mathcal{M}_\freq\mathcal{T}_\tau\window|\measProb) = \int_\group \overline{\window(t - \tau)}e^{-2\pi i\freq t}\dmeas{\measProb}{t}.
	\end{equation*}
\end{defn}

The following result provides a formula for the reconstruction of $\measProb \in \measSpace{\group}$ from $\STFT\measProb$.

\begin{prop}[STFT inversion formula for measures, {\cite[Cor.~11.2.7]{Groechenig2000}}]
	\label{prop: general inversion formula}
	Let $\measProb \in \measSpace{\group}$ and $\window \in \schwartzSpace{\group} \!\setminus\! \{0\}$ be such that $\normLtwo{\window} = 1$. The quantities
	\begin{equation*}
		\begin{array}{lll}
		&\displaystyle \int_\R\int_\R (\STFT\measProb)(\tau, \freq) \mathcal{M}_\freq\mathcal{T}_\tau\window\ \mathrm{d}\tau\mathrm{d}\freq, & \text{for }\group = \R, \\
		&\displaystyle \sum_{k \in \Z}\int_0^1 (\STFT\measProb)(\tau, k) \mathcal{M}_k\mathcal{T}_\tau\window\  \mathrm{d}\tau, & \text{for }\group = \torus,
		\end{array}
	\end{equation*}
	define measures in $\measSpace{\R}$ and $\measSpace{\torus}$, respectively, in the sense that
	\begin{equation}
		\forall \varphi \in C_b(\group), \qquad (\varphi|\measProb) = \begin{cases} \displaystyle \int_\R\int_\R (\STFT\measProb)(\tau, \freq)(\varphi|\mathcal{M}_\freq\mathcal{T}_\tau\window) \mathrm{d}\tau\mathrm{d}\freq, & \qquad \group = \R \\
			\displaystyle\sum_{k \in \Z} \int_0^1 (\STFT\measProb)(\tau, \freq)(\varphi | \mathcal{M}_k\mathcal{T}_\tau\window) \mathrm{d}\tau, & \qquad \group = \torus.  \end{cases}
			\label{eq: proof injectivity reconstruction}
	\end{equation}
	Moreover,
	\begin{equation*}
		 \measProb = \begin{cases} \displaystyle\int_\R\int_\R (\STFT\measProb)(\tau, \freq) \mathcal{M}_\freq\mathcal{T}_\tau\window\ \mathrm{d}\tau\mathrm{d}\freq, & \qquad \group = \R \\
		\displaystyle\sum_{k \in \Z}\int_0^1 (\STFT\measProb)(\tau, k) \mathcal{M}_k\mathcal{T}_\tau\window\  \mathrm{d}\tau, & \qquad \group = \torus. \end{cases}
	\end{equation*}
	\end{prop}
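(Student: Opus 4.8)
The plan is to reduce the whole statement to establishing the pairing identity \eqref{eq: proof injectivity reconstruction} for \emph{every} $\varphi \in C_b(\group)$. Once this holds, the ``moreover'' conclusion $\measProb = \int_\R\!\int_\R (\STFT\measProb)(\tau,\freq)\,\mathcal{M}_\freq\mathcal{T}_\tau\window\,\mathrm{d}\tau\,\mathrm{d}\freq$ follows at once: by \cite[Thm.~IV.6.2]{Dunford1988} we have $\measSpace{\group} = C_b(\group)^*$, and two regular bounded measures that pair identically against all bounded continuous test functions must coincide. One route is to invoke Gr\"ochenig's distributional inversion formula \cite[Cor.~11.2.7]{Groechenig2000}, which yields \eqref{eq: proof injectivity reconstruction} for Schwartz test functions $\varphi \in \schwartzSpace{\group}$, and then to enlarge the admissible test-function class from $\schwartzSpace{\group}$ to $C_b(\group)$. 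I would instead carry out the computation directly, since it exposes the mechanism of the inversion and thereby foreshadows the band-limited argument of the main result.

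First I would substitute the definitions of $\STFT\measProb$ and of the dual pairing, using the elementary identity $(\varphi \mid \mathcal{M}_\freq\mathcal{T}_\tau\window) = \overline{(\STFT\varphi)(\tau,\freq)}$, so that the right-hand side of \eqref{eq: proof injectivity reconstruction} (in the case $\group=\R$) reads
\[
 \int_\R\!\int_\R \Big(\int_\R \overline{\window(s-\tau)}\,e^{-2\pi i \freq s}\,\dmeas{\measProb}{s}\Big)\Big(\int_\R \overline{\varphi(t)}\,\window(t-\tau)\,e^{2\pi i \freq t}\,\mathrm{d}t\Big)\,\mathrm{d}\tau\,\mathrm{d}\freq .
\]
The formal calculation is to integrate in $\freq$ first: writing $h_\tau(t) \triangleq \overline{\varphi(t)}\,\window(t-\tau)$, the $\freq$- and $t$-integrals collapse via $\int_\R e^{2\pi i \freq(t-s)}\,\mathrm{d}\freq = \delta(t-s)$ and Fourier inversion to $h_\tau(s) = \overline{\varphi(s)}\,\window(s-\tau)$. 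Substituting back and carrying out the $\tau$-integration leaves
\[
 \int_\R \overline{\varphi(s)}\Big(\int_\R \abs{\window(s-\tau)}^2\,\mathrm{d}\tau\Big)\dmeas{\measProb}{s} = \normLtwo{\window}^2 \int_\R \overline{\varphi(s)}\,\dmeas{\measProb}{s} = (\varphi \mid \measProb),
\]
the inner $\tau$-integral being $\normLtwo{\window}^2 = 1$ by translation invariance of the Lebesgue measure.

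The hard part is that none of these interchanges is licensed by absolute convergence. Indeed the reconstruction integral is \emph{not} absolutely convergent: for $\measProb = \delta_{t_0}$ one has $(\STFT\delta_{t_0})(\tau,\freq) = \overline{\window(t_0-\tau)}\,e^{-2\pi i \freq t_0}$, whose modulus $\abs{\window(t_0-\tau)}$ does not decay in $\freq$, so $\int_\R\!\int_\R \abs{(\STFT\measProb)(\tau,\freq)}\,\mathrm{d}\tau\,\mathrm{d}\freq = \infty$ and the bare inner integral $\int_\R e^{2\pi i\freq(t-s)}\,\mathrm{d}\freq$ diverges. The remedy I would use is to insert a summability kernel, e.g.\ a Gaussian factor $e^{-\varepsilon \freq^2}$ in the $\freq$-integral: for each $\varepsilon>0$ the integrand is absolutely integrable, Fubini applies, and the inner integration produces a Gauss--Weierstrass regularization of $h_\tau$ rather than $h_\tau$ itself. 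Letting $\varepsilon \to 0$ recovers $h_\tau(s)$ \emph{pointwise}, because $h_\tau = \overline{\varphi}\cdot\mathcal{T}_\tau\window$ is continuous and summability kernels reconstruct an $L^1$ function at every point of continuity; the outer $\tau$- and $\measProb$-integrations are then passed to the limit by dominated convergence, with a dominating function supplied by the Schwartz decay of $\window$ together with $\normTV{\measProb} < \infty$. The crucial point—and the reason a summability argument is needed rather than plain Fourier inversion—is that $\varphi$ is assumed only bounded and continuous, so $h_\tau$ need not be smooth and $\widehat{h_\tau}$ need not be integrable.

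For $\group = \torus$ the argument is structurally identical: the $\freq$-integral is replaced by the sum over $k \in \Z$ and the Gaussian regularizer by Abel (or Fej\'er) summation weights, with Fourier-series inversion at a point of continuity playing the role of Fourier inversion. Finally, that the reconstruction genuinely defines an element of $\measSpace{\group}$, i.e.\ that $\varphi \mapsto \int_\R\!\int_\R (\STFT\measProb)(\tau,\freq)\,(\varphi \mid \mathcal{M}_\freq\mathcal{T}_\tau\window)\,\mathrm{d}\tau\,\mathrm{d}\freq$ is a bounded linear functional on $C_b(\group)$, is automatic once \eqref{eq: proof injectivity reconstruction} is established, since the functional then coincides with $\varphi \mapsto (\varphi \mid \measProb)$ and $\measProb \in \measSpace{\group}$.
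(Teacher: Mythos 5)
Your argument is correct, but it follows a genuinely different route from the paper, because the paper does not actually prove this proposition: it is stated with the attribution \cite[Cor.~11.2.7]{Groechenig2000}, and the surrounding text makes clear that it is understood as a particularization of Gr\"ochenig's STFT inversion formula for tempered distributions to measures in $\measSpace{\group}$. In other words, the first route you sketch---invoke the distributional formula and enlarge the test class from $\schwartzSpace{\group}$ to $C_b(\group)$---is essentially the paper's (implicit) proof, while your summability computation is a self-contained alternative. What your route buys is a point the citation leaves unspoken: the precise sense in which the reconstruction integral exists. Your observation that absolute convergence fails is right (for $\measProb = \delta_{t_0}$ the modulus of $(\STFT\measProb)(\tau, \freq)$ equals $\abs{\window(t_0 - \tau)}$, constant in $\freq$, and the second factor $(\varphi|\mathcal{M}_\freq\mathcal{T}_\tau\window)$ supplies no integrable decay when $\varphi$ is merely bounded and continuous); in fact even the symmetric improper limit $\lim_{F \to \infty}\int_{-F}^{F}$ can fail for such $\varphi$, since the Dirichlet kernel does not localize without a Dini-type modulus of continuity, so a summability interpretation like your Gaussian factor $e^{-\varepsilon\freq^2}$ (Abel or Fej\'er weights on $\torus$) is forced---and it is consistent, by dominated convergence, with the Lebesgue value whenever the integral does happen to converge absolutely. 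Your three steps hold up under scrutiny: for fixed $\varepsilon > 0$ the full integral is bounded by $\sqrt{\pi/\varepsilon}\,\normInfty{\varphi}\normLone{\window}^2\normTV{\measProb}$, so Fubini applies; integrating in $\freq$ produces the Gauss--Weierstrass kernel, an approximate identity that recovers the bounded continuous $L^1$ function $h_\tau = \overline{\varphi}\cdot\mathcal{T}_\tau\window$ at every point; and the dominating function $(s, \tau) \mapsto \normInfty{\varphi}\normInfty{\window}\abs{\window(s - \tau)}$, integrable against $\dmeas{\abs{\measProb}\!}{s}\,\mathrm{d}\tau$, justifies the final passage to the limit, leaving $\normLtwo{\window}^2(\varphi|\measProb) = (\varphi|\measProb)$. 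A last point in favor of your approach: the mechanism---Fubini, collapse of the frequency integral to a kernel, then a limit in the regularization parameter---is exactly what the paper does spell out in its proof of Proposition~\ref{prop: reconstruction of the measure in R} for discrete measures, where the normalized truncation $\frac{1}{2F}\int_{-F}^{F}$ plays the role of your $e^{-\varepsilon\freq^2}$, so your argument also illuminates why that later proof works.
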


An immediate consequence of Proposition \ref{prop: general inversion formula} is the injectivity of the STFT of measures. To see this, simply note that $\STFT\measProb = 0$ in \eqref{eq: proof injectivity reconstruction} implies $(\varphi|\measProb) = 0$ for all $\varphi \in C_b(\group)$, which means that necessarily $\measProb = 0$.
Equivalently, for $\nu_1, \nu_2 \in \measSpace{\group}$, as $\measSpace{\group}$ is a vector space and hence $\measProb_1 - \measProb_2 \in \measSpace{\group}$, it follows that $\STFT{\nu_1} = \STFT{\nu_2}$ with $\window \in \schwartzSpace{\group} \!\setminus\! \{0\}$ necessarily implies $\nu_1 = \nu_2$. We can therefore conclude that every measure in $\measSpace{\group}$ is uniquely determined by its STFT.
While the inversion formula in Proposition \ref{prop: general inversion formula} applies to general measures in $\measSpace{\group}$, we can get a simplified inversion formula for discrete measures, as considered here. Specifically, for discrete measures inversion reduces to determining the support set $\measSupport = \{t_\measSpikeIdx\}_{\measSpikeIdx \in \measSupportIdx}$ and the corresponding complex weights $\{\measAmplitude{\measSpikeIdx}\}_{\measSpikeIdx \in \measSupportIdx}$ only.  

\begin{prop}[STFT inversion formula for discrete measures]
	Let $\window \in \schwartzSpace{\group} \!\setminus\! \{0\}$ be such that $\normLtwo{\window} = 1$. Let $\meas \triangleq \sum_{\measSpikeIdx \in \measSupportIdx} \measAmplitude{\measSpikeIdx} \delta_{t_\measSpikeIdx} \in \measSpace{\group}$, where $\measSupportIdx$ is a finite or countably infinite index set and $\measAmplitude{\measSpikeIdx} \in \C\!\setminus\! \{0\}$, for all $\measSpikeIdx \in \measSupportIdx$.  The measure $\meas$ can be recovered from its STFT $\STFT\meas$ according to
	\begin{align}
		\lim_{ F  \rightarrow \infty} \frac{1}{2 F } \int_{- F }^{ F }\int_\R (\STFT\meas)(\tau, \freq)\window(t - \tau)e^{2\pi i\freq t}\mathrm{d}\tau\mathrm{d}\freq &= \begin{cases} \measAmplitude{\measSpikeIdx}, & t = t_\measSpikeIdx \\ 0, & \text{otherwise} \end{cases}, \qquad \group = \R \label{eq: STFT inversion formula for discrete measures}
	\\
		\lim_{K  \rightarrow \infty} \frac{1}{2K+1} \sum_{k=-K}^{K}\int_0^1 (\STFT\meas)(\tau, k)\window(t - \tau)e^{2\pi i k t}\mathrm{d}\tau &= \begin{cases} \measAmplitude{\measSpikeIdx}, & t = t_\measSpikeIdx \\ 0, & \text{otherwise} \end{cases}, \qquad \group = \torus. \label{eq: STFT inversion formula for discrete measures 2}
	\end{align}
	\label{prop: reconstruction of the measure in R}
\end{prop}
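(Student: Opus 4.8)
The plan is to substitute the expansion of the STFT of the discrete measure $\meas$ directly into the left-hand sides of \eqref{eq: STFT inversion formula for discrete measures} and \eqref{eq: STFT inversion formula for discrete measures 2}, and then to carry out the $\tau$- and $\freq$-integrations explicitly. Writing out Definition~\ref{defn: STFT of measures} for $\meas = \sum_{\measSpikeIdx \in \measSupportIdx} \measAmplitude{\measSpikeIdx} \delta_{t_\measSpikeIdx}$ yields $(\STFT\meas)(\tau, \freq) = \sum_{\measSpikeIdx \in \measSupportIdx} \measAmplitude{\measSpikeIdx} \overline{\window(t_\measSpikeIdx - \tau)} e^{-2\pi i \freq t_\measSpikeIdx}$. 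The first step is to justify interchanging this sum with the $\tau$- and $\freq$-integrals. For $\group = \R$ this follows from Fubini--Tonelli: after the (modulus-one) $\freq$-integration over $[-F, F]$, the absolute value of the integrand is dominated by $2F \sum_{\measSpikeIdx} \abs{\measAmplitude{\measSpikeIdx}} \int_\R \abs{\window(t_\measSpikeIdx - \tau)}\abs{\window(t - \tau)}\mathrm{d}\tau \leq 2F \normLtwo{\window}^2 \sum_\measSpikeIdx \abs{\measAmplitude{\measSpikeIdx}} = 2F \normTV{\meas} < \infty$, where I have used the Cauchy--Schwarz inequality together with $\normTV{\meas} = \sum_\measSpikeIdx \abs{\measAmplitude{\measSpikeIdx}}$. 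For $\group = \torus$ the sum is finite---the support is uniformly discrete and $\torus$ is compact---so the interchange is immediate.

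Once the sum is pulled out, each summand factorizes into a $\freq$-independent $\tau$-integral and a $\tau$-independent $\freq$-integral (resp.\ $k$-sum). The substitution $u = t - \tau$ turns the $\tau$-integral into the window autocorrelation $\autocorrelation(t - t_\measSpikeIdx) \triangleq \int_\group \overline{\window(u - (t - t_\measSpikeIdx))}\window(u)\mathrm{d}u$, integrated over $\R$ (resp.\ over one period $[0,1]$), which satisfies $\autocorrelation(0) = \normLtwo{\window}^2 = 1$ and $\abs{\autocorrelation} \leq 1$ everywhere by Cauchy--Schwarz. The averaged frequency factor is a textbook computation: for $\group = \R$ one gets $\frac{1}{2F}\int_{-F}^F e^{2\pi i \freq(t - t_\measSpikeIdx)}\mathrm{d}\freq = \sinc(2\pi F(t - t_\measSpikeIdx))$, and for $\group = \torus$ one gets $\frac{1}{2K+1}\sum_{k=-K}^K e^{2\pi i k(t - t_\measSpikeIdx)} = \frac{1}{2K+1}\frac{\sin((2K+1)\pi(t - t_\measSpikeIdx))}{\sin(\pi(t - t_\measSpikeIdx))}$, the normalized Dirichlet kernel. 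Hence the left-hand side reduces to $\sum_\measSpikeIdx \measAmplitude{\measSpikeIdx}\, \autocorrelation(t - t_\measSpikeIdx)$ times this frequency factor.

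It remains to pass to the limit. For each fixed $\measSpikeIdx$, the frequency factor tends to $1$ if $t = t_\measSpikeIdx$ and to $0$ otherwise: at $t = t_\measSpikeIdx$ it equals $1$ by $\sinc(0) = 1$ (resp.\ because the Dirichlet kernel at $0$ equals $2K+1$), while off the diagonal the numerator stays bounded and the normalization grows. Thus at a point $t = t_{\measSpikeIdx_0} \in \measSupport$ only the $\measSpikeIdx_0$-term survives and contributes $\measAmplitude{\measSpikeIdx_0}\autocorrelation(0) = \measAmplitude{\measSpikeIdx_0}$, whereas at $t \notin \measSupport$ every term vanishes. For $\group = \torus$ this follows at once from the finiteness of the sum. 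The main---and essentially only---obstacle is interchanging the limit $F \to \infty$ with the \emph{infinite} sum in the case $\group = \R$; I would settle this by dominated convergence for series, bounding the frequency factor in modulus by $1$ and using the summable majorant $\sum_\measSpikeIdx \abs{\measAmplitude{\measSpikeIdx}}\abs{\autocorrelation(t - t_\measSpikeIdx)} \leq \normTV{\meas} < \infty$, which relies on $\abs{\autocorrelation} \leq 1$. This delivers the stated pointwise limits and completes the proof.
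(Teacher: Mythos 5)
Your proposal is correct and follows essentially the same route as the paper's proof: expand the STFT of the discrete measure, interchange sum and integrals via Fubini with a Cauchy--Schwarz bound, factor each term into the autocorrelation $\autocorrelation(t - t_\measSpikeIdx)$ times the averaged exponential ($\sinc$ kernel on $\R$, normalized Dirichlet kernel on $\torus$), and pass to the limit term by term using a summable majorant. The only cosmetic differences are that you invoke dominated convergence for series with the bound $\abs{\autocorrelation} \leq 1$ where the paper uses the Weierstrass M-test with $\normInfty{\autocorrelation} < \infty$, and that you spell out the torus case which the paper dismisses as ``similar.''
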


\begin{proof}
	We prove the proposition for $\group = \R$ and note that the proof for $\group = \torus$ is similar. Let $t \in \R$. For every $ F >0$, we have
	\begin{align}
		 \frac{1}{2 F } \int_{- F }^{ F }\int_\R (\STFT\meas)(\tau, \freq)&\window(t - \tau)e^{2\pi i\freq t}\mathrm{d}\tau\mathrm{d}\freq \notag \\
		 	&=  \frac{1}{2 F } \int_{- F }^{ F }\int_\R \sum_{\measSpikeIdx \in \measSupportIdx} \measAmplitude{\measSpikeIdx}\overline{\window(t_\measSpikeIdx - \tau)}\window(t - \tau)e^{2\pi i\freq (t - t_\measSpikeIdx)}\mathrm{d}\tau\mathrm{d}\freq \label{eq: } \\
		 	&= \sum_{\measSpikeIdx \in \measSupportIdx} \measAmplitude{\measSpikeIdx} \left(\int_\R \overline{\window(t_\measSpikeIdx - \tau)}\window(t - \tau)\mathrm{d}\tau\right) \left(\frac{1}{2 F }\int_{- F }^{ F } e^{2\pi i\freq(t -t_\measSpikeIdx)}\mathrm{d}\freq\right) \label{eq: integration term by term inversion STFT}\\
			&= \sum_{\measSpikeIdx \in \measSupportIdx} \measAmplitude{\measSpikeIdx}\autocorrelation(t - t_\measSpikeIdx)\sinc(2\pi F (t - t_\measSpikeIdx)), \label{eq: normal convergence inversion STFT}
	\end{align}
	where $\autocorrelation \in \schwartzSpace{\R}$ is the autocorrelation function of $\window$, that is, 
	\begin{equation}
		\forall t \in \R, \quad\autocorrelation(t) \triangleq \int_\R \overline{\window(\tau)}\window(t + \tau)\mathrm{d}\tau.
		\label{eq: autocorrelation function}
	\end{equation}
	Thanks to Fubini's Theorem, the order of the integral on $[-F,F] \times \R$ and the sum in \eqref{eq: } can be interchanged as, using the Cauchy-Schwarz inequality, we have
	\begin{equation*}
		\sum_{\measSpikeIdx \in \measSupportIdx} \frac{\abs{\measAmplitude{\measSpikeIdx}}}{2F} \int_{- F }^{ F }\int_\R  \abs{\overline{\window(t_\measSpikeIdx - \tau)}\window(t - \tau) e^{2\pi i\freq(t - t_\measSpikeIdx)}} \mathrm{d}\tau\mathrm{d}\freq \leq \normLtwo{\window}^2\sum_{\measSpikeIdx \in \measSupportIdx} \abs{\measAmplitude{\measSpikeIdx}} < \infty,
	\end{equation*}
	since $\window \in \schwartzSpace{\R} \subseteq L^2(\R)$, and, by assumption, $\sum_{\measSpikeIdx \in \measSupportIdx} \abs{\measAmplitude{\measSpikeIdx}} < \infty$. Now we want to let $F \rightarrow \infty$ in~\eqref{eq: normal convergence inversion STFT}. First note that
	\begin{equation*}
		\lim_{ F  \rightarrow \infty} \sinc(2\pi F (t - t_\measSpikeIdx)) = \begin{cases} 1, & t = t_\measSpikeIdx \\ 0, & \text{otherwise}. \end{cases}
	\end{equation*}
	For the limit $F \rightarrow \infty$ of the series in~\eqref{eq: normal convergence inversion STFT} to equal the sum of the limits of the individual terms, we need to show that the sum converges uniformly in $F$. This can be accomplished through the Weierstrass M-test.
	Specifically, we have
	\begin{equation*}
		\forall F > 0, \quad \abs{\measAmplitude{\measSpikeIdx}\autocorrelation(t - t_\measSpikeIdx)\sinc(2\pi F(t - t_\measSpikeIdx))} \leq \abs{\measAmplitude{\measSpikeIdx}}\abs{\autocorrelation(t - t_\measSpikeIdx)}
	\end{equation*}
	and since $\autocorrelation \in \schwartzSpace{\R}$ and hence $\normInfty{\autocorrelation} < \infty$, it holds that
	\begin{equation*}
		\sum_{\measSpikeIdx \in \measSupportIdx} \abs{\measAmplitude{\measSpikeIdx}}\abs{\autocorrelation(t - t_\measSpikeIdx)} \leq \normInfty{\autocorrelation}\sum_{\measSpikeIdx \in \measSupportIdx} \abs{\measAmplitude{\measSpikeIdx}} < \infty,
	\end{equation*}
	as $\sum_{\measSpikeIdx \in \measSupportIdx} \abs{\measAmplitude{\measSpikeIdx}} < \infty$ by assumption. This allows us to conclude that
	\begin{align*}
		\lim_{ F  \rightarrow \infty} \frac{1}{2 F } \int_{- F }^{ F }& \int_\R (\STFT\meas)(\tau, \freq)\window(t - \tau)e^{2\pi i\freq t}\mathrm{d}\tau\mathrm{d}\freq = \lim_{F \rightarrow \infty} \sum_{\measSpikeIdx \in \measSupportIdx} \measAmplitude{\measSpikeIdx}\autocorrelation(t - t_\measSpikeIdx)\sinc(2\pi F(t - t_\measSpikeIdx)) \\
			&= \sum_{\measSpikeIdx \in \measSupportIdx}\measAmplitude{\measSpikeIdx}\autocorrelation(t - t_\measSpikeIdx) \lim_{F \rightarrow \infty} \sinc(2\pi F(t - t_\measSpikeIdx)) = \begin{cases}  \measAmplitude{\measSpikeIdx}, & t = t_{\measSpikeIdx} \\ 0, & t \in \R \setminus \measSupport, \end{cases}
	\end{align*}
	where the last equality follows from $\autocorrelation(0) = 1$.
\end{proof}

% !TEX root = SuperresolutionJournal.tex

%%%%%%%%%%%%%%%%%%%%%%%%%%%%%%%
% Reconstruction from incomplete measurements
%%%%%%%%%%%%%%%%%%%%%%%%%%%%%%%
\section{Reconstruction from incomplete measurements}
\label{section: reconstruction from partial measurements}

We are now ready to consider the main problem statement of this paper, namely the reconstruction of $\meas$ from band-limited (i.e., $\freq_c < \infty$) STFT measurements via $\gaborProb$. Henceforth, we assume $\freq_c < \infty$. 

We first verify that the range of the operator $\gaborOp$ defined in \eqref{eq: definition gabor operator} is indeed $L^1(\group \times \dualGroup)$, and then show that $\gaborOp$ is bounded.
\begin{lem}[Range of $\gaborOp$]
	Let $\window \in \schwartzSpace{\group} \!\setminus\! \{0\}$. 
	$\gaborOp\measProb \in L^1(\group \times \dualGroup)$ for $\measProb \in \measSpace{\group}$.
\end{lem}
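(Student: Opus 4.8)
The plan is to estimate $\gaborOp\measProb$ pointwise by a manifestly integrable quantity. By the definition in \eqref{eq: definition gabor operator}, $\gaborOp\measProb$ vanishes for $\freq \notin B_{\freq_c}(\dualGroup)$ and coincides with $\STFT\measProb$ there, so it suffices to show that $\int_{B_{\freq_c}(\dualGroup)} \int_\group \abs{(\STFT\measProb)(\tau, \freq)} \mathrm{d}\tau\,\mathrm{d}\freq < \infty$, where for $\group = \torus$ the outer integral over $\dualGroup = \Z$ is to be read as the finite sum $\sum_{k = -K}^{K}$.

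First I would invoke Definition~\ref{defn: STFT of measures}, the identity $\abs{e^{-2\pi i \freq t}} = 1$, and the triangle inequality for integration against the total variation measure $\abs{\measProb}$ to obtain the pointwise bound
\begin{equation*}
	\abs{(\STFT\measProb)(\tau, \freq)} \leq \int_\group \abs{\window(t - \tau)} \dmeas{\abs{\measProb}}{t}, \qquad (\tau, \freq) \in \group \times \dualGroup.
\end{equation*}
The essential feature is that the right-hand side does not depend on $\freq$. Next I would integrate this estimate over $(\tau, \freq) \in \group \times B_{\freq_c}(\dualGroup)$. Since the integrand is nonnegative, Tonelli's theorem justifies interchanging the order of all three integrations. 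Carrying out the $\tau$-integration first and using translation invariance of the Haar measure on $\group$ gives $\int_\group \abs{\window(t - \tau)} \mathrm{d}\tau = \normLone{\window}$, a finite constant independent of $t$, because $\window \in \schwartzSpace{\group} \subseteq L^1(\group)$. The remaining integration over $t$ against $\abs{\measProb}$ contributes the factor $\abs{\measProb}(\group) = \normTV{\measProb} < \infty$, while the trivial integration over $\freq \in B_{\freq_c}(\dualGroup)$ contributes the finite factor $2\freq_c$ for $\group = \R$ and $2K + 1$ for $\group = \torus$. Collecting these factors yields the finite bound $\normLone{\gaborOp\measProb} \leq 2\freq_c\, \normLone{\window}\, \normTV{\measProb}$ (respectively $(2K+1)\,\normLone{\window}\,\normTV{\measProb}$ on the torus), which proves the claim.

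There is no genuine obstacle here: the only points requiring attention are the justification of the interchange of integrations, which is immediate from nonnegativity via Tonelli, and the uniform treatment of the two cases $\group = \R$ (Lebesgue measure on $[-\freq_c, \freq_c]$) and $\group = \torus$ (counting measure on $\{-K, \ldots, K\}$). I note that the very same estimate produces an operator bound of the form $\normLone{\gaborOp\measProb} \leq C\,\normTV{\measProb}$ with $C$ independent of $\measProb$, so this computation simultaneously settles the boundedness of $\gaborOp$ announced immediately after the lemma.
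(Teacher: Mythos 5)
Your proposal is correct and follows essentially the same route as the paper: bound the integrand via the triangle inequality against $\abs{\measProb}$, interchange the integrations (the paper cites Fubini where you more precisely invoke Tonelli for nonnegative integrands), and collect the factors $2\freq_c$, $\normLone{\window}$, and $\normTV{\measProb}$. Your closing observation is also accurate — the paper indeed reuses this identical computation in the immediately following lemma to establish boundedness of $\gaborOp$.
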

\begin{proof}
	We provide the proof for $\group = \R$ only, the proof for $\group = \torus$ again being essentially identical.
	Let $\measProb \in \measSpace{\R}$. We have the following:
	\begin{align}
		\int_\R\int_\R \abs{(\gaborOp\measProb)(\tau, \freq)}\mathrm{d}\tau\mathrm{d}\freq &= \int_{-\freq_c}^{\freq_c}\int_\R \abs{(\STFT\measProb)(\tau, \freq)}\mathrm{d}\tau\mathrm{d}\freq \notag\\
			&= \int_{-\freq_c}^{\freq_c}\left(\int_\R \abs{\int_\R \window(t - \tau)e^{-2\pi i\freq t}\dmeas{\measProb}{t}}\mathrm{d}\tau\right)\mathrm{d}\freq \notag \\
			&\leq \int_{-\freq_c}^{\freq_c} \left[\int_\R \left(\int_\R \abs{\window(t - \tau)e^{-2\pi i\freq t}} \dmeas{\abs{\measProb}\!}{t}\right)\mathrm{d}\tau\right]\mathrm{d}\freq \notag \\
			&= 2\freq_c \int_\R\left(\int_\R \abs{\window(t - \tau)} \dmeas{\abs{\measProb}\!}{t}\right)\mathrm{d}\tau. \label{eq: proof that the operator gaborOp lands in L1}
	\end{align}
	Since $\window \in L^1(\R)$, we have
	\begin{equation*}
		 \int_\R \left(\int_\R \abs{\window(t - \tau)}\mathrm{d}\tau\right)\dmeas{\abs{\measProb}\!}{t} = \normLone{\window}\normTV{\measProb} < \infty.
	\end{equation*}
	Fubini's Theorem then implies that the integral in~\eqref{eq: proof that the operator gaborOp lands in L1} is finite, thereby concluding the proof.
\end{proof}

\begin{lem}
	Let $\window \in \schwartzSpace{\group} \!\setminus\! \{0\}$. 
	The operator $\gaborOp \colon \measSpace{\group} \rightarrow L^1(\group \times \dualGroup)$ is bounded.
\end{lem}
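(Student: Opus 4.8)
The plan is to reuse, essentially verbatim, the chain of estimates established in the proof of the previous lemma, this time tracking the multiplicative constant rather than merely certifying finiteness. Recall that boundedness of the linear operator $\gaborOp \colon \measSpace{\group} \rightarrow L^1(\group \times \dualGroup)$ amounts to exhibiting a constant $C$, \emph{independent of} $\measProb$, such that $\normLone{\gaborOp\measProb} \leq C\,\normTV{\measProb}$ for all $\measProb \in \measSpace{\group}$.

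For $\group = \R$ I would start from the inequality already derived in the preceding proof,
\begin{equation*}
	\int_\R\int_\R \abs{(\gaborOp\measProb)(\tau, \freq)}\,\mathrm{d}\tau\,\mathrm{d}\freq \leq 2\freq_c \int_\R\left(\int_\R \abs{\window(t - \tau)}\dmeas{\abs{\measProb}\!}{t}\right)\mathrm{d}\tau,
\end{equation*}
which followed from moving the modulus inside the STFT integral, bounding the modulation factor $\abs{e^{-2\pi i \freq t}}$ by $1$, and then integrating the resulting $\freq$-independent integrand over $B_{\freq_c}(\dualGroup) = [-\freq_c, \freq_c]$. Next I would invoke Fubini's theorem---whose applicability was already verified there through $\int_\R\big(\int_\R \abs{\window(t-\tau)}\,\mathrm{d}\tau\big)\dmeas{\abs{\measProb}\!}{t} = \normLone{\window}\,\normTV{\measProb} < \infty$---to interchange the order of integration, and then use translation invariance of the Lebesgue measure, $\int_\R \abs{\window(t-\tau)}\,\mathrm{d}\tau = \normLone{\window}$ for each fixed $t$, together with $\abs{\measProb}(\R) = \normTV{\measProb}$. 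This yields
\begin{equation*}
	\normLone{\gaborOp\measProb} \leq 2\freq_c\,\normLone{\window}\,\normTV{\measProb},
\end{equation*}
so that $\gaborOp$ is bounded with $\normOp{\gaborOp} \leq 2\freq_c\normLone{\window}$ (finite since $\window \in \schwartzSpace{\R} \subseteq L^1(\R)$ and $\freq_c < \infty$).

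For $\group = \torus$ the only modification is that the outer frequency integral over $[-\freq_c, \freq_c]$ is replaced by a finite sum over $k \in \{-K, \ldots, K\}$, contributing a factor $2K+1$ in place of $2\freq_c$; the identical Fubini-plus-translation-invariance argument then gives $\normOp{\gaborOp} \leq (2K+1)\normLone{\window}$. I do not expect a genuine obstacle here: the single substantive analytic ingredient, namely the Fubini justification via the Cauchy--Schwarz/$L^1$ estimate, was already discharged in the preceding lemma, so the present statement follows simply by making the constant explicit and observing that it does not depend on $\measProb$.
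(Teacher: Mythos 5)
Your proposal is correct and follows essentially the same route as the paper's proof: move the modulus inside the STFT integral, integrate out the frequency variable to obtain the factor $2\freq_c$ (resp.\ $2K+1$ on $\torus$), apply Fubini, and use translation invariance to arrive at $\normLone{\gaborOp\measProb} \leq 2\freq_c\normLone{\window}\normTV{\measProb}$. The only cosmetic difference is that you cite the estimate already derived in the range lemma, whereas the paper rederives the chain of inequalities inline.
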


\begin{proof}
	Again, we provide the proof for $\group = \R$ only, the arguments for $\group = \torus$ are essentially identical. For every $\measProb \in \measSpace{\group}$, we have
	\begin{align}
		\normLone{\gaborOp \measProb} &= \int_\R\int_{-\freq_c}^{\freq_c} \abs{(\STFT\measProb)(\tau, \freq)}\mathrm{d}\tau\mathrm{d}\freq = \int_\R\int_{-\freq_c}^{\freq_c} \abs{\int_\R \overline{\window(t - \tau)}e^{-2\pi i\freq t} \dmeas{\measProb}{t}} \notag \\
		&\leq \int_\R\int_{-\freq_c}^{\freq_c} \int_\R \abs{\window(t - \tau)} \dmeas{\abs{\measProb}\!}{t} \mathrm{d}\tau\mathrm{d}\freq = 2\freq_c\int_\R\int_\R \abs{\window(t - \tau)}\dmeas{\abs{\measProb}\!}{t}\mathrm{d}\tau \notag\\
		&= 2\freq_c\int_\R\int_\R \abs{\window(t - \tau)}\mathrm{d}\tau\dmeas{\abs{\measProb}\!}{t} = 2\freq_c\normLone{\window}\normTV{\measProb}, \label{eq: proof that Ag is bounded}
	\end{align}
	where we used Fubini's Theorem in the first equality of \eqref{eq: proof that Ag is bounded}.
\end{proof}

The adjoint of the operator $\gaborOp$ plays a crucial role in the ensuing developments.

\begin{lem}[Adjoint of $\gaborOp$]
	Let ${\window \in \schwartzSpace{\group} \!\setminus\! \{0\}}$. The adjoint operator $\gaborOp^* \colon L^\infty(\group \times \dualGroup) \rightarrow C_b(\group)$ of $\gaborOp$ maps $c \in L^\infty(\group \times \dualGroup)$ to the function
	\begin{equation}
		\begin{array}{ll}
			\displaystyle t \longmapsto \int_{-\freq_c}^{\freq_c} \int_\R c(\tau, \freq)\overline{\window(t - \tau)}e^{2\pi i\freq t}\mathrm{d}\tau\mathrm{d}\freq, & \qquad \group = \R \\
			\displaystyle t \longmapsto \sum_{k = -\freq_c}^{\freq_c}\int_0^1 c(\tau, \freq)\overline{\window(t - \tau)}e^{2\pi ik t}\mathrm{d}\tau, & \qquad \group = \torus.
		\end{array}
		\label{eq: adjoint operator expression}
	\end{equation}
	Moreover, it holds that $\gaborOp^{**} = \gaborOp$.
	\label{lem: lemma for adjoint operator}
\end{lem}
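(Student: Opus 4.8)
The plan is to verify the defining adjoint identity directly and then read it backwards to obtain $\gaborOp^{**} = \gaborOp$. Recall from the preparatory material that $\measSpace{\group}$ is the dual of $C_b(\group)$ and that $L^\infty(\group \times \dualGroup)$ is the dual of $L^1(\group \times \dualGroup)$; since $\gaborOp \colon \measSpace{\group} \to L^1(\group \times \dualGroup)$ is bounded (preceding lemma), its Banach-space adjoint maps $L^\infty(\group \times \dualGroup)$ into $(\measSpace{\group})^*$, and the content of the lemma is that this adjoint is represented by the explicit formula \eqref{eq: adjoint operator expression} and that its image in fact lies in the predual $C_b(\group)$. Concretely, I would denote by $\mathcal{B}$ the operator defined through \eqref{eq: adjoint operator expression} and establish that $\innerProd{\gaborOp \measProb}{c} = \innerProd{\mathcal{B}c}{\measProb}$ for all $\measProb \in \measSpace{\group}$ and all $c \in L^\infty(\group \times \dualGroup)$, with the real dual pairings fixed in the notation section (the one between $L^1$ and $L^\infty$ on the left, the one between $C_b(\group)$ and $\measSpace{\group}$ on the right). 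This identity, by non-degeneracy of the pairing, characterizes $\mathcal{B}$ as $\gaborOp^*$. As in the previous two lemmas I would carry out the argument for $\group = \R$, the case $\group = \torus$ being identical with the Lebesgue integral over $\freq \in [-\freq_c, \freq_c]$ replaced by the finite sum over $k \in \{-\freq_c, \dots, \freq_c\}$.

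For the central computation I would insert Definition~\ref{defn: STFT of measures} into $\innerProd{\gaborOp \measProb}{c}$ and the formula \eqref{eq: adjoint operator expression} into $\innerProd{\mathcal{B}c}{\measProb}$, turning both sides into iterated integrals over $(\tau, \freq) \in \R \times [-\freq_c, \freq_c]$ (Lebesgue) and over $t \in \R$ (against $\measProb$). The whole argument then hinges on interchanging the $\measProb$-integration with the Lebesgue integration in $\tau$ and $\freq$, after which the two integrands match and the identity drops out. I expect this interchange to be the main obstacle, and I would justify it by Fubini's theorem exactly as in the boundedness lemma: the modulus of the integrand is dominated by $\abs{c(\tau, \freq)}\,\abs{\window(t - \tau)}$, and
\[
\int_{-\freq_c}^{\freq_c} \int_\R \int_\R \abs{c(\tau, \freq)}\,\abs{\window(t - \tau)}\,\dmeas{\abs{\measProb}}{t}\,\mathrm{d}\tau\,\mathrm{d}\freq \leq 2\freq_c\,\normInfty{c}\,\normLone{\window}\,\normTV{\measProb} < \infty,
\]
since $\window \in \schwartzSpace{\R} \subseteq L^1(\R)$ and $\normTV{\measProb} < \infty$. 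The only care needed here is the bookkeeping for the complex measure $\measProb$ (decomposing into real and imaginary, positive and negative parts, or equivalently integrating the modulus against the total-variation measure $\abs{\measProb}$); no idea beyond the finiteness bound above is required.

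Two smaller points remain. First, I would confirm that $\mathcal{B}c$ actually lands in $C_b(\group)$, as the lemma asserts. Boundedness is immediate from the same estimate $\abs{(\mathcal{B}c)(t)} \leq 2\freq_c\,\normInfty{c}\,\normLone{\window}$; continuity in $t$ follows from dominated convergence, since on any compact $t$-neighborhood the integrand is continuous in $t$ and dominated by $\normInfty{c}$ times a fixed $L^1(\R)$ envelope of $\window$, which is available because $\window$ is a Schwartz function and hence decays rapidly. (On $\torus$ this is even simpler, as $\window \in \schwartzSpace{\torus}$ is smooth on a compact group and the sum over $k$ is finite.) Second, for $\gaborOp^{**} = \gaborOp$ I would merely observe that the identity $\innerProd{\gaborOp \measProb}{c} = \innerProd{\gaborOp^* c}{\measProb}$ just proved is symmetric in its two sides: regarding it as an identity in $c$ for fixed $\measProb$ shows that the adjoint of $\gaborOp^* \colon L^\infty(\group \times \dualGroup) \to C_b(\group)$, evaluated at $\measProb \in \measSpace{\group} = (C_b(\group))^*$, acts on $c$ exactly as $\gaborOp \measProb \in L^1(\group \times \dualGroup) \subseteq (L^\infty(\group \times \dualGroup))^*$ does. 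Hence $(\gaborOp^*)^* = \gaborOp$ on $\measSpace{\group}$, which is the claimed $\gaborOp^{**} = \gaborOp$, with no further computation needed.
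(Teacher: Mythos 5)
Your proposal is correct and follows the same skeleton as the paper's proof: both verify the identity $\innerProd{\gaborOp\measProb}{c} = \innerProd{\gaborOp^*c}{\measProb}$ by interchanging the $\measProb$-integration with the $(\tau,\freq)$-integration via Fubini, justified by exactly the bound $2\freq_c\normLinfty{c}\normLone{\window}\normTV{\measProb}$, and both then check that the explicit formula lands in $C_b(\group)$. You differ in two sub-steps, and in both cases your version is arguably cleaner. First, for continuity of $\gaborOp^*c$ the paper substitutes $u = t-\tau$ so that the dominating function $\normLinfty{c}\abs{\window(u)}$ is independent of $t$, but this moves the $t$-dependence into $c(t-u,\freq)$, and the paper's claim that $t \mapsto c(t-u,\freq)\overline{\window(u)}e^{2\pi i\freq t}$ is continuous is not justified for a mere $L^\infty$ function $c$; your argument keeps the $t$-dependence in the continuous factors $\overline{\window(t-\tau)}e^{2\pi i\freq t}$ and handles domination with a local $L^1$ envelope $\sup_{t \in K}\abs{\window(t-\tau)}$ of the Schwartz window, which is sound as stated (alternatively one could invoke continuity of translation in $L^1$). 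Second, for $\gaborOp^{**} = \gaborOp$ the paper runs a second Fubini computation, which is essentially the first one read backwards; you instead observe that, by definition of the bidual, $\gaborOp^{**}\measProb$ acts on $c$ as $\innerProd{\gaborOp^*c}{\measProb}$, which the already-established identity equates with the action of $\gaborOp\measProb \in L^1 \subseteq (L^\infty)^*$, so the conclusion is free of charge. The only cosmetic slip is writing $\normInfty{c}$ where the paper's notation for the $L^\infty$ norm is $\normLinfty{c}$; mathematically nothing is affected.
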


\begin{proof}
	Again, we provide the proof for the case $\group = \R$ only, the arguments for $\group = \torus$ being essentially identical.
	
	First, we note that $L^\infty(\R^2)$ is the topological dual of $L^1(\R^2)$. Therefore, $\gaborOp^*$ maps $L^\infty(\R^2)$ to $(\measSpace{\R})^*$.
	For $\measProb \in \measSpace{\R}$ and $c \in L^\infty(\R^2)$ we have
	\begin{align}
		\innerProd{\gaborOp\measProb}{c} &= \Re{\int_{-\freq_c}^{\freq_c}\int_\R (\gaborOp\measProb)(\tau, \freq)\overline{c(\tau, \freq)}\mathrm{d}\tau\mathrm{d}\freq} \notag \\
			&= \Re{\int_{-\freq_c}^{\freq_c}\int_\R\left(\int_\R \overline{c(\tau, \freq)}\window(t - \tau)e^{-2\pi i\freq t} \dmeas{\measProb}{t}\right)\mathrm{d}\tau\mathrm{d}\freq} \notag \\
			&= \Re{\int_\R\left(\int_{-\freq_c}^{\freq_c}\int_\R \overline{c(\tau, \freq)}\window(t - \tau)e^{-2\pi i\freq t} \mathrm{d}\tau\mathrm{d}\freq\right)\dmeas{\measProb}{t}} \label{eq: proof derivation adjoint} \\
			&= \Re{\int_\R\overline{\left(\int_{-\freq_c}^{\freq_c}\int_\R c(\tau, \freq)\overline{\window(t - \tau)}e^{2\pi i\freq t} \mathrm{d}\tau\mathrm{d}\freq\right)}\dmeas{\measProb}{t}} \notag \\
			&= \innerProd{\measProb}{\gaborOp^*c}, \notag
	\end{align}
	where \eqref{eq: proof derivation adjoint} follows from Fubini's Theorem whose conditions are satisfied since
	\begin{equation}
		\int_\R\int_{-\freq_c}^{\freq_c}\int_\R \abs{c(\tau, f)}\abs{\window(t - \tau)}\mathrm{d}\tau\mathrm{d}f\dmeas{\abs{\measProb}\!}{t} \leq 2\freq_c\normLinfty{c}\normLone{\window}\normTV{\measProb},
		\label{eq: verification conditions fubini adjoint operator plus bidual}
	\end{equation}	
	as a consequence of $c \in L^\infty(\R^2)$, $\window \in \schwartzSpace{\R}$, and $\measProb \in \measSpace{\R}$.
	
	It remains to show that $\gaborOp^*c \in C_b(\R)$ for all $c \in L^\infty(\R^2)$. To this end, we first note that for $c \in L^\infty(\R^2)$, we have 
	\begin{align*}
		\forall t \in \R, \quad (\gaborOp^*c)(t) &= \int_{-\freq_c}^{\freq_c}\int_\R c(\tau, \freq) \overline{\window(t - \tau)} e^{2\pi ift}\mathrm{d}\tau\mathrm{d}\freq \\
			&= \int_{-\freq_c}^{\freq_c}\int_\R c(t - u, \freq) \overline{\window(u)}e^{2\pi i\freq t} \mathrm{d}u \mathrm{d}\freq.
	\end{align*}
	As $\window \in \schwartzSpace{\R}$, the mapping $t \mapsto c(t - u, \freq)\overline{\window(u)}e^{2\pi i\freq t}$ is continuous on $\R$ for all $(u, \freq) \in \R^2$. Moreover, we have
	\begin{equation*}
		\forall t \in \R,\ \forall (u, \freq) \in \R^2, \qquad \abs{c(t - u, \freq) \window(u) e^{2\pi i\freq t}} \leq \normLinfty{c} \abs{\window(u)}
	\end{equation*}
	and $(u, \freq) \mapsto \normLinfty{c} \abs{\window(u)}$ is Lebesgue-integrable on $\R \times [-\freq_c, \freq_c]$. It therefore follows, by application of the Lebesgue Dominated Convergence Theorem, that $\gaborOp^*c$ is continuous on $\R$. Furthermore, we have
	\begin{equation*}
		\forall t \in \R, \quad \abs{(\gaborOp^*c)(t)} \leq \int_{-\freq_c}^{\freq_c}\int_\R \abs{c(\tau, \freq)} \abs{\window(t - \tau)}\mathrm{d}\tau\mathrm{d}\freq  \leq 2\freq_c \normLinfty{c} \normLone{\window} < \infty,
	\end{equation*}
	as $c \in L^\infty(\R^2)$ and $\window \in \schwartzSpace{\R}$. This shows that $\gaborOp^*c$ is bounded and therefore $\gaborOp^*c \in C_b(\R)$, for all $c \in L^\infty(\R^2)$.
	
	We next show that $\gaborOp^{**} = \gaborOp$. As $\gaborOp^* \colon L^\infty(\R^2) \rightarrow C_b(\R)$ and $\measSpace{\R}$ is the topological dual of $C_b(\R)$, we have $\gaborOp^{**} \colon \measSpace{\R} \rightarrow (L^\infty(\R^2))^*$. Let $\measProb \in \measSpace{\R}$ and note that
	\begin{align}
		\innerProd{\gaborOp^*c}{\measProb} &= \Re{\int_\R \overline{\left(\int_{-\freq_c}^{\freq_c}\int_\R c(\tau, \freq) \overline{\window(t - \tau)}e^{2\pi i\freq t}\mathrm{d}\tau\mathrm{d}\freq\right)} \dmeas{\measProb}{t}} \notag \\
			&= \Re{\int_{-\freq_c}^{\freq_c}\int_\R \overline{c(\tau, \freq)}\left(\int_\R\window(t - \tau)e^{-2\pi i\freq t}\dmeas{\measProb}{t}\right) \mathrm{d}\tau\mathrm{d}\freq} \label{eq: fubini equality bidual gaborOp} \\
			&= \Re{\int_\R\int_\R \overline{c(\tau, \freq)}(\gaborOp\measProb)(\tau, \freq) \mathrm{d}\tau\mathrm{d}\freq} \notag \\
			&= \innerProd{c}{\gaborOp^{**}\measProb}, \notag
	\end{align}
	where \eqref{eq: fubini equality bidual gaborOp} follows by Fubini's Theorem whose conditions are satisfied thanks to \eqref{eq: verification conditions fubini adjoint operator plus bidual}. This establishes that $\gaborOp^{**}\measProb = \gaborOp\measProb$, for all $\measProb \in \measSpace{\R}$. Since $\gaborOp\measProb \in L^1(\R^2)$, for all $\measProb \in \measSpace{\R}$, it follows that $\gaborOp^{**} \colon \measSpace{\R} \rightarrow L^1(\R^2)$, which completes the proof.
\end{proof}

Since the space $\measSpace{\group}$ is infinite-dimensional, proving the existence of a solution of $\gaborProb$ is a delicate problem. It turns out, however, that relying  on the convexity of the TV norm $\normTV{\cdot}$ and on the compactness---with respect to the weak-* topology---of the unit ball $\{\measProb \in \measSpace{\group} \colon \normTV{\measProb} \leq 1\}$, the following theorem ensures the existence of a solution of $\gaborProb$.

\begin{thm}[Existence of a solution of $\gaborProb$]
	Let $\meas \triangleq \sum_{\measSpikeIdx \in \measSupportIdx} \measAmplitude{\measSpikeIdx} \delta_{t_\measSpikeIdx} \in \measSpace{\group}$, where $\measSupportIdx$ is a finite or countably infinite index set and $\measAmplitude{\measSpikeIdx} \in \C\!\setminus\! \{0\}$, for all $\measSpikeIdx \in \measSupportIdx$. 
 	Let $\window \in \schwartzSpace{\group} \!\setminus\! \{0\}$. 
	If $\measurements = \gaborOp\meas$, there exists at least one solution of the problem $\gaborProb$.
	\label{thm: existence}
\end{thm}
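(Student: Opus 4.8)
The plan is to apply the direct method in the calculus of variations, exploiting that $\measSpace{\group}$ is the dual of $C_b(\group)$ and that $\gaborOp$ carries the weak-* continuous structure furnished by Lemma~\ref{lem: lemma for adjoint operator}. First I would observe that the feasible set of $\gaborProb$ is nonempty, since $\measurements = \gaborOp\meas$ makes $\meas$ itself feasible; hence the optimal value $p^* \triangleq \inf\{\normTV{\measProb} : \gaborOp\measProb = \measurements\}$ satisfies $0 \leq p^* \leq \normTV{\meas} < \infty$. Next I pick a minimizing sequence $\{\nu_n\}$, i.e., feasible measures with $\normTV{\nu_n} \to p^*$; for $n$ large these lie in the closed TV-ball $B_R \triangleq \{\measProb \in \measSpace{\group} : \normTV{\measProb} \leq R\}$ for some finite $R$. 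By the Banach--Alaoglu theorem, $B_R$ is compact in the weak-* topology, so $\{\nu_n\}$ admits a subnet $\{\nu_{n_\alpha}\}$ converging weak-* to some $\measOpt \in B_R$. I work with a subnet rather than a subsequence because, for $\group = \R$, the predual $C_b(\R)$ is not separable and $B_R$ need not be weak-* sequentially compact; this is one of the points where the $\R$ case is genuinely more delicate than the torus case.

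I would then verify that $\measOpt$ inherits the optimal value through lower semicontinuity of the TV norm. The duality $\measSpace{\group} = (C_b(\group))^*$ gives the variational representation $\normTV{\measProb} = \sup\{\innerProd{\varphi}{\measProb} : \varphi \in C_b(\group),\ \normInfty{\varphi} \leq 1\}$. Each functional $\measProb \mapsto \innerProd{\varphi}{\measProb}$ is, by the very definition of the weak-* topology on $\measSpace{\group}$, weak-* continuous; a pointwise supremum of continuous functionals is lower semicontinuous, so $\normTV{\cdot}$ is weak-* lower semicontinuous. Along the subnet this yields $\normTV{\measOpt} \leq \liminf_\alpha \normTV{\nu_{n_\alpha}} = p^*$.

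It then remains to show that $\measOpt$ is feasible, i.e., $\gaborOp\measOpt = \measurements$, and this is the step where the structure of the adjoint is essential. Fix $c \in L^\infty(\group \times \dualGroup)$. By Lemma~\ref{lem: lemma for adjoint operator} we have $\gaborOp^* c \in C_b(\group)$, so $\gaborOp^* c$ is an admissible test function for the weak-* topology, and the adjoint relation gives $\innerProd{\gaborOp\nu_{n_\alpha}}{c} = \innerProd{\nu_{n_\alpha}}{\gaborOp^* c}$. Passing to the weak-* limit on the right, $\innerProd{\nu_{n_\alpha}}{\gaborOp^* c} \to \innerProd{\measOpt}{\gaborOp^* c} = \innerProd{\gaborOp\measOpt}{c}$, while the left-hand side equals $\innerProd{\measurements}{c}$ for every $\alpha$ by feasibility of the $\nu_{n_\alpha}$. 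Hence $\innerProd{\gaborOp\measOpt}{c} = \innerProd{\measurements}{c}$ for all $c \in L^\infty(\group \times \dualGroup)$ (applying this to $c$ and to $ic$ recovers the full complex pairing); since an element of $L^1(\group \times \dualGroup)$ is determined by its pairings against all of $L^\infty(\group \times \dualGroup)$, we conclude $\gaborOp\measOpt = \measurements$. Thus $\measOpt$ is feasible with $\normTV{\measOpt} \leq p^*$, which forces $\normTV{\measOpt} = p^*$ and exhibits $\measOpt$ as a solution of $\gaborProb$.

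The main obstacle is the passage to the weak-* limit while keeping both the objective and the constraint under control. Lower semicontinuity of $\normTV{\cdot}$ handles the objective, but the constraint is the delicate part: it survives the limit only because $\gaborOp^*$ maps $L^\infty(\group \times \dualGroup)$ back into the \emph{predual} $C_b(\group)$, so that testing $\gaborOp\measProb$ against $L^\infty(\group \times \dualGroup)$ is equivalent to testing $\measProb$ against $C_b(\group)$, the space along which weak-* convergence acts. A secondary technical point, specific to $\group = \R$, is the non-separability of $C_b(\R)$, which is precisely why one extracts a weak-* convergent subnet rather than a subsequence.
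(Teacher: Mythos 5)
Your proposal is correct, and it is not the paper's argument: the paper does not run the direct method with a minimizing sequence. Instead, it forms the nested sets $K_n \triangleq \{\nu \colon \gaborOp\nu = \measurements,\ \normTV{\nu} \leq m + 1/n\}$, asserts that each $K_n$ is weak-* compact because it is convex, bounded, and closed in the TV norm (citing Brezis, Cor.~3.22), and extracts a minimizer from $\bigcap_n K_n$ via the finite intersection property. The two routes share the reliance on weak-* compactness of TV balls, but they differ exactly where it matters: the paper's step from ``convex, bounded, norm-closed'' to ``weak-* compact'' is delicate, since in a general dual space a norm-closed bounded convex set need not be weak-* closed (in $\ell^1 = (c_0)^*$ the set $\{x \geq 0 \colon \sum_i x_i = 1\}$ is norm-closed, convex, and bounded, yet $e_n \rightharpoonup^* 0$ shows it is not weak-* closed; the cited corollary concerns the weak topology of \emph{reflexive} spaces, and $\measSpace{\group}$ is not reflexive). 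What actually makes the paper's $K_n$ weak-* compact is that the feasible set $N = \gaborOp^{-1}(\{\measurements\})$ is weak-* closed, and this is precisely the step your proof supplies explicitly: testing $\gaborOp\nu_{n_\alpha} = \measurements$ against $c \in L^\infty(\group \times \dualGroup)$ and transferring to $\innerProd{\nu_{n_\alpha}}{\gaborOp^*c}$ with $\gaborOp^*c \in C_b(\group)$, so that the constraint survives weak-* limits. Your approach therefore buys a self-contained and fully rigorous argument (together with the correct observation that for $\group = \R$ one must pass to subnets, $C_b(\R)$ being non-separable), at the cost of invoking weak-* lower semicontinuity of the TV norm, which the paper's nested-set construction avoids needing explicitly; conversely, the paper's argument is shorter but, as written, leans on a compactness claim that holds only because of the adjoint structure it never invokes.
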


\begin{proof}
	As $\measurements  = \gaborOp\meas$  the set $N \triangleq \{\measProb \in \measSpace{\group} \colon \measurements = \gaborOp\measProb\}$ is non-empty, and hence, 
	\begin{equation*}
		m \triangleq \inf\left\{\normTV{\measProb} \colon \measProb \in N\right\} < \infty.
	\end{equation*}
	For all $n \in \N\!\setminus\!\{0\}$, we define the intersection of $N$ and the closed ball of radius $m + 1/n$ in $\measSpace{\group}$ according to
	\begin{equation*}
		K_n \triangleq \left\{\nu \in N \colon \normTV{\nu} \leq m+ \frac{1}{n}\right\}.
	\end{equation*}
	 Since $N$ is the preimage of $\{\measurements\}$ under the linear operator $\gaborOp$, $N$ is convex. Further, as $\gaborOp$ is linear and bounded, $\gaborOp$ is continuous with respect to the topology induced by the norm $\normTV{\cdot}$, which implies that $N$ is closed with respect to the norm $\normTV{\cdot}$.
	Therefore, for every $n \in \N \setminus\! \{0\}$, $K_n$ is convex, closed with respect to the norm $\normTV{\cdot}$, and bounded. By application of~\cite[Cor.~3.22]{Brezis2010}, 
	every $K_n$, $n \in \N \!\setminus\! \{0\}$, is compact with respect to the weak-* topology. 
	Since the sets $K_n$, $n \in \N \!\setminus\! \{0\}$, are non-empty and $K_{n'} \subseteq K_n$ for all $n, n' \in \N \!\setminus\! \{0\}$ such that $n \leq n'$, by definition, we get  
	\begin{equation*}
		\bigcap_{n = 1}^\infty K_n \neq \emptyset.
	\end{equation*}
	Hence, we can find at least one $\nu_0$ in $\bigcap_{n = 1}^\infty K_n$ which, by definition, satisfies $\normTV{\nu_0} \leq m + 1/n$ for all $n \in \N \!\setminus\! \{0\}$. Letting $n \rightarrow \infty$, we obtain $\normTV{\nu_0} \leq m$. Since $m$ also satisfies $m \leq \normTV{\nu_0}$, it follows that $\normTV{\nu_0} = m$. This shows that $\measProb_0$ is a solution of $\gaborProb$, thereby completing the proof.
\end{proof}	

Theorem~\ref{thm: existence} shows that $\gaborProb$ has at least one solution.
Next, with the help of Fenchel duality theory~\cite[Chap.~4]{Borwein2005}, we derive necessary and sufficient conditions for $\meas$ to be contained in the set of solutions of $\gaborProb$. Specifically, we apply the following theorem.

\begin{thm}[Fenchel duality, {\cite[Thms.~4.4.2 and 4.4.3]{Borwein2005}}]
		\label{thm: fenchel}
		Let $X$ and $Y$ be Banach spaces, $\varphi \colon X \rightarrow \R \cup \{\infty\}$ and $\psi \colon Y \rightarrow \R \cup \{\infty\}$ convex functions, and $\mathcal{L} \colon X \rightarrow Y$ a bounded linear operator. Then, the primal and dual values \vspace{-0.3cm}
		\begin{align}
			p &= \inf_{x \in X} \left\{\varphi(x) + \psi(\mathcal{L}x)\right\} \label{eq: dual 0} \\
			d &= \sup_{y^* \in Y^*} \left\{-\varphi^*(\mathcal{L}^*y^*) - \psi^*(-y^*)\right\} \label{eq: dual}
		\end{align}
		satisfy the weak duality inequality $p \geq d$. Furthermore, if $\varphi$, $\psi$, and $\mathcal{L}$ satisfy the condition
		\begin{equation}
			\left(\mathcal{L} \dom \varphi\right) \cap \cont \psi \neq \emptyset,
			\label{eq: condition strong duality fenchel thm}
		\end{equation}
		then strong duality holds, i.e., $p = d$, and the supremum in the dual problem \eqref{eq: dual} is attained if $d$ is finite\footnote{Here, $p, d \in \R \cup \{\pm\infty\}$.}.
\end{thm}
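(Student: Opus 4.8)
The statement is the classical Fenchel--Rockafellar duality theorem, quoted here verbatim from \cite[Thms.~4.4.2 and 4.4.3]{Borwein2005}; in our development it is used as a black box, so strictly speaking no proof is required. Were one to reprove it, the plan would split into the elementary weak-duality bound and the substantive strong-duality statement under the constraint qualification \eqref{eq: condition strong duality fenchel thm}.

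First I would dispatch weak duality $p \geq d$ using only the Fenchel--Young inequality $\varphi(x) + \varphi^*(x^*) \geq \innerProd{x}{x^*}$, which is immediate from the definition of the conjugate. Fixing $x \in X$ and $y^* \in Y^*$, applying this inequality once to $\varphi$ at $x^* = \mathcal{L}^* y^*$ and once to $\psi$ at the point $\mathcal{L}x$ with conjugate variable $-y^*$, and invoking $\innerProd{\mathcal{L}x}{y^*} = \innerProd{x}{\mathcal{L}^* y^*}$, gives
\begin{equation*}
  \varphi^*(\mathcal{L}^* y^*) + \psi^*(-y^*) \geq -\varphi(x) - \psi(\mathcal{L}x).
\end{equation*}
Taking the supremum over $y^*$ on the left and the infimum over $x$ on the right yields $d \leq p$.

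For strong duality I would introduce the value (perturbation) function
\begin{equation*}
  v(z) \triangleq \inf_{x \in X}\left\{\varphi(x) + \psi(\mathcal{L}x + z)\right\}, \qquad z \in Y,
\end{equation*}
which is convex, being the partial infimum of the jointly convex map $(x, z) \mapsto \varphi(x) + \psi(\mathcal{L}x + z)$, and which satisfies $v(0) = p$. A direct computation of the conjugate (substituting $w = \mathcal{L}x + z$ and separating the two resulting suprema) shows $v^*(y^*) = \varphi^*(-\mathcal{L}^* y^*) + \psi^*(y^*)$, whence, after the sign change $y^* \mapsto -y^*$,
\begin{equation*}
  v^{**}(0) = \sup_{y^* \in Y^*}\left\{-v^*(y^*)\right\} = d.
\end{equation*}
Thus strong duality $p = d$ is equivalent to $v(0) = v^{**}(0)$, and attainment of the supremum in \eqref{eq: dual} amounts to $\partial v(0) \neq \emptyset$.

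It then remains to use the constraint qualification to secure both. Choosing $x_0 \in \dom\varphi$ with $\mathcal{L}x_0 \in \cont\psi$, the continuity of $\psi$ at $\mathcal{L}x_0$ forces $z \mapsto \varphi(x_0) + \psi(\mathcal{L}x_0 + z)$ to be finite and bounded above on a neighborhood of $z = 0$; since $v$ is dominated by this map, $v$ is bounded above near $0$. The main work---and the only genuinely nontrivial step---is then the standard convex-analytic fact that a convex function on a normed space which is bounded above on a neighborhood of a point is automatically continuous there, together with the fact that a convex function continuous at a point is subdifferentiable there; this is precisely where the Banach structure and a Hahn--Banach separation of the epigraph from a supporting hyperplane enter. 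Applying these to $v$ at $0$ yields simultaneously $v(0) = v^{**}(0)$, hence $p = d$, and a subgradient $y^* \in \partial v(0)$, which is exactly a dual optimizer, so the supremum in \eqref{eq: dual} is attained whenever $d$ is finite. The degenerate cases are immediate: if $p = -\infty$ then weak duality forces $d = -\infty = p$ with nothing to attain, while the qualification guarantees $\dom\varphi \neq \emptyset$ through $x_0$, so $p < +\infty$.
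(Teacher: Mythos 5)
The paper does not prove this theorem at all: it is imported verbatim from \cite[Thms.~4.4.2 and 4.4.3]{Borwein2005} and used as a black box, exactly as you observe. Your sketch---weak duality via Fenchel--Young plus $\innerProd{\mathcal{L}x}{y^*} = \innerProd{x}{\mathcal{L}^*y^*}$, and strong duality with dual attainment obtained from continuity and hence subdifferentiability at $0$ of the perturbation function $v(z) = \inf_{x}\{\varphi(x) + \psi(\mathcal{L}x+z)\}$ under the constraint qualification \eqref{eq: condition strong duality fenchel thm}---is the standard proof of this result and is sound, including your handling of the degenerate case $p = -\infty$.
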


Application of Theorem~\ref{thm: fenchel} yields the following result.

\begin{thm}[Fenchel predual]
	\label{thm: fenchel dual gabor prob}
	Let $\meas \triangleq \sum_{\measSpikeIdx \in \measSupportIdx} \measAmplitude{\measSpikeIdx} \delta_{t_\measSpikeIdx} \in \measSpace{\group}$, where $\measSupportIdx$ is a finite or countably infinite index set and $\measAmplitude{\measSpikeIdx} \in \C\!\setminus\! \{0\}$, for all $\measSpikeIdx \in \measSupportIdx$. 
 	Let $\window \in \schwartzSpace{\group} \!\setminus\! \{0\}$ and $\measurements = \gaborOp\meas$. 
	The Fenchel predual problem of $\gaborProb$ is
	\begin{equation*}
		\dualGaborProb\qquad  \maximize_{c \in L^\infty(\group \times \dualGroup)} \innerProd{c}{\measurements} \st \normInfty{\gaborOp^*c} \leq 1
	\end{equation*}
	with the adjoint operator $\gaborOp^*$ as in \eqref{eq: adjoint operator expression}.
	In addition, the following equality holds 
	\begin{align}
		&\min\Big\{\normTV{\measProb} \colon \gaborOp\measProb = \measurements, \measProb \in \measSpace{\group}\Big\} \notag \\
		&\quad = \sup\Big\{\innerProd{c}{\measurements} \colon \normInfty{\gaborOp^*c} \leq 1, c \in L^\infty(\group \times \dualGroup)\Big\}.
		\label{eq: strong duality}
	\end{align}
	Moreover, if $\dualGaborProb$ has a solution $c_0 \in L^\infty(\group \times \dualGroup)$, then
	\begin{equation}
		\bigcup_{\nu_0 \in \mathrm{Sol}\{\gaborProb\}} \supp(\nu_0) \subseteq \{t \in \group \colon \abs{(\gaborOp^*c_0)(t)} = 1\}.
		\label{eq: implication dual problem}
	\end{equation}
\end{thm}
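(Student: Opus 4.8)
The plan is to obtain $\dualGaborProb$ and the equality \eqref{eq: strong duality} as an instance of the Fenchel duality theorem (Theorem~\ref{thm: fenchel}), with the roles of primal and dual reversed: $\dualGaborProb$ will play the role of the primal problem \eqref{eq: dual 0}, and $\gaborProb$ that of the dual \eqref{eq: dual}. Concretely, I would take $X = L^\infty(\group \times \dualGroup)$, $Y = C_b(\group)$, and $\mathcal{L} = \gaborOp^* \colon X \to Y$, which is bounded and linear by Lemma~\ref{lem: lemma for adjoint operator}. I set $\varphi \colon X \to \R$, $\varphi(c) = -\innerProd{c}{\measurements}$ (a bounded linear, hence convex, functional, since $\measurements \in L^1(\group \times \dualGroup)$), and let $\psi \colon Y \to \R \cup \{\infty\}$ be the convex indicator function of the closed unit ball of $C_b(\group)$, i.e.\ $\psi(\varphi) = 0$ if $\normInfty{\varphi} \leq 1$ and $\psi(\varphi) = +\infty$ otherwise. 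With these choices the primal value \eqref{eq: dual 0} becomes $p = \inf_{c}\{-\innerProd{c}{\measurements} \colon \normInfty{\gaborOp^* c} \leq 1\} = -\sup\{\innerProd{c}{\measurements} \colon \normInfty{\gaborOp^* c} \leq 1\}$, so $p$ is the negative of the value of $\dualGaborProb$.

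Next I would compute the two Fenchel conjugates entering \eqref{eq: dual}. Using $Y^* = (C_b(\group))^* = \measSpace{\group}$ with the TV norm as dual norm via \cite[Thm.~IV.6.2]{Dunford1988}, the conjugate of the indicator of the unit ball is the support function of that ball, which is exactly the dual norm: $\psi^*(\measProb) = \sup_{\normInfty{\varphi} \leq 1}\innerProd{\varphi}{\measProb} = \normTV{\measProb}$. For $\varphi$, viewing $\measurements \in L^1 \hookrightarrow (L^\infty)^* = X^*$, one gets $\varphi^*(\eta) = \sup_{c}\innerProd{c}{\eta + \measurements}$, which equals $0$ if $\eta = -\measurements$ and $+\infty$ otherwise. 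Since $\mathcal{L}^* = \gaborOp^{**} = \gaborOp$ by Lemma~\ref{lem: lemma for adjoint operator}, and $\gaborOp\measProb \in L^1 \subseteq X^*$, the dual value \eqref{eq: dual} reads $d = \sup_{\measProb \in \measSpace{\group}}\{-\varphi^*(\gaborOp\measProb) - \psi^*(-\measProb)\} = \sup\{-\normTV{\measProb} \colon \gaborOp\measProb = -\measurements\}$. Replacing $\measProb$ by $-\measProb$ and using $\normTV{-\measProb} = \normTV{\measProb}$ turns the constraint into $\gaborOp\measProb = \measurements$, so $d = -\inf\{\normTV{\measProb} \colon \gaborOp\measProb = \measurements\}$; that is, $d$ is the negative of the value of $\gaborProb$.

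To invoke strong duality I would verify the constraint qualification \eqref{eq: condition strong duality fenchel thm}: since $\dom\varphi = X$ and $\cont\psi$ equals the open unit ball of $C_b(\group)$, the point $\gaborOp^* 0 = 0$ lies in $(\gaborOp^* \dom\varphi) \cap \cont\psi$, so \eqref{eq: condition strong duality fenchel thm} holds. Theorem~\ref{thm: fenchel} then yields $p = d$, which upon clearing signs is precisely \eqref{eq: strong duality}; the common value is finite because $\gaborProb$ has a solution by Theorem~\ref{thm: existence}.

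Finally, for the support inclusion \eqref{eq: implication dual problem}, suppose $c_0$ solves $\dualGaborProb$ and let $\nu_0 \in \mathrm{Sol}\{\gaborProb\}$; set $u \triangleq \gaborOp^* c_0 \in C_b(\group)$, so $\normInfty{u} \leq 1$. Strong duality together with feasibility of $\nu_0$ (so that $\gaborOp\nu_0 = \measurements$) and the adjoint identity give $\normTV{\nu_0} = \innerProd{c_0}{\measurements} = \innerProd{\gaborOp^* c_0}{\nu_0} = \Re\int_\group \overline{u(t)}\dmeas{\nu_0}{t}$. Since $\normInfty{u} \leq 1$, this real quantity is bounded above by $\int_\group \abs{u(t)}\dmeas{\abs{\nu_0}}{t} \leq \int_\group \dmeas{\abs{\nu_0}}{t} = \normTV{\nu_0}$, forcing equality throughout and hence $\int_\group (1 - \abs{u(t)})\dmeas{\abs{\nu_0}}{t} = 0$, i.e.\ $\abs{u} = 1$ holds $\abs{\nu_0}$-almost everywhere. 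The upgrade from ``almost everywhere'' to ``everywhere on $\supp(\nu_0)$'' is where continuity of $u$ enters: if some $t_0 \in \supp(\nu_0)$ had $\abs{u(t_0)} < 1$, continuity would produce an open neighborhood of $t_0$ on which $1 - \abs{u} > 0$ and hence (by the vanishing of the above integral) of zero $\abs{\nu_0}$-measure, contradicting $t_0 \in \supp(\nu_0)$. Taking the union over all $\nu_0 \in \mathrm{Sol}\{\gaborProb\}$ then gives \eqref{eq: implication dual problem}. The main obstacle I anticipate is the bookkeeping in the conjugate computations---in particular treating $\measurements$ and $\gaborOp\measProb$ as elements of $(L^\infty)^*$ through the embedding $L^1 \hookrightarrow (L^\infty)^*$ so that $\varphi^*$ and $\mathcal{L}^*$ compose correctly---together with the sign reversal that identifies the Fenchel dual value $d$ with the value of $\gaborProb$.
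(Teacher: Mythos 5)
Your proposal is correct and takes essentially the same route as the paper's proof: the same instantiation of the Fenchel duality theorem with $X = L^\infty(\group \times \dualGroup)$, $Y = C_b(\group)$, $\mathcal{L} = \gaborOp^*$, $\varphi(c) = -\innerProd{c}{\measurements}$, and $\psi$ the indicator of the closed unit ball, the same conjugate computations ($\psi^* = \normTV{\cdot}$ via the Riesz representation, $\varphi^*$ an indicator concentrated at $-\measurements$), the same constraint qualification checked at $0$, and the same continuity-plus-support-definition argument for \eqref{eq: implication dual problem}. If anything, your bookkeeping is slightly more careful than the paper's: you track explicitly the sign reversal between the Fenchel values $p$, $d$ and the values of $\dualGaborProb$ and $\gaborProb$, you correctly identify $\cont\psi$ as the open (rather than closed) unit ball, and you reach $\abs{(\gaborOp^*c_0)(t)} = 1$ for $\abs{\nu_0}$-almost all $t$ via the modulus inequality for integrals against complex measures rather than writing out the polar decomposition.
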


\begin{proof}
	Similarly to what was done in~\cite[Prop.~2]{Bredies2012}, we prove that $\gaborProb$ is the dual of $\dualGaborProb$ by applying Theorem~\ref{thm: fenchel} with $X = L^\infty(\group \times \dualGroup)$, $Y = C_b(\group)$, $\mathcal{L} = \gaborOp^*$ and with the functions
	\begin{equation}
		\forall c \in L^\infty(\group \times \dualGroup), \quad \varphi(c) \triangleq -\innerProd{c}{\measurements},
		\label{eq: definition varphi}
	\end{equation}
	and
	\begin{equation}
		\forall \eta \in C_b(\group), \quad \psi(\eta) \triangleq \chi_B(\eta) \triangleq \begin{cases} 0, & \eta \in B \\
			\infty, & \text{otherwise,}
			\end{cases}
		\label{eq: definition psi}
	\end{equation}
	where $\chi_B$ is the indicator function of the closed unit ball $B \triangleq \{\eta \in C_b(\group) \colon \normInfty{\eta} \leq 1\}$.  
	Thanks to Lemma~\ref{lem: lemma for adjoint operator}, we have $\mathcal{L}^* = \gaborOp^{**} = \gaborOp$. 
	As $\varphi$ is linear it is convex on $L^\infty(\group \times \dualGroup)$. The function $\psi$ is convex on $C_b(\group)$, because the closed unit ball $B$ is convex and the indicator function of a convex set is convex.
	The Fenchel convex conjugate of $\varphi$ is 
	\begin{align}
		\forall \xi \in L^1(\group \times \dualGroup), \quad \varphi^*(\xi) &= \sup\left\{\innerProd{c}{\xi} - \varphi(c) \colon c \in L^\infty(\group \times \dualGroup)\right\} \notag \\
			&= \sup\left\{\innerProd{c}{\xi + \measurements} \colon c \in L^\infty(\group \times \dualGroup)\right\} \notag \\
			&= \chi_{\{\measurements\}}(-\xi), \label{eq: definition varphi star}
	\end{align}
	where the last equality follows from the fact that the set $\left\{\innerProd{c}{\xi + \measurements} \colon c \in L^\infty(\group \times \dualGroup)\right\}$ is not bounded when $\xi \neq -\measurements$. 
	As for the convex conjugate of $\psi$, it is known that the convex conjugate of the indicator function of $B$ is the support function of $B$,
	that is,
	\begin{align}
		\forall \measProb \in \measSpace{\group}, \quad \psi^*(\measProb) &= \sigma_B(\measProb) \triangleq \sup\left\{\innerProd{\eta}{\measProb} \colon \eta \in B \right\} \notag \\
			&= \sup\left\{\innerProd{\eta}{\measProb} \colon \eta \in C_b(\group), \normInfty{\eta} \leq 1\right\} \notag \\
			&= \normTV{\measProb}, \label{eq: norm TV expression Riesz}
	\end{align}
	where \eqref{eq: norm TV expression Riesz} follows from \cite[Thm.~IV.6.2]{Dunford1988}. 
	It then follows by application of Theorem~\ref{thm: fenchel} that the dual problem to $\dualGaborProb$ is
	\begin{equation*}
		\gaborProb \quad \minimize_{\measProb \in \measSpace{\group}} \normTV{\measProb} \st \measurements = \gaborOp\measProb.
	\end{equation*}
	Next, we prove \eqref{eq: strong duality}. To this end, we verify \eqref{eq: condition strong duality fenchel thm}, which implies \eqref{eq: strong duality} by strong duality, i.e., $p = d$, as follows.
	We start by noting that with $X = L^\infty(\group \times \dualGroup)$, $Y = C_b(\group)$, $\mathcal{L} = \gaborOp^*$, and $\varphi$, $\psi$ as defined in \eqref{eq: definition varphi}, \eqref{eq: definition psi}, we get from \eqref{eq: dual 0} that
	\begin{align*}
		p &= \inf_{c \in L^\infty(\group \times \dualGroup)} \Big\{\varphi(c) + \psi(\gaborOp^*c)\Big\} = \inf_{c \in L^\infty(\group \times \dualGroup)} \Big\{\!-\innerProd{c}{\measurements} + \chi_B(\gaborOp^*c)\Big\} \\
			&= \sup_{c \in L^\infty(\group \times \dualGroup)} \Big\{\!\innerProd{c}{\measurements} - \chi_B(\gaborOp^*c)\Big\} = \sup\left\{\innerProd{c}{\measurements} \colon \normInfty{\gaborOp^*c} \leq 1, c \in L^\infty(\group \times \dualGroup)\right\},
	\end{align*}
	where in the last step we used the fact that $\chi_B(\gaborOp^*c)$ is finite, actually equal to zero, only for $c \in L^\infty(\group \times \dualGroup)$ such that $\normInfty{\gaborOp^*c} \leq 1$.
	With $\varphi^*$, $\psi^*$ as in \eqref{eq: definition varphi star}, \eqref{eq: norm TV expression Riesz}, and $(C_b(\group))^* = \measSpace{\group}$, it follows from \eqref{eq: dual} that
	\begin{align}
		d &= \sup_{\measProb \in \measSpace{\group}} \Big\{-\varphi^*(\gaborOp\measProb) - \psi^*(-\measProb)\Big\} = \sup_{\measProb \in \measSpace{\group}} \Big\{-\varphi^*(-\gaborOp\measProb) - \psi^*(\measProb)\Big\} \notag\\
			&= \sup_{\measProb \in \measSpace{\group}} \Big\{-\chi_{\{\measurements\}}(\gaborOp\measProb) - \normTV{\measProb}\Big\} = \inf_{\measProb \in \measSpace{\group}} \Big\{\chi_{\{\measurements\}}(\gaborOp\measProb) + \normTV{\measProb}\Big\} \notag \\
			&= \inf\Big\{\normTV{\measProb} \colon \gaborOp\measProb = \measurements, \measProb \in \measSpace{\group}\Big\} \label{eq: proof dual equality d 1 }\\
			&= \min\Big\{\normTV{\measProb} \colon \gaborOp\measProb = \measurements, \measProb \in \measSpace{\group}\Big\}. \label{eq: proof dual equality d 2 }
	\end{align}
	Again, in \eqref{eq: proof dual equality d 1 } we used the fact that $\chi_{\{\measurements\}}(\gaborOp\measProb)$ is finite, actually equal to zero, only for $\measProb \in \measSpace{\group}$ such that $\gaborOp\measProb = \measurements$. The infimum in \eqref{eq: proof dual equality d 1 } becomes a minimum in \eqref{eq: proof dual equality d 2 } as a consequence of the existence of a solution of $\gaborProb$, as shown in Theorem \ref{thm: existence}; this solution will be denoted by $\measProb_0$ below.
	It remains to establish \eqref{eq: condition strong duality fenchel thm}. By definition of $\varphi$, we have $\dom \varphi = L^\infty(\group \times \dualGroup)$, and $0 \in \mathcal{L}\dom(\varphi)$. Furthermore, since $\cont \psi = B$, we have $0 \in \cont \psi$ and therefore $0 \in (\mathcal{L}\dom \varphi) \cap (\cont \psi)$, which shows that \eqref{eq: condition strong duality fenchel thm} is satisfied.
	
	Next, we verify \eqref{eq: implication dual problem}. If $\dualGaborProb$ has at least one solution, say $c_0 \in L^\infty(\group \times \dualGroup)$, then necessarily $\normTV{\measOpt} = \innerProd{c_0}{\measurements}$ as a consequence of strong duality. Since $\measOpt$ is a solution of $\gaborProb$, we have $\measurements = \gaborOp\measOpt$, and consequently, 
	\begin{equation*}
		\normTV{\measOpt} = \innerProd{c_0}{\measurements} = \innerProd{c_0}{\gaborOp\measOpt} = \innerProd{\gaborOp^*c_0}{\measOpt} = \Re{\int_\group \overline{(\gaborOp^*c_0)(t)} \dmeas{\measOpt}{t}}.
	\end{equation*}
	Applying~\cite[Thm.~6.12]{Rudin1987} to $\measProb_0$, we can conclude that there exists a measurable function $\eta \colon [0, 2\pi) \rightarrow \R$ such that 
	\begin{equation*}
		\normTV{\measOpt} = \Re{\int_\group \overline{(\gaborOp^*c_0)(t)} \dmeas{\measOpt}{t}} =  \Re{\int_\group \overline{(\gaborOp^*c_0)(t)}e^{i\eta(t)} \dmeas{|\measOpt|}{t}} = \Re{\int_\group \overline{h(t)} \dmeas{|\measOpt|}{t}},
	\end{equation*}
	where we set $h \triangleq (\gaborOp^*c_0) e^{-i\eta}$. Writing $h = \abs{h}\!e^{i\theta}$, where $\theta \colon [0, 2\pi) \rightarrow \R$, we then have
	\begin{align}
		0 = \normTV{\measOpt} - \Re{\int_\group \overline{h(t)} \dmeas{|\measOpt|}{t}} &= \int_\group \dmeas{|\measOpt|}{t}  - \Re{\int_\group \abs{h(t)} e^{-i\theta(t)} \dmeas{|\measOpt|}{t}} \notag \\
			&= \int_\group \Big(1 - \abs{h(t)}\cos(\theta(t))\Big) \dmeas{|\measOpt|}{t}. \label{eq: zero almost everywhere}
	\end{align}
	As $\gaborOp^* \colon L^\infty(\group \times \dualGroup) \rightarrow C_b(\group)$, the function $\abs{h} = \abs{\gaborOp^*c_0}$ is continuous. Moreover, since $c_0$ is a solution of $\dualGaborProb$, it follows that $\normInfty{h} = \normInfty{\gaborOp^*c_0} \leq 1$, and hence,
	\begin{equation*}
		\forall t \in \group, \quad 1 - \abs{h(t)}\cos(\theta(t)) \geq 0.
	\end{equation*}
	Therefore, \eqref{eq: zero almost everywhere} implies that $\abs{h(t)}\cos(\theta(t)) = 1$ for $|\measOpt|$-almost all $t \in \group$, which, as a consequence of $\cos \leq 1$ and $\normInfty{h} \leq 1$, yields $\abs{h(t)} = 1$, for $\abs{\nu_0}$-almost all $t \in \group$. We complete the proof by establishing that for all $t \in \supp(\measProb_0)$, $\abs{h(t)} = \abs{(\gaborOp^*c_0)(t)} = 1$. Since $\abs{h(t)} = 1$ for $\abs{\measProb_0}$-almost all $t \in \group$, the set $S = \{t \in \group \colon \abs{h(t)} \neq 1\}$ satisfies $|\measOpt|(S) = 0$. We need to show that $S \cap \supp(\measProb_0) = \emptyset$. To this end, suppose, by way of contradiction, that there exists a $t_0 \in \supp(\measOpt)$ such that $\abs{h(t_0)} \neq 1$. Then, $S \cap \supp(\measOpt) \neq \emptyset$ as $t_0 \in S \cap \supp(\measOpt)$. Since $\abs{h}$ is continuous  on $\group$ and $\R\!\setminus\!\{1\}$ is open in $\R$, the set $S$ is open. Hence, by definition of $\supp(\measProb_0)$, we have $|\measOpt|(S) \geq |\measOpt|(S \cap \supp(\measOpt)) > 0$, which contradicts $|\measOpt|(S) = 0$ and thereby completes the proof.
\end{proof}

We emphasize that $\dualGaborProb$ is the \textit{pre}dual problem of $\gaborProb$, i.e., $\gaborProb$ is the dual problem of~$\dualGaborProb$. The dual problem of $\gaborProb$ is, however, not $\dualGaborProb$ as the spaces $L^1(\R^2)$ and $C_b(\R)$ are not reflexive. We thus remark that the second statement in Theorem~\ref{thm: fenchel} concerning strong duality cannot be applied by taking $\gaborProb$ as the primal problem, that is, with $X = \measSpace{\group}$, $Y = L^1(\group \times \dualGroup)$, $\mathcal{L} = \gaborOp$, $\varphi(\measProb) = \normTV{\measProb}$ for all $\measProb \in \measSpace{\group}$, and $\psi(c) = \chi_{\{y\}}(c)$ for all $c \in L^1(\group \times \dualGroup)$.
Indeed, the function $\chi_{\{y\}}$ is not continuous at $y$, implying that Condition~\eqref{eq: condition strong duality fenchel thm} is not met. 

We can now apply Theorem~\ref{thm: fenchel dual gabor prob} to conclude the following: Take $\group = \R$ and $\window(t) = \frac{1}{\sqrt{\sigma}}\exp\left(-\frac{\pi t^2}{2\sigma^2}\right)$, for $t \in \R$. Assuming that $\dualGaborProb$ has a solution\footnote{A condition for the existence of a solution of $\dualGaborProb$ will be given in Corollary~\ref{cor: existence solution dual}.}, which we denote by $c_0 \in L^\infty(\R^2)$, the support $\measSupport = \{t_\measSpikeIdx\}_{\measSpikeIdx \in \measSupportIdx}$ of the measure $\meas$ to be recovered must satisfy $\abs{(\gaborOp^*c_0)(t_\measSpikeIdx)} = 1$, for $\measSpikeIdx \in \measSupportIdx$, if $\meas$ is to be contained in the set of solutions of $\gaborProb$. Furthermore, if $\dualGaborProb$ has a solution $c_0 \in L^\infty(\R^2)$ such that $\abs{\gaborOp^*c_0}$ is not identically equal to $1$ on $\R$, then every solution of $\gaborProb$ is a discrete measure despite the fact that $\gaborProb$ is an optimization problem over the space of \textit{all} measures in $\measSpace{\group}$. This can be seen by employing an argument similar to the one in \cite[p.~361]{Beurling1989-2} as follows: Since the window function $\window$ is Gaussian and
\begin{equation*}
	\forall t \in \R, \quad (\gaborOp^*c_0)(t) = \int_\R c_0(\tau, \freq)\window(t - \tau)e^{2\pi i\freq t}\mathrm{d}\tau\mathrm{d}\freq,
\end{equation*}
both $\window$ and $\gaborOp^*c_0$ can be extended to entire functions which, for simplicity, we also denote by $\window$ and $\gaborOp^*c_0$, that is,
\begin{align}
	&\forall z \in \C, \quad \window(z) = \frac{1}{\sqrt{\sigma}} \exp\left(-\frac{\pi z^2}{2\sigma^2}\right) \quad \text{and} \notag\\
	&(\gaborOp^*c_0)(z) = \int_{-\freq_c}^{\freq_c}\int_\R c_0(\tau, \freq)\window(z - \tau)e^{2\pi i\freq z}\mathrm{d}\tau\mathrm{d}\freq. \label{eq: interpolating function}
\end{align}
It is known that $\window$ is holomorphic on $\C$ \cite[Chap.~10]{Rudin1987}, but it remains to show that so is $\gaborOp^*c_0$. To this end, let $T$ be an arbitrary closed triangle in $\C$. We have
\begin{align}
	\int_{\partial T} (\gaborOp^*c_0)(z)\mathrm{d}z &= \int_{\partial T}\left(\int_{-\freq_c}^{\freq_c}\int_\R c_0(\tau, \freq) \window(z - \tau)e^{2\pi i\freq z}\mathrm{d}\tau\mathrm{d}\freq\right)\mathrm{d}z \notag \\
		&= \int_{\freq_c}^{\freq_c}\int_\R c_0(\tau, \freq) \left(\int_{\partial T} \window(z - \tau)e^{2\pi i\freq z}\mathrm{d}z\right) \mathrm{d}\tau\mathrm{d}\freq  \label{eq: analytic continuation fubini} \\
		&= 0,  \label{eq: analytic continuation cauchy}
\end{align}
where we used Fubini's Theorem in \eqref{eq: analytic continuation fubini} and Cauchy's Theorem~\cite[Thm.~10.13]{Rudin1987} in \eqref{eq: analytic continuation cauchy}. To verify the conditions for Fubini's Theorem, we note that $T$ is bounded, i.e., there exists $M > 0$ such that $\abs{z} \leq M$ for all $z \in T$, which implies that for all $z \triangleq r + iq \in T$, $\tau \in \R$, and $\freq \in [-\freq_c, \freq_c]$, we have 
\begin{align*}
	\abs{\window(z - \tau)e^{2\pi i\freq z}} &= \frac{1}{\sqrt{\sigma}}\exp\!\left(-\frac{\pi(r - \tau)^2}{2\sigma^2}\right)\exp\!\left(\frac{\pi q^2}{2\sigma^2}\right)\exp(-2\pi\freq q) \\
		&\leq \frac{1}{\sqrt{\sigma}}\exp\!\left(-\frac{\pi(r - \tau)^2}{2\sigma^2}\right)\exp\!\left(\frac{\pi M^2}{2\sigma^2}\right)\exp(2\freq_c M),
\end{align*}
where we used $\abs{q} \leq M$ and $\freq \in [-\freq_c, \freq_c]$ in the last inequality.
This yields
\begin{align*}
	\int_{-\freq_c}^{\freq_c}\int_\R& \abs{c_0(\tau, \freq)} \abs{\window(z - \tau)e^{2\pi i\freq z}} \mathrm{d}\tau\mathrm{d}\freq \\
		&\leq \frac{2\freq_c\normLinfty{c_0}}{\sqrt{\sigma}}\left(\int_\R\exp\!\left(-\frac{\pi(r - \tau)^2}{2\sigma^2}\right)\mathrm{d}\tau\right)\exp\!\left(\frac{\pi M^2}{2\sigma^2}\right)\exp(2\freq_c M) \\
		&= 2\freq_c\normLinfty{c_0}\sqrt{2\sigma}\exp\!\left(\frac{\pi M^2}{2\sigma^2}\right)\exp(2\freq_c M),
\end{align*}
and hence
\begin{align*}
	\int_{\partial T}\int_{-\freq_c}^{\freq_c}\int_\R& \abs{c_0(\tau, \freq)} \abs{\window(z - \tau)e^{2\pi i\freq z}} \mathrm{d}\tau\mathrm{d}\freq\mathrm{d}z \\
		&\leq 2L\freq_c\normLinfty{c_0}\sqrt{2\sigma}\exp\!\left(\frac{\pi M^2}{2\sigma^2}\right)\exp(2\freq_c M) < \infty,
\end{align*}
where $L$ is the length of $\partial T$. By Morera's Theorem~\cite[Thm.~10.17]{Rudin1987} we can therefore conclude from \eqref{eq: analytic continuation cauchy} that $\gaborOp^*c_0$ is holomorphic on $\C$.
We can then define the function
\begin{equation*}
	\forall z \in \C, \quad h(z) \triangleq 1 - (\gaborOp^*c_0)(z)\overline{(\gaborOp^*c_0)(\overline{z})}.
\end{equation*}
Since $\abs{\gaborOp^*c_0}$ is not identically equal to $1$ on $\R$, by assumption, and $h$ is holomorphic on $\C$ by construction, $h$ is not identically equal to $0$. Therefore, thanks to~\cite[Thm.~10.18]{Rudin1987} the set $\{z \in \C \colon h(z) = 0\}$, and a fortiori the set $\{t \in \R \colon \abs{(\gaborOp^*c_0)(t)} = 1\}$, are at most countable and have no limit points. But since \eqref{eq: implication dual problem} holds, this implies that every solution $\nu_0$ of $\gaborProb$ must have discrete support, and therefore, $\nu_0$ is necessarily a discrete measure. In the case $\group = \torus$, if $\window$ is chosen to be a periodized Gaussian function
\begin{equation*}
	\forall t \in \torus, \quad \window(t) = \frac{1}{\sqrt{\sigma}}\sum_{n \in \Z} \exp\left(-\frac{\pi(t - n)^2}{2\sigma^2}\right),
\end{equation*}
a similar line of reasoning can be applied to establish that every solution of $\gaborProb$ is discrete provided that $\dualGaborProb$ has at least one solution.

We have established the existence of a solution of $\gaborProb$ and shown that $\gaborProb$ is the dual problem of $\dualGaborProb$. Now we wish to derive conditions under which the measure $\mu$ is the unique solution of $\gaborProb$. Similarly to~\cite[Sec.~2.4]{Duval2013}, we start by providing a necessary and sufficient condition for $\meas$ to be a solution of $\gaborProb$, which leads to the following theorem.

\begin{thm}[Optimality conditions]
	\label{thm: equivalence condition mu sol}
	Let $\meas \triangleq \sum_{\measSpikeIdx \in \measSupportIdx} \measAmplitude{\measSpikeIdx} \delta_{t_\measSpikeIdx} \in \measSpace{\group}$, where $\measSupportIdx$ is a finite or countably infinite index set and $\measAmplitude{\measSpikeIdx} \in \C\!\setminus\! \{0\}$, for all $\measSpikeIdx \in \measSupportIdx$. 
 	Let $\window \in \schwartzSpace{\group} \!\setminus\! \{0\}$. 
	The measure $\meas$ to be recovered is contained in the set of solutions of $\gaborProb$ if and only if there exists $c_0 \in L^\infty(\group \times \dualGroup)$ such that
	\begin{equation}
		\normInfty{\gaborOp^*c_0} \leq 1 \quad \text{and} \quad \forall \measSpikeIdx \in \measSupportIdx, \quad (\gaborOp^*c_0)(t_\measSpikeIdx) = \frac{\measAmplitude{\measSpikeIdx}}{\abs{\measAmplitude{\measSpikeIdx}}}.
		\label{eq: equivalence condition mu sol}
	\end{equation}
\end{thm}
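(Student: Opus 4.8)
The plan is to read the two conditions in \eqref{eq: equivalence condition mu sol} as saying that $\gaborOp^*c_0 \in C_b(\group)$ is a \emph{dual certificate}, i.e. an element of the subdifferential $\partial\normTV{\cdot}\,(\meas)$ of the TV norm at $\meas$ that additionally lies in the range of $\gaborOp^*$. The two ingredients I would isolate first are, on the one hand, the elementary weak duality inequality
\[
\innerProd{c}{\measurements}
= \innerProd{c}{\gaborOp\measProb}
= \innerProd{\gaborOp^*c}{\measProb}
= \Re{\int_\group \overline{(\gaborOp^*c)(t)}\dmeas{\measProb}{t}}
\le \normInfty{\gaborOp^*c}\,\normTV{\measProb},
\]
valid for every $c \in L^\infty(\group\times\dualGroup)$ feasible for $\dualGaborProb$ and every $\measProb$ feasible for $\gaborProb$ (using $\measurements = \gaborOp\measProb$ and Lemma~\ref{lem: lemma for adjoint operator}), and, on the other hand, the sign-interpolation fact: for the \emph{discrete} measure $\meas = \sum_{\measSpikeIdx\in\measSupportIdx}\measAmplitude{\measSpikeIdx}\delta_{t_\measSpikeIdx}$ and any $\eta\in C_b(\group)$ with $\normInfty{\eta}\le 1$, one has $\innerProd{\eta}{\meas}=\normTV{\meas}$ if and only if $\eta(t_\measSpikeIdx)=\measAmplitude{\measSpikeIdx}/\abs{\measAmplitude{\measSpikeIdx}}$ for all $\measSpikeIdx\in\measSupportIdx$. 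This is the countable-sum version of the termwise-equality argument already carried out around \eqref{eq: zero almost everywhere}, since $\Re(\overline{\eta(t_\measSpikeIdx)}\,\measAmplitude{\measSpikeIdx})\le\abs{\measAmplitude{\measSpikeIdx}}$ with equality exactly at the stated phase, and the bounds sum to $\normTV{\meas}<\infty$.

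For sufficiency I would argue directly from weak duality, avoiding any attainment statement. Assume $c_0$ satisfies \eqref{eq: equivalence condition mu sol}. Then $c_0$ is feasible for $\dualGaborProb$, and the sign-interpolation fact gives $\innerProd{c_0}{\measurements}=\innerProd{\gaborOp^*c_0}{\meas}=\sum_{\measSpikeIdx\in\measSupportIdx}\abs{\measAmplitude{\measSpikeIdx}}=\normTV{\meas}$. Since $\meas$ is itself feasible for $\gaborProb$, the displayed inequality yields $\normTV{\measProb}\ge\innerProd{c_0}{\measurements}=\normTV{\meas}$ for every feasible $\measProb$; hence $\meas$ minimizes $\normTV{\cdot}$ over the feasible set and $\meas\in\mathrm{Sol}\{\gaborProb\}$.

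For necessity, suppose $\meas\in\mathrm{Sol}\{\gaborProb\}$. The conceptual picture is that optimality of $\meas$ over the affine set $\meas+\ker\gaborOp$ forces, via the subdifferential sum rule (applicable because $\normTV{\cdot}$ is finite and continuous on $\measSpace{\group}$), the existence of a subgradient $\eta\in\partial\normTV{\cdot}\,(\meas)$ with $\eta$ annihilating $\ker\gaborOp$; by the first paragraph any such $\eta$ automatically obeys $\normInfty{\eta}\le1$ and $\eta(t_\measSpikeIdx)=\measAmplitude{\measSpikeIdx}/\abs{\measAmplitude{\measSpikeIdx}}$. The clean route I would actually take, however, is through strong duality: by \eqref{eq: strong duality}, $\normTV{\meas}=\sup\{\innerProd{c}{\measurements}\colon\normInfty{\gaborOp^*c}\le 1\}$, so it suffices to exhibit a genuine maximizer $c_0\in L^\infty(\group\times\dualGroup)$ of $\dualGaborProb$. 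Once such a $c_0$ is in hand, strong duality forces $\innerProd{\gaborOp^*c_0}{\meas}=\normTV{\meas}$ with $\normInfty{\gaborOp^*c_0}\le1$, and the sign-interpolation fact pins down the phases as $(\gaborOp^*c_0)(t_\measSpikeIdx)=\measAmplitude{\measSpikeIdx}/\abs{\measAmplitude{\measSpikeIdx}}$, giving \eqref{eq: equivalence condition mu sol}.

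I expect the main obstacle to be precisely this attainment of the \emph{predual} $\dualGaborProb$, rather than the algebra of the optimality conditions. The difficulty is genuine: the subgradient supplied by the sum rule only lands in the annihilator $(\ker\gaborOp)^\perp$, which equals the \emph{weak-$*$ closure} of $\mathrm{range}(\gaborOp^*)$, and because $L^1(\group\times\dualGroup)$ and $C_b(\group)$ fail to be reflexive---the same reason the Fenchel constraint qualification \eqref{eq: condition strong duality fenchel thm} breaks down when $\gaborProb$ is taken as the primal---there is no free guarantee that $\eta$ can be written as $\gaborOp^*c_0$ for an actual $c_0\in L^\infty(\group\times\dualGroup)$ rather than merely as a weak-$*$ limit of such functions. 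I would resolve this by invoking the separate existence result for a solution of $\dualGaborProb$ (Corollary~\ref{cor: existence solution dual}), whose feasible-point optimum then plays the role of $c_0$; the remaining steps are the two routine computations recorded in the first paragraph, so that once predual attainment is secured the necessity direction closes immediately.
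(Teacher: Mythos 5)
Your sufficiency direction is correct and complete: weak duality via Lemma~\ref{lem: lemma for adjoint operator} together with the sign-interpolation computation is exactly the standard certificate argument, and it agrees with what the paper's subdifferential formulation yields. The problem is in the necessity direction, and it is a circularity rather than a missing estimate. You correctly isolate predual attainment---the existence of an actual maximizer $c_0 \in L^\infty(\group \times \dualGroup)$ of $\dualGaborProb$---as the crux, and you correctly note that \eqref{eq: strong duality} only asserts a supremum and that non-reflexivity blocks any free attainment argument. But you then propose to close this hole by invoking Corollary~\ref{cor: existence solution dual}. In the paper, that corollary is a \emph{consequence} of Theorem~\ref{thm: equivalence condition mu sol}: its proof begins by applying this very theorem to produce a certificate $c_0$ satisfying \eqref{eq: equivalence condition mu sol}, which is then shown to attain the supremum in \eqref{eq: strong duality}. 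So using Corollary~\ref{cor: existence solution dual} inside the proof of Theorem~\ref{thm: equivalence condition mu sol} is circular; at the point where this theorem is proved, the paper has no existence result at all for solutions of $\dualGaborProb$ (the footnote preceding the theorem says explicitly that such a condition will only be given in Corollary~\ref{cor: existence solution dual}).

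The paper closes the necessity direction by a different route: it applies the convex-analysis result \cite[Thm.~4.3.6]{Borwein2005}---in the duality orientation of Theorem~\ref{thm: fenchel dual gabor prob}, where $\dualGaborProb$ is the primal and the constraint qualification \eqref{eq: condition strong duality fenchel thm} has already been verified---to conclude directly that $\meas$ solves $\gaborProb$ if and only if there exists $c_0 \in L^\infty(\group \times \dualGroup)$ with $\gaborOp^* c_0 \in \partial\!\normTV{\cdot}(\meas)$. The existence of $c_0$, i.e.\ precisely your attainment problem, is delivered by that theorem rather than by a separate argument; the proof then identifies $\partial\!\normTV{\cdot}(\meas)$ via \cite[Prop.~7]{Duval2013} and performs the same phase computation you describe. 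To repair your proof you would need either to invoke such a Fenchel-duality result yourself or to give an independent proof of predual attainment; the subgradient-in-$(\ker\gaborOp)^\perp$ route you sketch cannot substitute for it, for exactly the weak-* closure reason you yourself identify.
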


\begin{proof}
	Since $\measurements = \gaborOp\meas$, it follows by application of \cite[Thm.~4.3.6]{Borwein2005} that $\meas$ is a solution of $\gaborProb$ if and only if there exists $c_0 \in L^\infty(\group \times \dualGroup)$ such that $\gaborOp^*c_0 \in \partial\!\normTV{\cdot}(\meas)$. Since the subdifferential of the norm $\normTV{\cdot}$ is given by~\cite[Prop.~7]{Duval2013} 
	\begin{equation*}
		\forall \measProb \in \measSpace{\group}, \quad \partial\!\normTV{\cdot}(\measProb) = \left\{\eta \in C_b(\group) \colon \innerProd{\eta}{\measProb} = \normTV{\measProb}, \normInfty{\eta} \leq 1\right\},
	\end{equation*}
	it follows that $\meas$ is a solution of $\gaborProb$ if and only if there exists $c_0 \in L^\infty(\group \times \dualGroup)$ such that $\innerProd{\gaborOp^*c_0}{\meas} = \normTV{\meas}$ and $\normInfty{\gaborOp^*c_0} \leq 1$. Since $\meas = \sum_{\measSpikeIdx \in \measSupportIdx} \measAmplitude{\measSpikeIdx} \delta_{t_\measSpikeIdx}$, the condition $\innerProd{\gaborOp^*c_0}{\meas} = \normTV{\meas}$ is equivalent to
	\begin{equation}
		\Re{\sum_{\measSpikeIdx \in \measSupportIdx} \measAmplitude{\measSpikeIdx} \overline{(\gaborOp^*c_0)(t_\measSpikeIdx)}} = \sum_{\measSpikeIdx \in \measSupportIdx} \abs{\measAmplitude{\measSpikeIdx}}.
		\label{eq: equivalent condition cos}
	\end{equation}
	For every $\measSpikeIdx \in \measSupportIdx$, we can write
	\begin{equation}
		\measAmplitude{\measSpikeIdx}\overline{(\gaborOp^*c_0)(t_\measSpikeIdx)} = \abs{\measAmplitude{\measSpikeIdx}}\abs{(\gaborOp^*c_0)(t_\measSpikeIdx)} e^{i\theta_\measSpikeIdx},
		\label{eq: equation to obtain aell divided by its module}
	\end{equation}
	 where $\theta_\measSpikeIdx \in [0, 2\pi)$. Thus, \eqref{eq: equivalent condition cos} is equivalent to
	\begin{equation}
		\sum_{\measSpikeIdx \in \measSupportIdx} \abs{\measAmplitude{\measSpikeIdx}} \Big(1 - \cos(\theta_\measSpikeIdx)\abs{(\gaborOp^*c_0)(t_\measSpikeIdx)}\Big) = 0.
		\label{eq: condition can only be true}
	\end{equation}
	Since we also have $\normInfty{\gaborOp^*c_0} \leq 1$, \eqref{eq: condition can only be true} can only be true if $\abs{(\gaborOp^*c_0)(t_\measSpikeIdx)} = 1$ and $\theta_\measSpikeIdx = 0$, both for all $\measSpikeIdx \in \measSupportIdx$. Therefore, from \eqref{eq: equation to obtain aell divided by its module}, we can infer that
	\begin{equation*}
		\forall \measSpikeIdx \in \measSupportIdx, \quad (\gaborOp^*c_0)(t_\measSpikeIdx) = \frac{\measAmplitude{\measSpikeIdx}}{\abs{\measAmplitude{\measSpikeIdx}}},
	\end{equation*}
	which concludes the proof.
\end{proof}

Thanks to Theorem~\ref{thm: equivalence condition mu sol}, we can now show the following, which yields a sufficient condition for $\dualGaborProb$ to have at least one solution, an assumption made previously to show that every solution of $\gaborProb$ is discrete.

\begin{cor}
	Let $\meas \triangleq \sum_{\measSpikeIdx \in \measSupportIdx} \measAmplitude{\measSpikeIdx} \delta_{t_\measSpikeIdx} \in \measSpace{\group}$, where $\measSupportIdx$ is a finite or countably infinite index set and $\measAmplitude{\measSpikeIdx} \in \C\!\setminus\! \{0\}$, for all $\measSpikeIdx \in \measSupportIdx$. 
 	Let $\window \in \schwartzSpace{\group} \!\setminus\! \{0\}$. 
	If $\meas$ is contained in the set of solutions of $\gaborProb$, then $\dualGaborProb$ has at least one solution.
	\label{cor: existence solution dual}
\end{cor}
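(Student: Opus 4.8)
The plan is to produce an explicit solution of $\dualGaborProb$ directly from the dual certificate furnished by Theorem~\ref{thm: equivalence condition mu sol}. Since $\meas$ is assumed to lie in $\mathrm{Sol}\{\gaborProb\}$, that theorem guarantees the existence of a $c_0 \in L^\infty(\group \times \dualGroup)$ satisfying $\normInfty{\gaborOp^*c_0} \leq 1$ together with the interpolation conditions $(\gaborOp^*c_0)(t_\measSpikeIdx) = \measAmplitude{\measSpikeIdx}/\abs{\measAmplitude{\measSpikeIdx}}$ for all $\measSpikeIdx \in \measSupportIdx$, as stated in \eqref{eq: equivalence condition mu sol}. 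The first of these already shows that $c_0$ is feasible for $\dualGaborProb$; it remains only to verify that $c_0$ attains the supremum.

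Next I would evaluate the dual objective at $c_0$. Using $\measurements = \gaborOp\meas$ and the adjoint relation established in Lemma~\ref{lem: lemma for adjoint operator}, we have $\innerProd{c_0}{\measurements} = \innerProd{c_0}{\gaborOp\meas} = \innerProd{\gaborOp^*c_0}{\meas}$. Writing out the real dual pairing between $\gaborOp^*c_0 \in C_b(\group)$ and $\meas = \sum_{\measSpikeIdx \in \measSupportIdx} \measAmplitude{\measSpikeIdx}\delta_{t_\measSpikeIdx}$ and inserting the interpolation conditions gives
\begin{equation*}
	\innerProd{\gaborOp^*c_0}{\meas} = \Re{\sum_{\measSpikeIdx \in \measSupportIdx} \measAmplitude{\measSpikeIdx}\,\overline{(\gaborOp^*c_0)(t_\measSpikeIdx)}} = \Re{\sum_{\measSpikeIdx \in \measSupportIdx} \frac{\abs{\measAmplitude{\measSpikeIdx}}^2}{\abs{\measAmplitude{\measSpikeIdx}}}} = \sum_{\measSpikeIdx \in \measSupportIdx} \abs{\measAmplitude{\measSpikeIdx}} = \normTV{\meas},
\end{equation*}
where the interchange of summation and pairing is justified by the same absolutely convergent estimate (Cauchy--Schwarz plus $\sum_\measSpikeIdx \abs{\measAmplitude{\measSpikeIdx}} < \infty$) used in Theorem~\ref{thm: fenchel dual gabor prob}.

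Finally I would invoke strong duality. Because $\meas$ is a solution of $\gaborProb$, its TV norm equals the minimum on the left-hand side of \eqref{eq: strong duality}, and Theorem~\ref{thm: fenchel dual gabor prob} asserts that this minimum coincides with the supremum defining $\dualGaborProb$. Combining this with the computation above yields
\begin{equation*}
	\innerProd{c_0}{\measurements} = \normTV{\meas} = \sup\Big\{\innerProd{c}{\measurements} \colon \normInfty{\gaborOp^*c} \leq 1,\ c \in L^\infty(\group \times \dualGroup)\Big\},
\end{equation*}
so the feasible point $c_0$ attains the supremum and is therefore a solution of $\dualGaborProb$. I expect no serious obstacle: the corollary is essentially a repackaging of Theorems~\ref{thm: fenchel dual gabor prob} and~\ref{thm: equivalence condition mu sol}, and the only point requiring a modicum of care is confirming simultaneously that $c_0$ is genuinely feasible and that its objective value matches the primal optimum---both of which are immediate from the two conditions in \eqref{eq: equivalence condition mu sol}.
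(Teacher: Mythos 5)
Your proposal is correct and follows essentially the same route as the paper's own proof: extract the certificate $c_0$ from Theorem~\ref{thm: equivalence condition mu sol}, note feasibility from $\normInfty{\gaborOp^*c_0} \leq 1$, compute $\innerProd{c_0}{\measurements} = \normTV{\meas}$ via the adjoint relation and the interpolation conditions, and conclude by strong duality \eqref{eq: strong duality} that the supremum is attained at $c_0$. No gaps; the extra remark on interchanging summation and pairing is a harmless refinement of what the paper does implicitly.
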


\begin{proof}
	If the measure $\meas$ to be recovered is a solution of $\gaborProb$, then by Theorem~\ref{thm: equivalence condition mu sol}, there must exist $c_0 \in L^\infty(\group \times \dualGroup)$ such that \eqref{eq: equivalence condition mu sol} holds.
	Since $\normInfty{\gaborOp^*c_0} \leq 1$, $c_0$ is feasible for $\dualGaborProb$. Moreover, this $c_0$ satisfies
	\begin{align*}
		\innerProd{c_0}{\measurements} &= \innerProd{c_0}{\gaborOp\meas} = \innerProd{\gaborOp^*c_0}{\meas} = \Re{\int_\group \overline{(\gaborOp^*c_0)(t)} \dmeas{\meas}{t}} \\
			&= \Re{\sum_{\measSpikeIdx \in \measSupportIdx} \measAmplitude{\measSpikeIdx} \overline{(\gaborOp^*c_0)(t_\measSpikeIdx)}} = \sum_{\measSpikeIdx \in \measSupportIdx} \abs{\measAmplitude{\measSpikeIdx}} = \normTV{\meas},
	\end{align*}
	which shows that the supremum in~\eqref{eq: strong duality} is attained for~$c_0$. Therefore, $c_0$ is a solution of $\dualGaborProb$.
\end{proof}

Theorem~\ref{thm: equivalence condition mu sol} provides conditions on $\meas$ to be a solution of~$\gaborProb$. However, we need more, namely, conditions on $\meas$ to be the \emph{unique} solution of $\gaborProb$. Such conditions are given in the following theorem, which is a straightforward adaptation of \cite[App.~A]{Candes2012}.

\begin{thm}[Uniqueness]
	Let $\meas \triangleq \sum_{\measSpikeIdx \in \measSupportIdx} \measAmplitude{\measSpikeIdx} \delta_{t_\measSpikeIdx} \in \measSpace{\group}$, where $\measSupportIdx$ is a finite or countably infinite index set and $\measAmplitude{\measSpikeIdx} \in \C\!\setminus\! \{0\}$, for all $\measSpikeIdx \in \measSupportIdx$. 
 	Let $\window \in \schwartzSpace{\group} \!\setminus\! \{0\}$. 
	If for every sequence $\varepsilon = \{\varepsilon_\measSpikeIdx\}_{\measSpikeIdx \in \measSupportIdx}$ of unit magnitude complex numbers, there exists a function $c_0 \in L^\infty(\group \times \dualGroup)$ obeying
	\begin{align}
		\forall \measSpikeIdx \in \measSupportIdx, \quad &(\gaborOp^*c_0)(t_\measSpikeIdx) = \varepsilon_\measSpikeIdx \label{eq: interpolation}\\
		\forall t \in \group\!\setminus\!\measSupport, \quad &\abs{(\gaborOp^*c_0)(t)} < 1, \label{eq: constraint interpolation}
	\end{align}
	where $\measSupport \triangleq \{t_\measSpikeIdx\}_{\measSpikeIdx \in \measSupportIdx}$, then $\meas$ is the unique solution of $\gaborProb$.
	\label{thm: uniqueness dual certificate}
\end{thm}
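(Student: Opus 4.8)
The plan is to run the classical dual-certificate contradiction argument. First I would note that the hypothesis is in particular available for the sign sequence $\varepsilon_\measSpikeIdx = \measAmplitude{\measSpikeIdx}/\abs{\measAmplitude{\measSpikeIdx}}$, which via \eqref{eq: interpolation} yields a $c_0$ meeting \eqref{eq: equivalence condition mu sol}, so Theorem~\ref{thm: equivalence condition mu sol} already shows that $\meas$ is \emph{a} solution of $\gaborProb$; the work is to exclude a second one. So let $\measProb$ be any solution of $\gaborProb$ and set $h \triangleq \measProb - \meas \in \measSpace{\group}$. Because $\gaborOp\measProb = \measurements = \gaborOp\meas$ and $\gaborOp$ is linear, $\gaborOp h = 0$. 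I would then split $h = h_{\measSupport} + h_{\measSupport^c}$ into its restrictions to the (closed, uniformly discrete) support $\measSupport$ and to $\group\!\setminus\!\measSupport$, writing $h_{\measSupport} = \sum_{\measSpikeIdx\in\measSupportIdx} b_\measSpikeIdx \delta_{t_\measSpikeIdx}$; since the two pieces live on disjoint Borel sets, $\normTV{h} = \normTV{h_{\measSupport}} + \normTV{h_{\measSupport^c}}$.

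The decisive maneuver is to choose the sign sequence adaptively: set $\varepsilon_\measSpikeIdx = b_\measSpikeIdx/\abs{b_\measSpikeIdx}$ when $b_\measSpikeIdx \neq 0$ (and arbitrarily otherwise), and invoke the hypothesis for \emph{this} sequence to obtain a certificate $c_0$ whose associated function $q \triangleq \gaborOp^*c_0$ satisfies $q(t_\measSpikeIdx)=\varepsilon_\measSpikeIdx$, $\abs{q(t)}<1$ on $\group\!\setminus\!\measSupport$, and $\normInfty{q}\leq 1$. By the adjoint identity of Lemma~\ref{lem: lemma for adjoint operator}, $\innerProd{\gaborOp^*c_0}{h} = \innerProd{c_0}{\gaborOp h} = 0$, i.e. $\Re\int_\group \overline{q(t)}\dmeas{h}{t}=0$. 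Splitting this pairing over $\measSupport$ and its complement, the support part is $\Re\sum_\measSpikeIdx \overline{q(t_\measSpikeIdx)}\,b_\measSpikeIdx = \Re\sum_\measSpikeIdx \overline{\varepsilon_\measSpikeIdx}\,b_\measSpikeIdx = \sum_\measSpikeIdx \abs{b_\measSpikeIdx} = \normTV{h_{\measSupport}}$ by the sign choice, so the complement part equals $-\normTV{h_{\measSupport}}$. The chain $\normTV{h_{\measSupport}} = \bigl|\Re\int_{\group\setminus\measSupport}\overline{q}\dmeas{h}{t}\bigr| \leq \int_{\group\setminus\measSupport}\abs{q}\dmeas{\abs{h}}{t} \leq \int_{\group\setminus\measSupport}\dmeas{\abs{h}}{t} = \normTV{h_{\measSupport^c}}$ then gives $\normTV{h_{\measSupport^c}} \geq \normTV{h_{\measSupport}}$.

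Finally I would invoke optimality of $\measProb$. As $\meas$ is supported on $\measSupport$, the triangle inequality yields $\normTV{\measProb} = \sum_\measSpikeIdx \abs{\measAmplitude{\measSpikeIdx}+b_\measSpikeIdx} + \normTV{h_{\measSupport^c}} \geq \normTV{\meas} - \normTV{h_{\measSupport}} + \normTV{h_{\measSupport^c}}$, and $\normTV{\measProb}=\normTV{\meas}$ forces the reverse bound $\normTV{h_{\measSupport^c}} \leq \normTV{h_{\measSupport}}$. Combining the two gives equality, so the inequalities of the previous paragraph are all equalities; in particular $\int_{\group\setminus\measSupport}(1-\abs{q})\dmeas{\abs{h}}{t}=0$. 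Since $\abs{q}<1$ on $\group\!\setminus\!\measSupport$ by \eqref{eq: constraint interpolation} and $q$ is continuous there, this forces $\abs{h}(\group\!\setminus\!\measSupport)=0$, hence $h_{\measSupport^c}=0$; then $\normTV{h_{\measSupport}}=\normTV{h_{\measSupport^c}}=0$, so $h=0$ and $\measProb=\meas$, establishing uniqueness.

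I expect the principal obstacle to be conceptual rather than computational: it is precisely here that the quantification over \emph{all} unit-magnitude sign sequences in the hypothesis is needed, because the interpolated signs must be matched to the a priori unknown masses $b_\measSpikeIdx$ of the competing solution. On the technical side, for $\group=\R$ with a countably infinite support one must carefully justify the termwise evaluation of the pairing over $\measSupport$ and the additivity $\normTV{h} = \normTV{h_{\measSupport}}+\normTV{h_{\measSupport^c}}$, which rely on the uniform discreteness of $\measSupport$ together with $h\in\measSpace{\group}$ (so that $\sum_\measSpikeIdx\abs{b_\measSpikeIdx}<\infty$); the passage from the equality case $\int\abs{q}\dmeas{\abs{h}}{t}=\int\dmeas{\abs{h}}{t}$ to $h_{\measSupport^c}=0$ hinges on the continuity of $q=\gaborOp^*c_0$ furnished by Lemma~\ref{lem: lemma for adjoint operator}.
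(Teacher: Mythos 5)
Your proof is correct and takes essentially the same route as the paper's: both decompose the error measure into its part on $\measSupport$ and on $\group\!\setminus\!\measSupport$, invoke the certificate with signs matched to the error's atoms on $\measSupport$ (the paper extracts these signs via the polar decomposition of the Lebesgue-decomposed error, which for a discrete measure is exactly your $b_\measSpikeIdx/\abs{b_\measSpikeIdx}$), and play the resulting bound $\normTV{\measErr_\measSupport} \leq \normTV{\measErr_{\measSupport^c}}$ against the reverse bound obtained from optimality and the reverse triangle inequality. The only cosmetic difference is that the paper argues by contradiction through a strict inequality (using $\abs{\gaborOp^*c_0} < 1$ off $\measSupport$ together with $\measErr_{\measSupport^c} \neq 0$ directly), whereas you keep both inequalities non-strict and dispose of the equality case at the end---the same mechanism in a different order.
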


\begin{proof}
	Assume that there exists a $c_0 \in L^\infty(\group \times \dualGroup)$ such that \eqref{eq: interpolation} and \eqref{eq: constraint interpolation} are satisfied for all sequences $\{\varepsilon_\measSpikeIdx\}_{\measSpikeIdx \in \measSupportIdx}$ of unit magnitude complex numbers. Let $\measOpt \in \measSpace{\group}$ be a solution of $\gaborProb$ and set $\measErr \triangleq \meas - \measOpt$. Suppose, by way of contradiction, that $\measErr \neq 0$. The Lebesgue decomposition~\cite[Thm.~6.10]{Rudin1987} of $\measErr$ relative to the positive and $\sigma$-finite measure $\Psi_\measSupport = \sum_{\measSpikeIdx \in \measSupportIdx} \delta_{t_\measSpikeIdx}$ is given by 
	\begin{equation*}
		\measErr = \measErr_\measSupport + \measErr_{\measSupport^c},
	\end{equation*}
	where $\measErr_\measSupport$ is a discrete measure of the form $\measErr_\measSupport \triangleq \sum_{\measSpikeIdx \in \measSupportIdx} \measErr_\measSpikeIdx \delta_{t_\measSpikeIdx}$ (see Part (b) of \cite[Thm.~6.10]{Rudin1987}) and $\measErr_{\measSupport^c}$ is a measure supported on $\measSupport^c \triangleq \group \!\setminus\! \measSupport$ (see Part (a) of \cite[Thm.~6.10]{Rudin1987}). Using the polar decomposition~\cite[Thm.~6.12]{Rudin1987} of $\measErr_\measSupport$, we find that there exists a measurable function $\eta$ such that $\abs{\eta(t)} = 1$, for all $t \in \measSupport$, and
	\begin{equation}
		\int_\group \varphi(t)\dmeas{\measErr_\measSupport}{t} = \int_\group \varphi(t)\eta(t)\!\dmeas{\abs{\measErr_\measSupport}\!}{t}
		\label{eq: polar decomposition proof uniqueness}
	\end{equation}
	for all $\varphi \in C_b(\group)$.
	By assumption, there exists a $c_0 \in L^\infty(\group \times \dualGroup)$ such that
	\begin{align}
		\forall \measSpikeIdx \in \measSupportIdx, \quad & (\gaborOp^*c_0)(t_\measSpikeIdx) = \eta(t_\measSpikeIdx), \label{eq: conditions theorem 1} \\
		\forall t \in \group \!\setminus\! \measSupport, \quad & \abs{(\gaborOp^*c_0)(t)} < 1. \label{eq: conditions theorem 2}
	\end{align}
	We thus have
	\begin{align}
		\innerProd{\gaborOp^*c_0}{\measErr_{\measSupport^c}} &=  \innerProd{\gaborOp^*c_0}{\measErr} - \innerProd{\gaborOp^*c_0}{\measErr_\measSupport} \notag \\
			&= \innerProd{c_0}{\gaborOp\measErr} -  \Re{ \int_\group \overline{(\gaborOp^*c_0)(t)}\dmeas{\measErr_\measSupport}{t}} \notag \\
			&= \innerProd{c_0}{\gaborOp\meas - \gaborOp\measProb_0} -  \Re{ \int_\group \overline{(\gaborOp^*c_0)(t)}\eta(t)\!\dmeas{\abs{\measErr_\measSupport}\!}{t}}  \label{eq: use of polar decomposition in the proof of uniqueness} \\
			&= -  \Re{\sum_{\measSpikeIdx \in \measSupportIdx} \abs{\measErr_\measSpikeIdx} \overline{(\gaborOp^*c_0)(t_\measSpikeIdx)} \eta(t_\measSpikeIdx)} = - \sum_{\measSpikeIdx \in \measSupportIdx} \abs{\measErr_\measSpikeIdx}\abs{\eta(t_\measSpikeIdx)}^2  \label{eq: proof uniqueness}\\
			&=  - \sum_{\measSpikeIdx \in \measSupportIdx} \abs{\measErr_\measSpikeIdx} = -\normTV{\measErr_\measSupport}, \label{eq: proof uniqueness last inequality of the chain}
	\end{align}
	where \eqref{eq: use of polar decomposition in the proof of uniqueness} makes use of \eqref{eq: polar decomposition proof uniqueness}, \eqref{eq: proof uniqueness} follows from~\eqref{eq: conditions theorem 1} and $\measurements = \gaborOp\meas = \gaborOp\measProb_0$, which holds because $\measProb_0$ was assumed to be a solution of $\gaborProb$, and \eqref{eq: proof uniqueness last inequality of the chain} is by $\abs{\eta(t)} = 1$, for all $t \in \measSupport$.
	It therefore follows that
	\begin{align}
		 \normTV{\measErr_\measSupport} &= \abs{\innerProd{\gaborOp^*c_0}{\measErr_{\measSupport^c}}} = \abs{\Re{\int_\group \overline{(\gaborOp^*c_0)(t)}\dmeas{\measErr_{\measSupport^c}}{t}}} \leq \abs{\int_\group \overline{(\gaborOp^*c_0)(t)}\dmeas{\measErr_{\measSupport^c}}{t}} \notag \\
			&\leq \int_\group \abs{(\gaborOp^*c_0)(t)}\dmeas{\abs{\measErr_{\measSupport^c}}\!}{t} \label{eq: triangle inequality for measures} \\
			&< \abs{\measErr_{\measSupport^c}}(\group) = \normTV{\measErr_{\measSupport^c}}, \label{eq: strict inequality}
	\end{align}
	where \eqref{eq: triangle inequality for measures} is by the triangle inequality and the strict inequality in~\eqref{eq: strict inequality} is due to \eqref{eq: conditions theorem 2} and $\measErr \neq 0$ and hence $\measErr_{\measSupport^c} \neq 0$ (as $\measErr_{\measSupport^c} = 0$ would imply $\measErr_\measSupport = 0$ and thus $\measErr = 0$, because of \eqref{eq: triangle inequality for measures}). 
	We then obtain
	\begin{align}
		\normTV{\measOpt} = \normTV{\meas - \measErr} &= \normTV{\meas - \measErr_\measSupport} + \normTV{\measErr_{\measSupport^c}} \label{eq: disjoint proof uniqueness} \\
			&\geq \normTV{\meas} - \normTV{\measErr_\measSupport} + \normTV{\measErr_{\measSupport^c}} \label{eq: large inequality proof uniqueness} \\
			&> \normTV{\meas}, \label{eq: strict inequality proof uniqueness}
	\end{align}
	where \eqref{eq: disjoint proof uniqueness} is a consequence of $\meas$ being supported on $\measSupport$ and hence the supports of $\meas - \measErr_\measSupport$ and $\measErr_{\measSupport^c}$ being disjoint,
	\eqref{eq: large inequality proof uniqueness} follows from the reverse triangle inequality and \eqref{eq: strict inequality proof uniqueness} is a consequence of \eqref{eq: strict inequality}.
	This contradicts the assumption that $\measProb_0$ is a solution of $\gaborProb$, which allows us to conclude that $\measErr = 0$ and hence $\measOpt = \meas$, thus completing the proof.
\end{proof}

We have now developed a full theory of super-resolution from STFT measurements for window functions $\window$ from the Schwartz-Bruhat space that are extendable to entire functions. It remains, however, to connect the uniqueness conditions \eqref{eq: interpolation} and \eqref{eq: constraint interpolation} in Theorem~\ref{thm: uniqueness dual certificate}, which amount to constrained interpolation problems, to the minimum spacing condition $\Delta > 1/\freq_c$, as announced earlier in the paper. This will be accomplished through a specific choice for $\window$, namely we take it to be a Gaussian for $\group = \R$ and a periodized Gaussian for $\group = \torus$. Note that the choice of the window width is only for reasons of specificity of the results. Similar results can be derived for other Schwartz-Bruhat window functions or for Gaussian window functions of smaller widths: we only need to adapt the computational part of our proof, but the way of reasoning remains the same.

\begin{thm}[Conditions for exact recovery in the case $\group = \R$]
\label{thm: exact recovery}
	Let $\group = \R$ and 
	\begin{equation*}
		\forall t \in \R, \quad \window(t) \triangleq \frac{1}{\sqrt{\sigma}}\exp\left(-\frac{\pi t^2}{2\sigma^2}\right),
	\end{equation*}
	where $\sigma \triangleq \frac{1}{4\freq_c}$. 
	Let $\meas \triangleq \sum_{\measSpikeIdx \in \measSupportIdx} \measAmplitude{\measSpikeIdx} \delta_{t_\measSpikeIdx} \in \measSpace{\R}$, where $\measSupportIdx$ is a finite or countably infinite index set and $\measAmplitude{\measSpikeIdx} \in \C\!\setminus\! \{0\}$, for all $\measSpikeIdx \in \measSupportIdx$. 
	If the minimum distance
	\begin{equation*}
		\Delta \triangleq \min_{\substack{\measSpikeIdx, \measSpikeIdxBis \in \measSupportIdx \\ \measSpikeIdx \neq \measSpikeIdxBis}} \abs{t_\measSpikeIdx - t_\measSpikeIdxBis}
	\end{equation*}
	satisfies $\Delta > 1/\freq_c\,$,
	then the conditions of Theorem~\ref{thm: uniqueness dual certificate} are met, and hence $\meas$ is the unique solution of~$\gaborProb$. 
\end{thm}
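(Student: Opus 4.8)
The plan is to establish the hypotheses of Theorem~\ref{thm: uniqueness dual certificate}: for an arbitrary unit-modulus sequence $\varepsilon = \{\varepsilon_\measSpikeIdx\}_{\measSpikeIdx \in \measSupportIdx}$ I will construct a dual certificate $c_0 \in L^\infty(\R \times \R)$ for which the interpolant $q \triangleq \gaborOp^* c_0$ satisfies \eqref{eq: interpolation} and \eqref{eq: constraint interpolation}; once this is done for every such $\varepsilon$, Theorem~\ref{thm: uniqueness dual certificate} immediately gives that $\meas$ is the unique solution of $\gaborProb$. The construction mirrors the dual-certificate method of Cand\`es and Fernandez-Granda, transported to the adjoint $\gaborOp^*$ of \eqref{eq: adjoint operator expression}. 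The first step is to record the elementary functions lying in the range of $\gaborOp^*$: inserting $c(\tau, \freq) = \overline{\window(\tau - s)}\,e^{-2\pi i \freq s} w(\freq)$ into \eqref{eq: adjoint operator expression} and performing the $\tau$-integration (which collapses, by evenness of the Gaussian window, to the autocorrelation $\autocorrelation$ of \eqref{eq: autocorrelation function}) yields
\[
	(\gaborOp^* c)(t) = \autocorrelation(t - s)\int_{-\freq_c}^{\freq_c} w(\freq)\, e^{2\pi i \freq (t - s)}\,\mathrm{d}\freq .
\]
Thus any translate $t \mapsto \autocorrelation(t - s)\,\Phi(t - s)$, with $\Phi$ band-limited to $[-\freq_c, \freq_c]$ and $s \in \R$ arbitrary, lies in the range of $\gaborOp^*$, and inserting a frequency-ramp factor $2\pi i \freq$ into $w$ produces the companion kernel with $\Phi$ replaced by $\Phi'$.

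On this basis I would posit the certificate
\[
	q(t) = \sum_{\measSpikeIdx \in \measSupportIdx}\Big[\alpha_{\measSpikeIdx}\, K(t - t_\measSpikeIdx) + \beta_{\measSpikeIdx}\, K_1(t - t_\measSpikeIdx)\Big],
\]
where $K(x) = \autocorrelation(x)\Phi(x)$ is an even kernel built from a suitable real, even, band-limited low-pass factor $\Phi$ (tuned to the width $\sigma = 1/(4\freq_c)$) and $K_1(x) = \autocorrelation(x)\Phi'(x)$ is its odd derivative-type companion, both realizable as above. The coefficients $\{\alpha_\measSpikeIdx\}, \{\beta_\measSpikeIdx\}$ are fixed by imposing the two conditions per node $q(t_\measSpikeIdx) = \varepsilon_\measSpikeIdx$ and $q'(t_\measSpikeIdx) = 0$; the vanishing-derivative condition forces each $t_\measSpikeIdx$ to be a local extremum of $\abs{q}$, which is what ultimately yields the strict inequality off $\measSupport$. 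I then need to show this system is solvable with bounded coefficients. After normalization it is a near-identity system, because the separation $\Delta > 1/\freq_c$ together with the rapid Gaussian decay $\autocorrelation(x) = e^{-4\pi \freq_c^2 x^2}$ makes every off-diagonal entry tiny; for finite $\measSupportIdx$ a Gershgorin/diagonal-dominance argument gives invertibility, while for countably infinite $\measSupportIdx$ I would pose the problem in $\ell^\infty(\measSupportIdx)$ and obtain a uniformly bounded solution via a Neumann-series (bounded-inverse) argument, the same decay guaranteeing convergence of the defining series for $q$, $q'$, $q''$.

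The heart of the proof is verifying condition \eqref{eq: constraint interpolation}, i.e. $\abs{q(t)} < 1$ for every $t \in \R \!\setminus\! \measSupport$. I would partition $\R \!\setminus\! \measSupport$ into small neighbourhoods of the nodes and the complementary far region. Near a node one expands $\abs{q}^2$ and shows, using $q(t_\measSpikeIdx) = \varepsilon_\measSpikeIdx$, $q'(t_\measSpikeIdx) = 0$, and a negativity estimate on $\Re{\overline{\varepsilon_{\measSpikeIdx}}\, q''(t_{\measSpikeIdx})}$, that $\abs{q}^2$ is strictly concave with interior maximum equal to $1$ attained at $t_\measSpikeIdx$; far from all nodes one bounds $\abs{q}$ directly by summing the Gaussian tails of $K$ and $K_1$. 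All constants in these bounds are controlled by the minimum-separation hypothesis $\Delta > 1/\freq_c$ and by the window width $\sigma = 1/(4\freq_c)$. This quantitative step---intertwined with the uniform coefficient bounds in the infinite-support case---is exactly the part relegated to the computational appendix, and is where I expect the main difficulty to lie.

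Finally, the corresponding $c_0 \in L^\infty(\R \times \R)$ is the same (finite or convergent) linear combination of the elementary symbols $c$ used to realize $K$ and $K_1$; its membership in $L^\infty$ follows from the boundedness of $\alpha_\measSpikeIdx, \beta_\measSpikeIdx$ and the Schwartz decay of $\window$. With $q = \gaborOp^* c_0$ satisfying \eqref{eq: interpolation} and \eqref{eq: constraint interpolation} for the given $\varepsilon$, and $\varepsilon$ arbitrary, Theorem~\ref{thm: uniqueness dual certificate} applies and $\meas$ is the unique solution of $\gaborProb$.
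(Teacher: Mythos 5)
Your overall architecture coincides with the paper's: a certificate $c_0$ whose image under $\gaborOp^*$ is a sum of translates of a (Gaussian autocorrelation)$\times$(band-limited) kernel plus an odd companion kernel, coefficients pinned down by the interpolation and vanishing-derivative conditions at the nodes, solvability of that system in $\ell^\infty(\measSupportIdx)$ via a Neumann-series/diagonal-dominance argument, and the strict bound $\abs{\gaborOp^*c_0}<1$ off $\measSupport$ obtained by splitting $\R\!\setminus\!\measSupport$ into node neighbourhoods (concavity) and a far region (Gaussian tail sums). Your computation of the elementary symbols in the range of $\gaborOp^*$ is correct, and with the constant weight $w\equiv 1/(2\freq_c)$ it reproduces exactly the paper's kernel $u(t)=\autocorrelation(t)\sinc(2\pi\freq_c t)$; the paper's companion kernel is $v(t)=\autocorrelation'(-t)\sinc(2\pi\freq_c t)$ (derivative on the Gaussian factor, realized by putting $\window'$ in the symbol) rather than your $\autocorrelation\,\Phi'$ (derivative on the band-limited factor), which is an inessential variation.

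The genuine gap is that the quantitative core is nowhere executed, and that is where this theorem lives. The claim is not that \emph{some} separation suffices but that the specific threshold $\Delta>1/\freq_c$, with the specific width $\sigma=1/(4\freq_c)$, suffices; this is established only by (i) fixing $\Phi$ concretely (you leave it as ``a suitable low-pass factor''), (ii) proving the operator-norm bounds $\normOp{\mathcal{I}-(v'(0))^{-1}\mathcal{V}_1}<1$ and $\normOp{\mathcal{I}-(\mathcal{U}_0-\mathcal{V}_0\mathcal{V}_1^{-1}\mathcal{U}_1)}<1$ with constants valid uniformly over all configurations with $\Delta>1/\freq_c$ --- the paper does this by showing that equi-spaced nodes at spacing $\Delta$ maximize every tail sum and then evaluating the resulting geometric-type series explicitly, obtaining margins like $3.71\cdot10^{-5}$ and $1.12\cdot10^{-6}$ together with $\normlinfty{\alpha}\leq 1.01$ and $\normlinfty{\beta}\leq 5.73\cdot10^{-6}\sigma$ --- and (iii) carrying out the far-region bound (the paper gets $0.876<1$ outside radius $1/(7\freq_c)$) and the near-region negativity. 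On the last point your formulation is also too weak as stated: a negativity estimate on $\Re{\overline{\varepsilon_\measSpikeIdx}\,q''(t_\measSpikeIdx)}$ \emph{at the node} does not give concavity of $\abs{q}^2$ on a neighbourhood; one needs $\Re{\overline{q(t)}\,q''(t)}+\abs{q'(t)}^2<0$ (equivalently, in the paper's notation, $A_RA_R''+A_IA_I''+\abs{(\gaborOp^*c_0)'}^2<0$) uniformly on all of $[t_\measSpikeIdx-\tfrac{1}{7\freq_c},\,t_\measSpikeIdx+\tfrac{1}{7\freq_c}]$, and the choice of the radius $1/(7\freq_c)$ must be made jointly with the far-region estimate so the two regimes cover $\R\!\setminus\!\measSupport$. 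Deferring these computations leaves the stated threshold --- i.e., the theorem itself --- unproved.
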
 

The proof architecture of Theorem~\ref{thm: exact recovery} is inspired by~\cite[Sec.~2, pp.~15--27]{Candes2012}. 
There are, however, important differences arising from the fact that we deal with STFT measurements as opposed to pure Fourier measurements.
Specifically, in the case of pure Fourier measurements, the interpolation function $\mathcal{A}^*c_0$ must be a Paley-Wiener function~\cite[Thm.~19.3]{Rudin1987}, while here $\gaborOp^*c_0$ is clearly not band-limited and can therefore have better time-localization. We believe that this allows the minimum spacing $\Delta$ to be smaller than in the case of pure Fourier measurements. 

For $\group = \torus$, we have $\freq_c = K \in \N$. Moreover, the set $\measSupportIdx$ has to be finite, as $\torus$ is compact and $\measSupportIdx$ indexes $\measSupport$ which is closed and discrete. Recovering the measure $\meas$ therefore reduces to the recovery of the finite set of points $\{t_\measSpikeIdx\}_{\measSpikeIdx = 1}^L \subseteq [0, 1)$, $L \triangleq \abs{\measSupportIdx}$, and the associated weights $\{\measAmplitude{\measSpikeIdx}\}_{\measSpikeIdx \in \measSupportIdx}$.

\begin{thm}[Conditions for exact recovery in the case $\group = \torus$]
\label{thm: exact recovery torus}
	Let $\group = \torus$ and 
	\begin{equation*}
		\forall t \in \torus, \quad \window(t) \triangleq \frac{1}{\sqrt{\sigma}} \sum_{n \in \Z} \exp\left(-\frac{\pi(t -n)^2}{2\sigma^2}\right),
	\end{equation*}
	where $\sigma \triangleq \frac{1}{4(K+1/2)}$. 
	Let $\meas \triangleq \sum_{\measSpikeIdx = 1}^L \measAmplitude{\measSpikeIdx} \delta_{t_\measSpikeIdx} \in \measSpace{\torus}$ with $\measAmplitude{\measSpikeIdx} \in \C \!\setminus\! \{0\}$, for all $\measSpikeIdx \in \{1, 2, \ldots, L\}$.  
	If the minimum wrap-around distance 
	\begin{equation*}
		\Delta \triangleq \min_{n \in \Z} \min_{\substack{1 \leq \measSpikeIdx, \measSpikeIdxBis \leq L \\ \measSpikeIdx \neq \measSpikeIdxBis}} \abs{t_\measSpikeIdx - t_\measSpikeIdxBis + n}
	\end{equation*}
	satisfies $\Delta > 1/(K+1/2)$,
	then the conditions of Theorem~\ref{thm: uniqueness dual certificate} are met, and hence $\meas$ is the unique solution of~$\gaborProb$.
\end{thm}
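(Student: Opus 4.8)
The plan is to verify the hypotheses of Theorem~\ref{thm: uniqueness dual certificate} by explicitly constructing, for each prescribed unit-modulus sign sequence $\{\varepsilon_\measSpikeIdx\}_{\measSpikeIdx = 1}^L$, a dual certificate $c_0 \in L^\infty(\torus \times \Z)$ whose image $q \triangleq \gaborOp^*c_0$ interpolates the signs at the spikes and stays strictly below $1$ in modulus off the support. This mirrors the architecture of the $\group = \R$ case (Theorem~\ref{thm: exact recovery}) and of~\cite[Sec.~2]{Candes2012}, the essential change being that the low-pass $\sinc$ factor is replaced by the periodic Dirichlet kernel and the Gaussian by its periodization. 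The starting observation is that feeding $\gaborOp^*$ the bounded atom $\phi_{t_0}(\tau, k) \triangleq \overline{\window(t_0 - \tau)}\,e^{-2\pi i k t_0}$ (this is $\gaborOp\delta_{t_0}$) produces, after separating the $\tau$-integral from the $k$-sum and using that $\window$ is real, the reproducing kernel
\[
	\gaborOp^*\phi_{t_0} = \mathcal{K}(\cdot - t_0), \qquad \mathcal{K}(s) \triangleq \autocorrelation(s)\, D_K(s), \qquad D_K(s) \triangleq \sum_{k = -K}^K e^{2\pi i k s} = \frac{\sin((2K+1)\pi s)}{\sin(\pi s)},
\]
where $\autocorrelation$ is the periodized autocorrelation of the periodized Gaussian $\window$. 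With the prescribed width $\sigma = 1/(4(K+1/2))$ one has $1/(2\sigma) = 2K+1$, so the Gaussian envelope $\autocorrelation$ and the Dirichlet factor $D_K$ are matched in scale; this matching is precisely what will deliver the threshold $\Delta > 1/(K+1/2)$.

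Following~\cite{Candes2012}, I would then seek $q$ in the two-term form
\[
	q(t) = \sum_{\measSpikeIdx = 1}^L \Big(\alpha_\measSpikeIdx\, \mathcal{K}(t - t_\measSpikeIdx) + \beta_\measSpikeIdx\, \mathcal{K}'(t - t_\measSpikeIdx)\Big),
\]
realized as $\gaborOp^*c_0$ by taking $c_0$ to be the corresponding finite combination of $\phi_{t_\measSpikeIdx}$ and its $t_0$-derivative (both bounded, so that $c_0 \in L^\infty(\torus \times \Z)$ is immediate and the integrability requirement of Theorem~\ref{thm: uniqueness dual certificate} is cheap). Because $\torus$ is compact the index set is finite, so the complex coefficients $\{\alpha_\measSpikeIdx, \beta_\measSpikeIdx\}$ are fixed by the genuinely finite linear system of $2L$ conditions $q(t_\measSpikeIdxBis) = \varepsilon_\measSpikeIdxBis$ and $q'(t_\measSpikeIdxBis) = 0$, $1 \leq \measSpikeIdxBis \leq L$; the vanishing-derivative constraints force each $t_\measSpikeIdxBis$ to be a critical point of $\abs{q}$, which is what makes $\abs{q} \leq 1$ sustainable near the support. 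Solvability reduces to invertibility of a $2L \times 2L$ block matrix built from $\mathcal{K}, \mathcal{K}', \mathcal{K}''$ evaluated at the differences $t_\measSpikeIdx - t_\measSpikeIdxBis$; under $\Delta > 1/(K+1/2) = 4\sigma$ the off-diagonal entries are controlled by the rapid decay of $\autocorrelation$ against the slow growth of $D_K$ and its derivatives, so a diagonal-dominance (Schur-complement) estimate yields invertibility together with the quantitative bounds that $\alpha_\measSpikeIdx$ is close to $\varepsilon_\measSpikeIdx$ and $\beta_\measSpikeIdx$ is small.

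With the coefficients thus controlled, the crux is to establish the strict inequality $\abs{q(t)} < 1$ for every $t \in \torus \!\setminus\! \measSupport$. I would split $\torus$ into small arcs around each $t_\measSpikeIdx$ and the complementary far region. On a near arc, a second-order Taylor argument---using $q(t_\measSpikeIdx) = \varepsilon_\measSpikeIdx$, $q'(t_\measSpikeIdx) = 0$, and a strictly negative upper bound on $\tfrac{d^2}{dt^2}\abs{q}^2$ obtained from estimates on $q''$---shows that $\abs{q}$ has a strict local maximum equal to $1$ exactly at $t_\measSpikeIdx$, hence $\abs{q} < 1$ on the punctured arc. On the far region, direct modulus bounds on the kernel sums, again exploiting Gaussian decay of $\autocorrelation$ and the uniform separation $\Delta$, give $\abs{q} < 1$ outright. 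Stitching these two regimes together is the main obstacle: it is where the precise window width enters, where all constants must be tracked carefully, and where the computational content of the argument resides---this is exactly the part deferred to Appendix~\ref{sec: proof torus theorem exact recovery}. Once $\abs{q(t)} < 1$ on $\torus \!\setminus\! \measSupport$ and $q(t_\measSpikeIdx) = \varepsilon_\measSpikeIdx$ are both in hand, conditions~\eqref{eq: interpolation} and~\eqref{eq: constraint interpolation} hold, and the uniqueness of $\meas$ as the solution of $\gaborProb$ follows from Theorem~\ref{thm: uniqueness dual certificate}.
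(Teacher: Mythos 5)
Your proposal is correct in outline, but it follows a genuinely different route from the paper's own proof --- indeed, it is exactly the route that Appendix~\ref{sec: proof torus theorem exact recovery} mentions in its opening sentence and then deliberately avoids. Building $c_0$ from the atoms $\gaborOp\delta_{t_\measSpikeIdx}$ and their derivatives, as you do, yields the interpolation kernel $\mathcal{K}(s) = R_{\mathrm{per}}(s)D_K(s)$, where $R_{\mathrm{per}}(s) = \sum_{n\in\Z}\autocorrelation(s-n)$ is the periodic autocorrelation of the periodized Gaussian and $D_K$ is the Dirichlet kernel. The paper instead engineers two auxiliary $1$-periodic functions $p$ and $q$, specified through Gaussian-type Fourier coefficients, and takes $c_0(\tau,k) = \frac{1}{2K+1}\sum_{\measSpikeIdx=1}^L\bigl(\alpha_\measSpikeIdx p(\tau-t_\measSpikeIdx) + \beta_\measSpikeIdx q(\tau-t_\measSpikeIdx)\bigr)e^{-2\pi i k t_\measSpikeIdx}$; the point of this choice is that $\gaborOp^*c_0$ then equals $\sum_{\measSpikeIdx}\sum_{n\in\Z}\bigl(\alpha_\measSpikeIdx u(t-t_\measSpikeIdx-n)+\beta_\measSpikeIdx v(t-t_\measSpikeIdx-n)\bigr)$ with $u(t)=\autocorrelation(t)\sinc(2\pi(K+1/2)t)$ and $v(t)=\autocorrelation'(-t)\sinc(2\pi(K+1/2)t)$, i.e., the periodization of precisely the kernel pair used for Theorem~\ref{thm: exact recovery}, so all quantitative work (the Neumann-series invertibility of $\mathcal{V}_1$ and of $\mathcal{U}_0-\mathcal{V}_0\mathcal{V}_1^{-1}\mathcal{U}_1$, the far-region bound, the concavity estimate) is inherited verbatim from Appendix~\ref{section: proof main result} with $\freq_c$ replaced by $K+1/2$. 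Your route buys conceptual naturalness ($c_0\in L^\infty(\torus\times\Z)$ is immediate, and no auxiliary $p$, $q$ need to be invented), but it does not reduce to anything already proved: every estimate of Appendix~\ref{section: proof main result} would have to be redone for $\mathcal{K}=R_{\mathrm{per}}D_K$, whose per-term envelope is $1/\abs{\sin(\pi s)}$ rather than $1/((2K+1)\pi\abs{s-n})$ and whose value and derivatives at $0$ differ from the sinc analogues; so the computational burden is strictly larger, and it is not ``exactly the part deferred to'' the paper's appendix, which performs the reduction rather than your estimates. The plan should nonetheless succeed --- the paper's numerical margins are enormous, and $\abs{\sin(\pi s)}\geq 2\abs{s}$ on $[-1/2,1/2]$ gives bounds of the same order --- provided you normalize by $1/(2K+1)$ (your $\mathcal{K}(0)=R_{\mathrm{per}}(0)D_K(0)\approx 2K+1$, not $1$); also note that your concavity criterion on $\abs{q}^2$ is the condition $A_RA_R''+A_IA_I''+\abs{q'}^2<0$, which is exactly the quantity the paper bounds, so that step is equivalent.
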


% !TEX root = SuperresolutionJournal.tex

%%%%%%%%%%%%%%%%%%%%%
% Recovery for the torus
%%%%%%%%%%%%%%%%%%%%%
\section{A recovery algorithm for $\group = \torus$}
\label{section: recovery of mu for the torus}

We next provide an explicit recovery algorithm for the case $\group = \torus$. 
Specifically, we show that if $\meas$ is the unique solution of $\gaborProb$ (which is the case provided that $\Delta > 1/(K + 1/2)$ and $\sigma = 1/(4(K+1/2))$), then an approximation of the correct solution $\meas$ can be recovered by solving the convex programming problem $\gaborProbN$, $N \in \N$, defined below. The predual of $\gaborProbN$, unlike that of $\gaborProb$, is equivalent to a finite-dimensional problem, which can be solved numerically. The justification for this procedure is given in Proposition \ref{prop: convergence solution}, which shows that the sequence of solutions $\nu_N$ of $\gaborProbN$ converges in the weak-* sense to $\meas$ as $N \rightarrow \infty$. 

We first note that the $1$-periodic window function $\window$ can be expanded into a Fourier series according to
\begin{equation}
	\forall t \in \R, \quad \window(t) = \sum_{n \in \Z} \frac{1}{\sqrt{\sigma}} \exp\left(-\frac{\pi(t - n)^2}{2\sigma^2}\right) = \sum_{n \in \Z} \window_n e^{2\pi int},
	\label{eq: window torus}
\end{equation}
with $\window_n = \sqrt{2\sigma} \exp\left(-2\pi\sigma^2n^2\right)$,  $n \in \Z$.
The corresponding STFT measurements of $\meas$ are given by
\begin{align*}
	\forall \tau \in \torus,\ \forall k \in \{-K, \ldots, K\}, \quad \measurements(\tau, k) &= (\STFT\meas)(\tau, k) = \displaystyle \int_0^1 \overline{\window(t - \tau)}e^{-2\pi ikt} \dmeas{\meas}{t} \\
		&=  \!\sum_{\measSpikeIdx  = 1}^L \!\measAmplitude{\measSpikeIdx}\overline{\window(t_\measSpikeIdx - \tau)} e^{-2\pi ikt_\measSpikeIdx} \!=\!\sum_{n \in \Z} \measurements_{k, n}e^{2\pi in\tau}\!,
\end{align*}
where $\measurements_{k, n} \triangleq \sum_{\measSpikeIdx  = 1}^L \measAmplitude{\measSpikeIdx} \window_n e^{-2\pi i(n+k)t_\measSpikeIdx}$ is the $n$th Fourier series coefficient of $\tau \mapsto \measurements(\tau, k)$. Using Parseval's theorem, the objective function in $\dualGaborProb$ can be rewritten as
\begin{equation*}
	\innerProd{c}{\measurements} = \Re{\sum_{k = -K}^K \sum_{n \in \Z} c_{k,n}\overline{\measurements_{k,n}}},
\end{equation*}
where $c_{k,n}$ denotes the $n$th Fourier series coefficient of the function $\tau \mapsto c(\tau, k)$, for $k \in \{-K, \ldots, K\}$, and $c \in L^\infty(\torus \times \Z)$ is the optimization variable of $\dualGaborProb$.
For $c \in L^\infty(\torus \times \Z)$, the function $\gaborOp^*c$ is given by
\begin{align}
	\forall t \in \torus, \quad (\gaborOp^*c)(t) &= \sum_{k = -K}^K \int_0^1 c(\tau, k)\window(t - \tau)e^{2\pi i kt}\mathrm{d}\tau \notag \\
		&= \sum_{k = -K}^K \int_0^1 c(\tau, k)\left(\sum_{n \in \Z} \window_n e^{2\pi i n(t - \tau)}\right) e^{2\pi i kt} \mathrm{d}\tau \notag \\
		&= \sum_{k = -K}^K\sum_{n \in \Z}  \window_n \left(\int_0^1 c(\tau, k)e^{-2\pi i n\tau}\mathrm{d}\tau\right) e^{2\pi i(k + n)t} \notag \\
		&= \sum_{k = -K}^K\sum_{n \in \Z} \window_n c_{k, n} e^{2\pi i(k + n)t}. \label{eq: infinite dimensional contraint in the predual problem}
\end{align}
Since infinitely many coefficients $c_{k, n}$ are involved in the expression \eqref{eq: infinite dimensional contraint in the predual problem} of $\gaborOp^*c$, the feasible set $\{c \in L^\infty(\torus \times \Z) \colon \normInfty{\gaborOp^* c} \leq 1\}$ of $\dualGaborProb$ is infinite-dimensional.
We now approximate this infinite-dimensional problem by retaining only $2N+1$ coefficients in the Fourier series expansion of $\window$, that is, we replace $\window$ by 
\begin{equation*}
	\forall t \in \torus, \quad \widetilde{\window}_N(t) =  \sum_{n = -N}^{N} \window_n e^{2\pi int},
\end{equation*}
where $N$ is chosen large enough for the coefficients $\window_n = \sqrt{2\sigma} \exp\left(-2\pi\sigma^2n^2\right)$, $\abs{n} \geq N$, to be ``small''. 
The problem $\gaborProbN$ is now defined as 
\begin{equation*}
	\gaborProbN \qquad \minimize_{\measProb \in \measSpace{\torus}} \normTV{\measProb} \st \measurements = \mathcal{A}_{\widetilde{\window}_N}\measProb.
\end{equation*}
As Theorems \ref{thm: existence} and \ref{thm: fenchel dual gabor prob} hold for every $\window \in \schwartzSpace{\torus}$ and we have $\widetilde{\window}_N \in \schwartzSpace{\torus}$, they can be applied to conclude that $\gaborProbN$ always has a solution. The Fenchel predual problem of $\gaborProbN$ is given by
\begin{equation*}
	\dualGaborProbN \qquad \maximize_{c \in L^\infty(\torus \times \Z)} \innerProd{c}{\measurements} \st \normInfty{\mathcal{A}_{\widetilde{\window}_N}^*c} \leq 1,
\end{equation*}
where the measurements now become $\measurements(\tau, k) = (\mathcal{A}_{\window_{N_m}}\measProb_{N_m})(\tau, k)$, for $\tau \in \torus$ and $k \in \Z$.
Moreover, strong duality holds, implying that
\begin{align}
	\min\Big\{\normTV{\measProb} \colon \mathcal{A}_{\widetilde{\window}_N}\measProb = \measurements, \measProb \in \measSpace{\torus}\Big\} = \sup\Big\{\innerProd{c}{\measurements} \colon \normInfty{\mathcal{A}_{\widetilde{g}_N}^*c} \leq 1, c \in L^\infty(\torus \times \Z)\Big\}.
	\label{eq: strong duality torus}
\end{align}
Now, we make the problem $\dualGaborProbN$ explicit.
The objective of $\dualGaborProbN$ can be written as
\begin{equation*}
	\innerProd{c}{y} = \innerProd{\mathbf{C}}{\mathbf{Y}} = \Re{\sum_{k = -K}^K\sum_{n = -N}^N c_{k, n}\overline{y_{k, n}}},
\end{equation*}
where $\mathbf{C} \triangleq (c_{k, n})_{\abs{k} \leq K, \abs{n} \leq N}$ and $\mathbf{Y} \triangleq (\measurements_{k, n})_{\abs{k} \leq K,\abs{n} \leq N}$.
The function $\mathcal{A}_{\widetilde{\window}_N}^*c$ is a trigonometric polynomial (hereafter referred to as dual polynomial), entirely characterized by the finite set of coefficients $\{c_{k, n}\}_{\abs{k}\leq K, \abs{n} \leq N}$, and expressed as
\begin{equation}
	\forall t \in \torus, \quad (\mathcal{A}_{\widetilde{\window}_N}^*c)(t) = \sum_{k = -K}^K\sum_{n = -N}^N \window_nc_{k,n}e^{2\pi i(k+n)t} = \sum_{m = -(K+N)}^{K+N} x_me^{2\pi imt}
	\label{eq: recast SDP trigonometric polynomial}
\end{equation}
where
\begin{equation*}
	\begin{array}{lll}
	 x_m = \displaystyle\sum_{n = n_\mathrm{min}}^{n_\mathrm{max}} \window_mc_{m -n, n}, & \qquad \text{with}\quad  & n_\mathrm{min} \triangleq \max\{-N, m-K\} \\
	 & & n_\mathrm{max} \triangleq \min\{N, m+K\}.
	\end{array}
\end{equation*}

The problem $\dualGaborProbN$ thus takes on the form
\begin{equation*}
	\maximize_{\mathbf{C} \in \C^{(2K+1) \times (2N+1)}} \innerProd{\mathbf{C}}{\mathbf{Y}} \st \left\{ \begin{array}{l} \displaystyle \normInfty{\sum_{m = -(K+N)}^{K+N} x_me^{2\pi imt}} \leq 1 \\
	x_m = \sum\limits_{n = n_\mathrm{min}}^{n_\mathrm{max}} \window_nc_{m - n, n}.
	\end{array} \right.
\end{equation*}

Now, as in \cite[Sec.~4, pp.~31--36]{Candes2012}, we make use of the following theorem to recast $\dualGaborProbN$ as a semi-definite program.

\begin{thm}[\!\!{\cite[Cor.~4.25]{Dumitrescu2007}}]
	\label{thm: dumitrescu}
	Let $M \in \N$ and let $P$ be a trigonometric polynomial
	\begin{equation*}
		\forall t \in \torus, \quad P(t) = \sum_{m = -M}^{M} p_{m} e^{2\pi im t}.
	\end{equation*}
	Then, $\normInfty{P} \leq 1$ if and only if there exists a Hermitian matrix $\mathbf{Q} \in \C^{(2M+1) \times (2M+1)}$ such that 
	\begin{equation*}
		\begin{pmatrix} \mathbf{Q} & \mathbf{p} \\ \mathbf{p}^H & 1 \end{pmatrix} \succeq \mathbf{0} \qquad \text{and} \qquad \sum_{m = 1}^{2M+1} q_{m, m+\ell} = \begin{cases} 1, & \ell = 0 \\ 0, & \ell \in \{1, \ldots, 2M\},\end{cases}
	\end{equation*}
	where $\mathbf{p} \in \C^{2M+1}$ is the column vector whose $k$th element is $p_{k-M-1}$ for $k \in \{1, \ldots, 2M+1\}$.
\end{thm}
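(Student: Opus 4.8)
This statement is a \emph{bounded real lemma} for trigonometric polynomials, and the plan is to reduce it to the classical Gram-matrix (trace) parametrization of nonnegative trigonometric polynomials together with a Schur-complement manipulation. The organizing device I would introduce is the \emph{symbol} of a Hermitian matrix. Let $v(t) \in \C^{2M+1}$ be the Vandermonde-type vector with entries $v_k(t) = e^{2\pi i kt}$, $k \in \{1, \ldots, 2M+1\}$. A direct computation then gives, for every Hermitian $\mathbf{G} \in \C^{(2M+1) \times (2M+1)}$,
\begin{equation*}
	v(t)^H \mathbf{G}\, v(t) = \sum_{\ell = -2M}^{2M} \Big(\sum_{m} g_{m, m+\ell}\Big) e^{2\pi i\ell t},
\end{equation*}
so the diagonal-sum functionals appearing in the theorem are exactly the Fourier coefficients of the trigonometric polynomial $t \mapsto v(t)^H \mathbf{G}\, v(t)$. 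In particular, the constraint $\sum_m q_{m, m+\ell} = \delta_{0,\ell}$ states precisely that $v(t)^H \mathbf{Q}\, v(t) \equiv 1$, while for the rank-one matrix $\mathbf{p}\mathbf{p}^H$ one finds $v(t)^H \mathbf{p}\mathbf{p}^H v(t) = \abs{\mathbf{p}^H v(t)}^2 = \abs{P(t)}^2$, up to the harmless reflection $t \mapsto -t$ forced by the index convention, which affects neither $\normInfty{\cdot}$ nor nonnegativity.

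First I would reformulate both sides of the equivalence. On the polynomial side, $\normInfty{P} \leq 1$ holds if and only if $1 - \abs{P(t)}^2 \geq 0$ for all $t \in \torus$, and $1 - \abs{P}^2$ is a Hermitian trigonometric polynomial of degree $2M$. On the matrix side, since the lower-right entry of the block matrix equals $1 > 0$, the Schur-complement criterion shows that the block matrix is positive semidefinite if and only if $\mathbf{Q} - \mathbf{p}\mathbf{p}^H \succeq \mathbf{0}$. Setting $\mathbf{G} \triangleq \mathbf{Q} - \mathbf{p}\mathbf{p}^H$ and using linearity of the symbol map, a feasible $\mathbf{Q}$ is therefore the \emph{same datum} as a positive semidefinite $\mathbf{G}$ whose symbol equals $v(t)^H \mathbf{G}\, v(t) = 1 - \abs{P(t)}^2$.

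This reduces the theorem to the trace parametrization of nonnegative trigonometric polynomials: a Hermitian trigonometric polynomial $R$ of degree $2M$ satisfies $R \geq 0$ on $\torus$ if and only if there is a positive semidefinite Hermitian $\mathbf{G} \in \C^{(2M+1) \times (2M+1)}$ with $v(t)^H \mathbf{G}\, v(t) = R(t)$. The ``if'' direction is immediate, since $v(t)^H \mathbf{G}\, v(t) \geq 0$ whenever $\mathbf{G} \succeq \mathbf{0}$; applied to $R = 1 - \abs{P}^2$ and $\mathbf{G} = \mathbf{Q} - \mathbf{p}\mathbf{p}^H$, this yields the sufficiency of the matrix condition for $\normInfty{P} \leq 1$. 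For the converse (necessity) I would invoke the Fej\'er--Riesz spectral factorization theorem: a nonnegative trigonometric polynomial $R$ of degree $2M$ factors as $R(t) = \abs{H(t)}^2$ with $H(t) = \sum_{k=0}^{2M} h_k e^{2\pi i kt}$, whereupon $\mathbf{G} = \mathbf{h}\mathbf{h}^H$, with $\mathbf{h}$ the coefficient vector of $H$, is positive semidefinite and realizes $R$; translating back via $\mathbf{Q} = \mathbf{G} + \mathbf{p}\mathbf{p}^H$ produces the required $\mathbf{Q}$ satisfying both the trace conditions and the block-positivity.

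The main obstacle is precisely the Fej\'er--Riesz factorization, which is the only genuinely analytic ingredient; everything else is linear algebra. I would establish it by applying the fundamental theorem of algebra to the algebraic polynomial $z^{2M} R(t)$ in the variable $z = e^{2\pi i t}$, observing that its roots come in conjugate-reciprocal pairs (since $R$ is real on $\torus$) and collecting the roots inside the closed unit disk into the one-sided factor $H$. The remaining work---the Schur complement, the affine bijection $\mathbf{Q} \leftrightarrow \mathbf{G} = \mathbf{Q} - \mathbf{p}\mathbf{p}^H$, and the careful bookkeeping of index conventions in the symbol map---is routine, so I expect no difficulty there beyond tracking the reflection $t \mapsto -t$ noted above.
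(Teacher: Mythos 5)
Your proof is correct, but note that the paper itself offers no proof of this statement: it is imported verbatim from \cite[Cor.~4.25]{Dumitrescu2007}, so the only meaningful comparison is with the argument in that reference, which your proposal essentially reconstructs. The chain you describe---diagonal sums as Fourier coefficients of the symbol $v(t)^H\mathbf{Q}\,v(t)$, the Schur complement to pass to $\mathbf{Q}-\mathbf{p}\mathbf{p}^H\succeq\mathbf{0}$, the trace (Gram-matrix) parametrization of nonnegative trigonometric polynomials, and Fej\'er--Riesz factorization for the nontrivial direction---is exactly the standard route. The only points left at sketch level are standard as well: in the Fej\'er--Riesz step one must check that the roots of $z^{2M}R(z)$ lying on the unit circle occur with even multiplicity before splitting the root set into $H$ and its conjugate-reciprocal factor, and the reflection $t\mapsto -t$ you flag must be carried consistently through both directions of the equivalence; neither is a gap in substance.
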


From \eqref{eq: recast SDP trigonometric polynomial} and Theorem \ref{thm: dumitrescu} it then follows that $\dualGaborProbN$ is equivalent to
\begin{equation}
	\maximize_{\substack{\mathbf{C} \in \C^{(2K+1) \times (2N+1)}\\ \mathbf{Q} \in \C^{M'\times M'}}} \innerProd{\mathbf{Y}}{\mathbf{C}} \st \left\{ \begin{array}{l} x_m = \sum\limits_{n = n_\mathrm{min}}^{n_\mathrm{max}} \window_nc_{m - n, n} \\
		\begin{pmatrix} \mathbf{Q} & \mathbf{x} \\
		\mathbf{x}^H & 1 \end{pmatrix} \succeq 0 \\ \sum\limits_{k = 0}^{M'-\ell} q_{k, k + \ell} = \delta_{0, \ell},
	\end{array} \right.
	\label{eq: equivalent problem predual problem}
\end{equation}
where $M' \triangleq 2(K+N)+1$, $\mathbf{x}$ is the column vector whose $k$th element is $x_{k-K-N-1}$ for $k \in \{1, \ldots, M'\}$.
We can now solve $\dualGaborProbN$ and recover the corresponding measure by polynomial root finding as done in \cite[Sec.~4, pp.~31--36]{Candes2012}.

The following result shows that the sequence of solutions of $\gaborProbN$ converges in the weak-* sense to $\meas$ as $N \rightarrow \infty$, provided that $\meas$ is the unique solution of $\gaborProb$.

\begin{prop}
	Let $\meas \triangleq \sum_{\measSpikeIdx = 1}^L \measAmplitude{\measSpikeIdx} \delta_{t_\measSpikeIdx} \in \measSpace{\torus}$ with $\measAmplitude{\measSpikeIdx} \in \C \setminus \{0\}$, for all $\measSpikeIdx \in \{1, 2, \ldots, L\}$.  
	If $\meas$ is the unique solution of $\gaborProb$, then every sequence $\{\measProb_N\}_{N \in \N}$ such that for every $N \in \N$, $\measProb_N$ is in the set of solutions of $\gaborProbN$
	converges in the weak-* sense to $\meas$.
	\label{prop: convergence solution}
\end{prop}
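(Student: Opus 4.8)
The plan is to combine a uniform total-variation bound on the sequence $\{\measProb_N\}$ with weak-* compactness, to identify every weak-* cluster point of $\{\measProb_N\}$ as a solution of $\gaborProb$, and then to invoke the assumed uniqueness together with a subsequence argument to upgrade this into convergence of the whole sequence.

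First I would establish boundedness. Since the data in $\gaborProbN$ are precisely the STFT measurements of $\meas$ taken with the truncated window $\widetilde{\window}_N$, the measure $\meas$ is itself feasible for $\gaborProbN$ for every $N$; as $\measProb_N$ is a minimizer, this gives $\normTV{\measProb_N} \leq \normTV{\meas}$ for all $N$. Because $\measSpace{\torus}$ is the dual of the separable space $C_b(\torus)$, the closed ball $\{\measProb \colon \normTV{\measProb} \leq \normTV{\meas}\}$ is weak-* compact by \cite[Cor.~3.22]{Brezis2010}, and, $C_b(\torus)$ being separable, it is weak-* metrizable and hence sequentially compact. Consequently every subsequence of $\{\measProb_N\}$ admits a further subsequence $\{\measProb_{N_j}\}$ converging in the weak-* sense to some $\nu^\star \in \measSpace{\torus}$.

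Next I would show $\nu^\star = \meas$. Expanding the STFT with the Fourier series of $\widetilde{\window}_N$ and using $\window_n \neq 0$, the feasibility constraint of $\gaborProbN$ is equivalent to $\widehat{\measProb_N}(m) = \hat{\meas}(m)$ for all $\abs{m} \leq K + N$. Fixing $m \in \Z$ and testing the weak-* convergence against the character $t \mapsto e^{2\pi i m t} \in C_b(\torus)$ yields $\widehat{\measProb_{N_j}}(m) \to \widehat{\nu^\star}(m)$, while for $j$ large enough that $K + N_j \geq \abs{m}$ the left-hand side equals $\hat{\meas}(m)$; hence $\widehat{\nu^\star}(m) = \hat{\meas}(m)$ for every $m$, so $\gaborOp\nu^\star = \measurements$ and $\nu^\star$ is feasible for $\gaborProb$. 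Weak-* lower semicontinuity of $\normTV{\cdot}$ then gives $\normTV{\nu^\star} \leq \liminf_j \normTV{\measProb_{N_j}} \leq \normTV{\meas}$, so $\nu^\star$ is a minimizer of $\gaborProb$, and by the hypothesis that $\meas$ is the unique solution of $\gaborProb$ we conclude $\nu^\star = \meas$.

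Finally I would pass from cluster points to convergence of the full sequence by contradiction: if $\measProb_N \not\rightharpoonup^* \meas$, there exist $\varphi \in C_b(\torus)$, $\varepsilon > 0$, and a subsequence with $\abs{\innerProd{\varphi}{\measProb_{N_j}} - \innerProd{\varphi}{\meas}} \geq \varepsilon$; extracting from it a weak-* convergent sub-subsequence and applying the previous step forces its limit to be $\meas$, contradicting the $\varepsilon$-separation. The step I expect to be the crux is the passage to the limit in the $N$-dependent constraint of $\gaborProbN$: because the forward operator $\mathcal{A}_{\widetilde{\window}_N}$ itself varies with $N$, one cannot simply take limits in $\mathcal{A}_{\widetilde{\window}_N}\measProb_N = \measurements$, and the clean route is to fix each Fourier mode $m$ and exploit that the constraint pins $\widehat{\measProb_N}(m)$ to $\hat{\meas}(m)$ for all sufficiently large $N$, so that the growing number of matched moments is what ultimately forces $\nu^\star = \meas$.
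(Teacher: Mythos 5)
Your proof is correct and follows the same architecture as the paper's --- uniform TV bound, weak-* sequential compactness, identification of every cluster point, then the subsequence argument to upgrade to convergence of the whole sequence --- but the two load-bearing steps are realized by genuinely different means. For the bound $\normTV{\measProb_N} \leq \normTV{\meas}$, the paper does not argue via primal feasibility of $\meas$; it invokes the strong duality \eqref{eq: strong duality torus}, rewrites $\dualGaborProbN$ as a maximization over the space $\mathcal{P}_N(\torus \times \Z)$ of trigonometric-polynomial multipliers subject to $\normInfty{\gaborOp^{*}c} \leq 1$, and exploits the nesting $\mathcal{P}_N \subseteq \mathcal{P}_{N+1} \subseteq L^\infty(\torus \times \Z)$, which also gives monotonicity of $\{\normTV{\measProb_N}\}_{N}$ --- a fact you do not obtain but never need, since you work with a $\liminf$. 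Your more elementary route rests on reading the data of $\gaborProbN$ as $\mathcal{A}_{\widetilde{\window}_N}\meas$, i.e., as matching only the coefficients $\measurements_{k,n}$ with $\abs{n} \leq N$; this is the reading the paper itself uses in its proof (where $\measurements(\tau,k) = (\mathcal{A}_{\widetilde{\window}_{N_m}}\measProb_{N_m})(\tau,k)$), and it is the only sensible one, since with the untruncated data $\gaborOp\meas$ the constraint of $\gaborProbN$ would be unsatisfiable (its left side has infinitely many nonzero Fourier modes in $\tau$, its right side is a trigonometric polynomial of degree $N$). For the identification of the cluster point, the paper establishes $\gaborOp\measProb_\infty = \measurements$ through the uniform estimate $\abs{\measurements(\tau,k) - (\gaborOp\measProb_{N_m})(\tau,k)} \leq \normInfty{\widetilde{\window}_{N_m} - \window}\normTV{\measProb_{N_m}} \rightarrow 0$ combined with weak-* convergence, whereas you freeze one Fourier mode $m$ at a time and use $\window_n \neq 0$ to see that the constraint pins $\widehat{\measProb_N}(m) = \hat{\meas}(m)$ as soon as $K + N \geq \abs{m}$. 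Your version buys simplicity (fixed test functions, no operator-approximation estimate) and in fact yields more than you exploit: having $\widehat{\nu^\star}(m) = \hat{\meas}(m)$ for every $m \in \Z$, injectivity of the Fourier transform on $\measSpace{\torus}$ gives $\nu^\star = \meas$ outright, so your lower-semicontinuity-plus-uniqueness detour is dispensable and the uniqueness hypothesis enters your argument only nominally. What the paper's window-approximation argument buys in exchange is independence from the condition $\window_n \neq 0$: it needs only $\normInfty{\widetilde{\window}_N - \window} \rightarrow 0$, and so would survive windows with vanishing Fourier coefficients, whereas your moment-matching step would not.
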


\begin{proof}
	Let $N \in \N$. By \eqref{eq: strong duality torus} we have
	\begin{equation*}
		\normTV{\measProb_N} = \sup\left\{\innerProd{c}{\measurements} \colon \normInfty{\mathcal{A}_{\widetilde{\window}_N}^*c} \leq 1, c \in L^\infty(\torus \times \Z)\right\}.
	\end{equation*}
	For $c \in L^\infty(\torus \times \Z)$, thanks to \eqref{eq: recast SDP trigonometric polynomial}, $(\mathcal{A}_{\widetilde{\window}_N}^*c)(t)$ depends only on the Fourier series coefficients $c_{k, n}$, $n \in \{-N, \ldots, N\}$, of the functions $\tau \mapsto c(\tau, k)$, for $k \in \{-K, \ldots, K\}$. The problem $\dualGaborProbN$ is therefore equivalent to
	\begin{equation*}
		\maximize_{c \in \mathcal{P}_N(\torus \times \Z)} \innerProd{c}{\measurements} \st \normInfty{\gaborOp^*c} \leq 1,
	\end{equation*}
	where
	\begin{equation*}
		\mathcal{P}_N(\torus \times \Z) \triangleq \left\{(\tau, k) \mapsto \begin{cases} \displaystyle \sum_{n = -N}^N c_{k,n} e^{2\pi int}, & \abs{k} \leq K \\ 0, & \text{otherwise}\end{cases} \colon \{c_{k,n}\}_{\substack{-K \leq k \leq K \\ -N \leq n \leq N}} \in \C^{(2K+1) \times (2N+1)}\right\}.
	\end{equation*}
	But since
	\begin{align*}
		\left\{\innerProd{c}{\measurements} \colon \normInfty{\gaborOp^*c} \leq 1, c \in \mathcal{P}_N(\torus \times \Z)\right\} &\subseteq \left\{\innerProd{c}{\measurements} \colon \normInfty{\gaborOp^*c} \leq 1, c \in \mathcal{P}_{N+1}(\torus \times \Z)\right\} \\
			&\hspace{-2.5cm}\subseteq \left\{\innerProd{c}{\measurements} \colon \normInfty{\gaborOp^*c} \leq 1, c \in L^\infty(\torus \times \Z)\right\},
	\end{align*}
	we have that
	\begin{align}
		\normTV{\measProb_{N}} &= \sup\left\{\innerProd{c}{\measurements} \colon \normInfty{\gaborOp^*c} \leq 1, c \in \mathcal{P}_N(\torus \times \Z)\right\} \notag \\
			&\leq \sup\left\{\innerProd{c}{\measurements} \colon \normInfty{\gaborOp^*c} \leq 1, c \in \mathcal{P}_{N+1}(\torus \times \Z)\right\} = \normTV{\measProb_{N+1}} \label{eq: proof convergence problem simulations 1}
	\end{align}
	as well as
	\begin{align}
		\normTV{\measProb_N} &= \sup\left\{\innerProd{c}{\measurements} \colon \normInfty{\gaborOp^*c} \leq 1, c \in \mathcal{P}_N(\torus \times \Z)\right\} \notag \\
			&\leq \sup\left\{\innerProd{c}{\measurements} \colon \normInfty{\gaborOp^*c} \leq 1, c \in L^\infty(\torus \times \Z)\right\} \notag \\
			&= \min\left\{\normTV{\measProb} \colon  \gaborOp\measProb = \measurements, \measProb \in \measSpace{\torus}\right\} = \normTV{\meas}. \label{eq: proof convergence problem simulations 2}
	\end{align}
	From \eqref{eq: proof convergence problem simulations 2} it follows that the sequence $\{\measProb_N\}_{N \in \N}$ is bounded. Therefore, by application of~\cite[Cor.~3.30]{Brezis2010}, there exists a subsequence $\{\measProb_{N_m}\}_{m \in \N}$ that converges in weak-* topology to a measure $\measProb_\infty \in \measSpace{\torus}$. From \eqref{eq: proof convergence problem simulations 1} and \eqref{eq: proof convergence problem simulations 2} if follows that $\{\normTV{\measProb_N}\}_{N \in \N}$ is convergent. Thanks to \cite[Thm.~3.13 (iii)]{Brezis2010} and \eqref{eq: proof convergence problem simulations 2}, we then get
	\begin{equation*}
		\normTV{\measProb_\infty} \leq \lim_{m \rightarrow \infty} \normTV{\measProb_{N_m}} \leq \normTV{\meas}.
	\end{equation*}
	Next, we show that
	\begin{equation*}
		\lim_{m \rightarrow \infty} (\gaborOp\measProb_{N_m})(\tau, k) = y(\tau, k).
	\end{equation*}
	Let $\tau \in \torus$ and $k \in \{-K, \ldots, K\}$. For $m \in \N$, we have
	\begin{align}
		\abs{\measurements(\tau, k) - (\gaborOp\measProb_{N_m})(\tau, k)} &= \abs{(\mathcal{A}_{\window_{N_m}}\measProb_{N_m})(\tau, k) -  (\gaborOp\measProb_{N_m})(\tau, k)} \notag \\
			&= \abs{\int_\torus \window_{N_m}(t - \tau) e^{-2\pi ikt}\dmeas{\measProb_{N_m}}{t} - \int_\torus \window(t - \tau) e^{-2\pi ikt} \dmeas{\measProb_{N_m}}{t}} \notag \\
			&= \abs{\int_\torus \big(\window_{N_m}(t - \tau) - \window(t - \tau)\big) e^{-2\pi ikt} \dmeas{\measProb_{N_m}}{t}} \notag \\
			&\leq \int_\torus \abs{\window_{N_m}(t - \tau) - \window(t - \tau)} \dmeas{\abs{\measProb_{N_m}}}{t} \notag \\
			&\leq \normInfty{\window_{N_m} - \window} \normTV{\measProb_{N_m}}. \label{eq: proof convergence problem simulations 3}
	\end{align}
	From
	\begin{equation*}
		\normInfty{\window_{N_m} - \window} \leq 2\sum_{n = N_m+1}^{\infty} \sqrt{2\sigma}\exp(-2\pi\sigma^2n^2),
	\end{equation*}
	it follows that $\lim_{m \rightarrow \infty} \normInfty{\window_{N_m} - \window} = 0$. Since $\{\normTV{\measProb_{N_m}}\}_{m \in \N}$ is also convergent, \eqref{eq: proof convergence problem simulations 3} converges to $0$ as $m \rightarrow \infty$. It therefore follows that
	\begin{equation*}
		\lim_{m \rightarrow \infty} (\gaborOp\measProb_{N_m})(\tau, k) = y(\tau, k).
	\end{equation*}
	But since $\{\measProb_{N_m}\}_{m \in \N}$ converges to $\measProb_\infty$ in the weak-* topology and
	\begin{equation*}
		(\gaborOp\measProb_{N_m})(\tau, k)  = \int_\torus \window(t - \tau)e^{-2\pi ikt}\dmeas{\measProb_{N_m}}{t},
	\end{equation*}
	we have, by definition of weak-* convergence,
	\begin{equation*}
		\lim_{m \rightarrow \infty} (\gaborOp\measProb_{N_m})(\tau, k) = (\gaborOp\measProb_\infty)(\tau, k).
	\end{equation*}
	We therefore get $\measurements(\tau, k) = (\gaborOp\measProb_\infty)(\tau, k)$, for all $\tau \in \torus$ and $k \in \{-K, \ldots, K\}$, which shows that $\measProb_\infty$ is feasible for the problem $\gaborProb$. As $\meas$ is the unique solution of $\gaborProb$, we have $\normTV{\meas} \leq \normTV{\measProb_\infty}$, which combined with $\normTV{\measProb_\infty} \leq \normTV{\meas}$ leads to $\normTV{\measProb_\infty} = \normTV{\meas}$, and hence shows that $\measProb_\infty$ is a solution of $\gaborProb$. However, by assumption $\meas$ is the unique solution of $\gaborProb$. We can therefore conclude that not only $\normTV{\measProb_\infty} = \normTV{\meas}$ but also $\measProb_\infty = \meas$.
	Since there was nothing specific about the accumulation point $\measProb_\infty$, we can apply the same line of arguments to every accumulation point of the sequence $\{\measProb_N\}_{N \in \N}$, and therefore conclude that every accumulation point of $\{\measProb_N\}_{N \in \N}$ must equal $\meas$. In summary, we have shown that $\meas$ is the unique accumulation point---in the \mbox{weak-*} topology---of the sequence $\{\measProb_N\}_{N \in \N}$.
	But since $\normTV{\measProb_N} \leq \normTV{\meas}$ by \eqref{eq: proof convergence problem simulations 2}, the sequence $\{\measProb_N\}_{N \in \N}$ is contained in the closed centered ball of radius $\normTV{\meas}$, which, by the Banach-Alaoglu Theorem, is compact in the weak-* topology. We next show that this implies weak-* convergence of $\{\measProb_N\}_{N \in \N}$ to $\meas$. Suppose by way of contradiction that $\{\measProb_N\}_{N \in \N}$ does not converge---in the weak-* topology---to $\meas$. Then, there exists an $\varepsilon > 0$ such that for infinitely many $N$ we have $\abs{\innerProd{\measProb_N - \meas}{\varphi}} > \varepsilon$, for all $\varphi \in C(\torus)$. Therefore, we can find a subsequence of $\{\measProb_N\}_{N \in \N}$ which, by compactness---in the weak-* topology---of the closed centered ball of radius $\normTV{\meas}$, has an accumulation point different from $\meas$ in the weak-* topology. This constitutes a contradiction and thereby finishes the proof.
	\end{proof}

Fix $N \in \N$. If $\dualGaborProbN$ has at least one solution $c \in L^\infty(\torus \times \Z)$ such that $\abs{\mathcal{A}_{\widetilde{\window}_N}^*c}$ is not identically equal to $1$, it follows by \eqref{eq: implication dual problem} that, as a consequence of $\mathcal{A}_{\widetilde{\window}_N}^*c$ being a trigonometric polynomial, every solution $\measProb_N$ of $\gaborProbN$ is discrete.  
Unfortunately, the \mbox{weak-*} convergence alone of the sequence $\{\measProb_N\}_{N \in \N}$  of discrete measures to the discrete measure $\meas$, as guaranteed by Proposition \ref{prop: convergence solution}, does not imply, in general, that each element of $\supp(\measProb_N)$ converges to an element of $\supp(\meas)$. 
What we can show, however, is that for small enough $\varepsilon > 0$, one can find an $M_\varepsilon > 0$ such that for all $N \geq M_\varepsilon$, each set $[t_\measSpikeIdx -\varepsilon, t_\measSpikeIdx+\varepsilon]$, $\measSpikeIdx \in \{1, 2, \ldots, L\}$, contains at least one point of $\supp(\measProb_N)$, and in addition, most of the ``energy (in TV norm)'' of $\measProb_N$ is contained in $T_\varepsilon \triangleq \bigcup_{n \in \Z} \bigcup_{\measSpikeIdx = 1}^L\  [t_\measSpikeIdx + n - \varepsilon, t_\measSpikeIdx + n + \varepsilon]$. 
This is formalized as follows.

\begin{prop}
	\label{prop: proposition weak-* convergence support}
	Let $\meas = \sum_{\measSpikeIdx = 1}^L \measAmplitude{\measSpikeIdx}\delta_{t_\measSpikeIdx} \in \measSpace{\torus}$ with $\measAmplitude{\measSpikeIdx} \in \C \!\setminus \{0\}$, for all $\measSpikeIdx \in \{1, 2, \ldots, L\}$. Assume that the wrap-around distance
	\begin{equation*}
		\Delta \triangleq \min_{n \in \Z}\min_{\substack{1 \leq \measSpikeIdx, \measSpikeIdxBis \leq L \\ \measSpikeIdx \neq \measSpikeIdxBis}} \abs{t_\measSpikeIdx - t_\measSpikeIdxBis + n} > 0.
	\end{equation*}
	If $\meas$ is the unique solution of $\gaborProb$, then every sequence $\{\measProb_N\}_{N \in \N}$ such that for every $N \in \N$, $\measProb_N$ is contained in the set of solutions of $\gaborProbN$, satisfies
	\begin{equation}
		\forall \varepsilon \in (0, \Delta/4], \ \exists M_\varepsilon > 0, \ \forall N \geq M_\varepsilon, \ \forall \measSpikeIdx \in \{1, 2, \ldots, L\}, \quad \supp(\measProb_N) \cap [t_\measSpikeIdx - \varepsilon, t_\measSpikeIdx + \varepsilon] \neq \emptyset.
		\label{eq: support not empty near tell}
	\end{equation}
	Moreover, setting
	\begin{align*}
		\forall \varepsilon \in (0, \Delta/4], \qquad &T_\varepsilon \triangleq \bigcup_{n \in \Z} \bigcup_{\measSpikeIdx = 1}^L\  [t_\measSpikeIdx - \varepsilon + n, t_\measSpikeIdx + \varepsilon + n] \\
			&T_\varepsilon^c \triangleq \torus \setminus T_\varepsilon,
	\end{align*}
	and defining $\measProb_{N, T_\varepsilon}, \measProb_{N, T_\varepsilon^c} \in \measSpace{\torus}$ according to  
	\begin{align*}
		\forall B \in \borelAlgebra{\torus}, \qquad & \measProb_{N, T_\varepsilon}(B) \triangleq \measProb_N(T_\varepsilon \cap B) \\
		&\measProb_{N, T_\varepsilon^c}(B) \triangleq \measProb_N(T_\varepsilon^c \cap B),
	\end{align*}
	we have
	\begin{equation}
		\forall \varepsilon \in (0, \Delta/4], \ \exists M_\varepsilon > 0, \ \forall N \geq M_\varepsilon, \quad \left\{ \begin{array}{ll} \normTV{\measProb_{N, T_\varepsilon^c}} &\leq \varepsilon\normTV{\meas} \\
			\normTV{\measProb_{N, T_\varepsilon}} &\geq (1 - \varepsilon) \normTV{\meas}.
			\end{array} \right.
		\label{eq: concentration inequality measure}
	\end{equation}
\end{prop}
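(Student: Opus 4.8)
The plan is to derive everything from the weak-* convergence $\measProb_N \to \meas$ established in Proposition~\ref{prop: convergence solution}, together with the norm bound $\normTV{\measProb_N} \le \normTV{\meas}$ recorded in \eqref{eq: proof convergence problem simulations 2}. Fix $\varepsilon \in (0, \Delta/4]$. Since the minimum wrap-around distance equals $\Delta$ and $2\varepsilon \le \Delta/2 < \Delta$, the $L$ closed arcs $[t_\measSpikeIdx - \varepsilon, t_\measSpikeIdx + \varepsilon]$ are pairwise disjoint on $\torus$, and so are the open arcs whose union is $U \triangleq \bigcup_{\measSpikeIdx = 1}^L (t_\measSpikeIdx - \varepsilon, t_\measSpikeIdx + \varepsilon) = \mathrm{int}(T_\varepsilon)$. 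Because every $t_\measSpikeIdx$ lies in $U$, we have $\abs{\meas}(U) = \sum_{\measSpikeIdx = 1}^L \abs{\measAmplitude{\measSpikeIdx}} = \normTV{\meas}$. I also note that weak-* convergence in the paper's (real) sense yields convergence of the full complex pairing $(\varphi|\measProb_N) \to (\varphi|\meas)$ for every $\varphi \in C(\torus)$, by testing against both $\varphi$ and $i\varphi$.

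I would first prove the energy concentration \eqref{eq: concentration inequality measure}. The key estimate is the lower-semicontinuity bound $\liminf_N \abs{\measProb_N}(U) \ge \normTV{\meas}$. To obtain it concretely, choose $\varphi \in C(\torus)$ with $\normInfty{\varphi} \le 1$, $\supp \varphi \subseteq U$, and $\varphi(t_\measSpikeIdx) = \measAmplitude{\measSpikeIdx}/\abs{\measAmplitude{\measSpikeIdx}}$ for each $\measSpikeIdx$ (possible since the arcs are disjoint). Then $\abs{(\varphi|\measProb_N)} \le \abs{\measProb_N}(U)$ because $\supp \varphi \subseteq U$ and $\normInfty{\varphi} \le 1$, while $(\varphi|\measProb_N) \to (\varphi|\meas) = \sum_{\measSpikeIdx = 1}^L \abs{\measAmplitude{\measSpikeIdx}} = \normTV{\meas}$, so $\normTV{\meas} \le \liminf_N \abs{\measProb_N}(U)$. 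Combined with $\abs{\measProb_N}(U) \le \normTV{\measProb_N} \le \normTV{\meas}$, this forces $\lim_N \abs{\measProb_N}(U) = \normTV{\meas}$. Since $T_\varepsilon^c \subseteq \torus \setminus U$, it follows that $\normTV{\measProb_{N, T_\varepsilon^c}} = \abs{\measProb_N}(T_\varepsilon^c) \le \normTV{\measProb_N} - \abs{\measProb_N}(U) \to 0$, and $\normTV{\measProb_{N, T_\varepsilon}} = \abs{\measProb_N}(T_\varepsilon) \ge \abs{\measProb_N}(U) \to \normTV{\meas}$; hence beyond a threshold depending only on $\varepsilon$ both inequalities in \eqref{eq: concentration inequality measure} hold.

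For \eqref{eq: support not empty near tell} I would argue by contradiction, one spike at a time. Fix $\measSpikeIdx$ and suppose $\supp(\measProb_N) \cap [t_\measSpikeIdx - \varepsilon, t_\measSpikeIdx + \varepsilon] = \emptyset$ for infinitely many $N$. Pick $\psi \in C(\torus)$ with $\supp \psi \subseteq (t_\measSpikeIdx - \varepsilon, t_\measSpikeIdx + \varepsilon)$ and $\psi(t_\measSpikeIdx) = 1$. For those $N$ the open arc lies in $\torus \setminus \supp(\measProb_N)$, on which $\abs{\measProb_N}$ vanishes, so $(\psi|\measProb_N) = 0$; yet weak-* convergence gives $(\psi|\measProb_N) \to (\psi|\meas) = \overline{\psi(t_\measSpikeIdx)}\,\measAmplitude{\measSpikeIdx} = \measAmplitude{\measSpikeIdx} \ne 0$ (only $t_\measSpikeIdx$ lies in $\supp\psi$, by disjointness), a contradiction. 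Thus each $\measSpikeIdx$ admits a threshold $M_\measSpikeIdx$, and since $L < \infty$ I take $M_\varepsilon$ to be the maximum of these together with the threshold from the previous paragraph, giving a single $M_\varepsilon$ that serves both claims.

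The genuinely delicate point—and the only one—is the lower-semicontinuity estimate $\abs{\meas}(U) \le \liminf_N \abs{\measProb_N}(U)$, i.e.\ converting weak-* convergence of the \emph{signed} measures $\measProb_N$ into a lower bound on their \emph{total variation} over the open set $U$. Here one must (i) move from the real dual pairing defining the weak-* topology to the complex integral (via the $i\varphi$ trick), and (ii) invoke the norm ceiling $\normTV{\measProb_N} \le \normTV{\meas}$ to upgrade the $\liminf$ into an actual limit, which is exactly what drives the mass outside $T_\varepsilon$ to zero. Everything else is bookkeeping, and crucially nothing in this argument uses the Gaussian window or the threshold $\Delta > 1/(K+1/2)$: only $\Delta > 0$ (to separate the arcs) and the uniqueness hypothesis (entering through Proposition~\ref{prop: convergence solution}) are required.
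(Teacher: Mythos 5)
Your proof is correct and takes essentially the same approach as the paper's: phase-matched test functions supported on the disjoint arcs around the $t_\measSpikeIdx$, the weak-* convergence $\measProb_N \to \meas$ from Proposition~\ref{prop: convergence solution}, and the ceiling $\normTV{\measProb_N} \leq \normTV{\meas}$ from \eqref{eq: proof convergence problem simulations 2}. The differences (Urysohn-type functions instead of explicit tent functions, proving the concentration claim first via a liminf of $\abs{\measProb_N}(U)$ rather than a direct quantitative pairing bound, and a per-spike subsequence contradiction for \eqref{eq: support not empty near tell}) are only cosmetic repackagings of the same argument.
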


Some remarks are in order before we prove Proposition~\ref{prop: proposition weak-* convergence support}. 
An obvious consequence of \eqref{eq: support not empty near tell} is that the number of atoms of $\measProb_N$ is larger than (or equal to) the number $L$ of atoms of $\meas$. Moreover, for all $N \geq M_\varepsilon$ and $t^{(N)} \in \supp(\measProb_N) \cap T_\varepsilon^c$, we have
\begin{align}
	\abs{\measProb_N(\{t^{(N)}\})} &\leq \abs{\measProb_N}\!(\{t^{(N)}\}) \notag \\
		&= \abs{\measProb_{N, T_\varepsilon^c}}\!(\{t^{(N)}\}) \label{eq: inequality to explain proof magnitude small outside T epsilon} \\
		&\leq \normTV{\measProb_{N, T_\varepsilon^c}} \label{eq: inequality to explain proof magnitude small outside T epsilon bis} \\
		&\leq \varepsilon\normTV{\meas}, \notag
\end{align}
where \eqref{eq: inequality to explain proof magnitude small outside T epsilon} follows from $t^{(N)} \in T_\varepsilon^c$ and \eqref{eq: inequality to explain proof magnitude small outside T epsilon bis} is by $\normTV{\measProb_{N, T_\varepsilon^c}} = \sum_{t \in \supp(\measProb_N) \cap T_\varepsilon^c} \abs{\measProb_{N, T_\varepsilon^c}(\{t\})}$.
This means that for sufficiently large $N$, the weights of $\measProb_N$ attached to the spikes located outside $T_\varepsilon$ are smaller than $\varepsilon\normTV{\meas}$, provided that $\varepsilon \in (0, \Delta/4]$. 
Note, however, that each of the sets $[t_\ell-\varepsilon, t_\ell+\varepsilon]$, $\ell \in \{1, 2, \ldots, L\}$, may contain more than one atom of $\nu_N$. Fig.~\ref{fig: proof last proposition} illustrates the statement in Proposition~\ref{prop: proposition weak-* convergence support}. 

\definecolor{lightgray}{gray}{0.5}
\definecolor{darkgray}{gray}{0.5}

%%%%%%%%%%%%%%%%%%%%%%%
% GRAY
%%%%%%%%%%%%%%%%%%%%%%%

\begin{figure}
	\centering
	\ifthenelse{\equal{\versionColor}{true}}{
		\ifthenelse{\equal{\pdfFigs}{true}}{
			\includegraphics[width = 0.9\textwidth]{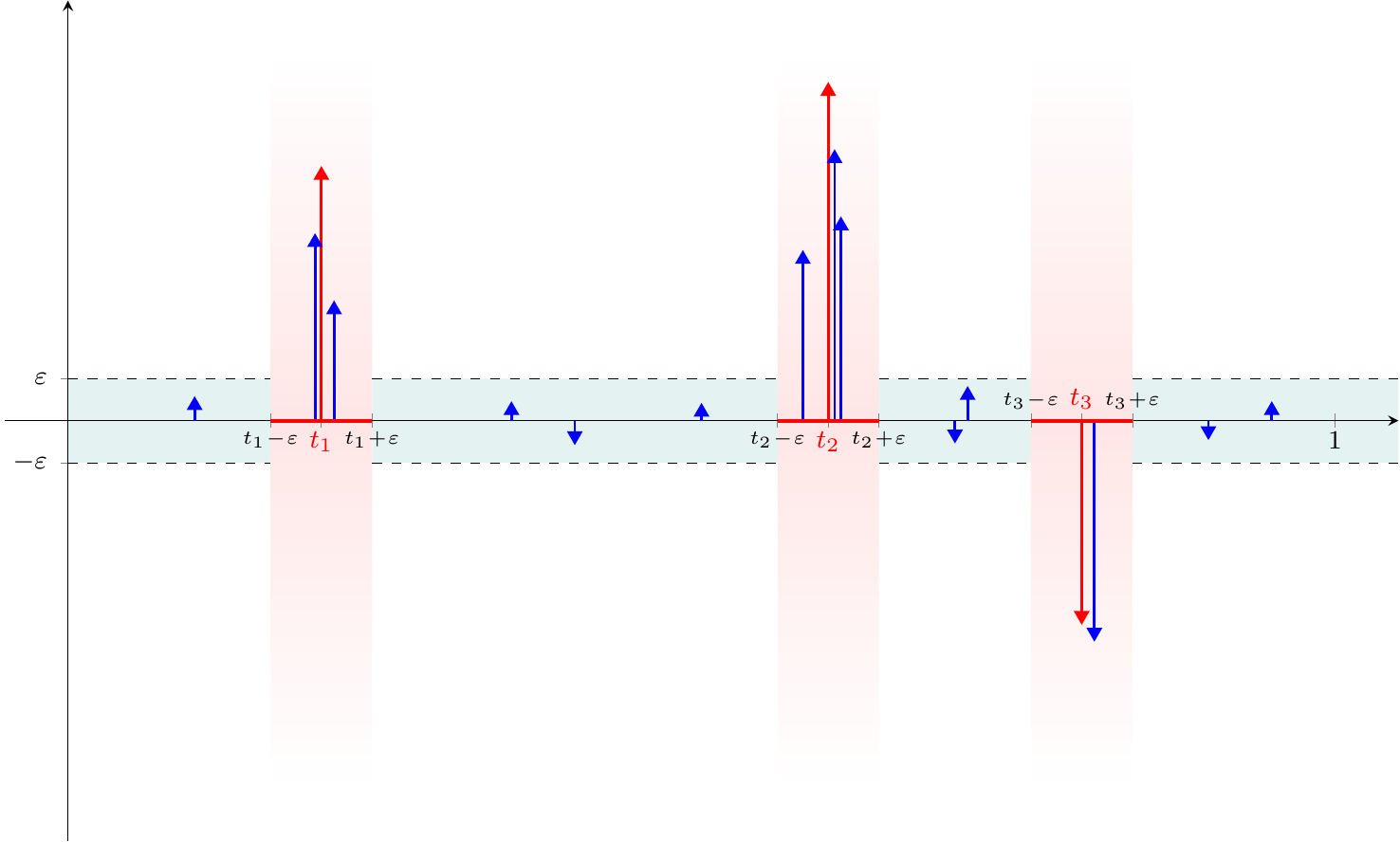}
		}{
			\ifthenelse{\equal{\generateFigs}{yes}}{
				\tikzsetnextfilename{illustrationPropositionColor}
				\begin{tikzpicture}
	                        		\tikzstyle{every node}=[font=\scriptsize]
	                        		\begin{axis}[
	                        			axis x line = middle, axis y line = middle,
	                        			xmin = -0.05,xmax = 1.05,
	                        			ymin = -2.5, ymax = 2.5, 
	                        			grid = none, 
	                        			height = 9cm, width = 0.9\textwidth, 
	                        			scale only axis, 
	                        			enlarge x limits = false, enlarge y limits = false,
	                        			ytick = {-0.25, 0, 0.25},
	                        			yticklabels = {$-\varepsilon$, $0$, $\varepsilon$}, 
	                        			xtick = {0, 0.16, 0.2, 0.24, 0.56, 0.6, 0.64, 1},
	                        			xticklabels = {$0$, {\tiny$t_1\!-\!\varepsilon$}, \textcolor{red}{$t_1$}, {\tiny $t_1\!+\!\varepsilon$}, {\tiny$t_2\!-\!\varepsilon$}, \textcolor{red}{$t_2$}, {\tiny $t_2\!+\!\varepsilon$}, $1$},
	                        			xticklabel shift = {-0.1cm},
	                        			extra x ticks = {0.76, 0.8, 0.84},
	                        			extra x tick labels = {{\tiny $t_3\!-\!\varepsilon$}, \textcolor{red}{$t_3$}, {\tiny $t_3\!+\!\varepsilon$}},
	                        			extra x tick style = {
	                        				xticklabel style={yshift = -0.05cm, anchor = south}
	                        			},
	                        			legend columns = 2, legend cell align  = left, legend pos = north west, % legend style = {draw=none},
	                        			]
	                        			\addplot+[thick, dirac, red] coordinates {(0.2, 1.5) (0.6, 2) (0.8, -1.2)};
	                        			\addplot+[thick, dirac, blue] coordinates {(0.195, 1.1) (0.58, 1) (0.81, -1.3)};
	                        			\addplot+[thick, dirac, blue] coordinates {(0.21, 0.7) (0.605, 1.6) (0.61, 1.2)};
	                        			\addplot+[thick, dirac, blue] coordinates {(0.1, 0.13) (0.35, 0.1) (0.4, -0.13) (0.5, 0.09) (0.7, -0.12) (0.71, 0.19) (0.9, -0.1) (0.95, 0.1)};

	                        			\addplot[name path=A1, very thin, dashed, black, no markers, domain = 0:0.16] {0.25};
	                        			\addplot[name path=B1, very thin, dashed, black, no markers, domain = 0:0.16] {-0.25};
	                        			
	                        			\addplot[name path=A2, very thin, dashed, black, no markers, domain = 0.24:0.56] {0.25};
	                        			\addplot[name path=B2, very thin, dashed, black, no markers, domain = 0.24:0.56] {-0.25};
	                        			
	                        			\addplot[name path=A3, very thin, dashed, black, no markers, domain = 0.64:0.76] {0.25};
	                        			\addplot[name path=B3, very thin, dashed, black, no markers, domain = 0.64:0.76] {-0.25};
	                        			
	                        			\addplot[name path=A4, very thin, dashed, black, no markers, domain = 0.84:1.3] {0.25};
	                        			\addplot[name path=B4, very thin, dashed, black, no markers, domain = 0.84:1.3] {-0.25};
	                        			
	                        			\addplot[name path=C, very thin, dashed, white, no markers, domain = 0.16:0.24] {2.2};
	                        			\addplot[name path=CD, very thin, dashed, white, no markers, domain = 0.16:0.24] {0};
	                        			\addplot[name path=D, very thin, dashed, white, no markers, domain = 0.16:0.24] {-2.2};
	                        			
	                        			\addplot[name path=E, very thin, dashed, white, no markers, domain = 0.56:0.64] {2.2};
	                        			\addplot[name path=EF, very thin, dashed, white, no markers, domain = 0.56:0.64] {0};
	                        			\addplot[name path=F, very thin, dashed, white, no markers, domain = 0.56:0.64] {-2.2};
	                        			
	                        			\addplot[name path=G, very thin, dashed, white, no markers, domain = 0.76:0.84] {2.2};
	                        			\addplot[name path=GH, very thin, dashed, white, no markers, domain = 0.76:0.84] {0};
	                        			\addplot[name path=H, very thin, dashed, white, no markers, domain = 0.76:0.84] {-2.2};
	                        			
	                        			\addplot[very thick, red, no markers, domain=0.16:0.24] {0};
	                        			\addplot[very thick, red, no markers, domain=0.56:0.64] {0};
	                        			\addplot[very thick, red, no markers, domain=0.76:0.84] {0};
	                        			
	                        			\addplot[teal!10] fill between[of=A1 and B1,
	                        				soft clip={domain=0:1},
	                        			];
	                        			
	                        			\addplot[teal!10] fill between[of=A2 and B2,
	                        				soft clip={domain=0:1},
	                        			];
	                        			
	                        			\addplot[teal!10] fill between[of=A3 and B3,
	                        				soft clip={domain=0:1},
	                        			];
	                        			
	                        			\addplot[teal!10] fill between[of=A4 and B4,
	                        				soft clip={domain=0:1.3},
	                        			];
	                        			
	                        			\addplot[white, bottom color=red!10, top color=red!0] fill between[of=C and CD,
	                        				soft clip={domain=0:1},
	                        			];
	                        			
	                        			\addplot[white, bottom color=red!0, top color=red!10] fill between[of=CD and D,
	                        				soft clip={domain=0:1},
	                        			];
	                        			
	                        			\addplot[white, bottom color=red!10, top color=red!0] fill between[of=E and EF,
	                        				soft clip={domain=0:1},
	                        			];
	                        			
	                        			\addplot[white, bottom color=red!0, top color=red!10] fill between[of=EF and F,
	                        				soft clip={domain=0:1},
	                        			];
	                        			
	                        			\addplot[white, bottom color=red!10, top color=red!0] fill between[of=G and GH,
	                        				soft clip={domain=0:1},
	                        			];
	                        			
	                        			\addplot[white, bottom color=red!0, top color=red!10] fill between[of=GH and H,
	                        				soft clip={domain=0:1},
	                        			];
	                        		\end{axis}
				\end{tikzpicture}			
			}{
	                        	\begin{tikzpicture}
	                        		\tikzstyle{every node}=[font=\scriptsize]
	                        		\begin{axis}[
	                        			axis x line = middle, axis y line = middle,
	                        			xmin = -0.05,xmax = 1.05,
	                        			ymin = -2.5, ymax = 2.5, 
	                        			grid = none, 
	                        			height = 9cm, width = 0.9\textwidth, 
	                        			scale only axis, 
	                        			enlarge x limits = false, enlarge y limits = false,
	                        			ytick = {-0.25, 0, 0.25},
	                        			yticklabels = {$-\varepsilon$, $0$, $\varepsilon$}, 
	                        			xtick = {0, 0.16, 0.2, 0.24, 0.56, 0.6, 0.64, 1},
	                        			xticklabels = {$0$, {\tiny$t_1\!-\!\varepsilon$}, \textcolor{red}{$t_1$}, {\tiny $t_1\!+\!\varepsilon$}, {\tiny$t_2\!-\!\varepsilon$}, \textcolor{red}{$t_2$}, {\tiny $t_2\!+\!\varepsilon$}, $1$},
	                        			xticklabel shift = {-0.1cm},
	                        			extra x ticks = {0.76, 0.8, 0.84},
	                        			extra x tick labels = {{\tiny $t_3\!-\!\varepsilon$}, \textcolor{red}{$t_3$}, {\tiny $t_3\!+\!\varepsilon$}},
	                        			extra x tick style = {
	                        				xticklabel style={yshift = -0.05cm, anchor = south}
	                        			},
	                        			legend columns = 2, legend cell align  = left, legend pos = north west, % legend style = {draw=none},
	                        			]
	                        			\addplot+[thick, dirac, red] coordinates {(0.2, 1.5) (0.6, 2) (0.8, -1.2)};
	                        			\addplot+[thick, dirac, blue] coordinates {(0.195, 1.1) (0.58, 1) (0.81, -1.3)};
	                        			\addplot+[thick, dirac, blue] coordinates {(0.21, 0.7) (0.605, 1.6) (0.61, 1.2)};
	                        			\addplot+[thick, dirac, blue] coordinates {(0.1, 0.13) (0.35, 0.1) (0.4, -0.13) (0.5, 0.09) (0.7, -0.12) (0.71, 0.19) (0.9, -0.1) (0.95, 0.1)};

	                        			\addplot[name path=A1, very thin, dashed, black, no markers, domain = 0:0.16] {0.25};
	                        			\addplot[name path=B1, very thin, dashed, black, no markers, domain = 0:0.16] {-0.25};
	                        			
	                        			\addplot[name path=A2, very thin, dashed, black, no markers, domain = 0.24:0.56] {0.25};
	                        			\addplot[name path=B2, very thin, dashed, black, no markers, domain = 0.24:0.56] {-0.25};
	                        			
	                        			\addplot[name path=A3, very thin, dashed, black, no markers, domain = 0.64:0.76] {0.25};
	                        			\addplot[name path=B3, very thin, dashed, black, no markers, domain = 0.64:0.76] {-0.25};
	                        			
	                        			\addplot[name path=A4, very thin, dashed, black, no markers, domain = 0.84:1.3] {0.25};
	                        			\addplot[name path=B4, very thin, dashed, black, no markers, domain = 0.84:1.3] {-0.25};
	                        			
	                        			\addplot[name path=C, very thin, dashed, white, no markers, domain = 0.16:0.24] {2.2};
	                        			\addplot[name path=CD, very thin, dashed, white, no markers, domain = 0.16:0.24] {0};
	                        			\addplot[name path=D, very thin, dashed, white, no markers, domain = 0.16:0.24] {-2.2};
	                        			
	                        			\addplot[name path=E, very thin, dashed, white, no markers, domain = 0.56:0.64] {2.2};
	                        			\addplot[name path=EF, very thin, dashed, white, no markers, domain = 0.56:0.64] {0};
	                        			\addplot[name path=F, very thin, dashed, white, no markers, domain = 0.56:0.64] {-2.2};
	                        			
	                        			\addplot[name path=G, very thin, dashed, white, no markers, domain = 0.76:0.84] {2.2};
	                        			\addplot[name path=GH, very thin, dashed, white, no markers, domain = 0.76:0.84] {0};
	                        			\addplot[name path=H, very thin, dashed, white, no markers, domain = 0.76:0.84] {-2.2};
	                        			
	                        			\addplot[very thick, red, no markers, domain=0.16:0.24] {0};
	                        			\addplot[very thick, red, no markers, domain=0.56:0.64] {0};
	                        			\addplot[very thick, red, no markers, domain=0.76:0.84] {0};
	                        			
	                        			\addplot[teal!10] fill between[of=A1 and B1,
	                        				soft clip={domain=0:1},
	                        			];
	                        			
	                        			\addplot[teal!10] fill between[of=A2 and B2,
	                        				soft clip={domain=0:1},
	                        			];
	                        			
	                        			\addplot[teal!10] fill between[of=A3 and B3,
	                        				soft clip={domain=0:1},
	                        			];
	                        			
	                        			\addplot[teal!10] fill between[of=A4 and B4,
	                        				soft clip={domain=0:1.3},
	                        			];
	                        			
	                        			\addplot[path fading=north, red!10] fill between[of=C and CD,
	                        				soft clip={domain=0:1},
	                        			];
	                        			
	                        			\addplot[path fading=south, red!10] fill between[of=CD and D,
	                        				soft clip={domain=0:1},
	                        			];
	                        			
	                        			\addplot[path fading=north, red!10] fill between[of=E and EF,
	                        				soft clip={domain=0:1},
	                        			];
	                        			
	                        			\addplot[path fading=south, red!10] fill between[of=EF and F,
	                        				soft clip={domain=0:1},
	                        			];
	                        			
	                        			\addplot[path fading=north, red!10] fill between[of=G and GH,
	                        				soft clip={domain=0:1},
	                        			];
	                        			
	                        			\addplot[path fading=south, red!10] fill between[of=GH and H,
	                        				soft clip={domain=0:1},
	                        			];
	                        		\end{axis}
	                        	\end{tikzpicture}
			}
		}
		\caption{Illustration of Proposition~\ref{prop: proposition weak-* convergence support} with $L = 3$. The union of intervals depicted in red represents $T_\varepsilon$ on the fundamental interval $[0, 1)$. The spikes in red represent $\meas$. Proposition \ref{prop: proposition weak-* convergence support} guarantees that for $N \geq M_\varepsilon$, each interval $[t_\measSpikeIdx - \varepsilon, t_\measSpikeIdx + \varepsilon]$ contains at least one atom of $\measProb_N$.  The spikes in blue correspond to $\measProb_N$ for $N \geq M_\varepsilon$. The weights of the atoms of $\measProb_N$ outside $T_\varepsilon$ is guaranteed to be smaller than $\varepsilon\normTV{\meas}$.}
	}{
		\ifthenelse{\equal{\pdfFigs}{true}}{
			\includegraphics[width = 0.9\textwidth]{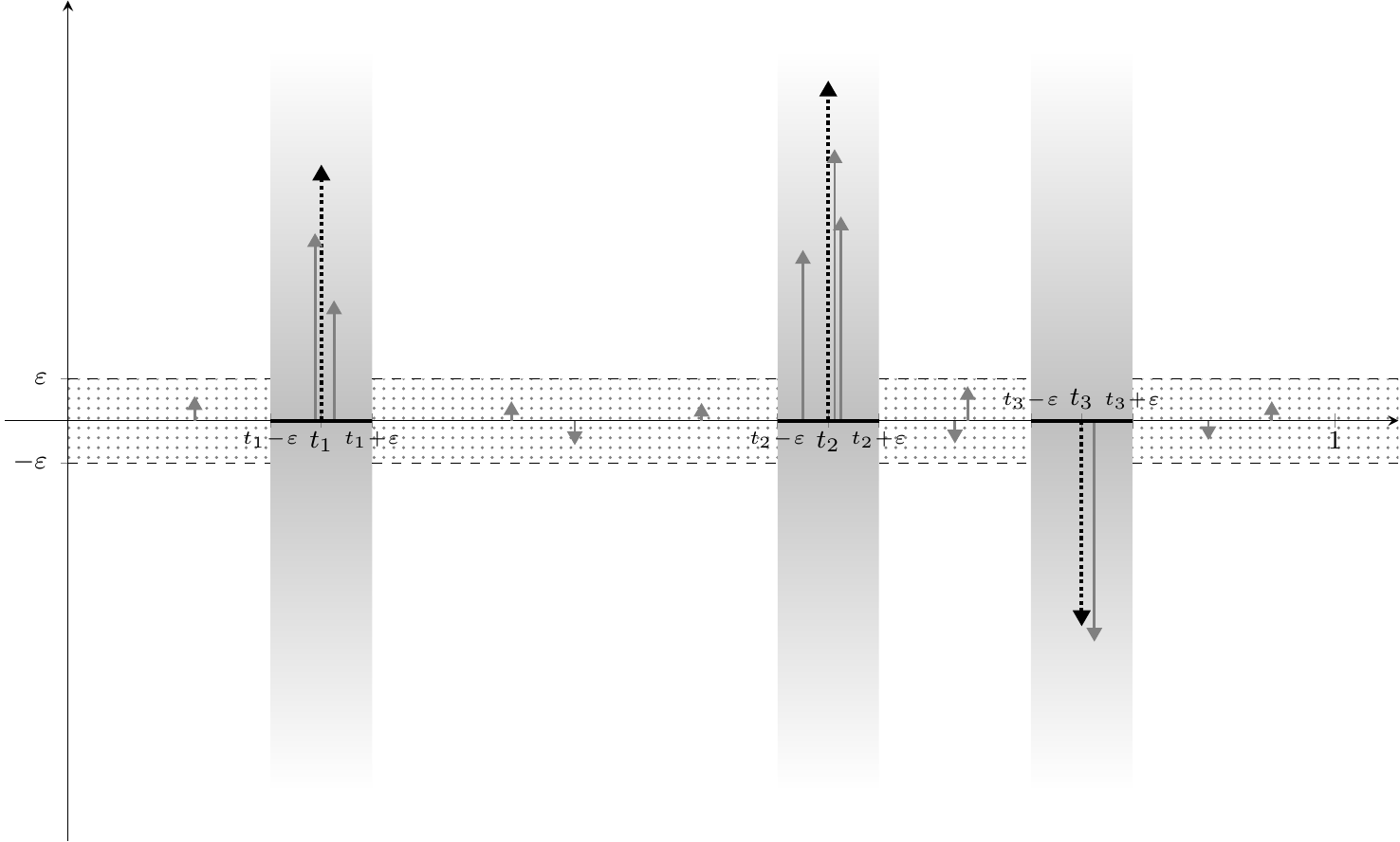}
			\caption{Illustration of Proposition~\ref{prop: proposition weak-* convergence support} with $L = 3$. The union of intervals corresponding to the shaded areas represents $T_\varepsilon$ on the fundamental interval $[0, 1)$. The black dotted spikes represent $\meas$. Proposition \ref{prop: proposition weak-* convergence support} guarantees that for $N \geq M_\varepsilon$, each interval $[t_\measSpikeIdx - \varepsilon, t_\measSpikeIdx + \varepsilon]$ contains at least one atom of $\measProb_N$.  The gray spikes correspond to $\measProb_N$ for $N \geq M_\varepsilon$. The weights of the atoms of $\measProb_N$ outside $T_\varepsilon$ is guaranteed to be smaller than $\varepsilon\normTV{\meas}$.}
		}{
			\ifthenelse{\equal{\generateFigs}{yes}}{
				\tikzsetnextfilename{illustrationPropositionBW}
		                	\begin{tikzpicture}
			                		\tikzstyle{every node}=[font=\scriptsize]
			                		\begin{axis}[
			                			axis x line = middle, axis y line = middle,
			                			xmin = -0.05,xmax = 1.05,
			                			ymin = -2.5, ymax = 2.5, 
			                			grid = none, 
			                			height = 9cm, width = 0.9\textwidth, 
			                			scale only axis, 
			                			enlarge x limits = false, enlarge y limits = false,
			                			ytick = {-0.25, 0, 0.25},
			                			yticklabels = {$-\varepsilon$, $0$, $\varepsilon$}, 
			                			xtick = {0, 0.16, 0.2, 0.24, 0.56, 0.6, 0.64, 1},
			                			xticklabels = {$0$, {\tiny$t_1\!-\!\varepsilon$}, {\textcolor{black}{$\boldmath t_1$}}, {\tiny $t_1\!+\!\varepsilon$}, {\tiny$t_2\!-\!\varepsilon$}, {\textcolor{black}{$t_2$}}, {\tiny $t_2\!+\!\varepsilon$}, $1$},
			                			xticklabel shift = {-0.1cm},
			                			extra x ticks = {0.76, 0.8, 0.84},
			                			extra x tick labels = {{\tiny $t_3\!-\!\varepsilon$}, {\textcolor{black}{$t_3$}}, {\tiny $t_3\!+\!\varepsilon$}},
			                			extra x tick style = {
			                				xticklabel style={yshift = -0.05cm, anchor = south}
			                			},
			                			legend columns = 2, legend cell align  = left, legend pos = north west, 
			                			]
			                			\addplot+[very thick, densely dotted, dirac, black] coordinates {(0.2, 1.5) (0.6, 2) (0.8, -1.2)};
			                			\addplot+[thick, dirac, darkgray] coordinates {(0.195, 1.1) (0.58, 1) (0.81, -1.3)};
			                			\addplot+[thick, dirac, darkgray] coordinates {(0.21, 0.7) (0.605, 1.6) (0.61, 1.2)};
			                			\addplot+[thick, dirac, darkgray] coordinates {(0.1, 0.13) (0.35, 0.1) (0.4, -0.13) (0.5, 0.09) (0.7, -0.12) (0.71, 0.19) (0.9, -0.1) (0.95, 0.1)};

			                			\addplot[name path=A1, very thin, dashed, black, no markers, domain = 0:0.16] {0.25};
			                			\addplot[name path=B1, very thin, dashed, black, no markers, domain = 0:0.16] {-0.25};
			                			
			                			\addplot[name path=A2, very thin, dashed, black, no markers, domain = 0.24:0.56] {0.25};
			                			\addplot[name path=B2, very thin, dashed, black, no markers, domain = 0.24:0.56] {-0.25};
			                			
			                			\addplot[name path=A3, very thin, dashed, black, no markers, domain = 0.64:0.76] {0.25};
			                			\addplot[name path=B3, very thin, dashed, black, no markers, domain = 0.64:0.76] {-0.25};
			                			
			                			\addplot[name path=A4, very thin, dashed, black, no markers, domain = 0.84:1.3] {0.25};
			                			\addplot[name path=B4, very thin, dashed, black, no markers, domain = 0.84:1.3] {-0.25};
			                			
			                			\addplot[name path=C, very thin, dashed, white, no markers, domain = 0.16:0.24] {2.2};
			                			\addplot[name path=CD, very thin, dashed, white, no markers, domain = 0.16:0.24] {0};
			                			\addplot[name path=D, very thin, dashed, white, no markers, domain = 0.16:0.24] {-2.2};
			                			
			                			\addplot[name path=E, very thin, dashed, white, no markers, domain = 0.56:0.64] {2.2};
			                			\addplot[name path=EF, very thin, dashed, white, no markers, domain = 0.56:0.64] {0};
			                			\addplot[name path=F, very thin, dashed, white, no markers, domain = 0.56:0.64] {-2.2};
			                			
			                			\addplot[name path=G, very thin, dashed, white, no markers, domain = 0.76:0.84] {2.2};
			                			\addplot[name path=GH, very thin, dashed, white, no markers, domain = 0.76:0.84] {0};
			                			\addplot[name path=H, very thin, dashed, white, no markers, domain = 0.76:0.84] {-2.2};
			                			
			                			\addplot[very thick, black, no markers, domain=0.16:0.24] {0};
			                			\addplot[very thick, black, no markers, domain=0.56:0.64] {0};
			                			\addplot[very thick, black, no markers, domain=0.76:0.84] {0};
			                			
			                			\addplot[pattern=dots, pattern color=lightgray] fill between[of=A1 and B1,
			                				soft clip={domain=0:1},
			                			];
			                			
			                			\addplot[pattern=dots, pattern color=lightgray] fill between[of=A2 and B2,
			                				soft clip={domain=0:1},
			                			];
			                			
			                			\addplot[pattern=dots, pattern color=lightgraypattern=dots, pattern color=lightgray] fill between[of=A3 and B3,
			                				soft clip={domain=0:1},
			                			];
			                			
			                			\addplot[pattern=dots, pattern color=lightgray] fill between[of=A4 and B4,
			                				soft clip={domain=0:1.3},
			                			];
			                			
			                			\addplot[white, bottom color=lightgray!50, top color=lightgray!0] fill between[of=C and CD,
			                				soft clip={domain=0:1},
			                			];
			                			
			                			\addplot[white, bottom color=lightgray!0, top color=lightgray!50] fill between[of=CD and D,
			                				soft clip={domain=0:1},
			                			];
			                			
			                			\addplot[white, bottom color=lightgray!50, top color=lightgray!0] fill between[of=E and EF,
			                				soft clip={domain=0:1},
			                			];
			                			
			                			\addplot[white, bottom color=lightgray!0, top color=lightgray!50] fill between[of=EF and F,
			                				soft clip={domain=0:1},
			                			];
			                			
			                			\addplot[white, bottom color=lightgray!50, top color=lightgray!0] fill between[of=G and GH,
			                				soft clip={domain=0:1},
			                			];
			                			
			                			\addplot[white, bottom color=lightgray!0, top color=lightgray!50] fill between[of=GH and H,
			                				soft clip={domain=0:1},
			                			];		
			                		\end{axis}
			                	\end{tikzpicture}
					\caption{Illustration of Proposition~\ref{prop: proposition weak-* convergence support} with $L = 3$. The union of intervals corresponding to the shaded areas represents $T_\varepsilon$ on the fundamental interval $[0, 1)$. The black dotted spikes represent $\meas$. Proposition \ref{prop: proposition weak-* convergence support} guarantees that for $N \geq M_\varepsilon$, each interval $[t_\measSpikeIdx - \varepsilon, t_\measSpikeIdx + \varepsilon]$ contains at least one atom of $\measProb_N$.  The gray spikes correspond to $\measProb_N$ for $N \geq M_\varepsilon$. The weights of the atoms of $\measProb_N$ outside $T_\varepsilon$ is guaranteed to be smaller than $\varepsilon\normTV{\meas}$.}
			}{
		                	\begin{tikzpicture}
			                		\tikzstyle{every node}=[font=\scriptsize]
			                		\begin{axis}[
			                			axis x line = middle, axis y line = middle,
			                			xmin = -0.05,xmax = 1.05,
			                			ymin = -2.5, ymax = 2.5, 
			                			grid = none, 
			                			height = 9cm, width = 0.9\textwidth, 
			                			scale only axis, 
			                			enlarge x limits = false, enlarge y limits = false,
			                			ytick = {-0.25, 0, 0.25},
			                			yticklabels = {$-\varepsilon$, $0$, $\varepsilon$}, 
			                			xtick = {0, 0.16, 0.2, 0.24, 0.56, 0.6, 0.64, 1},
			                			xticklabels = {$0$, {\tiny$t_1\!-\!\varepsilon$}, {\textcolor{black}{$\boldmath t_1$}}, {\tiny $t_1\!+\!\varepsilon$}, {\tiny$t_2\!-\!\varepsilon$}, {\textcolor{black}{$t_2$}}, {\tiny $t_2\!+\!\varepsilon$}, $1$},
			                			xticklabel shift = {-0.1cm},
			                			extra x ticks = {0.76, 0.8, 0.84},
			                			extra x tick labels = {{\tiny $t_3\!-\!\varepsilon$}, {\textcolor{black}{$t_3$}}, {\tiny $t_3\!+\!\varepsilon$}},
			                			extra x tick style = {
			                				xticklabel style={yshift = -0.05cm, anchor = south}
			                			},
			                			legend columns = 2, legend cell align  = left, legend pos = north west, 
			                			]
			                			\addplot+[very thick, densely dotted, dirac, black] coordinates {(0.2, 1.5) (0.6, 2) (0.8, -1.2)};
			                			\addplot+[thick, dirac, darkgray] coordinates {(0.195, 1.1) (0.58, 1) (0.81, -1.3)};
			                			\addplot+[thick, dirac, darkgray] coordinates {(0.21, 0.7) (0.605, 1.6) (0.61, 1.2)};
			                			\addplot+[thick, dirac, darkgray] coordinates {(0.1, 0.13) (0.35, 0.1) (0.4, -0.13) (0.5, 0.09) (0.7, -0.12) (0.71, 0.19) (0.9, -0.1) (0.95, 0.1)};

			                			\addplot[name path=A1, very thin, dashed, black, no markers, domain = 0:0.16] {0.25};
			                			\addplot[name path=B1, very thin, dashed, black, no markers, domain = 0:0.16] {-0.25};
			                			
			                			\addplot[name path=A2, very thin, dashed, black, no markers, domain = 0.24:0.56] {0.25};
			                			\addplot[name path=B2, very thin, dashed, black, no markers, domain = 0.24:0.56] {-0.25};
			                			
			                			\addplot[name path=A3, very thin, dashed, black, no markers, domain = 0.64:0.76] {0.25};
			                			\addplot[name path=B3, very thin, dashed, black, no markers, domain = 0.64:0.76] {-0.25};
			                			
			                			\addplot[name path=A4, very thin, dashed, black, no markers, domain = 0.84:1.3] {0.25};
			                			\addplot[name path=B4, very thin, dashed, black, no markers, domain = 0.84:1.3] {-0.25};
			                			
			                			\addplot[name path=C, very thin, dashed, white, no markers, domain = 0.16:0.24] {2.2};
			                			\addplot[name path=CD, very thin, dashed, white, no markers, domain = 0.16:0.24] {0};
			                			\addplot[name path=D, very thin, dashed, white, no markers, domain = 0.16:0.24] {-2.2};
			                			
			                			\addplot[name path=E, very thin, dashed, white, no markers, domain = 0.56:0.64] {2.2};
			                			\addplot[name path=EF, very thin, dashed, white, no markers, domain = 0.56:0.64] {0};
			                			\addplot[name path=F, very thin, dashed, white, no markers, domain = 0.56:0.64] {-2.2};
			                			
			                			\addplot[name path=G, very thin, dashed, white, no markers, domain = 0.76:0.84] {2.2};
			                			\addplot[name path=GH, very thin, dashed, white, no markers, domain = 0.76:0.84] {0};
			                			\addplot[name path=H, very thin, dashed, white, no markers, domain = 0.76:0.84] {-2.2};
			                			
			                			\addplot[very thick, black, no markers, domain=0.16:0.24] {0};
			                			\addplot[very thick, black, no markers, domain=0.56:0.64] {0};
			                			\addplot[very thick, black, no markers, domain=0.76:0.84] {0};
			                			
			                			\addplot[pattern=dots, pattern color=lightgray] fill between[of=A1 and B1,
			                				soft clip={domain=0:1},
			                			];
			                			
			                			\addplot[pattern=dots, pattern color=lightgray] fill between[of=A2 and B2,
			                				soft clip={domain=0:1},
			                			];
			                			
			                			\addplot[pattern=dots, pattern color=lightgraypattern=dots, pattern color=lightgray] fill between[of=A3 and B3,
			                				soft clip={domain=0:1},
			                			];
			                			
			                			\addplot[pattern=dots, pattern color=lightgray] fill between[of=A4 and B4,
			                				soft clip={domain=0:1.3},
			                			];	
						         \addplot[path fading=north, pattern=north west lines, pattern color=lightgray] fill between[of=C and CD,
			                				soft clip={domain=0:1},
			                			];
			                			
			                			\addplot[path fading=south, pattern=north west lines, pattern color=lightgray] fill between[of=CD and D,
			                				soft clip={domain=0:1},
			                			];
			                			
			                			\addplot[path fading=north, pattern=north west lines, pattern color=lightgray] fill between[of=E and EF,
			                				soft clip={domain=0:1},
			                			];
			                			
			                			\addplot[path fading=south, pattern=north west lines, pattern color=lightgray] fill between[of=EF and F,
			                				soft clip={domain=0:1},
			                			];
			                			
			                			\addplot[path fading=north, pattern=north west lines, pattern color=lightgray] fill between[of=G and GH,
			                				soft clip={domain=0:1},
			                			];
			                			
			                			\addplot[path fading=south, pattern=north west lines, pattern color=lightgray] fill between[of=GH and H,
			                				soft clip={domain=0:1},
			                			];
			                		\end{axis}
			                	\end{tikzpicture}
					\caption{Illustration of Proposition~\ref{prop: proposition weak-* convergence support} with $L = 3$. The union of intervals corresponding to the hashed areas represents $T_\varepsilon$ on the fundamental interval $[0, 1)$. The black dotted spikes represent $\meas$. Proposition \ref{prop: proposition weak-* convergence support} guarantees that for $N \geq M_\varepsilon$, each interval $[t_\measSpikeIdx - \varepsilon, t_\measSpikeIdx + \varepsilon]$ contains at least one atom of $\measProb_N$.  The gray spikes correspond to $\measProb_N$ for $N \geq M_\varepsilon$. The weights of the atoms of $\measProb_N$ outside $T_\varepsilon$ is guaranteed to be smaller than $\varepsilon\normTV{\meas}$.}
			}
		}
	}
	\label{fig: proof last proposition}
\end{figure}

\begin{proof}
	We start by establishing \eqref{eq: support not empty near tell}.
	To this end, let $\varepsilon \in (0, \Delta/4]$, $\measSpikeIdx \in \{1, 2, \ldots, L\}$, and define $\psi_{\varepsilon, \measSpikeIdx} \in \C(\torus)$ as 
	\begin{equation}
		\forall t \in \torus, \qquad \psi_{\varepsilon, \measSpikeIdx}(t) = \frac{\overline{\measAmplitude{\measSpikeIdx}}}{\abs{\measAmplitude{\measSpikeIdx}}}\sum_{n \in \Z} \psi\!\left(\frac{t - t_\measSpikeIdx -n}{\varepsilon}\right)\!,
		\label{eq: def bump function}
	\end{equation}
	where $\psi \colon \R \rightarrow \R$ is given by
	\begin{equation*}
		\forall t \in \R, \qquad \psi(t) \triangleq \begin{cases} 1 - \abs{t}, & \abs{t} \leq 1 \\ 0, & \text{otherwise.}\end{cases}
	\end{equation*}
	As the minimum wrap-around distance between the $t_\measSpikeIdx$ is $\Delta$, by assumption, and $\varepsilon \leq \Delta/4$, we have
	\begin{equation*}
		\innerProd{\psi_{\varepsilon, \measSpikeIdx}}{\meas} = \sum_{\measSpikeIdxBis = 1}^L \measAmplitude{\measSpikeIdxBis} \psi_{\varepsilon, \measSpikeIdx}(t_\measSpikeIdxBis) = \measAmplitude{\measSpikeIdx}\psi_{\varepsilon, \measSpikeIdx}(t_\measSpikeIdx) = \abs{\measAmplitude{\measSpikeIdx}}.
	\end{equation*}
	As $\{\measProb_N\}_{N \in \N}$ is, by assumption, a sequence of solutions of $\gaborProbN$ and $\meas$ is the unique solution of $\gaborProb$, it follows from Proposition~\ref{prop: convergence solution} that $\{\measProb_N\}_{N \in \N}$ converges in the weak-* sense to $\meas$, i.e., for every $\varphi \in \C(\torus)$, it holds that 
	\begin{equation}
		\forall \eta > 0, \ \exists M_{\varphi,\eta} > 0, \ \forall N \geq M_{\varphi,\eta}, \qquad  \abs{\innerProd{\varphi}{\measProb_N} - \innerProd{\varphi}{\meas}} \leq \eta.
		\label{eq: limit weak convergence}
	\end{equation} 
	Since $\psi$ is continuous, by construction, $\psi_{\varepsilon, \ell} \in C(\torus)$, and setting $\eta = \abs{\measAmplitude{\measSpikeIdx}}/2 > 0$ and $\varphi = \psi_{\varepsilon, \measSpikeIdx}$ in \eqref{eq: limit weak convergence} implies that there exists an $M_{\varepsilon, \measSpikeIdx} > 0$ such that for all $N \geq M_{\varepsilon, \measSpikeIdx}$, we have 
	\begin{equation}
		\abs{\innerProd{\psi_{\varepsilon, \measSpikeIdx}}{\measProb_N} - \innerProd{\psi_{\varepsilon, \measSpikeIdx}}{\meas}} = \abs{\innerProd{\psi_{\varepsilon, \measSpikeIdx}}{\measProb_N} - \abs{\measAmplitude{\measSpikeIdx}}} \leq \abs{\measAmplitude{\measSpikeIdx}}/2.
		\label{eq: way of contradiction}
	\end{equation}
	Now, set 
	\begin{equation*}
		M_\varepsilon \triangleq \max_{1 \leq \measSpikeIdx \leq L} M_{\varepsilon, \measSpikeIdx},
	\end{equation*}
	let $N \geq M_\varepsilon$, and assume, by way of contradiction, that there exists an $\measSpikeIdx' \in \{1, 2, \ldots, L\}$ such that $\supp(\measProb_N) \cap [t_{\measSpikeIdx'} - \varepsilon, t_{\measSpikeIdx'} + \varepsilon] = \emptyset$. Then, as $\psi_{\varepsilon, \measSpikeIdx'} = 0$ on $[0,1) \!\setminus\! [t_{\measSpikeIdx'} - \varepsilon, t_{\measSpikeIdx'} + \varepsilon]$, we have $\innerProd{\psi_{\varepsilon, \measSpikeIdx'}}{\measProb_N} = 0$, which by \eqref{eq: way of contradiction} would imply $\abs{\measAmplitude{\measSpikeIdx'}} \leq \abs{\measAmplitude{\measSpikeIdx'}}/2$ and thereby contradict our assumption $\measAmplitude{\measSpikeIdx'} \neq 0$. 
	We can therefore conclude that for $N \geq M_\varepsilon$, for all $\measSpikeIdx \in \{1, 2, \ldots, L\}$, we have $\supp(\measProb_N) \cap [t_\measSpikeIdx - \varepsilon, t_\measSpikeIdx + \varepsilon] \neq \emptyset$. This completes the proof of \eqref{eq: support not empty near tell}.
			
	We proceed to establishing \eqref{eq: concentration inequality measure}. For $\varepsilon \in (0, \Delta/4]$, define $\psi_\varepsilon \in C(\torus)$ as
	\begin{equation*}
		\forall t \in \torus, \qquad \psi_\varepsilon(t) \triangleq \sum_{\measSpikeIdx = 1}^L \psi_{\varepsilon, \measSpikeIdx}(t),
	\end{equation*}
	where $\psi_{\varepsilon,\measSpikeIdx} \colon \R \rightarrow \R$, $\measSpikeIdx \in \{1, 2, \ldots, L\}$, is as in \eqref{eq: def bump function}. As for all $\measSpikeIdx \in \{1, 2, \ldots, L\}$, $\psi_{\varepsilon, \measSpikeIdx}$ is supported on $\bigcup_{n \in \Z} [t_\measSpikeIdx - \varepsilon + n, t_\measSpikeIdx + \varepsilon + n]$, $\psi_\varepsilon$ is supported on $T_\varepsilon$. Moreover, since $\varepsilon \leq \Delta/4$, by asssumption, the $\psi_{\varepsilon, \measSpikeIdx}$ have disjoint supports, and hence
	\begin{equation}
		\forall t \in \torus, \qquad \abs{\psi_\varepsilon(t)} = \sum_{\measSpikeIdx = 1}^L \abs{\psi_{\varepsilon, \measSpikeIdx}(t)} \leq 1.
		\label{eq: inequality varphi epsilon smaller than one}
	\end{equation}
	We also have
	\begin{equation*}
		\innerProd{\psi_\varepsilon}{\meas} = \sum_{\measSpikeIdx = 1}^L \innerProd{\psi_{\varepsilon, \measSpikeIdx}}{\meas} = \sum_{\measSpikeIdx = 1}^L \abs{\measAmplitude{\measSpikeIdx}} = \normTV{\meas}.
	\end{equation*}
	Applying \eqref{eq: limit weak convergence} with $\varphi = \psi_\varepsilon$ and $\eta = \varepsilon\!\normTV{\meas} > 0$, we can conclude that there exists an $M_\varepsilon > 0$ such that for all $N \geq M_\varepsilon$, we have
	\begin{align}
		\varepsilon\normTV{\meas} &\geq \abs{\innerProd{\psi_\varepsilon}{\measProb_N} - \innerProd{\psi_\varepsilon}{\meas}} \notag \\
			&=  \abs{\innerProd{\psi_\varepsilon}{\measProb_{N, T_\varepsilon}} - \normTV{\meas}} \label{eq: inequality proof convergence support} \\
			&\geq \normTV{\meas} - \innerProd{\psi_\varepsilon}{\measProb_{N, T_\varepsilon}}, \notag
	\end{align}
	where \eqref{eq: inequality proof convergence support} is a consequence of $\psi_\varepsilon$ being supported on $T_\varepsilon$.
	It therefore follows that for all $N \geq M_\varepsilon$, we have
	\begin{equation}
		(1 - \varepsilon)\normTV{\meas} \leq \innerProd{\psi_\varepsilon}{\measProb_{N, T_\varepsilon}} \leq \int_0^1 \abs{\psi_\varepsilon(t)}\dmeas{\!\abs{\measProb_{N,T_\varepsilon}}\!}{t} \leq \int_0^1 \dmeas{\!\abs{\measProb_{N,T_\varepsilon}}\!}{t} = \normTV{\measProb_{N, T_\varepsilon}},
		\label{eq: previous equation limit}
	\end{equation}
	where the last inequality follows from \eqref{eq: inequality varphi epsilon smaller than one}.
	Furthermore, we have for all $N \geq M_\varepsilon$,
	\begin{align}
		\normTV{\measProb_{N, T_\varepsilon^c}} &= \normTV{\measProb_N} - \normTV{\measProb_{N, T_\varepsilon}} \label{eq: disjoint set measure limit} \\
			&\leq \normTV{\measProb_N} - (1-\varepsilon)\normTV{\meas} \label{eq: consequence of previous equation} \\
			&\leq \normTV{\meas} - (1 - \varepsilon)\normTV{\meas} = \varepsilon\normTV{\meas}, \label{eq: last equation limit}
	\end{align}
	where \eqref{eq: disjoint set measure limit} is a consequence of the supports of $\measProb_{N, T_\varepsilon}$ and $\measProb_{N, T_\varepsilon^c}$ being disjoint, \eqref{eq: consequence of previous equation} follows from \eqref{eq: previous equation limit}, and \eqref{eq: last equation limit} is by $\normTV{\measProb_N} \leq \normTV{\meas}$, for all $N \in \N$, as established in \eqref{eq: proof convergence problem simulations 2}.
\end{proof}
% !TEX root = SuperresolutionJournal.tex

%%%%%%%%%%%%%%%%%%%%%
% Numerical results
%%%%%%%%%%%%%%%%%%%%%
\section{Numerical results}
\label{section: simulations}

For our numerical results we consider the case $\group = \torus$, i.e., recovery of the discrete complex measure $\meas = \sum_{\measSpikeIdx =1}^L \measAmplitude{\measSpikeIdx} \delta_{t_\measSpikeIdx} \in \measSpace{\torus}$ from the measurements $\measurements(\tau, k) = (\mathcal{A}_{\window_{N}}\measProb_{N})(\tau, k)$. We solve the predual problem $\dualGaborProbN$ with $N = 25$ and $N = 50$ by applying the convex solver \texttt{cvx} to the formulation of $\dualGaborProbN$ as given in \eqref{eq: equivalent problem predual problem}.

\definecolor{fourierColor}{RGB}{153, 8, 247}
\definecolor{stft10Color}{RGB}{8, 208, 247}
\definecolor{stft5Color}{RGB}{9, 140, 247}
\definecolor{stft05Color}{RGB}{0, 0, 250}

\definecolor{fourierGray}{gray}{0}
\definecolor{stft10Gray}{gray}{0.3}
\definecolor{stft5Gray}{gray}{0.4}
\definecolor{stft05Gray}{gray}{0.5}

\begin{figure}
	\ifthenelse{\equal{\versionColor}{true}}{
		\ifthenelse{\equal{\pdfFigs}{true}}{
			\includegraphics[width = 0.9\textwidth]{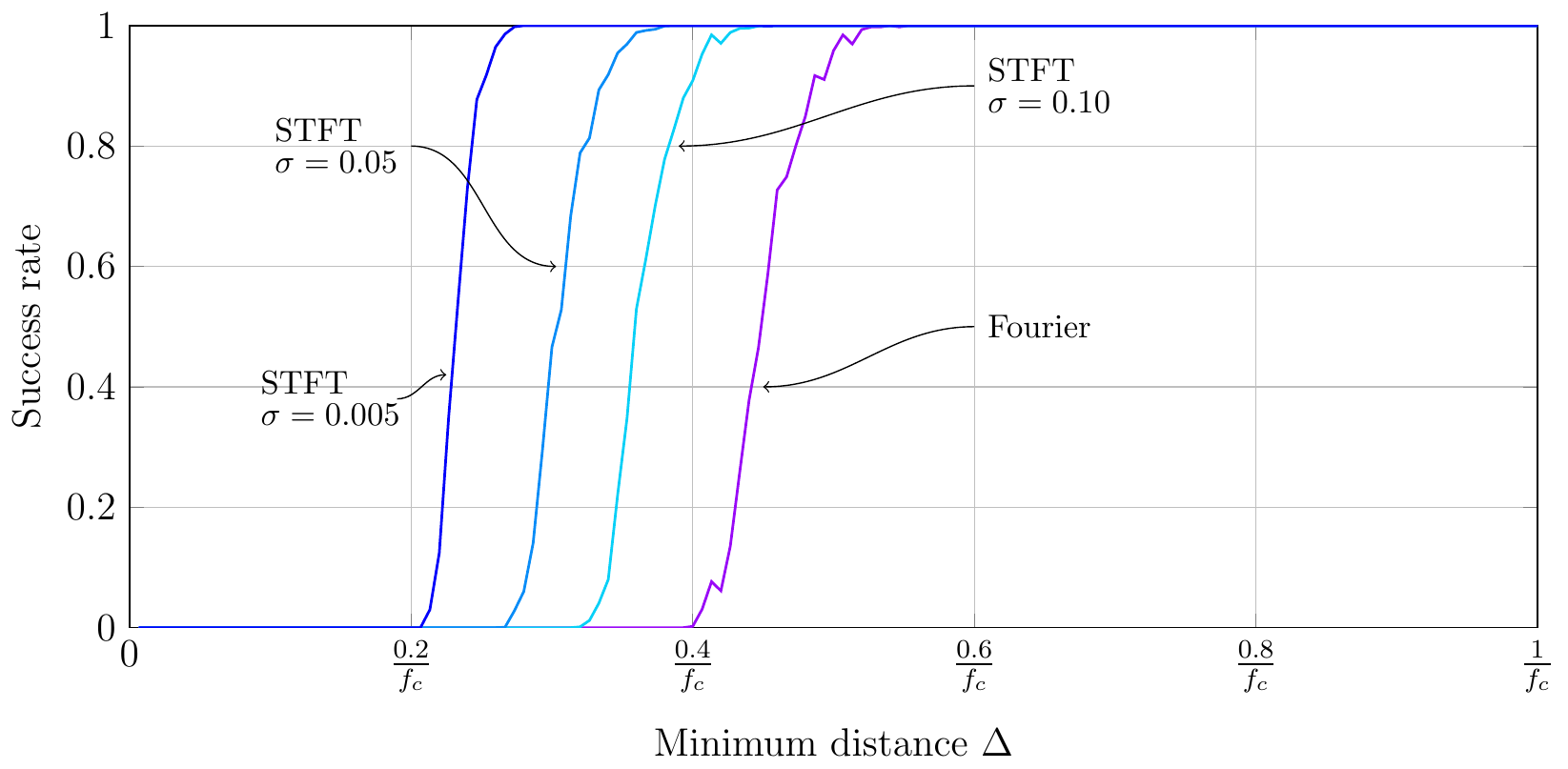}
		}{
            		\ifthenelse{\equal{\generateFigs}{yes}}{\tikzsetnextfilename{numericalResultsColor}}{}
                        	\begin{tikzpicture}
                        		\begin{axis}
                        			[no markers,
                        			width = \columnwidth, height = 8cm,
                        			xlabel = {\begin{minipage}{4cm}\vspace{0.1cm} \centering Minimum distance $\Delta$\end{minipage}},
                        			ylabel = Success rate,
                        			grid = both,
                        			enlarge x limits = false, enlarge y limits = false,
                        			xmin = 0, xmax = 1,
                        			xtick = {0, 0.2, 0.4, 0.6, 0.8, 1}, 
                        			xticklabels = {$0$, $\frac{0.2}{f_c}$,$\frac{0.4}{f_c}$,$\frac{0.6}{f_c}$,$\frac{0.8}{f_c}$,$\frac{1}{f_c}$},
                        			ymin = 0, ymax = 1]
                        			
                        			\addplot[thick, fourierColor] file {data/numericalResults/Fourier_cvx_2000trials_support.dat};
                        			\addplot[thick, stft10Color] file {data/numericalResults/STFT_cvx_window10_2000trials_support.dat};
                        			\addplot[thick, stft5Color] file {data/numericalResults/STFT_cvx_window5_2000trials_support.dat};
                        			\addplot[thick, stft05Color] file {data/numericalResults/STFT_cvx_window05_2000trials_support.dat};
                        			
                        			\node[anchor=west] (Fourier) at (60,50) {{\footnotesize Fourier}};
                        			\draw (Fourier) edge[out=180,in=0,->] (45,40);
                        			
                        			\node[anchor=west, align = left] (STFT10) at (60,90) {{\footnotesize STFT}\\[-5pt] {\footnotesize $\sigma = 0.10$}};
                        			\draw (STFT10) edge[out=180,in=0,->] (39,80);
                        			
                        			\node[anchor=east, align = left] (STFT5) at (20,80) {{\footnotesize STFT}\\[-5pt] {\footnotesize $\sigma = 0.05$}};
                        			\draw (STFT5) edge[out=0,in=180,->] (30.3,60);
                        			
                        			\node[anchor=east, align = left] (STFT1) at (19, 38) {{\footnotesize STFT}\\[-5pt] {\footnotesize $\sigma = 0.005$\!\!\!}};
                        			\draw (STFT1) edge[out=0,in=180,->] (22.5,42);
                        		\end{axis}
                        	\end{tikzpicture}
		}
	}{
		\ifthenelse{\equal{\pdfFigs}{true}}{
			\includegraphics[width = 0.9\textwidth]{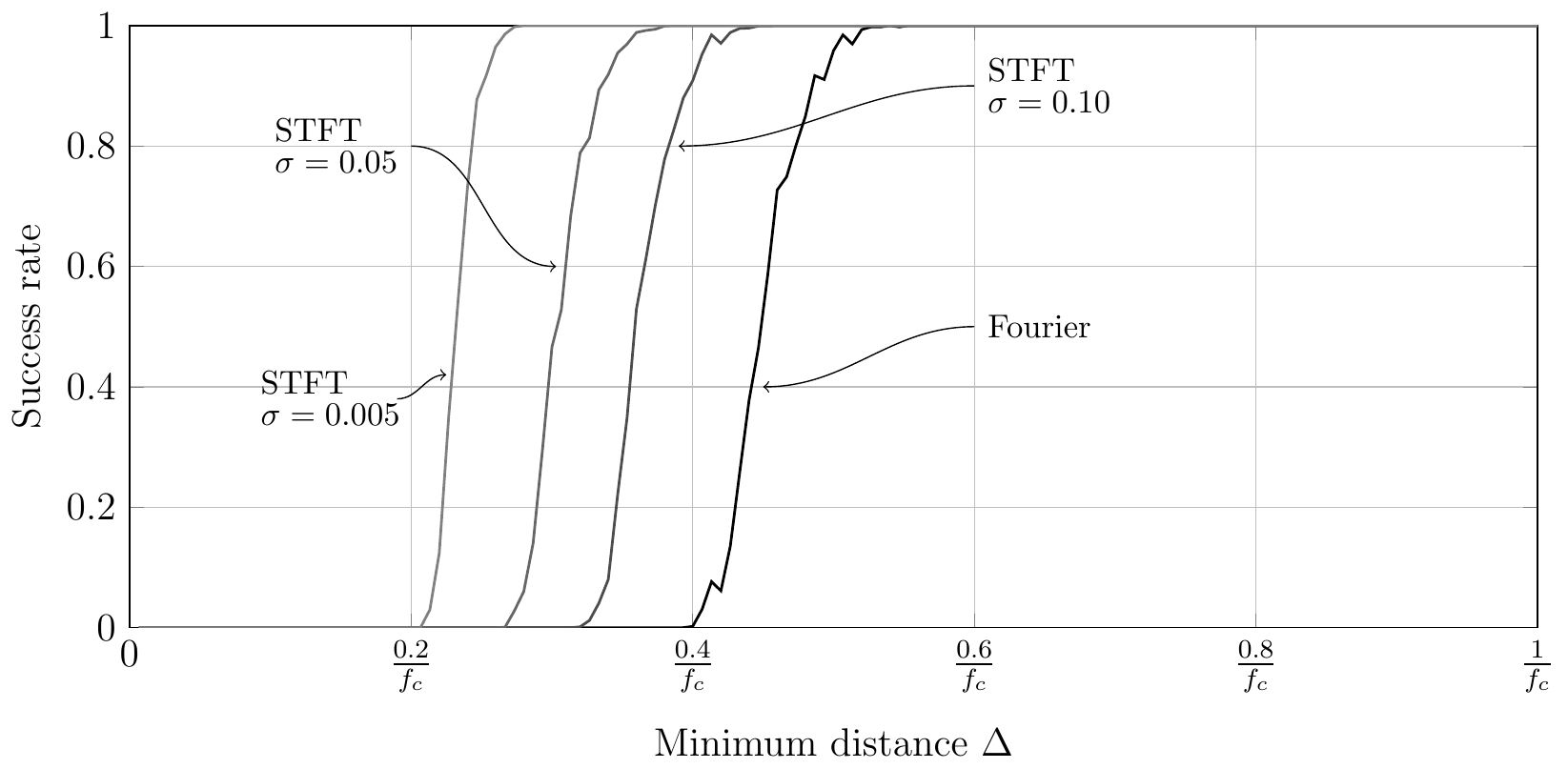}
		}{
        		\ifthenelse{\equal{\generateFigs}{yes}}{\tikzsetnextfilename{numericalResultsBW}}{}
                    	\begin{tikzpicture}
                    		\begin{axis}
                    			[no markers,
                    			width = \columnwidth, height = 8cm,
                    			xlabel = {\begin{minipage}{4cm}\vspace{0.1cm} \centering Minimum distance $\Delta$\end{minipage}},
                    			ylabel = Success rate,
                    			grid = both,
                    			enlarge x limits = false, enlarge y limits = false,
                    			xmin = 0, xmax = 1,
                    			xtick = {0, 0.2, 0.4, 0.6, 0.8, 1}, 
                    			xticklabels = {$0$, $\frac{0.2}{f_c}$,$\frac{0.4}{f_c}$,$\frac{0.6}{f_c}$,$\frac{0.8}{f_c}$,$\frac{1}{f_c}$},
                    			ymin = 0, ymax = 1]
                    			
                    			\addplot[thick, fourierGray] file {data/numericalResults/Fourier_cvx_2000trials_support.dat};
                    			\addplot[thick, stft10Gray] file {data/numericalResults/STFT_cvx_window10_2000trials_support.dat};
                    			\addplot[thick, stft5Gray] file {data/numericalResults/STFT_cvx_window5_2000trials_support.dat};
                    			\addplot[thick, stft05Gray] file {data/numericalResults/STFT_cvx_window05_2000trials_support.dat};
                    			
                    			\node[anchor=west] (Fourier) at (60,50) {{\footnotesize Fourier}};
                    			\draw (Fourier) edge[out=180,in=0,->] (45,40);
                    			
                    			\node[anchor=west, align = left] (STFT10) at (60,90) {{\footnotesize STFT}\\[-5pt] {\footnotesize $\sigma = 0.10$}};
                    			\draw (STFT10) edge[out=180,in=0,->] (39,80);
                    			
                    			\node[anchor=east, align = left] (STFT5) at (20,80) {{\footnotesize STFT}\\[-5pt] {\footnotesize $\sigma = 0.05$}};
                    			\draw (STFT5) edge[out=0,in=180,->] (30.3,60);
                    			
                    			\node[anchor=east, align = left] (STFT1) at (19, 38) {{\footnotesize STFT}\\[-5pt] {\footnotesize $\sigma = 0.005$\!\!\!}};
                    			\draw (STFT1) edge[out=0,in=180,->] (22.5,42);
                    		\end{axis}
                    	\end{tikzpicture}
		}
	}	
	\caption{Success rate for support recovery from pure Fourier measurements and from STFT measurements with $\freq_c = 50$ and $N = 50$.}
	\label{fig: monte carlo simulations for the support}
\end{figure}

To assess recovery performance, we run $1500$ trials as follows. For each $\Delta$, we construct a discrete complex measure $\meas$ supported on the set $\measSupport = \{t_\measSpikeIdx\}_{\measSpikeIdx = 0}^{S}$ with $S = \lfloor 1/(2\Delta)\rfloor$ and $t_\measSpikeIdx = 2\measSpikeIdx\Delta + r_\measSpikeIdx$, where $r_\measSpikeIdx$ is chosen uniformly at random in $[0, \Delta]$. The minimum wrap-around distance between the points in $\measSupport$ is therefore guaranteed to be greater than or equal to $\Delta$. The complex weights $a_\ell$ are obtained by choosing their real and imaginary parts independently and uniformly at random in $[0, 1000]$. We declare success if the reconstructed measure $\hat{\meas}$ has support $\hat{T} = \{\hat{t}_\measSpikeIdx\}_{\measSpikeIdx \in \measSupportIdx}$ satisfying $\|\hat{T} - T\|_{\ell^2}/\|T\|_{\ell^2} \leq 10^{-3}$. The corresponding results are depicted in Fig.~\ref{fig: monte carlo simulations for the support}. 
We observe that both recovery from STFT measurements and from pure Fourier measurements actually work beyond their respective thresholds $\Delta > 1/\freq_c$ and $\Delta > 2/\freq_c$, thus suggesting that neither of the thresholds is sharp. We also observe a factor-of-two improvement in the case of recovery from STFT measurements relative to recovery from pure Fourier measurements, suggesting that the improvement in the recovery threshold $\Delta > 1/\freq_c$ for STFT measurements relative to $\Delta > 2/\freq_c$ for pure Fourier measurements is due to the recovery problem itself. 
Specifically, in the STFT case, we perform windowing and the STFT measurements are highly redundant. To see this note that in the pure Fourier case, the measurements consist of the vector $\mathbf{y} = \{y_k\}_{k = -K}^K \in \C^{2K+1}$ as defined in \eqref{eq: measurements in the case of Fourier measurements}, whereas the STFT measurements here are characterized by the $(2N+1)\times (2K+1)$ entries of the matrix $\mathbf{Y} \in \C^{(2K+1) \times (2N+1)}$ containing the $2N+1$ Fourier series coefficients of the $2K+1$ functions $\tau \mapsto \measurements(\tau, k)$, $k \in \{-K, \ldots, K\}$. The increased number of measurements in the STFT case leads to larger optimization problem sizes and hence entails increased computational complexity relative to the pure Fourier case.

In Fig.~\ref{fig: dual polynomial}, we compare the dual polynomial for pure Fourier and for STFT measurements in a situation where recovery in the former case fails and where it succeeds in the latter.\footnote{In the case of pure Fourier measurements, we follow the recovery procedure as described in \cite{Candes2012} and solve $\dualFourierProb$ defined in Section~\ref{section: previous work} using the convex solver \texttt{cvx}.} We can see that in both cases, the magnitude of the dual polynomial equals $1$ for all $t_\measSpikeIdx$, $\measSpikeIdx \in \measSupportIdx$. However, in the case of pure Fourier measurements, the magnitude of the dual polynomial also takes on the value $1$ at eight additional locations not belonging to the support set $\measSupport$. For example, we can see in the top plot in Fig.~\ref{fig: dual polynomial} that spikes are detected for $t = 0.1791$ and $t = 0.2172$, while these two points do not belong to the support set of the original measure. In contrast, in the case of STFT measurements, the locations where the magnitude of the polynomial takes on the value $1$ approximate the points of the support set of the original measure (represented by circles in Fig.~\ref{fig: dual polynomial}) well, namely with a relative error of $\|\hat{T} - T\|_{\ell^2}/\|T\|_{\ell^2} = 2\cdot 10^{-7}$.

\begin{figure}
	\ifthenelse{\equal{\versionColor}{true}}{
            	\ifthenelse{\equal{\pdfFigs}{true}}{
            			\includegraphics[width = 0.95\textwidth]{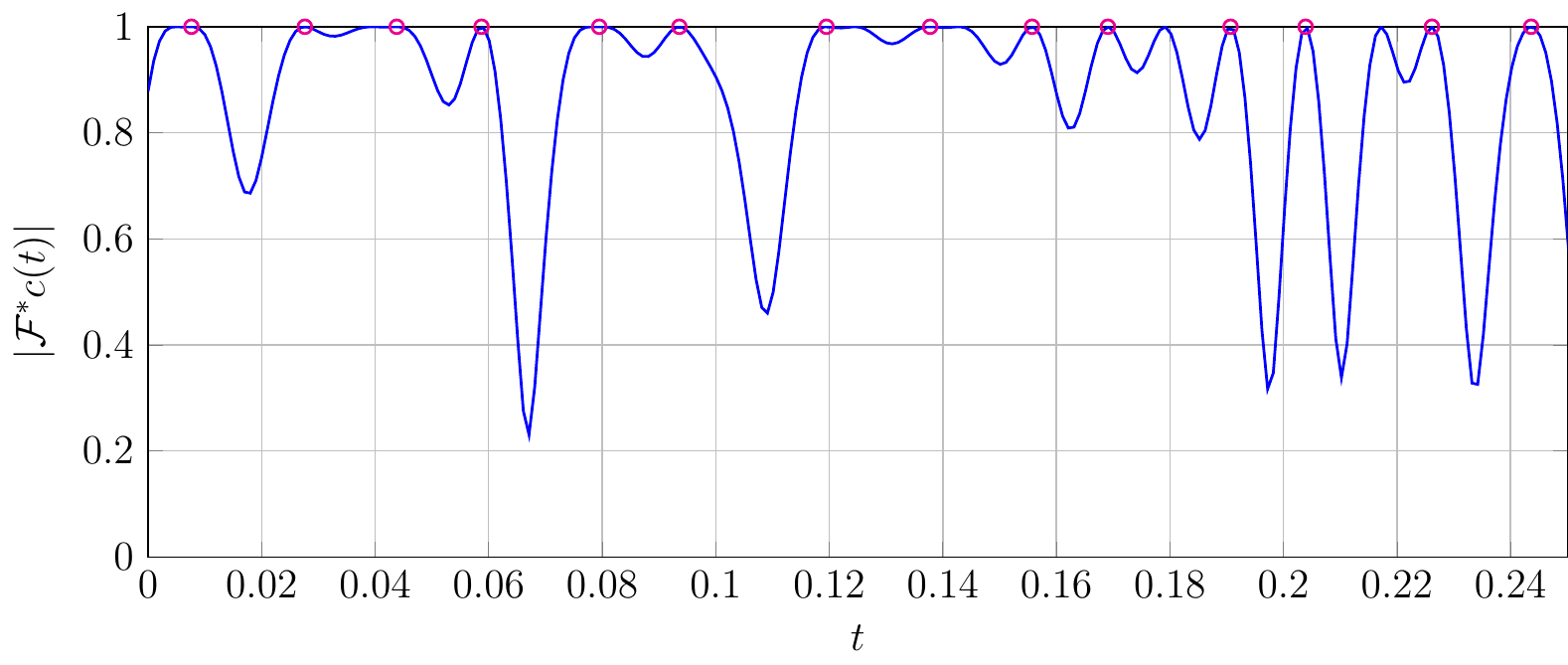}
            	}{
                        		\ifthenelse{\equal{\generateFigs}{yes}}{\tikzsetnextfilename{interpolationDualPolynomial1Color}}{}
                                    	\begin{tikzpicture}
                                    		\begin{axis}
                                    			[width = 0.97\columnwidth, height = 7cm,
                                    			xlabel = $t$,
                                    			ylabel = $\abs{\mathcal{F}^*c(t)}$,
                                    			grid = both,
                                    			enlarge x limits = false, enlarge y limits = false,
                                    			xmin = 0, xmax = 0.25,
                                    			xticklabel style = /pgf/number format/fixed,
                                    			ymin = 0, ymax = 1]
                                    			\addplot[thick, magenta, only marks, mark = o] file {data/exampleInterpolation/measureSupport.dat};
                                    			\addplot[thick, blue, no markers] file {data/exampleInterpolation/dualPolynomialFourier.dat};
                                    		\end{axis}
                                    	\end{tikzpicture}
            	}
            	\ifthenelse{\equal{\pdfFigs}{true}}{
            		\includegraphics[width = 0.95\textwidth]{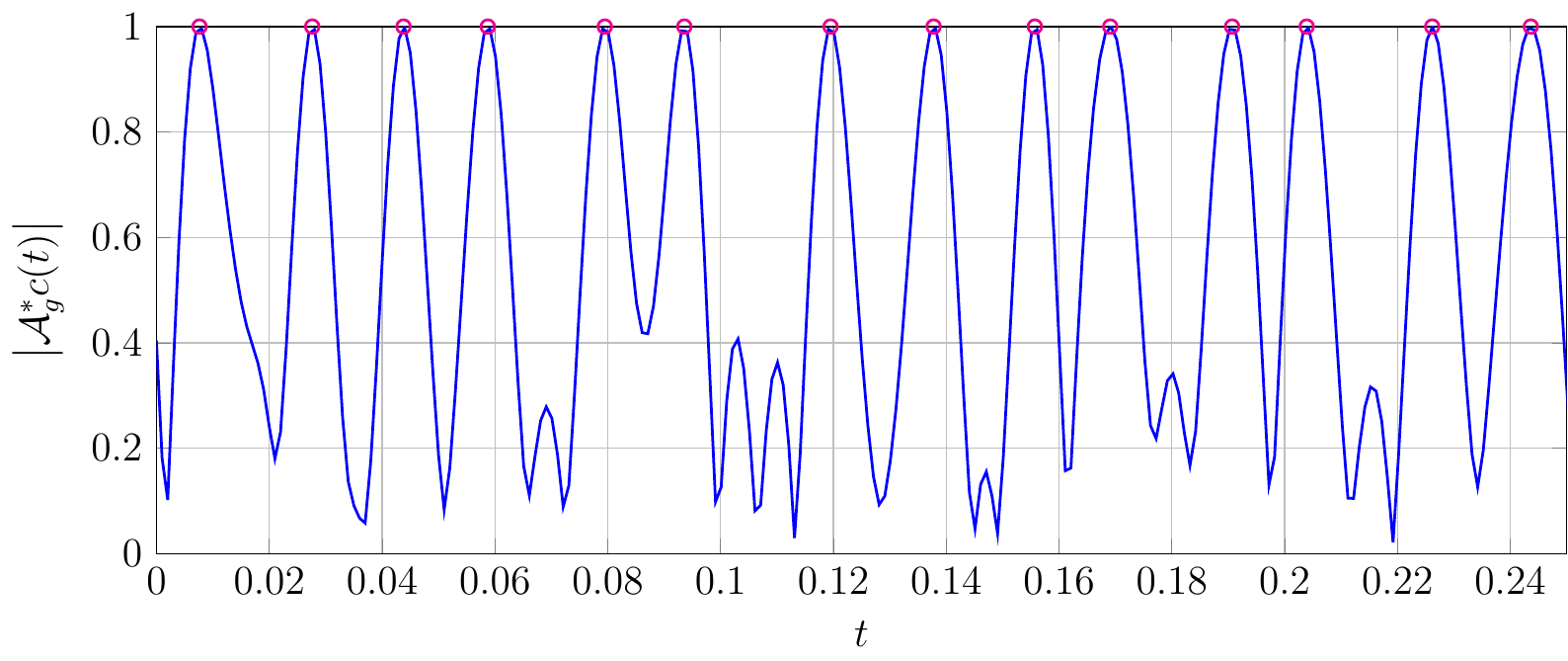}
            	}{
            		\ifthenelse{\equal{\generateFigs}{yes}}{\tikzsetnextfilename{interpolationDualPolynomial2Color}}{}
                        	\begin{tikzpicture}
                        		\begin{axis}
                        			[width = 0.97\columnwidth, height = 7cm,
                        			xlabel = $t$,
                        			ylabel = $\abs{\gaborOp^*c(t)}$,
                        			grid = both,
                        			enlarge x limits = false, enlarge y limits = false,
                        			xmin = 0, xmax = 0.25,
                        			xticklabel style = /pgf/number format/fixed,
                        			ymin = 0, ymax = 1]
                        			\addplot[thick, magenta, only marks, mark = o] file {data/exampleInterpolation/measureSupport.dat};
                        			\addplot[thick, blue, no markers] file {data/exampleInterpolation/dualPolynomialSTFT.dat};
                        		\end{axis}
                        	\end{tikzpicture}
            	}
	}{
            	\ifthenelse{\equal{\pdfFigs}{true}}{
            		\includegraphics[width = 0.95\textwidth]{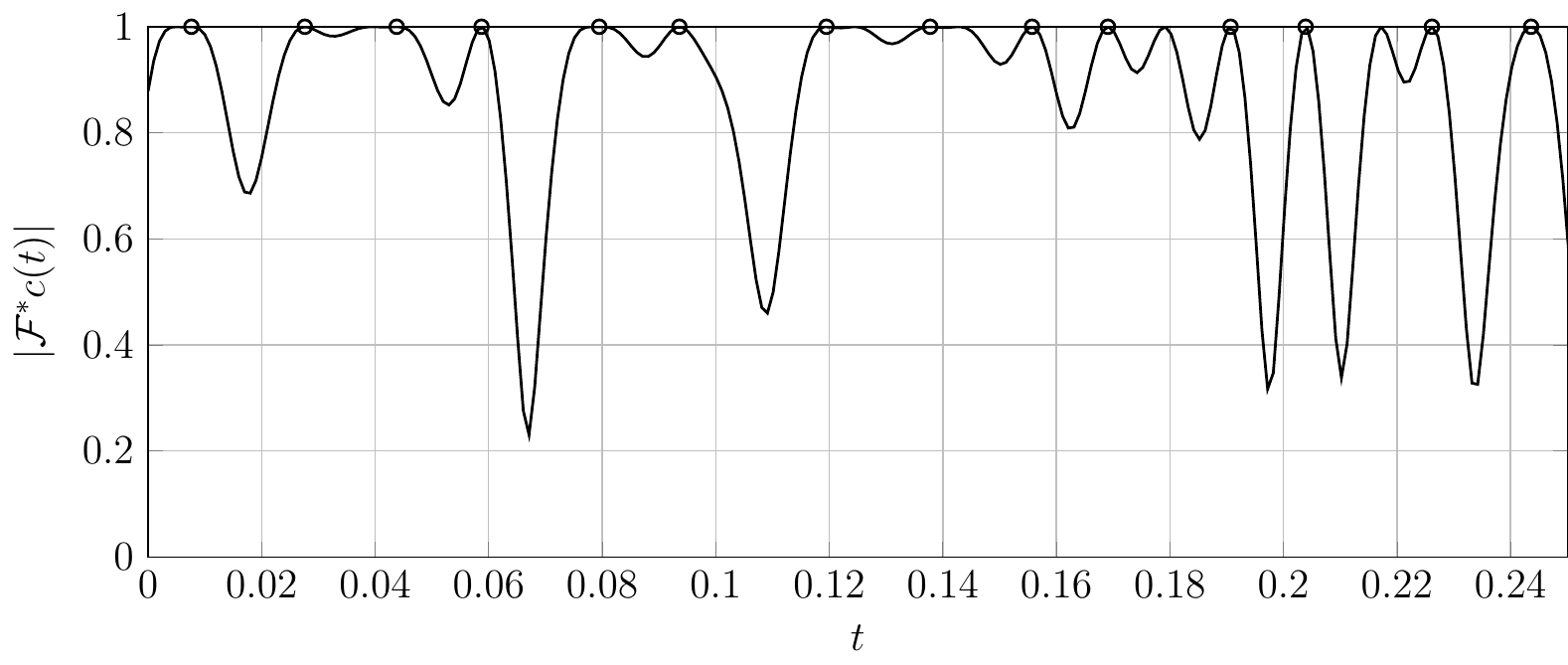}
            	}{
            		\ifthenelse{\equal{\generateFigs}{yes}}{\tikzsetnextfilename{interpolationDualPolynomial1BW}}{}
                        	\begin{tikzpicture}
                        		\begin{axis}
                        			[width = 0.97\columnwidth, height = 7cm,
                        			xlabel = $t$,
                        			ylabel = $\abs{\mathcal{F}^*c(t)}$,
                        			grid = both,
                        			enlarge x limits = false, enlarge y limits = false,
                        			xmin = 0, xmax = 0.25,
                        			xticklabel style = /pgf/number format/fixed,
                        			ymin = 0, ymax = 1]
                        			\addplot[thick, black, only marks, mark = o] file {data/exampleInterpolation/measureSupport.dat};
                        			\addplot[thick, black, no markers] file {data/exampleInterpolation/dualPolynomialFourier.dat};
                        		\end{axis}
                        	\end{tikzpicture}
            	}
            	\ifthenelse{\equal{\pdfFigs}{true}}{
            		\includegraphics[width = 0.95\textwidth]{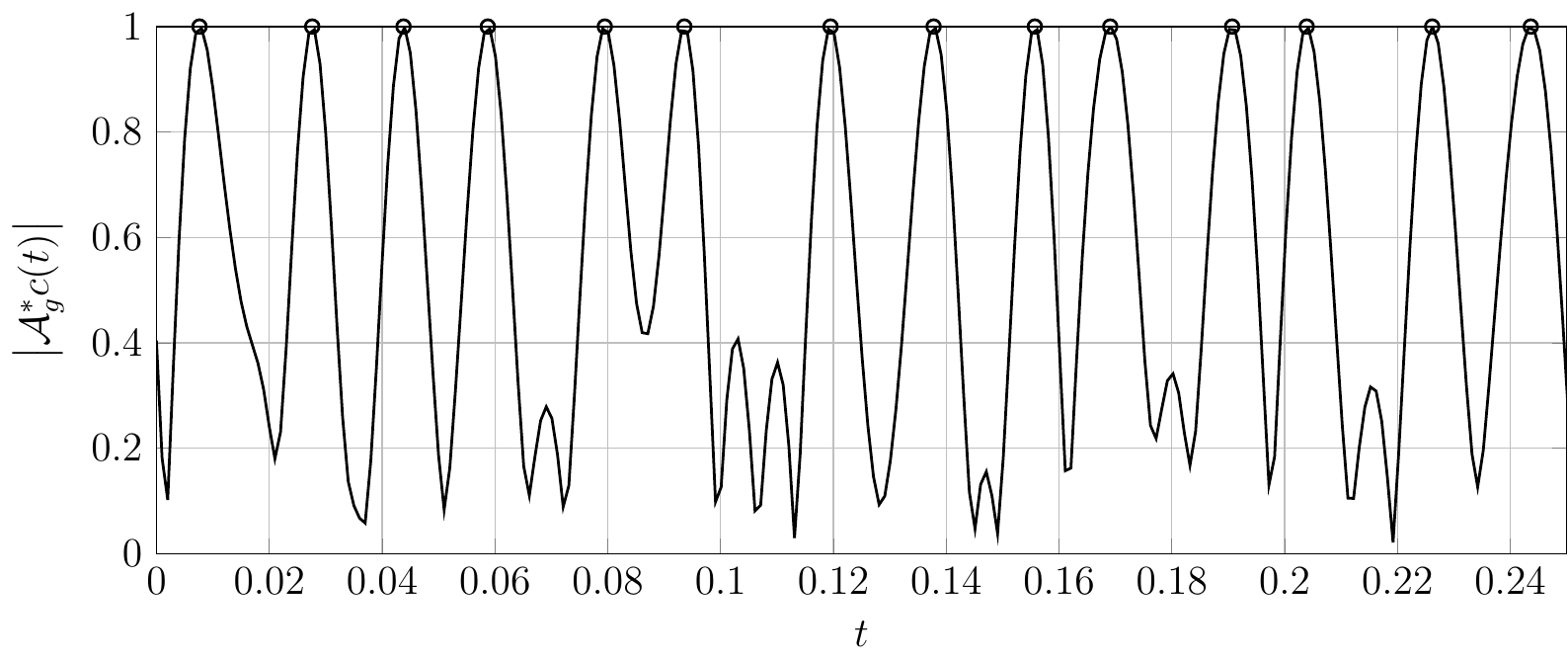}
            	}{
            		\ifthenelse{\equal{\generateFigs}{yes}}{\tikzsetnextfilename{interpolationDualPolynomial2BW}}{}
                        	\begin{tikzpicture}
                        		\begin{axis}
                        			[width = 0.97\columnwidth, height = 7cm,
                        			xlabel = $t$,
                        			ylabel = $\abs{\gaborOp^*c(t)}$,
                        			grid = both,
                        			enlarge x limits = false, enlarge y limits = false,
                        			xmin = 0, xmax = 0.25,
                        			xticklabel style = /pgf/number format/fixed,
                        			ymin = 0, ymax = 1]
                        			\addplot[thick, black, only marks, mark = o] file {data/exampleInterpolation/measureSupport.dat};
                        			\addplot[thick, black, no markers] file {data/exampleInterpolation/dualPolynomialSTFT.dat};
                        		\end{axis}
                        	\end{tikzpicture}
		}
	}
	
	\caption{Example of reconstruction of a discrete measure $\meas$ with $\abs{\supp(\meas)} = 54$ and $\Delta = 0.011880$ from both pure Fourier and STFT measurements with $\freq_c = 50$ and $N = 25$. The curves represent the dual polynomials and the circles represent the elements in the original measure's support set.}
	\label{fig: dual polynomial}
\end{figure}

\appendix 
% !TEX root = SuperresolutionJournal.tex

%%%%%%%%%%%%%%%%%%%%%%%%%%%%%%%
% Proof main result for R
%%%%%%%%%%%%%%%%%%%%%%%%%%%%%%%
\section{Proof of Theorem~\ref{thm: exact recovery}}
\label{section: proof main result}

Let $\varepsilon = \{\varepsilon_\measSpikeIdx\}_{\measSpikeIdx \in \measSupportIdx}$ be a sequence of complex unit-magnitude numbers. The goal is to construct a function $c_0 \in L^\infty(\R^2)$ such that $(\gaborOp^*c_0)(t_\measSpikeIdx) = \varepsilon_\measSpikeIdx$, for all $\measSpikeIdx \in \measSupportIdx$, and $\abs{(\gaborOp^*c_0)(t)} < 1$ for all $t \in \R \setminus \measSupport$, where $\measSupport = \{t_\measSpikeIdx\}_{\measSpikeIdx \in \measSupportIdx}$ is the support set of $\meas$. Inspired by~\cite{Candes2012}, we take $c_0$ to be of the form
\begin{equation}
	\forall (\tau, \freq) \in \R^2, \quad c_0(\tau, \freq) \triangleq \frac{1}{2\freq_c}\sum_{\measSpikeIdx \in \measSupportIdx} \left(\alpha_\measSpikeIdx\, \window(t_\measSpikeIdx - \tau)e^{-2\pi i\freq t_\measSpikeIdx} + \beta_\measSpikeIdx\, \window'(t_\measSpikeIdx - \tau)e^{-2\pi i\freq t_\measSpikeIdx}\right),
	\label{eq: choice of c0}
\end{equation}
where $\alpha_\measSpikeIdx, \beta_\measSpikeIdx \in \C$, for all $\measSpikeIdx \in \measSupportIdx$, and we proceed as follows:

\begin{enumerate}

	\item We first verify that $\alpha \triangleq \{\alpha_\measSpikeIdx\}_{\measSpikeIdx \in \measSupportIdx}$ and $\beta \triangleq \{\beta_\measSpikeIdx\}_{\measSpikeIdx \in \measSupportIdx}$ both in $ \ell^\infty(\measSupportIdx)$ implies $c_0 \in L^\infty(\R^2)$.

	\item Next, we show that one can find $\alpha, \beta \in \ell^\infty(\measSupportIdx)$ such that the interpolation conditions $(\gaborOp^*c_0)(t_\measSpikeIdx) = \varepsilon_\measSpikeIdx$, for all $\measSpikeIdx \in \measSupportIdx$, are satisfied and $\abs{\gaborOp^*c_0}$ has a local extremum at every $t_\measSpikeIdx$, $\measSpikeIdx \in \measSupportIdx$.
	
	\item Then, we verify, with $\alpha, \beta \in \ell^\infty(\measSupportIdx)$ chosen as in 2., that the magnitude of $\gaborOp^*c_0$ is indeed strictly smaller than $1$ outside the support set $\measSupport = \{t_\measSpikeIdx\}_{\measSpikeIdx \in \measSupportIdx}$ of $\meas$. This will be accomplished in two stages. First, we show that $\abs{\gaborOp^*c_0}$ is strictly smaller than $1$ ``away'' from each point $t_\measSpikeIdx$, $\measSpikeIdx \in \measSupportIdx$, specifically, on $\R \setminus \bigcup_{\measSpikeIdx \in \measSupportIdx} [t_\measSpikeIdx - \frac{1}{7\freq_c}, t_\measSpikeIdx + \frac{1}{7\freq_c}]$. We then complete the proof by establishing that $\abs{\gaborOp^*c_0}$ is strictly concave on each set $[t_\measSpikeIdx - \frac{1}{7\freq_c}, t_\measSpikeIdx + \frac{1}{7\freq_c}]$, $\measSpikeIdx \in \measSupportIdx$, which, combined with the fact that $\abs{(\gaborOp^*c_0)(t_\measSpikeIdx)} = 1$, for every $\measSpikeIdx \in \measSupportIdx$, implies that $\abs{\gaborOp^*c_0}$ is also strictly smaller than $1$ on each of these sets.

\end{enumerate}

The main conceptual components in our proof are due to Cand\`es and Fernandez-Granda \cite{Candes2012}. 
Although \cite{Candes2012} considers recovery of measures on $\torus$ only and from pure Fourier measurements, we can still borrow technical ingredients from the proof of \cite[Thm.~1.2]{Candes2012}.
However, the different nature of the measurements and, in particular, the case $\group = \R$, pose additional technical challenges relative to the proof of \cite[Thm.~1.2]{Candes2012}. Specifically, the sum corresponding to \eqref{eq: choice of c0} in \cite{Candes2012} is always finite, whereas here it can be infinite, which presents us with delicate convergence issues that need to be addressed properly. Further fundamental differences between the proof in \cite{Candes2012} for the pure Fourier case and our proof stem from the choice of the interpolation kernel, which here is given by $t \mapsto \autocorrelation(t)\sinc(2\pi\freq_c t)$. Specifically, we do not have to impose a bandwidth constraint on the interpolation kernel. 
For pure Fourier measurements, on the other hand, the interpolation kernel has to be band-limited to $[-\freq_c, \freq_c]$ (Cand\`es and Fernandez-Granda~\cite{Candes2012} use the square of the Fej\'er kernel which offers a good trade-off between localization in time and frequency).  
As already mentioned in the main body, this leads to a factor-of-two improvement in the minimum spacing condition for STFT measurements over pure Fourier measurements. Note, however, that STFT measurements, owing to their redundancy, provide more information than pure Fourier measurements. We finally note that our proof also borrows a number of technical results from~\cite{Kahane2011}.

%%%%%%%%%%%%%%%%%%%%%%%%%%%%%%%%%%%%%%%%
% First part: c_0 is in L^\infty(\R^2)
%%%%%%%%%%%%%%%%%%%%%%%%%%%%%%%%%%%%%%%%
\subsection{$\alpha, \beta \in \ell^\infty(\measSupportIdx)$ implies $c_0 \in L^\infty(\R^2)$}
\label{subsection: c0 is in Linfinty}
Let $\alpha \triangleq \{\alpha_\measSpikeIdx\}_{\measSpikeIdx \in \measSupportIdx} \in \ell^\infty(\measSupportIdx)$ and $\beta \triangleq \{\beta_\measSpikeIdx\}_{\measSpikeIdx \in \measSupportIdx} \in \ell^\infty(\measSupportIdx)$. Take $(\tau, \freq) \in \R^2$ and define $\measSpikeIdx_0, \measSpikeIdx_1 \in \measSupportIdx$ to be the indices of the points in $\measSupport$, that are closest and second closest, respectively, to $\tau$, that is, 
\begin{equation*}
	\measSpikeIdx_0 \triangleq \argmin_{\measSpikeIdx \in \measSupportIdx} \abs{t_\measSpikeIdx - \tau} \qquad \text{and} \qquad  \measSpikeIdx_1 \triangleq \argmin_{\measSpikeIdx \in \measSupportIdx \setminus \{\measSpikeIdx_0\}} \abs{t_\measSpikeIdx - \tau}.
\end{equation*}
For brevity of exposition, we detail the case $t_{\measSpikeIdx_0} \leq \tau \leq t_{\measSpikeIdx_1}$ only, the cases $t_{\measSpikeIdx_1} \leq \tau \leq t_{\measSpikeIdx_0}$, $t_{\measSpikeIdx_1} \leq t_{\measSpikeIdx_0} \leq \tau$, and $\tau \leq t_{\measSpikeIdx_0} \leq t_{\measSpikeIdx_1}$ are all dealt with similarly. For all $\measSpikeIdx \in \measSupportIdx_\tau^{-} \triangleq \{\measSpikeIdxBis \in \measSupportIdx \colon t_\measSpikeIdxBis < t_{\measSpikeIdx_0}\}$, it then holds that 
\begin{equation}
	\tau - t_\measSpikeIdx \geq  t_{\measSpikeIdx_0} - t_\measSpikeIdx \geq \Delta, 
	\label{eq: schema 1}
\end{equation}
and for all $\measSpikeIdx \in \measSupportIdx_\tau^{+} \triangleq \{\measSpikeIdxBis \in \measSupportIdx \colon t_\measSpikeIdxBis > t_{\measSpikeIdx_1}\}$, we have 
\begin{equation}
	t_\measSpikeIdx - \tau \geq t_\measSpikeIdx - t_{\measSpikeIdx_1} \geq \Delta. 
	\label{eq: schema 2}
\end{equation}
Hence, we get the following:
\begin{align}
	\abs{c_0(\tau, \freq)} &\leq \frac{1}{2\freq_c}\left(\normlinfty{\alpha}\sum_{\measSpikeIdx \in \measSupportIdx}\abs{\window(t_\measSpikeIdx - \tau)} + \normlinfty{\beta}\sum_{\measSpikeIdx \in \measSupportIdx}\abs{\window'(t_\measSpikeIdx - \tau)}\right) \notag \\
		&= \frac{\normlinfty{\alpha}}{2\freq_c}\abs{\window(t_{\measSpikeIdx_0} - \tau)} + \frac{\normlinfty{\alpha}}{2\freq_c}\abs{\window(t_{\measSpikeIdx_1} - \tau)} + \frac{\normlinfty{\beta}}{2\freq_c}\abs{\window'(t_{\measSpikeIdx_0} - \tau)}  \label{eq: sum 0 proof c_0 in Linfty} \\
		&\kern-1.2cm + \frac{\normlinfty{\beta}}{2\freq_c}\abs{\window'(t_{\measSpikeIdx_1} - \tau)} + \frac{\normlinfty{\alpha}}{2\freq_c}\sum_{\measSpikeIdx \in \measSupportIdx \setminus \{\measSpikeIdx_0, \measSpikeIdx_1\}} \abs{\window(t_{\measSpikeIdx} - \tau)}  + \frac{\normlinfty{\beta}}{2\freq_c}\sum_{\measSpikeIdx \in \measSupportIdx \setminus \{\measSpikeIdx_0, \measSpikeIdx_1\}} \abs{\window'(t_\measSpikeIdx - \tau)} \notag \\
		&\leq \frac{\normlinfty{\alpha}}{\freq_c}\normInfty{\window} + \frac{\normlinfty{\beta}}{\freq_c}\normInfty{\window'} + \frac{\normlinfty{\alpha}}{2\freq_c}\sum_{\measSpikeIdx \in \measSupportIdx_\tau^-} \abs{\window(t_{\measSpikeIdx_0} - t_\measSpikeIdx)} \label{eq: sum 1 proof c_0 in Linfty} \\
		&\kern-1.2cm +  \frac{\normlinfty{\alpha}}{2\freq_c}\sum_{\measSpikeIdx \in \measSupportIdx_\tau^+} \abs{\window(t_{\measSpikeIdx} - t_{\measSpikeIdx_1})} + \frac{\normlinfty{\beta}}{2\freq_c}\sum_{\measSpikeIdx \in \measSupportIdx_\tau^-} \abs{\window'(t_{\measSpikeIdx_0} - t_\measSpikeIdx)}+ \frac{\normlinfty{\beta}}{2\freq_c}\sum_{\measSpikeIdx \in \measSupportIdx_\tau^+} \abs{\window'(t_\measSpikeIdx - t_{\measSpikeIdx_1})}, \label{eq: sum 2 proof c_0 in Linfty}
\end{align}
where the step from \eqref{eq: sum 0 proof c_0 in Linfty} to \eqref{eq: sum 1 proof c_0 in Linfty}-\eqref{eq: sum 2 proof c_0 in Linfty} follows from $\window, \window' \in C_b(\R)$, \eqref{eq: schema 1}, \eqref{eq: schema 2}, and the fact that $\abs{\window}$ and $\abs{\window'}$ are both symmetric and non-increasing on $[\Delta, \infty)$, which is by the assumption $\Delta > 4\sigma$. Note that we eliminated the dependence of the upper bound in \eqref{eq: sum 1 proof c_0 in Linfty}-\eqref{eq: sum 2 proof c_0 in Linfty} on $(\tau, \freq)$. It remains to establish that every sum in the upper bound \eqref{eq: sum 1 proof c_0 in Linfty}-\eqref{eq: sum 2 proof c_0 in Linfty} is finite.
The minimum separation between pairs of points of $\measSupport = \{t_\measSpikeIdx\}_{\measSpikeIdx \in \measSupportIdx}$ is $\Delta$, by assumption. Consequently, since $\abs{\window}$ and $\abs{\window'}$ are both symmetric and non-increasing on $[\Delta, \infty)$, the sums in \eqref{eq: sum 1 proof c_0 in Linfty} and \eqref{eq: sum 2 proof c_0 in Linfty} take on their maxima when the points $t_\measSpikeIdx$, $\measSpikeIdx \in \measSupportIdx$,  are equi-spaced on $\R$ with spacing $\Delta$, i.e., when
\begin{align*}
	\big\{t_{\measSpikeIdx_0} - t_\measSpikeIdx \colon \measSpikeIdx \in \measSupportIdx_\tau^-\big\} &\subseteq \big\{n\Delta \colon n \in \N \setminus \{0\}\big\} \\
	\big\{t_\measSpikeIdx - t_{\measSpikeIdx_1} \colon \measSpikeIdx \in \measSupportIdx_\tau^+\big\} &\subseteq \big\{n\Delta \colon n \in \N \setminus \{0\}\big\}.
\end{align*}
It therefore follows that
\begin{equation}
	\abs{c_0(\tau, \freq)} \leq  \frac{\normlinfty{\alpha}}{\freq_c}\normInfty{\window} + \frac{\normlinfty{\beta}}{\freq_c}\normInfty{\window'} + \frac{\normlinfty{\alpha}}{\freq_c}\sum_{n = 1}^\infty \abs{\window(n\Delta)} + \frac{\normlinfty{\beta}}{\freq_c}\sum_{n = 1}^\infty \abs{\window'(n\Delta)}, \label{eq: finite sum *}
\end{equation}
where
\begin{align*}
	\sum_{n = 1}^\infty \abs{\window(n\Delta)} &= \frac{1}{\sqrt{\sigma}}\sum_{n = 1}^\infty\exp\!\left(-\frac{\pi n^2\Delta^2}{2\sigma^2}\right) < \infty \\
	\sum_{n = 1}^\infty \abs{\window'(n\Delta)} &=  \frac{\pi}{\sigma^2\sqrt{\sigma}} \sum_{n = 1}^\infty n\Delta\exp\!\left(-\frac{\pi n^2\Delta^2}{2\sigma^2}\right) < \infty,
\end{align*}
which establishes that $c_0 \in L^\infty(\R^2)$.

%%%%%%%%%%%%%%%%%%%%%%%%%%%%%%%%%%%%%%%%
% Second part: Proof that the operators are invertible.
%%%%%%%%%%%%%%%%%%%%%%%%%%%%%%%%%%%%%%%%
\subsection{Existence of $\alpha, \beta \in \ell^\infty(\measSupportIdx)$ such that $(\gaborOp^*c_0)(t_\measSpikeIdx) = \varepsilon_\measSpikeIdx$ and $\abs{\gaborOp^*c_0}$ has a local extremum at every $t_\measSpikeIdx$, $\measSpikeIdx \in \measSupportIdx$}

Using~\eqref{eq: choice of c0} in \eqref{eq: adjoint operator expression}, we get
\begin{align}
	\forall t &\in \R, \quad (\gaborOp^* c_0)(t) = \int_{-\freq_c}^{\freq_c}\int_\R c_0(\tau, \freq)\window(t - \tau)e^{2\pi i\freq t}\mathrm{d}\tau\mathrm{d}\freq \notag \\
		&= \int_{-\freq_c}^{\freq_c} \int_\R \frac{1}{2\freq_c}\sum_{\measSpikeIdx \in \measSupportIdx} \Big(\alpha_\measSpikeIdx \window(t_\measSpikeIdx - \tau)e^{-2\pi i\freq t_\measSpikeIdx} + \beta_\measSpikeIdx \window'(t_\measSpikeIdx - \tau)e^{-2\pi i\freq t_\measSpikeIdx}\Big)\window(t - \tau) e^{2\pi i\freq t} \mathrm{d}\tau\mathrm{d}\freq \label{eq: integration term by term dual 1} \\
		&= \sum_{\measSpikeIdx \in \measSupportIdx} \left[\alpha_\measSpikeIdx \left(\int_\R \window(t_\measSpikeIdx - \tau)\window(t - \tau)\mathrm{d}\tau\right) \left(\frac{1}{2\freq_c}\int_{-\freq_c}^{\freq_c} e^{2\pi i\freq(t - t_\measSpikeIdx)}\mathrm{d}\freq\right) \right. \label{eq: integration term by term dual 2}  \\
		&\qquad\qquad\qquad\qquad \left. + \beta_\measSpikeIdx \left(\int_\R \window'(t_\measSpikeIdx - \tau)\window(t - \tau)\mathrm{d}\tau\right) \left(\frac{1}{2\freq_c}\int_{-\freq_c}^{\freq_c} e^{2\pi i\freq(t - t_\measSpikeIdx)}\mathrm{d}\freq \right)\right]  \label{eq: integration term by term dual}  \\
		&= \sum_{\measSpikeIdx \in \measSupportIdx} \Big(\alpha_\measSpikeIdx \underbrace{\autocorrelation(t - t_\measSpikeIdx)\sinc(2\pi\freq_c(t - t_\measSpikeIdx))}_{\triangleq u(t - t_\measSpikeIdx)} + \beta_\measSpikeIdx \underbrace{\autocorrelation'(t_\measSpikeIdx - t)\sinc(2\pi\freq_c(t - t_\measSpikeIdx))}_{\triangleq v(t - t_\measSpikeIdx)}\Big), \label{eq: function dual certificate proof}
\end{align}
where we set
\begin{align}
	\forall t \in \R, \quad u(t) &\triangleq \autocorrelation(t)\sinc(2\pi\freq_c t) \label{eq: definition function u} \\
	\forall t \in \R, \quad v(t) &\triangleq \autocorrelation'(-t)\sinc(2\pi\freq_c t) = \frac{\pi t}{2\sigma^2}\autocorrelation(t)\sinc(2\pi\freq_c t) \label{eq: definition function v}
\end{align}
with
\begin{equation*}
	\forall t \in \R, \quad \autocorrelation(t) = \exp\!\left(-\frac{\pi t^2}{4\sigma^2}\right)
\end{equation*}
as defined in~\eqref{eq: autocorrelation function}. The conditions for Fubini's Theorem, applied in the step from \eqref{eq: integration term by term dual 1}-\eqref{eq: integration term by term dual 2} to \eqref{eq: integration term by term dual}, can be verified as follows:
\begin{align}
	\sum_{\measSpikeIdx \in \measSupportIdx} \int_{-\freq_c}^{\freq_c} &\int_\R \abs{\alpha_\measSpikeIdx \window(t_\measSpikeIdx - \tau)\window(t - \tau)e^{2\pi i\freq(t - t_\measSpikeIdx)}}\mathrm{d}\tau\mathrm{d}\freq \notag\\
		&\leq \sum_{\measSpikeIdx \in \measSupportIdx} \abs{\alpha_\measSpikeIdx} \left(\int_\R \abs{\window(t_\measSpikeIdx - \tau)\window(t - \tau)}\mathrm{d}\tau\right)\left(\frac{1}{2\freq_c}\int_{-\freq_c}^{\freq_c} \abs{e^{2\pi i\freq t}}\mathrm{d}\freq\right) \notag \\
		&=\sum_{\measSpikeIdx \in \measSupportIdx} \abs{\alpha_\measSpikeIdx}\autocorrelation(t - t_\measSpikeIdx) \leq 2\normlinfty{\alpha}\left(\normInfty{\autocorrelation} + \sum_{n = 1}^\infty \autocorrelation(n\Delta)\right) \label{eq: finite sum 1}
\end{align}
and
\begin{align}
	\sum_{\measSpikeIdx \in \measSupportIdx} \int_{-\freq_c}^{\freq_c} &\int_\R \abs{\beta_\measSpikeIdx \window'(t_\measSpikeIdx - \tau)\window(t - \tau)e^{2\pi i\freq(t - t_\measSpikeIdx)}}\mathrm{d}\tau\mathrm{d}\freq \notag\\
		&\leq \sum_{\measSpikeIdx \in \measSupportIdx} \abs{\beta_\measSpikeIdx} \left(\int_\R \abs{\window'(t_\measSpikeIdx - \tau)\window(t - \tau)}\mathrm{d}\tau\right)\left(\frac{1}{2\freq_c}\int_{-\freq_c}^{\freq_c} \abs{e^{2\pi i\freq t}}\mathrm{d}\freq\right) \notag \\
		&\leq \sum_{\measSpikeIdx \in \measSupportIdx} \abs{\beta_\measSpikeIdx}\widetilde{\autocorrelation}(t_\measSpikeIdx-t) \leq 2\normlinfty{\beta}\left(\|\widetilde{\autocorrelation}\|_\infty + \sum_{n = 1}^\infty \widetilde{\autocorrelation}(n\Delta)\right), \label{eq: finite sum 2}
\end{align}
where we used 
\begin{align*}
	\forall t \in \R, \quad \int_\R \window(\tau)\abs{\window'(t + \tau)}\mathrm{d}\tau &= \frac{1}{\sigma}\exp\!\left(-\frac{\pi t^2}{2\sigma^2}\right) - \autocorrelation'(t)\erf\left(\frac{\sqrt{\pi}}{2\sigma}t\right) \\
		&\leq \frac{1}{\sigma}\exp\!\left(-\frac{\pi t^2}{2\sigma^2}\right) + \abs{\autocorrelation'(t)} \reversetriangleq \widetilde{\autocorrelation}(t) 
\end{align*}
and the fact that $\widetilde{\autocorrelation}$ is bounded, symmetric, and non-decreasing on $[\Delta, \infty)$ as a consequence of $\Delta > 4\sigma$. The upper bounds in~\eqref{eq: finite sum 1} and \eqref{eq: finite sum 2} are both finite as the series $\sum_{n \geq 1} \autocorrelation(n\Delta)$ and $\sum_{n \geq 1} \widetilde{\autocorrelation}(n\Delta)$ converge.

We have shown in Lemma~\ref{lem: lemma for adjoint operator} that  for $c_0 \in L^\infty(\R^2)$, the function $\gaborOp^*c_0$ is in $C_b(\R)$. With $c_0$ taken as in \eqref{eq: choice of c0}, $\gaborOp^*c_0$ is not only in $C_b(\R)$, but also differentiable, as we show next.
We start by noting that the functions $u$ and $v$ defined in \eqref{eq: definition function u} and \eqref{eq: definition function v} are differentiable on $\R$, and their derivatives are given by
\begin{align*}
	\forall t \in \R, \quad u'(t) &= \autocorrelation'(t)\sinc(2\pi\freq_c t) + 2\pi\freq_c\,\autocorrelation(t)\sinc'(2\pi\freq_c t) \\
	\forall t \in \R, \quad v'(t) &= -\autocorrelation''(-t)\sinc(2\pi\freq_c t) + 2\pi\freq_c\,\autocorrelation'(-t)\sinc'(2\pi\freq_c t) \\
		&=\left[\left(\frac{\pi}{2\sigma^2} - \frac{\pi^2t^2}{4\sigma^4}\right)\sinc(2\pi\freq_c t) + \frac{\pi^2\freq_c t}{\sigma^2}\sinc'(2\pi\freq_c t)\right]\autocorrelation(t).
\end{align*}
Then, using 
\begin{equation}
	\forall t \in \R\setminus \{0\}, \quad \abs{\sinc(t)}  \leq \frac{1}{\abs{t}} \qquad \text{and} \qquad \abs{\sinc'(t)} = \abs{\frac{\cos(t)}{t} - \frac{\sin(t)}{t^2}} \leq \frac{1}{\abs{t}} + \frac{1}{\abs{t}^2}, \label{eq: inequalities sinc and derivative}
\end{equation}
we obtain the following upper bounds on $u'$ and  $v'$:
\begin{align}
	\forall t \in \R\setminus\{0\}, \quad \abs{u'(t)} &\leq \left(\frac{1}{4\sigma^2\freq_c} + \frac{1}{\abs{t}} + \frac{1}{2\pi\freq_c\abs{t}^2}\right)\autocorrelation(t) \reversetriangleq U(t)
	\label{eq: upper bound u prime} \\
	\forall t \in \R\setminus\{0\}, \quad \abs{v'(t)} &\leq \left(\frac{1}{2\sigma^2\freq_c\abs{t}} + \frac{\pi}{2\sigma^2} + \frac{\pi\abs{t}}{8\sigma^4\freq_c}\right)\autocorrelation(t) \reversetriangleq V(t).
	\label{eq: upper bound v}
\end{align}
Next, we establish that $\sum_{\measSpikeIdx \in \measSupportIdx} \left(\alpha_\measSpikeIdx u'(t - t_\measSpikeIdx) + \beta_\measSpikeIdx v'(t - t_\measSpikeIdx)\right)$ converges uniformly on every compact set $[-r, r]$, $r > 0$, so that we can apply~\cite[Thm.~V.2.14]{Colmez2009} to show that the series in \eqref{eq: function dual certificate proof} can be differentiated term by term.  
For $r > 0$, we have 
\begin{align*}
	\sum_{\measSpikeIdx \in \measSupportIdx} &\sup_{t \in [-r, r]} \abs{\alpha_\measSpikeIdx u'(t - t_\measSpikeIdx) + \beta_\measSpikeIdx v'(t - t_\measSpikeIdx)} \leq \sum_{\measSpikeIdx \in \measSupportIdx} \left(\normlinfty{\alpha}\hspace{-0.2cm}\sup_{t \in [-r, r]} \!\!\abs{u'(t - t_\measSpikeIdx)} + \normlinfty{\beta}\hspace{-0.2cm}\sup_{t \in [-r, r]}\!\!\abs{v'(t - t_\measSpikeIdx)} \right) \\
		&= \normlinfty{\alpha}\left(\sum_{\measSpikeIdx \in \measSupportIdx_r} \sup_{t \in [-r, r]} \abs{u'(t - t_\measSpikeIdx)} + \sum_{\measSpikeIdx \in \measSupportIdx_r^+} \sup_{t \in [-r, r]} \abs{u'(t - t_\measSpikeIdx)}  + \sum_{\measSpikeIdx \in \measSupportIdx_r^-} \sup_{t \in [-r, r]} \abs{u'(t - t_\measSpikeIdx)} \right) \\
		&\qquad + \normlinfty{\beta}\left(\sum_{\measSpikeIdx \in \measSupportIdx_r} \sup_{t \in [-r, r]} \abs{v'(t - t_\measSpikeIdx)} + \sum_{\measSpikeIdx \in \measSupportIdx_r^+} \sup_{t \in [-r, r]} \abs{v'(t - t_\measSpikeIdx)}  + \sum_{\measSpikeIdx \in \measSupportIdx_r^-} \sup_{t \in [-r, r]} \abs{v'(t - t_\measSpikeIdx)} \right),
\end{align*}
where we defined the sets $\measSupportIdx_r \triangleq \left\{\measSpikeIdx \in \measSupportIdx \colon t_\measSpikeIdx \in [-r, r]\right\}$, $\measSupportIdx_r^+ \triangleq \left\{\measSpikeIdx \in \measSupportIdx \colon t_\measSpikeIdx > r\right\}$, and $\measSupportIdx_r^- \triangleq \left\{\measSpikeIdx \in \measSupportIdx \colon t_\measSpikeIdx < -r\right\}$.
The functions $U$ and $V$ are both positive and symmetric, $U$ is non-increasing on $(0, \infty)$, and $V$ is non-increasing on $(0, \infty)$ as
\begin{align*}
	\forall t \in (0, \infty), \quad V'(t) &= \left(-\frac{1}{2\sigma^2\freq_c t^2} + \frac{\pi}{8\sigma^4\freq_c}\right)\autocorrelation(t) + \left(\frac{1}{2\sigma^2\freq_c t} + \frac{\pi}{2\sigma^2} + \frac{\pi t}{8\sigma^4\freq_c}\right)\autocorrelation'(t) \\
		&= -\left(\frac{1}{2\sigma^2\freq_c t^2} + \frac{\pi}{8\sigma^4\freq_c} + \left(\frac{\pi}{2\sigma^2}\right)^2t + \frac{\pi^2 t^2}{16\sigma^6\freq_c}\right)\autocorrelation(t) \leq 0.
\end{align*}
It therefore follows that
\begin{equation*}
	\sup_{t \in [-r, r]} \abs{u'(t- t_\measSpikeIdx)} \leq \begin{cases} \normInfty{u'}, & t_\measSpikeIdx \in [-r, r] \\ 
		U(r - t_\measSpikeIdx), & t_\measSpikeIdx > r \\
		U(-r - t_\measSpikeIdx), & t_\measSpikeIdx < -r \end{cases}
\end{equation*}
and
\begin{equation*}
	\sup_{t \in [-r, r]} \abs{v'(t-t_\measSpikeIdx)} \leq \begin{cases} \normInfty{v'}, & t_\measSpikeIdx \in [-r, r] \\ 
		V(r - t_\measSpikeIdx), & t_\measSpikeIdx > r \\
		V(-r - t_\measSpikeIdx) & t_\measSpikeIdx < -r. \end{cases}
\end{equation*}
Since the support set $\measSupport = \{t_\measSpikeIdx\}_{\measSpikeIdx \in \measSupportIdx}$ is closed and uniformly discrete, by assumption, and $[-r, r]$ is compact, the set $\measSupport \cap [-r, r]$, and thereby the index set $\measSupportIdx_r$, contains a finite number of elements, say $L_r \in \N$. We thus have the following
\begin{align*}
	\sum_{\measSpikeIdx \in \measSupportIdx} &\sup_{t \in [-r, r]} \abs{\alpha_\measSpikeIdx u'(t - t_\measSpikeIdx) + \beta_\measSpikeIdx v'(t - t_\measSpikeIdx)} \leq \normlinfty{\alpha}\!\!\left(\!L_r\normInfty{u'} + \sum_{\measSpikeIdx \in \measSupportIdx_r^+} U(r - t_\measSpikeIdx) + \sum_{\measSpikeIdx \in \measSupportIdx_r^-} U(-r - t_\measSpikeIdx) \!\!\right) \\
	&\qquad\qquad + \normlinfty{\beta}\!\!\left(L_r\normInfty{v'} + \sum_{\measSpikeIdx \in \measSupportIdx_r^+} V(r - t_\measSpikeIdx) + \sum_{\measSpikeIdx \in \measSupportIdx_r^-} V(-r - t_\measSpikeIdx) \right) \\
	&\hspace{-0.6cm}\leq \normlinfty{\alpha}\!\!\left(\!L_r\normInfty{u'} + 2\normInfty{U} + 2\sum_{n = 1}^\infty U(n\Delta)\!\!\right) + \normlinfty{\beta}\!\!\left(\!L_r\normInfty{v'} + 2\normInfty{V} + 2\sum_{n = 1}^\infty V(n\Delta)\!\!\right)\!,
\end{align*}
where we isolated the points in $\measSupport$ that are closest to $r$ and $-r$ as in~\eqref{eq: sum 1 proof c_0 in Linfty}-\eqref{eq: sum 2 proof c_0 in Linfty} and we used the fact that a regular spacing of the $t_\measSpikeIdx$, $\measSpikeIdx \in \measSupportIdx$, maximizes the sum as in \eqref{eq: finite sum *}.
Since $\sum_{n \geq 1} U(n\Delta) < \infty$ and $\sum_{n \geq 1} V(n\Delta) < \infty$, the Weierstrass M-test tells us that $\sum_{\measSpikeIdx \in \measSupportIdx} \left(\alpha_\measSpikeIdx u'(t - t_\measSpikeIdx) + \beta_\measSpikeIdx v'(t - t_\measSpikeIdx)\right)$ converges uniformly on every compact set $[-r, r]$, $r > 0$. Thanks to~\cite[Thm.~V.2.14]{Colmez2009} this implies that the function $\gaborOp^*c_0$ is differentiable on $\R$, and that its derivative equals
\begin{equation*}
	\forall t \in \R, \quad (\gaborOp^*c_0)'(t) = \sum_{\measSpikeIdx \in \measSupportIdx} \big(\alpha_\measSpikeIdx u'(t - t_\measSpikeIdx) + \beta_\measSpikeIdx v'(t - t_\measSpikeIdx)\big)
\end{equation*}
for $\alpha, \beta \in \ell^\infty(\measSupportIdx)$.
We next show that there exist $\alpha, \beta \in \ell^\infty(\measSupportIdx)$ such that $(\gaborOp^*c_0)(t_\measSpikeIdx) = \varepsilon_\measSpikeIdx$, for all $\measSpikeIdx \in \measSupportIdx$, and $\abs{(\gaborOp^*c_0)(t)} < 1$ for all $t \in \R \setminus \measSupport$. To this end, we seek $\alpha, \beta \in \ell^\infty(\measSupportIdx)$ such that
\begin{equation}
	\begin{array}{l}
		(\gaborOp^*c_0)(t_\measSpikeIdx) = \varepsilon_\measSpikeIdx \\
		(\gaborOp^*c_0)'(t_\measSpikeIdx) = 0,
	\end{array}
	\label{eq: inversion operator condition}
\end{equation}
for all $\measSpikeIdx \in \measSupportIdx$.
In developing an approach to solving the equation system~\eqref{eq: inversion operator condition}, it will turn out convenient to define the operators
\begin{equation*}
	\begin{array}{llll}
		\mathcal{U}_p \colon & \ell^\infty(\measSupportIdx) & \longrightarrow & \ell^\infty(\measSupportIdx) \\
						 & \alpha = \{\alpha_\measSpikeIdx\}_{\measSpikeIdx \in \measSupportIdx} & \longmapsto & \left\{\sum\limits_{\measSpikeIdxBis \in \measSupportIdx} \alpha_\measSpikeIdxBis u^{(p)}(t_\measSpikeIdx - t_\measSpikeIdxBis)\right\}_{\measSpikeIdx \in \measSupportIdx}
	\end{array}
\end{equation*}
and
\begin{equation*}
	\begin{array}{llll}
		\mathcal{V}_p \colon & \ell^\infty(\measSupportIdx) & \longrightarrow & \ell^\infty(\measSupportIdx) \\
						 & \beta = \{\beta_\measSpikeIdx\}_{\measSpikeIdx \in \measSupportIdx} & \longmapsto & \left\{\sum\limits_{\measSpikeIdxBis \in \measSupportIdx} \beta_\measSpikeIdxBis v^{(p)}(t_\measSpikeIdx - t_\measSpikeIdxBis)\right\}_{\measSpikeIdx \in \measSupportIdx},
	\end{array}
\end{equation*}
where $p \in \{0, 1\}$. We defer the proof of $\mathcal{U}_p$ and $\mathcal{V}_p$, $p \in \{0, 1\}$, mapping $\ell^\infty(\measSupportIdx)$ into $\ell^\infty(\measSupportIdx)$ to later.
The equation system~\eqref{eq: inversion operator condition} can now be expressed as
\begin{equation}
	\left\{\begin{array}{l}
		\mathcal{U}_0\alpha + \mathcal{V}_0\beta = \varepsilon \\
		\mathcal{U}_1\alpha + \mathcal{V}_1\beta = 0.
	\end{array}\right.
	\label{eq: inversion operator condition 2}
\end{equation}
If both $\mathcal{V}_1$ and $\mathcal{U}_0 - \mathcal{V}_0\mathcal{V}_1^{-1}\mathcal{U}_1$ are invertible, then, as in \cite{Candes2012}, one can choose $\alpha = (\mathcal{U}_0 - \mathcal{V}_0\mathcal{V}_1^{-1}\mathcal{U}_1)^{-1}\varepsilon$ and  $\beta = -\mathcal{V}_1^{-1}\mathcal{U}_1\alpha$ to satisfy \eqref{eq: inversion operator condition 2}. The Neumann expansion theorem~\cite[Thm.~1.3, p.~5]{Kubrusly2012} now says that $\normOp{\mathcal{I} - (v'(0))^{-1}\mathcal{V}_1} < 1$ and $\normOp{\mathcal{I} - (\mathcal{U}_0 - \mathcal{V}_0\mathcal{V}_1^{-1}\mathcal{U}_1)} < 1$ are sufficient conditions for $\mathcal{V}_1$ and $\mathcal{U}_0 - \mathcal{V}_0\mathcal{V}_1^{-1}\mathcal{U}_1$ to be invertible. 
We next verify these conditions.

%%%%%%%%%%%%%%%%%%%%%%%
% V_1 is invertible
%%%%%%%%%%%%%%%%%%%%%%%
\subsubsection{$\mathcal{V}_1$ is invertible}
\label{sec: V1 inversible}
Fix a sequence $\beta \in \ell^\infty(\measSupportIdx)$, define $\zeta = (\mathcal{I} - (v'(0))^{-1}\mathcal{V}_1)\beta$, and let $\measSpikeIdx \in \measSupportIdx$. We then have
\begin{align*}
	\zeta_\measSpikeIdx &= \beta_\measSpikeIdx - (v'(0))^{-1}\sum_{\measSpikeIdxBis \in \measSupportIdx} \beta_\measSpikeIdxBis v'(t_\measSpikeIdx - t_\measSpikeIdxBis) \\
		&= - (v'(0))^{-1}\sum_{\measSpikeIdxBis \in \measSupportIdx \setminus \{\measSpikeIdx\}} \beta_\measSpikeIdxBis v'(t_\measSpikeIdx - t_\measSpikeIdxBis) \\
		&= (\autocorrelation''(0))^{-1}\sum_{\measSpikeIdxBis \in \measSupportIdx \setminus \{\measSpikeIdx\}} \beta_\measSpikeIdxBis v'(t_\measSpikeIdx - t_\measSpikeIdxBis),
\end{align*}
where we used $v'(0) = -\autocorrelation''(0) > 0$. With \eqref{eq: upper bound v} and $\autocorrelation''(0) = -\frac{\pi}{2\sigma^2}$, we obtain
\begin{align}
	\abs{\zeta_\measSpikeIdx} &\leq \sum_{\measSpikeIdxBis \in \measSupportIdx \setminus \{\measSpikeIdx\}} \frac{2\sigma^2}{\pi}\abs{\beta_\measSpikeIdxBis} \abs{v'(t_\measSpikeIdx - t_\measSpikeIdxBis)} \notag \\
		&\leq \frac{2\sigma^2}{\pi}\normlinfty{\beta} \sum_{\measSpikeIdxBis \in \measSupportIdx \setminus \{\measSpikeIdx\}} V(t_\measSpikeIdx - t_\measSpikeIdxBis) \\
		&= \normlinfty{\beta} \sum_{\measSpikeIdxBis \in \measSupportIdx \setminus \{\measSpikeIdx\}} \left(\frac{1}{\pi\freq_c \abs{t_\measSpikeIdx - t_\measSpikeIdxBis}} + \frac{\abs{t_\measSpikeIdx - t_\measSpikeIdxBis}}{4\sigma^2\freq_c} + 1\right) \autocorrelation(t_\measSpikeIdx - t_\measSpikeIdxBis). \label{eq: proof V1 invertible bound for sum}
\end{align}
We further upper-bound \eqref{eq: proof V1 invertible bound for sum} using the same line of reasoning that led to~\eqref{eq: finite sum *}. Specifically, we make use of the fact that $V$ is non-increasing on $(0, \infty)$ and that the minimum distance between points in $\measSupport$ is $\Delta$. This implies that $\sum_{\measSpikeIdxBis \in \measSupportIdx \setminus \{\measSpikeIdx\}} V(t_\measSpikeIdx - t_\measSpikeIdxBis)$ is maximized for
\begin{equation*}
	\big\{t_\measSpikeIdx - t_\measSpikeIdxBis \colon \measSpikeIdxBis \in \measSupportIdx \setminus \{\measSpikeIdx\}\big\} \subseteq \big\{n\Delta \colon n \in \Z \setminus \{0\}\big\}.
\end{equation*}
With \eqref{eq: upper bound v} this gives 
\begin{equation}
	\abs{\zeta_\measSpikeIdx} \leq 2\normlinfty{\beta}\left(\sum_{n = 1}^\infty \frac{\autocorrelation(n\Delta)}{\pi\freq_c n\Delta} + \sum_{n = 1}^\infty \frac{n\Delta}{4\sigma^2\freq_c}\autocorrelation(n\Delta) + \sum_{n = 1}^\infty \autocorrelation(n\Delta)\right). \label{eq: previous equation}
\end{equation}
As $n \geq 1$, we have $\autocorrelation(n\Delta) = \exp\!\left(-\frac{\pi n^2 \Delta^2}{4\sigma^2}\right) \leq \exp\!\left(-\frac{\pi n \Delta^2}{4\sigma^2}\right)$, which when used in~\eqref{eq: previous equation} leads to a further upper bound in terms of the following power series
\begin{equation*}
	\forall x \in (-1, 1), \quad \ln(1 - x) = -\sum_{n = 1}^\infty \frac{x^n}{n}, \qquad \frac{x}{(1 - x)^2} = \sum_{n = 1}^\infty nx^n, \qquad \frac{x}{1 - x} = \sum_{n = 1}^{\infty} x^n,
\end{equation*}
all evaluated at $x = \exp(-\frac{\pi\Delta^2}{4\sigma^2}) < 1$. Putting things together, we obtain
\begin{align*}
	\normOp{\mathcal{I} - (v'(0))^{-1}\mathcal{V}_1} &= \sup_{\beta \neq 0} \frac{\normlinfty{\zeta}}{\normlinfty{\beta}} \leq -\frac{2}{\pi\freq_c\Delta}\ln\left(1 - \exp\!\left(-\frac{\pi \Delta^2}{4\sigma^2}\right)\right)  \\
	&+ \frac{\Delta \exp\!\left(-\frac{\pi \Delta^2}{4\sigma^2}\right)}{2\sigma^2\freq_c\left(1 - \exp\!\left(-\frac{\pi \Delta^2}{4\sigma^2}\right)\right)^2} + \frac{2\exp\!\left(-\frac{\pi \Delta^2}{4\sigma^2}\right)}{1 - \exp\!\left(-\frac{\pi \Delta^2}{4\sigma^2}\right)}.
\end{align*}
Defining the functions
\begin{align*}
	\forall x > 0, \quad \varphi(x) &\triangleq -\ln\left(1 - \exp\!\left(-\pi x^2\right)\right) \\
	\forall x > 0, \quad \psi(x) &\triangleq \frac{x^2 \exp\!\left(-\pi x^2\right)}{\left(1 - \exp\!\left(-\pi x^2\right)\right)^2} \\
	\forall x > 0, \quad \xi(x) &= \frac{\exp\!\left(-\pi x^2\right)}{1 - \exp\!\left(-\pi x^2\right)},
\end{align*}
we can then write
\begin{equation*}
	\normOp{\mathcal{I} - (v'(0))^{-1}\mathcal{V}_1} \leq \frac{\frac{2}{\pi}\varphi\!\left(\frac{\Delta}{2\sigma}\right) + 2\psi\!\left(\frac{\Delta}{2\sigma}\right)}{\freq_c\Delta} + 2\xi\!\left(\frac{\Delta}{2\sigma}\right).
\end{equation*}
The functions $\varphi$ and $\xi$ are non-increasing on $(0, \infty)$, as their derivatives satisfy
\begin{equation*}
	\forall x > 0, \qquad \varphi'(x) = \frac{-2\pi x \exp\!\left(-\pi x^2\right)}{1 - \exp\!\left(-\pi x^2\right)} \leq 0
\end{equation*}
\begin{equation*}
	\forall x > 0, \quad \xi'(x) = -\frac{2\pi x\exp(-\pi x^2)}{1 - \exp\!\left(-\pi x^2\right)} - \frac{2\pi x\exp\!\left(-2\pi x^2\right)}{\left(1 - \exp\!\left(-\pi x^2\right)\right)^2} \leq 0.
\end{equation*}
As for $\psi$, we first write
\begin{equation*}
	\forall x > 0, \qquad \psi(x) = \left(\frac{x\exp(-\pi x^2/2)}{1 - \exp(-\pi x^2)}\right)^2 = \left(\frac{x}{2\sinh(\pi x^2/2)}\right)^2,
\end{equation*}
and then show that the function 
\begin{equation*}
	\forall x > 0, \quad \eta(x) \triangleq \frac{x}{2\sinh(\pi x^2/2)}
\end{equation*}
is non-increasing on $(0,\infty)$
by computing its first derivative:
\begin{align*}
	\forall x > 0, \quad \eta'(x) &= \frac{2\sinh(\pi x^2/2) - 2\pi x^2\cosh(\pi x^2/2)}{\left(2\sinh(\pi x^2/2)\right)^2} \\
		&= \frac{\cosh(\pi x^2/2)\left(\tanh(\pi x^2/2) - \pi x^2\right)}{2\left(\sinh(\pi x^2/2)\right)^2} \leq 0,
\end{align*}
where the inequality is thanks to $\tanh(\pi x^2/2) \leq \pi x^2/2 \leq \pi x^2$, for all $x > 0$.
Therefore, the function $\psi$ is also non-increasing on $(0, \infty)$. Since by assumption $\Delta > 4\sigma$ and $\Delta > 1/\freq_c$, we get
\begin{align*}
	\normOp{\mathcal{I} - (v'(0))^{-1}\mathcal{V}_1} &\leq \frac{\frac{2}{\pi}\varphi(2) + 2\psi(2)}{\freq_c\Delta} + 2\xi(2) \\
		&< \frac{2\varphi(2)}{\pi} + 2\psi(2) + 2\xi(2) \leq 3.71 \cdot 10^{-5} < 1.
\end{align*}
It therefore follows that $\mathcal{V}_1$ is invertible. Furthermore, according to the Neumann expansion theorem, the operator norm of $\mathcal{V}_1^{-1}$ satisfies
\begin{equation}
	\normOp{\mathcal{V}_1^{-1}} \leq \frac{\abs{v'(0)}^{-1}}{1 - \normOp{\mathcal{I} - (v'(0))^{-1}\mathcal{V}_1}} \leq \frac{2\sigma^2}{\pi - 2(\varphi(2) + \pi\psi(2) + \pi\xi(2))}.
	\label{eq: bound operator norm inverse}
\end{equation}

%%%%%%%%%%%%%%%%%%%%%%%
% U_0 - V_0V_1^{-1}U_1  is invertible
%%%%%%%%%%%%%%%%%%%%%%%
\subsubsection{$\mathcal{U}_0 - \mathcal{V}_0\mathcal{V}_1^{-1}\mathcal{U}_1$ is invertible}
We start by noting that thanks to the triangle inequality,
\begin{align}
	\normOp{\mathcal{I} - (\mathcal{U}_0 - \mathcal{V}_0\mathcal{V}_1^{-1}\mathcal{U}_1)} &\leq \normOp{\mathcal{I} - \mathcal{U}_0} + \normOp{\mathcal{V}_0}\normOp{\mathcal{V}_1^{-1}}\normOp{\mathcal{U}_1} \notag \\
		&\leq  \normOp{\mathcal{I} - \mathcal{U}_0}  + \frac{\abs{v'(0)}^{-1} \normOp{\mathcal{V}_0}\normOp{\mathcal{U}_1}}{1 - \normOp{\mathcal{I} - (v'(0))^{-1}\mathcal{V}_1}}. \label{eq: norm operator triangle inequality}
\end{align}
An upper bound on $\normOp{\mathcal{I} - \mathcal{U}_0}$ can easily be derived using arguments similar to those employed in Section \ref{sec: V1 inversible} to get an upper bound on $\normOp{\mathcal{I} - (v'(0))^{-1}\mathcal{V}_1}$.
Specifically, for $\alpha \in \ell^\infty(\measSupportIdx)$, the sequence $\zeta = (\mathcal{I} - \mathcal{U}_0)\alpha$ obeys
\begin{align*}
	\abs{\zeta_\measSpikeIdx} &\leq \sum_{\measSpikeIdxBis \in \measSupportIdx \setminus \{\measSpikeIdx\}} \!\!\abs{\alpha_\measSpikeIdxBis} \abs{u(t_\measSpikeIdx - t_\measSpikeIdxBis)} \leq \normlinfty{\alpha}\!\!\!\sum_{\measSpikeIdxBis \in \measSupportIdx \setminus \{\measSpikeIdx\}} \frac{1}{2\pi\freq_c\abs{t_\measSpikeIdx - t_\measSpikeIdxBis}}\autocorrelation(t_\measSpikeIdx - t_\measSpikeIdxBis) \leq 2\normlinfty{\alpha} \sum_{n = 1}^\infty \frac{\autocorrelation(n\Delta)}{2\pi\freq_c n\Delta} \\
		&\leq 2\normlinfty{\alpha} \sum_{n = 1}^\infty \frac{1}{2\pi\freq_c n\Delta}\exp\!\left(-\frac{\pi n\Delta^2}{4\sigma^2}\right) = -\frac{\normlinfty{\alpha}}{\pi\freq_c\Delta} \ln\left(1 - \exp\!\left(-\frac{\pi\Delta^2}{4\sigma^2}\right)\right),
\end{align*}
for all $\measSpikeIdx \in \measSupportIdx$, where we used
\begin{equation*}
	\forall t \in \R \!\setminus\! \{0\}, \quad \abs{u(t)} \leq \frac{\autocorrelation(t)}{2\pi\freq_c\abs{t}}.
\end{equation*}
This implies that 
 \begin{equation}
 	\normOp{\mathcal{I} - \mathcal{U}_0} \leq \frac{\varphi\!\left(\frac{\Delta}{2\sigma}\right)}{\pi\freq_c\Delta} \leq \frac{\varphi(2)}{\pi\freq_c\Delta} < \frac{\varphi(2)}{\pi}.
	\label{eq: operator norm I - U0}
 \end{equation}
Next, we compute an upper bound on $\normOp{\mathcal{U}_1}$. To this end, we fix $\alpha \in \ell^\infty(\measSupportIdx)$ and set $\zeta = \mathcal{U}_1\alpha$. Since $\autocorrelation'(0) = 0$ and $\sinc'(0) = 0$, we have $u'(0) = 0$, which, combined with \eqref{eq: upper bound u prime} gives, for all $\measSpikeIdx \in \measSupportIdx$,
\begin{align}
	\abs{\zeta_\measSpikeIdx} &\leq \sum_{\measSpikeIdxBis \in \measSupportIdx} \abs{\alpha_\measSpikeIdxBis} \abs{u'(t_\measSpikeIdx - t_\measSpikeIdxBis)} = \sum_{\measSpikeIdxBis \in \measSupportIdx \setminus \{\measSpikeIdx\}} \abs{\alpha_\measSpikeIdxBis} \abs{u'(t_\measSpikeIdx - t_\measSpikeIdxBis)} \notag \\
		&\leq \normlinfty{\alpha} \sum_{\measSpikeIdxBis \in \measSupportIdx \setminus \{\measSpikeIdx\}} \left(\frac{1}{4\sigma^2\freq_c} + \frac{1}{\abs{t_\measSpikeIdx - t_\measSpikeIdxBis}} + \frac{1}{2\pi\freq_c\abs{t_\measSpikeIdx - t_\measSpikeIdxBis}^2}\right)\autocorrelation(t_\measSpikeIdx - t_\measSpikeIdxBis) \notag \\
		&\leq 2\normlinfty{\alpha} \left(\frac{1}{4\freq_c\sigma^2}\sum_{n = 1}^{\infty}\autocorrelation(n\Delta) + \sum_{n = 1}^{\infty} \frac{\autocorrelation(n\Delta)}{n\Delta}  + \sum_{n = 1}^{\infty} \frac{\autocorrelation(n\Delta)}{2\pi\freq_cn^2\Delta^2} \right) \notag \\
		&\leq \normlinfty{\alpha} \left[\frac{1}{2\freq_c\sigma^2} \sum_{n = 1}^{\infty} \exp\!\left(-\frac{\pi n\Delta^2}{4\sigma^2}\right)+ \left(\frac{2}{\Delta}+  \frac{1}{\pi\freq_c\Delta^2}\right)\sum_{n = 1}^{\infty} \frac{1}{n}\exp\!\left(-\frac{\pi n\Delta^2}{4\sigma^2}\right)\right] \label{eq: upper bound proof U1 invertible} \\
		&=  \normlinfty{\alpha}\left[\frac{\exp\!\left(-\frac{\pi \Delta^2}{4\sigma^2}\right)}{2\freq_c\sigma^2\left(1 - \exp\!\left(-\frac{\pi \Delta^2}{4\sigma^2}\right)\right)} -  \left(\frac{2}{\Delta}+  \frac{1}{\pi\freq_c\Delta^2}\right)\ln\left(1 - \exp\!\left(-\frac{\pi \Delta^2}{4\sigma^2}\right)\right)\right], \notag
\end{align}
where in \eqref{eq: upper bound proof U1 invertible} we used the fact that for $n \geq 1$, $1/n^2 \leq 1/n$ and $\autocorrelation(n\Delta) = \exp\!\left(-\frac{\pi n^2\Delta^2}{4\sigma^2}\right) \leq \exp\!\left(-\frac{\pi n\Delta^2}{4\sigma^2}\right)$.
Based on the upper bound \eqref{eq: upper bound proof U1 invertible} we can now conclude that
\begin{equation*}
	\normOp{\mathcal{U}_1} \leq \frac{\xi\!\left(\frac{\Delta}{2\sigma}\right)}{2\freq_c\sigma^2} + \varphi\!\left(\frac{\Delta}{2\sigma^2}\right)\left(\frac{2}{\Delta}+  \frac{1}{\pi\freq_c\Delta^2}\right).
\end{equation*}
Setting
\begin{equation}
	\forall x > 0, \quad \rho(x) \triangleq x^2\xi(x) = \frac{x^2\exp\!\left(-\pi x^2\right)}{1 - \exp\!\left(-\pi x^2\right)},
	\label{eq: equation rho}
\end{equation}
we can rewrite \eqref{eq: equation rho} as
\begin{equation*}
	\normOp{\mathcal{U}_1} \leq \frac{2\rho\left(\frac{\Delta}{2\sigma}\right)}{\freq_c\Delta^2} + \varphi\!\left(\frac{\Delta}{2\sigma^2}\right)\left(\frac{2}{\Delta}+  \frac{1}{\pi\freq_c\Delta^2}\right).
\end{equation*}
We can verify that $\rho$ is non-increasing on $(0, \infty)$, which finally yields
\begin{equation}
	\normOp{\mathcal{U}_1} \leq \frac{2\rho\left(2\right)}{\freq_c\Delta^2} + \varphi\!\left(2\right)\left(\frac{2}{\Delta}+  \frac{1}{\pi\freq_c\Delta^2}\right) \leq \frac{2\rho(2) + (2 + 1/\pi)\varphi(2)}{4\sigma}.
	\label{eq: operator norm U1}
\end{equation}
It remains to upper-bound $\normOp{\mathcal{V}_0}$. To this end, we fix $\beta \in \ell^\infty(\measSupportIdx)$ and define $\zeta = \mathcal{V}_0\beta$. As $v(0) = 0$ and
\begin{equation*}
	\forall t \in \R \!\setminus \!\{0\}, \quad \abs{v(t)} \leq \frac{\autocorrelation(t)}{4\sigma^2\freq_c},
\end{equation*}
we get
\begin{align*}
	\abs{\zeta_\measSpikeIdx} &\leq \sum_{\measSpikeIdxBis \in \measSupportIdx} \abs{\beta_\measSpikeIdxBis} \abs{v(t_\measSpikeIdx - t_\measSpikeIdxBis)} =  \sum_{\measSpikeIdxBis \in \measSupportIdx \setminus \{\measSpikeIdx\}} \abs{\beta_\measSpikeIdxBis} \abs{v(t_\measSpikeIdx - t_\measSpikeIdxBis)} \\
		&\leq \normlinfty{\beta}\sum_{\measSpikeIdxBis \in \measSupportIdx \setminus \{\measSpikeIdx\}} \frac{\autocorrelation(t_\measSpikeIdx - t_\measSpikeIdxBis)}{4\sigma^2\freq_c} \leq \frac{\normlinfty{\beta}}{2\sigma^2\freq_c}\sum_{n = 1}^\infty \autocorrelation(n\Delta) \\
		&\leq \frac{\normlinfty{\beta}}{2\freq_c\sigma^2}\sum_{n = 1}^\infty \exp\!\left(-\frac{\pi n\Delta^2}{4\sigma^2}\right) =  \frac{\normlinfty{\beta}}{2\freq_c\sigma^2} \frac{\exp\!\left(-\frac{\pi\Delta^2}{4\sigma^2}\right)}{1 - \exp\!\left(-\frac{\pi\Delta^2}{4\sigma^2}\right)},
\end{align*}
for all $\measSpikeIdx \in \measSupportIdx$.
This yields
\begin{equation}
	\normOp{\mathcal{V}_0} \leq \frac{\xi\!\left(\frac{\Delta}{2\sigma}\right)}{2\freq_c\sigma^2} = \frac{\rho\!\left(\frac{\Delta}{2\sigma}\right)}{2\sigma} \leq \frac{\rho(2)}{2\sigma},
	\label{eq: operator norm V0}
\end{equation}
where the last inequality follows from $\Delta > 1/\freq_c$, $\Delta > 4\sigma$, and the fact that $\rho$ is non-increasing on $(0, \infty)$.
Finally, using \eqref{eq: bound operator norm inverse}, \eqref{eq: operator norm I - U0}, $\eqref{eq: operator norm U1}$, and \eqref{eq: operator norm V0} in \eqref{eq: norm operator triangle inequality}, we obtain
\begin{equation*}
	\normOp{\mathcal{I} - \mathcal{W}} \leq \frac{\varphi(2)}{\pi} + \frac{\rho(2)\left[\rho(2) + (1 + 1/(2\pi))\varphi(2)\right]}{2\pi - 4(\varphi(2) + \pi\psi(2) + \pi\xi(2))} \leq 1.12 \cdot 10^{-6} < 1,
\end{equation*}
where $\mathcal{W} \triangleq \mathcal{U}_0 - \mathcal{V}_0\mathcal{V}_1^{-1}\mathcal{U}_1$.
Again, applying the Neumann expansion theorem, we can conclude that the operator $\mathcal{W} = \mathcal{U}_0 - \mathcal{V}_0\mathcal{V}_1^{-1}\mathcal{U}_1$ is invertible and that its inverse satisfies
\begin{equation}
	\normOp{\mathcal{W}^{-1}} \leq \frac{1}{1 - \normOp{\mathcal{I} - \mathcal{W}}} \leq \left(1 - \frac{\varphi(2)}{\pi} - \frac{\rho(2)\left[\rho(2) + (1 + 1/(2\pi))\varphi(2)\right]}{2\pi - 4(\varphi(2) + \pi\psi(2) + \pi\xi(2))}\right)^{-1}.
	\label{eq: bound inverse W}
\end{equation}
For later use, we record that for the choices $\alpha = (\mathcal{U}_0 - \mathcal{V}_0\mathcal{V}_1^{-1}\mathcal{U}_1)^{-1}\varepsilon$ and $\beta = -\mathcal{V}_1^{-1}\mathcal{U}_1\alpha$, we have
\begin{align}
	\normlinfty{\alpha} &\leq \normOp{\mathcal{W}^{-1}}\normlinfty{\varepsilon}  = \normOp{\mathcal{W}^{-1}} \notag \\
		&\leq \left(1 - \frac{\varphi(2)}{\pi} - \frac{\rho(2)\left[\rho(2) + (1 + 1/(2\pi))\varphi(2)\right]}{2\pi - 4(\varphi(2) + \pi\psi(2) + \pi\xi(2))}\right)^{-1} \leq 1.01 \label{eq: upper bound norm alpha}
\end{align}
and
\begin{align}
	\normlinfty{\beta} &\leq \normOp{\mathcal{V}_1^{-1}}\normOp{\mathcal{U}_1}\normlinfty{\alpha} \leq \frac{2\sigma^2}{\pi - 2(\varphi(2) + \pi\psi(2) + \pi\xi(2))} \cdot \frac{2\rho(2) + (2 + 1/\pi)\varphi(2)}{4\sigma} \notag \\
		&\hspace{1cm} \cdot \left(1 - \frac{\varphi(2)}{\pi} - \frac{\rho(2)\left[\rho(2) + (1 + 1/(2\pi))\varphi(2)\right]}{2\pi - 4(\varphi(2) + \pi\psi(2) + \pi\xi(2))}\right)^{-1} \leq 5.73 \cdot 10^{-6}\sigma.  \label{eq: upper bound norm beta} 
\end{align}
In the remainder of the proof, we exclusively consider $c_0$ with $\alpha = (\mathcal{U}_0 - \mathcal{V}_0\mathcal{V}_1^{-1}\mathcal{U}_1)^{-1}\varepsilon$ and $\beta = -\mathcal{V}_1^{-1}\mathcal{U}_1\alpha$.

%%%%%%%%%%%%%%%%%%%%%%%%%%%%%%%%%%%%%%%%
% Third part: Proof that the interpolating function is smaller than one 
%%%%%%%%%%%%%%%%%%%%%%%%%%%%%%%%%%%%%%%%
\subsection{$\abs{(\gaborOp^*c_0)(t)} < 1$ for all $t \in \R \!\setminus\! \measSupport$}
\label{subsection: show that the magnitude is strictly smaller than one outside the support}

%%%%%%%%%%%%%%%%%%%%%%%
% Away from the t_\ell
%%%%%%%%%%%%%%%%%%%%%%%
\subsubsection{$\abs{(\gaborOp^*c_0)(t)} < 1$ for all $t \in \R \!\setminus\! \bigcup_{\measSpikeIdx \in \measSupport} \left[t_\measSpikeIdx - \frac{1}{7\freq_c}, t_\measSpikeIdx + \frac{1}{7\freq_c}\right]$}
Take $\measSpikeIdx_0 \in \measSupportIdx$ and let $\measSpikeIdx_1 \in \measSupportIdx$ be the index of the point in $\measSupport$ that is closest to $t_{\measSpikeIdx_0}$ and satisfies $t_{\measSpikeIdx_1} > t_{\measSpikeIdx_0}$. Take $t \in \left(t_{\measSpikeIdx_0} + \frac{1}{7\freq_c}, t_{\measSpikeIdx_1} - \frac{1}{7\freq_c}\right)$ and note that the interval $\left(t_{\measSpikeIdx_0} + \frac{1}{7\freq_c}, t_{\measSpikeIdx_1} - \frac{1}{7\freq_c}\right)$ is non-empty because $\abs{t_{\measSpikeIdx_0} - t_{\measSpikeIdx_1}} \geq \Delta > \frac{1}{\freq_c} > \frac{2}{7\freq_c}$. Without loss of generality, we assume that $\abs{t - t_{\measSpikeIdx_0}} \leq \abs{t - t_{\measSpikeIdx_1}}$, which implies $\abs{t - t_{\measSpikeIdx_1}} \geq \abs{t_{\measSpikeIdx_1} - t_{\measSpikeIdx_0}}/2 \geq \Delta/2$.
We set $h \triangleq \abs{t - t_{\measSpikeIdx_0}}$ and note that $h \geq \frac{1}{7\freq_c}$. The following holds
\begin{align}
	\abs{(\gaborOp^*c_0)(t)} &\leq \sum_{\measSpikeIdx \in \measSupportIdx} \Big(\normlinfty{\alpha} \abs{u(t - t_\measSpikeIdx)} + \normlinfty{\beta} \abs{v(t - t_\measSpikeIdx)}\Big) \notag \\
		&\leq \sum_{\measSpikeIdx \in \measSupportIdx} \left(\normlinfty{\alpha} \frac{\autocorrelation(t - t_\measSpikeIdx)}{2\pi \freq_c \abs{t - t_\measSpikeIdx}} + \normlinfty{\beta} \frac{\autocorrelation(t - t_\measSpikeIdx)}{4\sigma^2 \freq_c} \right)\notag \\
		&\leq \normlinfty{\alpha} \frac{\autocorrelation(h)}{2\pi\freq_c h} + \normlinfty{\beta} \frac{\autocorrelation(h)}{4\sigma^2\freq_c} + \normlinfty{\alpha} \frac{\autocorrelation\left(\frac{\Delta}{2}\right)}{\pi\freq_c\Delta} + \normlinfty{\beta}\frac{\autocorrelation\left(\frac{\Delta}{2}\right)}{4\sigma^2\freq_c} \notag \\
		&\qquad\quad + \sum_{\measSpikeIdx \in \measSupportIdx \setminus \{\measSpikeIdx_0, \measSpikeIdx_1\}} \left(\normlinfty{\alpha} \frac{\autocorrelation(t - t_\measSpikeIdx)}{2\pi \freq_c \abs{t - t_\measSpikeIdx}} + \normlinfty{\beta} \frac{\autocorrelation(t - t_\measSpikeIdx)}{4\sigma^2 \freq_c} \right) \notag \\
		&\leq \normlinfty{\alpha} \frac{\autocorrelation(h)}{2\pi\freq_c h} + \normlinfty{\beta} \frac{\autocorrelation(h)}{4\sigma^2\freq_c} + \normlinfty{\alpha} \frac{\autocorrelation\left(\frac{\Delta}{2}\right)}{\pi\freq_c\Delta} + \normlinfty{\beta}\frac{\autocorrelation\left(\frac{\Delta}{2}\right)}{4\sigma^2\freq_c} \notag \\
		&\qquad\quad + 2\sum_{n =1}^\infty \left(\normlinfty{\alpha} \frac{\autocorrelation(n\Delta)}{2\pi \freq_c n\Delta} + \normlinfty{\beta} \frac{\autocorrelation(n\Delta)}{4\sigma^2 \freq_c}\right) \notag \\
		&\leq \normlinfty{\alpha}\frac{\autocorrelation\left(\frac{1}{7\freq_c}\right)}{2\pi/7} + \normlinfty{\beta}\frac{\autocorrelation\left(\frac{1}{7\freq_c}\right)}{4\sigma^2\freq_c} + \normlinfty{\alpha} \frac{\autocorrelation\left(\frac{\Delta}{2}\right)}{\pi\freq_c\Delta} + \normlinfty{\beta}\frac{\autocorrelation\left(\frac{\Delta}{2}\right)}{4\sigma^2\freq_c}  \label{eq: proof dual smaller than 1 far from the support 0} \\
		&\qquad\quad + \frac{\normlinfty{\alpha}}{\pi\freq_c\Delta}\varphi\!\left(\frac{\Delta}{2\sigma}\right) + \frac{\normlinfty{\beta}}{2\sigma^2\freq_c}\xi\!\left(\frac{\Delta}{2\sigma}\right) \label{eq: proof dual smaller than 1 far from the support 1} \\
		\hspace{1.9cm}&\leq \normlinfty{\alpha} \frac{7 \exp\!\left(-\frac{4\pi}{49}\right) }{2\pi}+ \normlinfty{\beta}\frac{\exp\!\left(-\frac{4\pi}{49}\right)}{\sigma} + \normlinfty{\alpha}\frac{\exp(-\pi)}{\pi} \label{eq: proof dual smaller than 1 far from the support 2}\\
		&\qquad\quad+ \normlinfty{\beta} \frac{\exp(-\pi)}{\sigma} + \normlinfty{\alpha}\frac{\varphi(2)}{\pi} + \normlinfty{\beta}\frac{2\xi(2)}{\sigma} \leq 0.876 < 1,  \label{eq: proof dual smaller than 1 far from the support 3}
\end{align}
where \eqref{eq: proof dual smaller than 1 far from the support 0} and \eqref{eq: proof dual smaller than 1 far from the support 1} follow from $h \geq \frac{1}{7\freq_c}$, and \eqref{eq: proof dual smaller than 1 far from the support 2} and \eqref{eq: proof dual smaller than 1 far from the support 3} can be derived invoking the assumptions $\Delta > 1/\freq_c$ and $\sigma = \frac{1}{4\freq_c}$.

%%%%%%%%%%%%%%%%%%%%%%%
% Close to the t_\ell
%%%%%%%%%%%%%%%%%%%%%%%
\subsection{$\abs{\gaborOp^*c_0}$ is concave on $\bigcup_{\measSpikeIdx \in \measSupportIdx} \left[t_\measSpikeIdx - \frac{1}{7\freq_c}, t_\measSpikeIdx + \frac{1}{7\freq_c}\right]$}
Let $\measSpikeIdx \in \measSupportIdx$. We show that $t \mapsto A(t) \triangleq \abs{(\gaborOp^*c_0)(t)}$ is strictly concave on $\left[t_\measSpikeIdx - \frac{1}{7\freq_c}, t_\measSpikeIdx + \frac{1}{7\freq_c}\right]$. Since $\abs{\autocorrelation}$, $\abs{\sinc}$, $\abs{\autocorrelation'}$, and $\abs{\sinc'}$ are all symmetric, $A$ is symmetric as well, and therefore, it suffices to show that $A''(t) < 0$ for $t \in \left[t_\measSpikeIdx, t_\measSpikeIdx + \frac{1}{7\freq_c}\right]$. Since $\abs{\varepsilon_\measSpikeIdx} = 1$, we can write
\begin{equation*}
	\forall t \in \R, \quad A(t) = \abs{(\gaborOp^*c_0)(t)} = \abs{\frac{(\gaborOp^*c_0)(t)}{\varepsilon_\measSpikeIdx}} = \sqrt{(A_R(t))^2 + (A_I(t))^2}
\end{equation*}
where $A_R(t) \triangleq \Re{\frac{(\gaborOp^*c_0)(t)}{\varepsilon_\measSpikeIdx}}$, $A_I(t) \triangleq \Im{\frac{(\gaborOp^*c_0)(t)}{\varepsilon_\measSpikeIdx}}$, for $t \in \R$.
With $\Lambda \triangleq \{t \in \R \colon $\linebreak[4]$A(t) \neq 0\}$ we have
\begin{equation*}
	\forall t \in \Lambda, \quad A''(t) = \frac{A_R''(t)A_R(t) + A_I''(t)A_I(t) + \abs{(\gaborOp^*c_0)'(t)}^2}{A(t)} - \frac{(A_R'(t)A_R(t) + A_I'(t)A_I(t))^2}{(A(t))^3}.
\end{equation*}
For $A$ to be concave on $\left[t_\measSpikeIdx, t_\measSpikeIdx + \frac{1}{7\freq_c}\right]$, it therefore suffices to show that
\begin{equation*}
	\forall t \in \left[t_\measSpikeIdx, t_\measSpikeIdx + \frac{1}{7\freq_c}\right], \quad A_R''(t)A_R(t) + A_I''(t)A_I(t) + \abs{(\gaborOp^*c_0)'(t)}^2 < 0.
\end{equation*} 
Let $t \in \left[t_\measSpikeIdx, t_\measSpikeIdx + \frac{1}{7\freq_c}\right]$. We have the following
\begin{align*}
	A_R(t) &= \sum_{\measSpikeIdxBis \in \measSupportIdx} \left(\Re{\frac{\alpha_\measSpikeIdxBis}{\varepsilon_\measSpikeIdx}} u(t - t_\measSpikeIdxBis) +  \Re{\frac{\beta_\measSpikeIdxBis}{\varepsilon_\measSpikeIdx}}v(t - t_\measSpikeIdxBis)\right) \\
		&= \Re{\frac{\alpha_\measSpikeIdx}{\varepsilon_\measSpikeIdx}}u(t - t_\measSpikeIdx) + \Re{\frac{\beta_\measSpikeIdx}{\varepsilon_\measSpikeIdx}}v(t - t_\measSpikeIdx) + \sum_{\measSpikeIdxBis \in \measSupportIdx \setminus \{\measSpikeIdx\}} \left(\Re{\frac{\alpha_\measSpikeIdxBis}{\varepsilon_\measSpikeIdx}} u(t - t_\measSpikeIdxBis) \right. \\
		&\hspace{3.5cm} \left. +  \Re{\frac{\beta_\measSpikeIdxBis}{\varepsilon_\measSpikeIdx}}v(t - t_\measSpikeIdxBis) \right) \\
		\hspace{0.2cm}&\geq  \Re{\frac{\alpha_\measSpikeIdx}{\varepsilon_\measSpikeIdx}}u(t - t_\measSpikeIdx) - \normlinfty{\beta}\abs{v(t - t_\measSpikeIdx)} - \sum_{\measSpikeIdxBis \in \measSupportIdx \setminus \{\measSpikeIdx\}} \Big(\normlinfty{\alpha} \abs{u(t - t_\measSpikeIdxBis)} \\
		&\hspace{3.5cm}+  \normlinfty{\beta} \abs{v(t - t_\measSpikeIdxBis)} \Big).
\end{align*}
With $\alpha = \mathcal{W}^{-1}\varepsilon = \varepsilon - (\mathcal{I} - \mathcal{W}^{-1})\varepsilon$, it follows that
\begin{align}
	\Re{\frac{\alpha_\measSpikeIdx}{\varepsilon_\measSpikeIdx}} &= 1 - \Re{\frac{\left[(\mathcal{I} - \mathcal{W}^{-1})\varepsilon\right]_\measSpikeIdx}{\varepsilon_\measSpikeIdx}} \geq 1 - \abs{\frac{\left[(\mathcal{I} - \mathcal{W}^{-1})\varepsilon\right]_\measSpikeIdx}{\varepsilon_\measSpikeIdx}} \notag \\
		&\geq 1 - \normOp{\mathcal{I} - \mathcal{W}^{-1}} \notag \\
		&= 1 -  \normOp{\mathcal{W}^{-1}(\mathcal{I} - \mathcal{W})} \geq 1 - \normOp{\mathcal{W}^{-1}}\normOp{\mathcal{I} - \mathcal{W}} \geq 0.999998, \label{eq: bound Re 2}
\end{align}
where \eqref{eq: bound Re 2} is due to \eqref{eq: bound inverse W}.
Next, it follows from $t - t_\measSpikeIdx \leq \frac{1}{7\freq_c}$ and $\sigma = \frac{1}{4\freq_c}$ that
\begin{align}
	u(t - t_\measSpikeIdx) = \autocorrelation(t - t_\measSpikeIdx)\sinc(2\pi\freq_c(t - t_\measSpikeIdx)) &= \exp\!\left(-\frac{\pi(t - t_\measSpikeIdx)^2}{4\sigma^2}\right)\sinc(2\pi\freq_c(t - t_\measSpikeIdx)) \notag \\
		&\geq \exp\!\left(-\frac{\pi}{49\freq_c^2 \cdot 4\sigma^2}\right)\sinc\!\left(\frac{2\pi}{7}\right) \notag \\
		&= \exp\!\left(-\frac{4\pi}{49}\right)\sinc\!\left(\frac{2\pi}{7}\right). \label{eq: lower bound u proof concave}
\end{align}
Since $\abs{\sinc} \leq 1$, we have
\begin{equation*}
	\normlinfty{\beta}\abs{v(t - t_\measSpikeIdx)} \leq \normlinfty{\beta}|\autocorrelation'(t_\measSpikeIdx - t)|\abs{\sinc(2\pi\freq_c(t - t_\measSpikeIdx))} \leq \normlinfty{\beta}\abs{\autocorrelation'(t - t_\measSpikeIdx)}.
\end{equation*}
As $\abs{\autocorrelation'}$ has its maxima at the points $\pm \sigma\sqrt{\frac{2}{\pi}}$ with corresponding maximum values \linebreak[4]$\frac{1}{\sigma}\exp\!\left(-\frac{1}{2}\right)\sqrt{\frac{\pi}{2}}$, we get
\begin{equation}
	\normlinfty{\beta}\abs{v(t - t_\measSpikeIdx)} \leq \frac{\normlinfty{\beta}}{\sigma}\exp\!\left(-\frac{1}{2}\right)\sqrt{\frac{\pi}{2}}. \label{eq: upper bound beta v proof concave}
\end{equation}
As for every $\measSpikeIdxBis \in \measSupportIdx \setminus \{\measSpikeIdx\}$, we have $\abs{t - t_\measSpikeIdxBis} \geq \abs{t_\measSpikeIdx - t_\measSpikeIdxBis} - \abs{t - t_\measSpikeIdx} \geq \Delta - \frac{1}{7\freq_c} > \frac{6}{7\freq_c}$, it holds that
\begin{align}
	\sum_{\measSpikeIdxBis \in \measSupportIdx \setminus \{\measSpikeIdx\}} &\Big(\normlinfty{\alpha}\abs{u(t - t_\measSpikeIdxBis)} + \normlinfty{\beta}\abs{v(t - t_\measSpikeIdxBis)}\Big) \leq \normlinfty{\alpha}\frac{7\autocorrelation\!\left(\frac{6}{7\freq_c}\right)}{12\pi} + \normlinfty{\beta}\frac{\autocorrelation\!\left(\frac{6}{7\freq_c}\right)}{4\sigma^2\freq_c} \notag \\
		&\qquad\quad + 2\sum_{n = 1}^\infty \left(\normlinfty{\alpha} \frac{\autocorrelation(n\Delta)}{2\pi\freq_c n\Delta} + \normlinfty{\beta} \frac{\autocorrelation(n\Delta)}{4\sigma^2\freq_c}\right) \notag \\
		&\hspace{7cm}\leq \normlinfty{\alpha} \frac{7\exp\!\left(-\pi\left(\frac{12}{7}\right)^2\right)}{12\pi} \notag\\
			&\quad\quad\ + \frac{\normlinfty{\beta}}{\sigma}\exp\!\left(-\pi\left(\frac{12}{7}\right)^2\right) + \normlinfty{\alpha}\frac{\varphi(2)}{\pi} + \frac{\normlinfty{\beta}}{\sigma} 2\xi(2). \label{eq: upper bound summation proof concave}
\end{align}
Combining \eqref{eq: upper bound norm alpha}, \eqref{eq: upper bound norm beta}, \eqref{eq: bound Re 2}, \eqref{eq: lower bound u proof concave}, \eqref{eq: upper bound beta v proof concave}, and \eqref{eq: upper bound summation proof concave} yields
\begin{equation}
	A_R(t) \geq 0.673.
	\label{eq: upper bound AR}
\end{equation}
Next, we derive an upper bound on $A_R''(t)$:
\begin{align*}
	A_R''(t) &= \sum_{\measSpikeIdxBis \in \measSupportIdx} \left(\Re{\frac{\alpha_\measSpikeIdxBis}{\varepsilon_\measSpikeIdxBis}}u''(t - t_\measSpikeIdxBis) + \Re{\frac{\beta_\measSpikeIdxBis}{\varepsilon_\measSpikeIdxBis}}v''(t - t_\measSpikeIdxBis)\right) \\
		&\leq \Re{\frac{\alpha_\measSpikeIdx}{\varepsilon_\measSpikeIdx}}u''(t - t_\measSpikeIdx) + \normlinfty{\beta}\abs{v''(t - t_\measSpikeIdx)} \\
		&\hspace{3.5cm}+ \sum_{\measSpikeIdxBis \in \measSupportIdx \setminus \{\measSpikeIdx\}} \Big(\normlinfty{\alpha}\abs{u''(t - t_\measSpikeIdxBis)} + \normlinfty{\beta} \abs{v''(t - t_\measSpikeIdxBis)}\Big).
\end{align*}
For all $t \in \R$, we have
\begin{equation}
	u''(t) = \autocorrelation''(t)\sinc(2\pi\freq_c t) + 4\pi\freq_c\autocorrelation'(t)\sinc'(2\pi\freq_c t) + (2\pi\freq_c)^2\autocorrelation(t)\sinc''(2\pi\freq_c t). 
	\label{eq: equality u''}
\end{equation}
The function $t \mapsto \autocorrelation''(t)\sinc(2\pi\freq_c t)$ is non-decreasing on $\big[0, \frac{1}{7\freq_c}\big]$, since, on this interval, $\autocorrelation''$ is negative and non-decreasing and $t \mapsto \sinc(2\pi\freq_c t)$ is positive and non-increasing. The function $t \mapsto 4\pi\freq_c\autocorrelation'(t)\sinc'(2\pi\freq_c t)$ is non-decreasing on $\big[0, \frac{1}{7\freq_c}\big]$, as both $\autocorrelation'$ and $t \mapsto \sinc'(2\pi\freq_c t)$ are negative and non-increasing on this interval. The function $t \mapsto (2\pi\freq_c)^2\autocorrelation(t)\sinc''(2\pi\freq_c t)$ is non-decreasing on $\big[0, \frac{1}{7\freq_c}\big]$, as, on this interval, $\autocorrelation$ is positive and non-increasing, and $t \mapsto \sinc''(2\pi\freq_c t)$ is negative and non-decreasing. Taken together, it follows that $u''$ is non-decreasing on $\big[0, \frac{1}{7\freq_c}\big]$.
Since  $t - t_\measSpikeIdx \in \big[0, \frac{1}{7\freq_c}\big]$, we then have
\begin{align*}
	u''(t - t_\measSpikeIdx) &\leq u''\!\left(\frac{1}{7\freq_c}\right) = \freq_c^2\left[8\pi\left(\frac{8\pi}{49} - 1\right)\sinc\left(\frac{2\pi}{7}\right) - \frac{32\pi^2}{7}\sinc'\left(\frac{2\pi}{7}\right) \right.\\
		&\hspace{2cm} \left. + 4\pi^2\sinc''\left(\frac{2\pi}{7}\right)\right]\exp\!\left(-\frac{4\pi}{49}\right),
\end{align*}
where we used $\sigma = \frac{1}{4\freq_c}$.
Combined with \eqref{eq: bound Re 2}, this yields
\begin{equation*}
	\Re{\frac{\alpha_\measSpikeIdx}{\varepsilon_\measSpikeIdx}}u''(t - t_\measSpikeIdx) \leq -6.46 \freq_c^2.
\end{equation*}
Since 
\begin{equation*}
	\forall x \in \R \setminus \{0\}, \quad \sinc'(x) = \frac{\cos(x)}{x} - \frac{\sin(x)}{x^2} \quad \text{ and }\qquad \sinc''(x) = -\frac{\sin(x)}{x} - \frac{2\cos(x)}{x^2} + \frac{2\sin(x)}{x^3},
\end{equation*}
we get the following from \eqref{eq: equality u''}:
\begin{align*}
	\abs{u''(t)} &\leq \left(\frac{\pi}{2\sigma^2} + \frac{\pi^2\abs{t}^2}{4\sigma^4}\right)\frac{1}{2\pi\freq_c \abs{t}}\autocorrelation(t)+ 4\pi\freq_c\frac{\pi\abs{t}}{2\sigma^2}\left(\frac{1}{2\pi\freq_c\abs{t}} + \frac{1}{(2\pi\freq_c)^2\abs{t}^2}\right)\autocorrelation(t) \\
		&\quad + (2\pi\freq_c)^2 \left(\frac{1}{2\pi\freq_c\abs{t}} + \frac{2}{(2\pi\freq_c)^2\abs{t}^2} + \frac{2}{(2\pi\freq_c)^3\abs{t}^3)} \right)\autocorrelation(t) \\
		&= \left[\frac{\pi\abs{t}}{8\freq_c\sigma^4} + \frac{\pi}{\sigma^2} + \left(2\pi\freq_c + \frac{3}{4\sigma^2\freq_c}\right)\frac{1}{\abs{t}} + \frac{2}{\abs{t}^2} + \frac{1}{\pi\freq_c\abs{t}^3}\right]\autocorrelation(t).
\end{align*}
As a result, we have the following chain of inequalities
\begin{align*}
	&\sum_{\measSpikeIdxBis \in \measSupportIdx \setminus \{\measSpikeIdx\}} \abs{u''(t - t_\measSpikeIdxBis)} \leq 2\sum_{n = 1}^\infty \left[\pi\frac{n\Delta}{8\freq_c\sigma^4} + \frac{\pi}{\sigma^2} + \left(2\pi\freq_c + \frac{3}{4\sigma^2\freq_c}\right)\frac{1}{n\Delta} + \frac{2}{n^2\Delta^2} + \frac{1}{\pi\freq_c n^3\Delta^3}\right]\autocorrelation(n\Delta) \\
		&\!\!\leq \frac{\pi\Delta}{4\sigma^4\freq_c}\sum_{n = 1}^\infty n\exp\!\left(-\frac{\pi n\Delta^2}{4\sigma^2}\right) + \frac{2\pi}{\sigma^2}\sum_{n = 1}^\infty \exp\!\left(-\frac{\pi n\Delta^2}{4\sigma^2}\right) \\
		&\hspace{3cm} + \left(\frac{4\pi\freq_c}{\Delta} + \frac{3}{2\sigma^2\freq_c\Delta}\right)\sum_{n = 1}^\infty \frac{1}{n}\exp\!\left(-\frac{\pi n\Delta^2}{4\sigma^2}\right) + \frac{4}{\Delta^2}\sum_{n = 1}^\infty \frac{1}{n}\exp\!\left(-\frac{\pi n\Delta^2}{4\sigma^2}\right) \\
		&\hspace{3cm}+ \frac{2}{\pi\freq_c\Delta^3}\sum_{n = 1}^\infty \frac{1}{n}\exp\!\left(-\frac{\pi n\Delta^2}{4\sigma^2}\right) \\
		&\!\!\leq \frac{\pi\Delta}{4\sigma^4\freq_c}\psi\!\left(\frac{\Delta}{2\sigma}\!\right) + \frac{2\pi}{\sigma^2}\xi\!\left(\frac{\Delta}{2\sigma}\!\right) + \left(\!\frac{4\pi\freq_c}{\Delta} + \frac{3}{2\sigma^2\freq_c\Delta}\!\right)\varphi\!\left(\frac{\Delta}{2\sigma}\!\right) + \frac{4}{\Delta^2}\varphi\!\left(\frac{\Delta}{2\sigma}\!\right) + \frac{2}{\pi\freq_c\Delta^3}\varphi\!\left(\frac{\Delta}{2\sigma}\!\right). 
\end{align*}
We can now define $\eta(x) \triangleq x^2\psi(x)$, for all $x > 0$, which can be shown to be non-increasing on $(0, \infty)$. This yields
\begin{align*}
		\sum_{\measSpikeIdxBis \in \measSupportIdx \setminus \{\measSpikeIdx\}} \abs{u''(t - t_\measSpikeIdxBis)} &\leq \frac{\pi}{\Delta\sigma^2\freq_c}\eta\!\left(\frac{\Delta}{2\sigma}\!\right) + \frac{2\pi}{\sigma^2}\xi\!\left(\frac{\Delta}{2\sigma}\!\right) + \left(\!\frac{4\pi\freq_c}{\Delta} + \frac{3}{2\sigma^2\freq_c\Delta}\!\right)\varphi\!\left(\frac{\Delta}{2\sigma}\!\right) \\
		&\hspace{2cm} + \frac{4}{\Delta^2}\varphi\!\left(\frac{\Delta}{2\sigma}\!\right) + \frac{2}{\pi\freq_c\Delta^3}\varphi\!\left(\frac{\Delta}{2\sigma}\!\right) \\
		&\leq \left[16\pi\eta(2) + 32\pi\xi(2) + \left(4\pi + 28 + \frac{2}{\pi}\right)\varphi(2)\right]\freq_c^2. \\[-0.4cm]
\end{align*}
Combined with \eqref{eq: upper bound norm alpha}, we get 
\begin{equation*}
	\normlinfty{\alpha} \sum_{\measSpikeIdxBis \in \measSupportIdx \setminus \{\measSpikeIdx\}} \abs{u''(t - t_\measSpikeIdxBis)} \leq 1.196 \cdot 10^{-3}\freq_c^2. \\[0.25cm]
\end{equation*}
We have, for all $t \in \R$, that
\begin{equation*}
	v''(t) = \autocorrelation'''(-t)\sinc(2\pi\freq_c t) - 4\pi\freq_c\autocorrelation''(-t)\sinc'(2\pi\freq_c t) + (2\pi\freq_c)^2\autocorrelation'(-t)\sinc''(2\pi\freq_c t). \\[0.25cm]
\end{equation*}
Therefore, we get
\begin{align*}
	\abs{v''(t)} &\leq \left(\frac{3\pi^2}{4\sigma^4}\abs{t} + \frac{\pi^3}{8\sigma^6}\abs{t}^3\right)\frac{1}{2\pi\freq_c\abs{t}}\autocorrelation\left(t\right)  \\
	&\hspace{2cm} + 4\pi\freq_c\left(\frac{\pi^2\abs{t}^2}{4\sigma^4} + \frac{\pi}{2\sigma^2}\right)\left(\frac{1}{2\pi\freq_c\abs{t}} + \frac{1}{(2\pi\freq_c)^2\abs{t}^2}\right)\autocorrelation(t) \\
	&\hspace{2cm} + (2\pi\freq_c)^2\frac{\pi}{2\sigma^2}\abs{t}\left(\frac{1}{2\pi\freq_c\abs{t}} + \frac{2}{(2\pi\freq_c)^2\abs{t}^2} + \frac{2}{(2\pi\freq_c)^3\abs{t}^3}\right) \\
	&\leq \left(\frac{\pi^2\abs{t}^2}{16\sigma^6\freq_c} + \frac{\pi^2\abs{t}}{2\sigma^4} + \frac{5\pi}{8\freq_c\sigma^4} + \frac{\pi^2\freq_c}{\sigma^2} + \frac{2\pi}{\sigma^2\abs{t}} + \frac{1}{\freq_c\sigma^2\abs{t}^2}\right)\autocorrelation(t), \\[-0.4cm]
\end{align*}
which leads to the following chain of inequalities
\begin{align}
	\sum_{\measSpikeIdxBis \in \measSupportIdx \setminus \{\measSpikeIdx\}} \!\!\!\abs{v''(t_\measSpikeIdx - t_\measSpikeIdxBis)} &\leq 2\sum_{n = 1}^\infty \!\left(\frac{\pi^2n^2\Delta^2}{16\sigma^6\freq_c} + \frac{\pi^2 n\Delta}{2\sigma^4} + \frac{5\pi}{8\freq_c\sigma^4} + \frac{\pi^2\freq_c}{\sigma^2} + \frac{2\pi}{\sigma^2 n\Delta} + \frac{1}{\freq_c\sigma^2 n^2\Delta^2}\right) \!\autocorrelation(n\Delta) \notag \\
	&\leq \frac{\pi^2\Delta^2}{8\sigma^6\freq_c}\sum_{n = 1}^\infty n^2\exp\!\left(-\frac{\pi n\Delta^2}{4\sigma^2}\right) + \frac{\pi^2\Delta}{\sigma^4}\sum_{n = 1}^\infty n\exp\!\left(-\frac{\pi n\Delta^2}{4\sigma^2}\right) \label{eq: last line ref first term} \\
	&\qquad\qquad + \frac{5\pi}{4\freq_c\sigma^4}\sum_{n = 1}^\infty \exp\!\left(-\frac{\pi n\Delta^2}{4\sigma^2}\right) + \frac{2\pi^2\freq_c}{\sigma^2}\sum_{n = 1}^\infty \exp\!\left(-\frac{\pi n\Delta^2}{4\sigma^2}\right) \notag \\
	&\qquad\qquad + \frac{4\pi}{\sigma^2\Delta}\sum_{n = 1}^\infty \frac{1}{n}\exp\!\left(-\frac{\pi n\Delta^2}{4\sigma^2}\right) + \frac{1}{\freq_c\sigma^2\Delta^2}\sum_{n = 1}^\infty \frac{1}{n}\exp\!\left(-\frac{\pi n\Delta^2}{4\sigma^2}\right). \notag
\end{align}
Now, in \eqref{eq: last line ref first term}, we recognize the power series
\begin{equation*}
	\forall x \in (-1, 1), \quad \sum_{n = 1}^\infty n^2 x^n = \frac{x(x+1)}{(1 - x)^3}
\end{equation*}
evaluated at $x = \exp\!\left(-\frac{\pi\Delta^2}{4\sigma^2}\right)$, which leads us to set
\begin{equation*}
	\forall x > 0, \quad \Gamma(x) \triangleq \frac{\exp(-\pi x^2)(\exp(-\pi x^2) + 1)}{(1 - \exp(-\pi x^2))^3}.
\end{equation*} 
This yields 
\begin{align*}
	\sum_{\measSpikeIdxBis \in \measSupportIdx \setminus \{\measSpikeIdx\}} \abs{v''(t_\measSpikeIdx - t_\measSpikeIdxBis)} &\leq \frac{\pi^2\Delta^2}{8\sigma^6\freq_c}\Gamma\!\left(\frac{\Delta}{2\sigma}\right) + \frac{\pi^2\Delta}{\sigma^4}\psi\!\left(\frac{\Delta}{2\sigma}\right) + \left(\frac{5\pi}{4\freq_c\sigma^4} + \frac{2\pi^2\freq_c}{\sigma^2}\right)\xi\!\left(\frac{\Delta}{2\sigma}\right) \\
	&\qquad\qquad + \left(\frac{4\pi}{\sigma^2\Delta} + \frac{1}{\freq_c\sigma^2\Delta^2}\right)\varphi\!\left(\frac{\Delta}{2\sigma}\right) \\
	&\leq \frac{\pi^2}{2\freq_c\sigma^4}\gamma\!\left(\frac{\Delta}{2\sigma}\right) + \frac{2\pi^2}{3\sigma^2\Delta}\eta\!\left(\frac{\Delta}{2\sigma}\right) + \left(\frac{5\pi}{4\freq_c\sigma^4} + \frac{2\pi^2\freq_c}{\sigma^2}\right)\xi\!\left(\frac{\Delta}{2\sigma}\right) \\
	&\qquad\qquad + \left(\frac{4\pi}{\sigma^2\Delta} + \frac{1}{\freq_c\sigma^2\Delta^2}\right)\varphi\!\left(\frac{\Delta}{2\sigma}\right) \\
	&\leq \Big[128\pi^2 \gamma(2) + 64\pi^2\eta(2) + (320\pi + 32\pi^2)\xi(2) + (64\pi+16)\varphi(2)\Big]\freq_c^3,
\end{align*}
where we set $\gamma(x) \triangleq x^2\Gamma(x)$, for $x > 0$, and used the fact that $\gamma$ is non-increasing on $(0, \infty)$. 
Combined with \eqref{eq: upper bound norm beta}, this results in
\begin{equation}
	\normlinfty{\beta}\sum_{\measSpikeIdxBis \in \measSupportIdx \setminus \{\measSpikeIdx\}} \abs{v''(t_\measSpikeIdx - t_\measSpikeIdxBis)} \leq 8.34 \cdot 10^{-8}\freq_c^2.
	\label{eq: multiplying this bound}
\end{equation}
Finally, we have $A_R''(t) \leq -22.1\freq_c^2$. Multiplying \eqref{eq: multiplying this bound} with \eqref{eq: upper bound AR} leads to $A_R(t)A_R''(t) \leq -14.6\freq_c^2$. Exactly the same line of reasoning can be applied to get $A_I(t)A_I''(t) \leq -14.6\freq_c^2$, and therefore,\begin{equation}
	A_R(t)A_R''(t) + A_I(t)A_I''(t) \leq -29.1\freq_c^2.
	\label{eq: intermediary step 1}
\end{equation} 
It remains to find an upper bound on $\abs{(\gaborOp^*c_0)'(t)}^2$. We have 
\begin{align}
	\abs{(\gaborOp^*c_0)'(t)} &\leq \sum_{\measSpikeIdxBis \in \measSupportIdx} \Big(\normlinfty{\alpha} \abs{u'(t - t_\measSpikeIdxBis)} + \normlinfty{\beta} \abs{v'(t - t_\measSpikeIdxBis)}\Big) \notag \\
		&\leq \normlinfty{\alpha} \abs{u'(t - t_\measSpikeIdx)} + \normlinfty{\beta} \abs{v'(t - t_\measSpikeIdx)} \label{eq: upper bound A' 0} \\
		&\hspace{2cm} + \sum_{\measSpikeIdxBis \in \measSupportIdx \setminus \{\measSpikeIdx\}} \Big(\normlinfty{\alpha} \abs{u'(t - t_\measSpikeIdxBis)} + \normlinfty{\beta} \abs{v'(t - t_\measSpikeIdxBis)} \Big). \label{eq: upper bound A'}
\end{align}
We can derive upper bounds for the terms in \eqref{eq: upper bound A' 0} by noting that
\begin{align}
	\abs{u'(t)} \leq u'\!\left(\frac{1}{7\freq_c}\right) &= \autocorrelation'\!\left(\frac{1}{7\freq_c}\right)\sinc\!\left(\frac{2\pi}{7}\right) + 2\pi\freq_c\autocorrelation\left(\frac{1}{7\freq_c}\right)\sinc'\left(\frac{2\pi}{7}\right) \label{eq: upper bound u' first} \\
		&= \exp\!\left(-\frac{4\pi}{49}\right)\sinc\!\left(\frac{2\pi}{7}\right) + 2\pi\freq_c \exp\!\left(-\frac{4\pi}{49}\right)\sinc'\left(\frac{2\pi}{7}\right) \notag
\end{align}
and
\begin{align}
	\abs{v'(t)} &\leq v'(0) = -\autocorrelation''(0) = \frac{\pi}{2\sigma^2}\label{eq: upper bound v' first}
\end{align}
for all $t \in \left[0, \frac{1}{7\freq_c}\right]$. Indeed, we have seen that $u''(t) \leq 0$ for all $t \in \left[0, \frac{1}{7\freq_c}\right]$, which implies that $u'$ is non-increasing on $\left[0, \frac{1}{7\freq_c}\right]$. As $u'(0) = 0$, this means that $u'$ is non-positive on $\left[0, \frac{1}{7\freq_c}\right]$. Therefore, $\abs{u'}$ is non-decreasing on $\left[0, \frac{1}{7\freq_c}\right]$, which results in~\eqref{eq: upper bound u' first}. The inequality in \eqref{eq: upper bound v' first} follows from the fact that $\abs{v'}$ is decreasing on $\left[0, \frac{1}{7\freq_c}\right]$, as we show next. We have 
\begin{align*}
	\forall t \in \R, \quad v'(t) &= -\autocorrelation''(-t)\sinc(2\pi\freq_c t) + 2\pi\freq_c \autocorrelation'(-t)\sinc'(2\pi\freq_c t) \\
		&= -\autocorrelation''(t)\sinc(2\pi\freq_c t) - 2\pi\freq_c \autocorrelation'(t)\sinc'(2\pi\freq_c t).
\end{align*}
As the functions $t \mapsto \autocorrelation''(t)\sinc(2\pi\freq_c t)$ and $t \mapsto 2\pi\freq_c\autocorrelation'(t)\sinc'(2\pi\freq_c t)$ were shown to both be non-decreasing on $\left[0, \frac{1}{7\freq_c}\right]$, we get that $v'$ is non-increasing on $\left[0, \frac{1}{7\freq_c}\right]$. Moreover, we have
\begin{equation*}
	v'\!\left(\frac{1}{7\freq_c}\right) \geq 3.43\freq_c^2 \geq 0.
\end{equation*}
Hence, $v'$ is non-negative on $\left[0, \frac{1}{7\freq_c}\right)$. This allows us to conclude that $\abs{v'}$ is non-increasing on $\left[0, \frac{1}{7\freq_c}\right]$, which establishes \eqref{eq: upper bound v' first}.
It remains to upper-bound the term in \eqref{eq: upper bound A'}, which is done as follows:
\begin{align}
	 \hspace{-0.1cm}\sum_{\measSpikeIdxBis \in \measSupportIdx \setminus \{\measSpikeIdx\}}& \Big(\normlinfty{\alpha} \abs{u'(t - t_\measSpikeIdxBis)} + \normlinfty{\beta} \abs{v'(t - t_\measSpikeIdxBis)} \Big) \notag \\
	 	&\leq  \sum_{\measSpikeIdxBis \in \measSupportIdx \setminus \{\measSpikeIdx\}} \Big(\normlinfty{\alpha} U(t - t_\measSpikeIdxBis) + \normlinfty{\beta} V(t - t_\measSpikeIdxBis) \Big) \notag \\
	 	&\leq \normlinfty{\alpha} U\!\left(\frac{6}{7\freq_C}\right) + \normlinfty{\beta}V\!\left(\frac{6}{7\freq_c}\right) + \normlinfty{\alpha} \left[\frac{2\rho(2) + (2+1/\pi)\varphi(2)}{\Delta}\right] \notag \\
		&\hspace{2cm} \normlinfty{\beta} \left[\frac{\pi}{2\sigma^2} \left(\frac{2}{\pi} \varphi(2) + 2\psi(2) + 2\xi(2)\right)\right] \hspace{4.5cm}\notag \\
		&\leq \normlinfty{\alpha} \left(\frac{1}{4\sigma^2\freq_c} + \frac{7\freq_c}{6} + \frac{49\freq_c}{72\pi}\right)\autocorrelation\!\left(\frac{6}{7\freq_c}\right) + \normlinfty{\beta}\left(\frac{7}{12\sigma^2} + \frac{\pi}{2\sigma^2} + \frac{3\pi}{28\sigma^4\freq_c^2}\right)\autocorrelation\!\left(\frac{6}{7\freq_c}\right) \notag \\
		&\hspace{2cm} + \normlinfty{\alpha} \big[2\rho(2) + (2+1/\pi)\varphi(2)\big]\freq_c + \frac{\normlinfty{\beta}}{\sigma^2} \big(\varphi(2) + \pi\psi(2) + \pi\xi(2)\big) \notag \\
		&\leq \normlinfty{\alpha}\left(4+\frac{7}{6}+\frac{49}{72\pi}\right)\freq_c\exp\!\left(-\frac{576\pi}{49}\right) + \frac{\normlinfty{\beta}}{\sigma}\left(\frac{7}{3} + 2\pi + \frac{48\pi}{7}\right)\freq_c\exp\!\left(-\frac{576\pi}{49}\right) \notag \\
		&\hspace{2cm} + \normlinfty{\alpha} \big[2\rho(2) + (2+1/\pi)\varphi(2)\big]\freq_c + \frac{\normlinfty{\beta}}{\sigma} \big(4\varphi(2) + 4\pi\psi(2) + 4\pi\xi(2)\big)\freq_c \notag \\
		&\leq 4.05\freq_c. \label{eq: intermediary step 2}
\end{align}
Putting \eqref{eq: intermediary step 1} and \eqref{eq: intermediary step 2} together yields
\begin{equation*}
	A_R(t)A_R''(t) + A_I(t)A_I''(t) + \abs{(\gaborOp^*c_0)'(t)}^2 \leq -12.68\freq_c^2 < 0,
\end{equation*}
which completes the proof.

% !TEX root = SuperresolutionJournal.tex

%%%%%%%%%%%%%%%%%%%%%%%%%%%%%%%
% Proof main result for the torus
%%%%%%%%%%%%%%%%%%%%%%%%%%%%%%%
\section{Proof of Theorem~\ref{thm: exact recovery torus}}
\label{sec: proof torus theorem exact recovery}

We could prove Theorem~\ref{thm: exact recovery torus} following similar arguments as in the proof of Theorem~\ref{thm: exact recovery}, namely by choosing a function $c_0 \in L^\infty(\torus \times \Z)$ of the form
\begin{equation*}
	\forall \tau \in \torus,\ \forall k \in \{-K, \ldots, K\}, \qquad c_0(\tau, k) = \sum_{\measSpikeIdx = 1}^L \Big(\alpha_\ell\window(t_\measSpikeIdx - \tau)e^{-2\pi i kt_\measSpikeIdx} + \beta_\measSpikeIdx \window'(t_\measSpikeIdx - \tau)e^{-2\pi ikt_\measSpikeIdx}\Big)
\end{equation*}
and determining $\alpha \triangleq \{\alpha_\measSpikeIdx\}_{\measSpikeIdx = 1}^L$ and $\beta \triangleq \{\beta_\measSpikeIdx\}_{\measSpikeIdx = 1}^L$ such that the uniqueness conditions \eqref{eq: interpolation} and \eqref{eq: constraint interpolation} are met.
It turns out, however, that a more direct path is possible, namely by choosing a function $c_0 \in L^\infty(\torus \times \Z)$ of slightly different form and then reducing to a case already treated in the proof of Theorem~\ref{thm: exact recovery}; this approach leads to a substantially shorter proof.
We start by defining this function $c_0 \in L^\infty(\torus \times \Z)$ as
\begin{equation*}
	\forall \tau \in \torus,\ \forall k \in \{-K, \ldots, K\}, \quad c_0(\tau, k) \triangleq \frac{1}{2K+1}\sum_{\ell = 1}^L \Big(\alpha_\ell p(\tau - t_\ell)e^{-2\pi i kt_\ell} + \beta_\ell q(\tau - t_\ell) e^{-2\pi ikt_\ell}\Big),
\end{equation*}
where $p \colon \torus \rightarrow \C$  and $q \colon \torus \rightarrow \C$ are defined (for reasons that will become clear later) as
\begin{equation*}
	p(\tau) \triangleq \sum_{n \in \Z} p_n e^{2\pi in\tau} \qquad \text{and} \qquad  q(\tau) \triangleq \sum_{n \in \Z} q_n e^{2\pi in\tau}, 
\end{equation*}
for $\tau \in \torus$, with 
\begin{align*}
	p_n &\triangleq \sqrt{2\sigma}\exp\left(-2\pi\sigma^2 n^2\right)\int_{-1/2}^{1/2} \exp\left(-4\pi\sigma^2  u^2\right)\exp\left(-8\pi\sigma^2  nu\right)\mathrm{d}u \\
	q_n &\triangleq -2\pi i\sigma\sqrt{2\sigma}\exp\left(-2\pi\sigma^2 n^2\right)\int_{-1/2}^{1/2} (u+n)\exp\left(-4\pi\sigma^2  u^2\right)\exp\left(-8\pi\sigma^2  nu\right)\mathrm{d}u,
\end{align*}
for $n \in \Z$.
We first verify that the resulting function $c_0$ is, indeed, in $L^\infty(\torus \times \Z)$. This is accomplished by showing that the functions $p$ and $q$ are well-defined and are in $L^\infty(\torus)$, that is, by verifying that $\sum_{n \in \Z} \abs{p_n} < \infty$ and $\sum_{n \in \Z} \abs{q_n} < \infty$. Indeed, we have
\begin{align}
	\sum_{n \in \Z} \abs{p_n} &= \sqrt{2\sigma}\sum_{n \in \Z}  \exp\left(-2\pi\sigma^2 n^2\right)\int_{-1/2}^{1/2} \exp\left(-4\pi\sigma^2  u^2\right)\exp\left(-8\pi\sigma^2  nu\right)\mathrm{d}u \notag \\
		&\leq \sqrt{2\sigma} \sum_{n \in \Z} \int_{-1/2}^{1/2} \exp\left(-2\pi\sigma^2(4u + n)n\right) \mathrm{d}u \label{eq: definition function c convergence}\\
		& = C\ + \sqrt{2\sigma}\sum_{\substack{n \in \Z \\ \abs{n} \geq 3}} \int_{-1/2}^{1/2} \exp\left(-2\pi\sigma^2(4u + n)n\right)\mathrm{d}u, \label{eq: definition function c convergence bis}
\end{align}
where \eqref{eq: definition function c convergence} follows from $\exp(-4\pi\sigma^2 u^2) \leq 1$, for all $u \in [-1/2, 1/2]$, and we set
\begin{equation*}
	C \triangleq \sqrt{2\sigma}\sum_{n = -2}^2 \int_{-1/2}^{1/2} \exp\left(-2\pi\sigma^2(4u + n)n\right) \mathrm{d}u < \infty.
\end{equation*}
To see that the sum  in~\eqref{eq: definition function c convergence bis} is finite, first note that for $n \geq 3$ and $u \in [-1/2, 1/2]$, we have
\begin{equation*}
	(4u + n)n = (4u + n)\!\abs{n} \geq (-2 + n)\!\abs{n} \geq \abs{n}.
\end{equation*}
Similarly, for $n \leq -3$, we get
\begin{equation*}
	(4u + n)n = -(4u + n)\!\abs{n} \geq -(2 + n)\!\abs{n} \geq \abs{n}.
\end{equation*}
It therefore follows that
\begin{align*}
	 \sum_{\substack{n \in \Z \\ \abs{n} \geq 3}} \int_{-1/2}^{1/2}\exp\left(-2\pi\sigma^2(4u + n)n\right) \mathrm{d}u &\leq  \sum_{\substack{n \in \Z \\ \abs{n} \geq 3}}  \int_{-1/2}^{1/2}\exp(-2\pi\sigma^2\!\abs{n}) \mathrm{d}u  \notag \\
	 	&=  2\sum_{n =  3}^\infty \exp(-2\pi\sigma^2n) < \infty.
\end{align*}
This concludes the proof of $\sum_{n \in \Z} \abs{p_n} < \infty$. Similar reasoning shows that $\sum_{n \in \Z} \abs{q_n} < \infty$.
For $t \in \torus$, we then have
\begin{align}
	(\gaborOp^*c_0)(t) &= \sum_{k = -K}^K \int_{-1/2}^{1/2} c_0(\tau, k)\window(t - \tau)e^{2\pi i kt}\mathrm{d}\tau \label{eq: dual function Agc0}\\
		&= \frac{1}{2K+1}\sum_{\ell = 1}^L \left[ \alpha_\ell \left(\int_{-1/2}^{1/2} p(\tau - t_\ell)\window(t - \tau)\mathrm{d}\tau\right) D_K(t - t_\measSpikeIdx) \right. \notag \\
			&\hspace{2cm} \left. + \beta_\ell \left(\int_{-1/2}^{1/2} q(\tau - t_\ell)\window(t - \tau)\mathrm{d}\tau\right) D_K(t - t_\measSpikeIdx) \right] \notag \\
		&= \frac{1}{2K+1}\sum_{\ell = 1}^L \Big(\alpha_\ell P(t - t_\ell)D_K(t - t_\measSpikeIdx)+ \beta_\ell Q(t - t_\ell)D_K(t - t_\measSpikeIdx)\Big), \notag
\end{align}
where $D_K$ is the Dirichlet kernel, that is,
\begin{equation*}
	\forall t \in \torus, \quad D_K(t) \triangleq \frac{\sin((2K+1)\pi t)}{\sin(\pi t)},
\end{equation*}
and $P$ and $Q$ designate the cross-correlation between the functions $p$ and $\window$, and $q$ and $\window$, respectively, that is,
\begin{align*}
	\forall t \in \torus, \quad P(t) &\triangleq \int_{-1/2}^{1/2} p(\tau)\window(t - \tau)\mathrm{d}\tau \\
	\forall t \in \torus, \quad Q(t) &\triangleq \int_{-1/2}^{1/2} q(\tau)\window(t - \tau)\mathrm{d}\tau.
\end{align*}
Note that since $\window$  and $\tau \mapsto c_0(\tau, k)$, $k \in \{-K, \ldots, K\}$, are all $1$-periodic, we can integrate over the interval $[-1/2, 1/2]$ in \eqref{eq: dual function Agc0} (instead of $[0, 1]$ as done in \eqref{eq: adjoint operator expression}) and in the remainder of the proof. 
We next derive an alternative expression for the function $P$.
As in \eqref{eq: window torus}, we have
\begin{equation*}
	\forall t \in \torus, \quad \window(t) = \sum_{n \in \Z} \window_n e^{2\pi i nt},
\end{equation*}
where $\window_n \triangleq \sqrt{2\sigma} \exp(-2\pi \sigma^2n^2)$, for all $n \in \Z$. The $n$th Fourier series coefficient of $P$ is then given by $p_n\window_n$, and we show that the Fourier series $\sum_{n \in \Z} p_n\window_n e^{2\pi int}$ converges to $P(t)$ for all $t \in \torus$ using Dirichlet's theorem~\cite[Thm.~2.1]{Follard1992}, whose applicability conditions we verify next.
Since $\sum_{n \in \Z} \abs{p_n} < \infty$, $\sum_{n \in \Z} \abs{\window_n} < \infty$, and $\abs{e^{2\pi int}} = 1$, for all $t \in \torus$, by the Weierstrass M-test, the series $\sum_{n \in \Z} p_ne^{2\pi int}$ and $\sum_{n \in \Z} g_ne^{2\pi int}$ converge absolutely and uniformly. This implies that the functions $p$ and $\window$ are both continuous on $\torus$. Moreover, $\window$ is continuously differentiable on $\R$ as $\sum_{n \in \Z} \abs{n\window_n} < \infty$. As a result, the function $P$ is continuously differentiable on $\R$, and by application of Dirichlet's theorem, it follows that
\begin{equation*}
	\forall t \in \torus, \quad P(t) = \sum_{n \in \Z} \window_n p_n e^{2\pi int}.
\end{equation*}
For $n \in \Z$, we have
\begin{align*}
	\window_n p_n &= \sqrt{2\sigma}\exp\left(-2\pi\sigma^2n^2\right)\sqrt{2\sigma}\exp\left(-2\pi\sigma^2 n^2\right)\int_{-1/2}^{1/2} \exp\left(-4\pi\sigma^2  u^2\right)\exp\left(-8\pi\sigma^2  nu\right)\mathrm{d}u \\
		&= 2\sigma \exp\left(-4\pi\sigma^2 n^2\right)\int_{-1/2}^{1/2} \exp\left(-4\pi\sigma^2 u^2\right)\exp\left(-8\pi\sigma^2 nu\right)\mathrm{d}u \\
		&= 2\sigma \int_{-1/2}^{1/2} \exp\left(-4\pi\sigma^2(u + n)^2\right) \mathrm{d}u.
\end{align*}
Now fix $t \in [0, 1)$. If $t = 0$, we have
\begin{align*}
	P(t) = P(0) &= \sum_{n \in \Z} \window_n p_n = 2\sigma\sum_{n \in \Z} \int_{-1/2}^{1/2} \exp\left(-4\pi\sigma^2(u+n)^2\right) \mathrm{d}u \\
		&= 2\sigma\sum_{n \in \Z} \int_{n-1/2}^{n+1/2} \exp(-4\pi\sigma^2v^2)\mathrm{d}v  = 2\sigma\int_{-\infty}^\infty \exp(-4\pi\sigma^2v^2)\mathrm{d}v = 1,
\end{align*}
and if $t \neq 0$, we get
\begin{align}
	P(t) &= \sum_{n \in \Z} \left(2\sigma\int_{-1/2}^{1/2} \exp\left(-4\pi\sigma^2(u+n)^2\right) \mathrm{d}u\right) e^{2\pi int} \notag \\
		&= \int_{-1/2}^{1/2} 2\sigma\sum_{n \in \Z}  \exp\left(-4\pi\sigma^2(u+n)^2\right) e^{2\pi int} \mathrm{d}u \label{eq: licit integration term by term torus} \\
		&= \int_{-1/2}^{1/2} 2\sigma\sum_{n \in \Z}  \exp\left(-4\pi\sigma^2(u+n)^2\right) e^{2\pi i(u+n)t} e^{-2\pi i tu}\mathrm{d}u \\
		&= \int_{-1/2}^{1/2} \psi(u)\varphi(-u)\mathrm{d}u = \xi(0).
\end{align}
Here, $\varphi$ is the $1$-periodic function defined by
$\varphi(u) \triangleq e^{2\pi it u}$, $u \in [-1/2, 1/2)$, and $\psi$ and $\xi$ are given by
\begin{equation*}
	\begin{array}{lll}
		\forall u \in \torus, \qquad &\psi(u) &\triangleq \displaystyle 2\sigma\sum_{n \in \Z} \exp\left(-4\pi\sigma^2(u+n)^2\right) e^{2\pi it(u+n)} \\
		\forall x \in \torus, \qquad &\xi(x) &\triangleq \displaystyle \int_{-1/2}^{1/2} \varphi(u)\psi(x-u)\mathrm{d}u.
	\end{array}
\end{equation*}
The order of summation and integration in~\eqref{eq: licit integration term by term torus} is interchangeable thanks to
\begin{align*}
	\sum_{n \in \Z} \int_{-1/2}^{1/2} \exp\left(-4\pi\sigma^2(u+n)^2\right) \mathrm{d}u &= \sum_{n \in \Z} \int_{n-1/2}^{n+1/2} \exp\left(-4\pi\sigma^2 v^2\right)\mathrm{d}v \\
		&= \int_{-\infty}^\infty \exp\left(-4\pi\sigma^2 v^2\right) \mathrm{d}v < \infty.
\end{align*}
The function $\xi$ can be expanded into a Fourier series. Specifically, it holds that
\begin{equation*}
	\forall x \in \torus, \quad \xi(x) = \sum_{n \in \Z} \varphi_n\psi_n e^{2\pi inx}, 
\end{equation*}
where $\varphi_n$ and $\psi_n$ denote the $n$th Fourier series coefficients of $\varphi$ and $\psi$, respectively.
We have
\begin{equation*}
	\varphi_n = \int_{-1/2}^{1/2} \varphi(x)e^{-2\pi i nx}\mathrm{d}x = \int_{-1/2}^{1/2} e^{2\pi i(t-n)x}\mathrm{d}x = \frac{(-1)^n\sin(\pi t)}{\pi(t-n)}
\end{equation*}
and
\begin{align*}
	\psi_n = \int_{-1/2}^{1/2} \psi(x)e^{-2\pi inx} \mathrm{d}x &= \int_{-1/2}^{1/2} 2\sigma\sum_{m \in \Z} \exp\left(-4\pi\sigma^2(x+m)^2\right)e^{2\pi it(x+m)} e^{-2\pi inx}\mathrm{d}x \\
		&= 2\sigma \sum_{m \in \Z} \int_{m-1/2}^{m+1/2} \exp\left(-4\pi\sigma^2v^2\right) e^{2\pi itv} e^{-2\pi inv}\mathrm{d}v \\
		&=2\sigma \int_{-\infty}^\infty \exp(-4\pi\sigma^2v^2)e^{2\pi i(t-n)v}\mathrm{d}v \\
		&= \exp\left(-\frac{\pi(t-n)^2}{4\sigma^2}\right),
\end{align*}
for $n \in \Z$. It follows that
\begin{equation*}
	\forall t \in \torus, \quad P(t) = \xi(0) = \sum_{n \in \Z} \varphi_n\psi_n = \sum_{n \in \Z} \frac{(-1)^n\sin(\pi t)}{\pi(t-n)}\exp\left(-\frac{\pi(t-n)^2}{4\sigma^2}\right).
\end{equation*}
We then get
\begin{align*}	
	\forall t \in \torus, \quad P(t)D_K(t) &= \sum_{n \in \Z} \frac{(-1)^n\sin((2K+1)\pi t)}{\pi(t - n)}\exp\left(-\frac{\pi(t-n)^2}{4\sigma^2}\right) \\
		&= (2K+1)\sum_{n \in \Z} \sinc((2K+1)\pi (t-n))\autocorrelation(t-n),
\end{align*}
where $\autocorrelation$ was defined in~\eqref{eq: autocorrelation function}. Similarly, we can show that 
\begin{equation*}	
	\forall t \in \torus, \quad Q(t)D_K(t) = (2K+1)\sum_{n \in \Z} \sinc((2K+1)\pi (t-n))\autocorrelation'(n-t).
\end{equation*}
This finally yields
\begin{align*}
	\forall t \in \torus, \quad (\gaborOp^*c_0)(t) = \sum_{\measSpikeIdx = 1}^L\sum_{n \in \Z} & \Big(\alpha_\measSpikeIdx \sinc((2K+1)\pi(t - t_\measSpikeIdx -n))\autocorrelation(t - t_\measSpikeIdx - n) \\
		&\hspace{1cm}+ \beta_\measSpikeIdx \sinc((2K+1)\pi(t - t_\measSpikeIdx -n))\autocorrelation'(n - t + t_\measSpikeIdx)\Big) \\
		=\sum_{\measSpikeIdx = 1}^L\sum_{n \in \Z} & \Big(\alpha_\measSpikeIdx u(t - t_\measSpikeIdx -n) + \beta_\measSpikeIdx v(t - t_\measSpikeIdx -n)\Big),
\end{align*}
where we set 
\begin{align}
	\forall t \in \R, \quad u(t) &\triangleq \autocorrelation(t)\sinc(2\pi\freq_c' t) \label{eq: definition function u for torus} \\
	\forall t \in \R, \quad v(t) &\triangleq \autocorrelation'(-t)\sinc(2\pi\freq_c' t) = \frac{\pi t}{2\sigma^2}\autocorrelation(t)\sinc(2\pi\freq_c' t) \label{eq: definition function v for torus}
\end{align}
as in \eqref{eq: definition function u} and \eqref{eq: definition function v} with $\freq_c' \triangleq K+1/2$. Analogously to the proof of Theorem~\ref{thm: exact recovery} we can define the operators
\begin{equation*}
	\begin{array}{llll}
		\mathcal{U}_p \colon & \C^L & \longrightarrow & \C^L \\
						 & \alpha = \{\alpha_\measSpikeIdx\}_{\measSpikeIdx =1}^L & \longmapsto & \left\{\sum\limits_{\measSpikeIdxBis = 1}^L\sum\limits_{n \in \Z} \alpha_\measSpikeIdxBis u^{(p)}(t_\measSpikeIdx - t_\measSpikeIdxBis - n)\right\}_{\measSpikeIdx = 1}^L
	\end{array}
\end{equation*}
and
\begin{equation*}
	\begin{array}{llll}
		\mathcal{V}_p \colon & \C^L & \longrightarrow & \C^L \\
						 & \beta = \{\beta_\measSpikeIdx\}_{\measSpikeIdx \in \measSupportIdx} & \longmapsto & \left\{\sum\limits_{\measSpikeIdxBis = 1}^L\sum_{n \in \Z} \beta_\measSpikeIdxBis v^{(p)}(t_\measSpikeIdx - t_\measSpikeIdxBis - n)\right\}_{\measSpikeIdx = 1}^L,
	\end{array}
\end{equation*}
where $p \in \{0, 1\}$. Then, given $\varepsilon = \{\varepsilon_\measSpikeIdx\}_{\measSpikeIdx = 1}^L$ with $\abs{\varepsilon_\measSpikeIdx} = 1$, $\measSpikeIdx \in \{1, 2, \ldots, L\}$, we can solve the equation system
\begin{equation}
	\left\{\begin{array}{l}
		\mathcal{U}_0\alpha + \mathcal{V}_0\beta =  \varepsilon\\
		\mathcal{U}_1\alpha + \mathcal{V}_1\beta = 0
	\end{array}\right.
	\label{eq: inversion operator condition 2 torus}
\end{equation}
to determine $\alpha \in \C^L$ and $\beta \in \C^L$ such that the interpolation conditions $(\gaborOp^*c_0)(t_\measSpikeIdx) = \varepsilon_\measSpikeIdx$, for all $\measSpikeIdx \in \{1, 2, \ldots, L\}$, are satisfied and $\gaborOp^*c_0$ has a local extremum at every $t_\measSpikeIdx$, $\measSpikeIdx \in \{1, 2, \ldots, L\}$.
As in the proof of Theorem~\ref{thm: exact recovery}, if the operators $\mathcal{V}_1$ and $\mathcal{U}_0 - \mathcal{V}_0\mathcal{V}_1^{-1}\mathcal{U}_1$ are invertible, then one can choose 
$\alpha = (\mathcal{U}_0 - \mathcal{V}_0\mathcal{V}_1^{-1}\mathcal{U}_1)^{-1}\varepsilon$ and  $\beta = -\mathcal{V}_1^{-1}\mathcal{U}_1\alpha$ to satisfy \eqref{eq: inversion operator condition 2 torus}. 
Proving the invertibility of $\mathcal{V}_1$ and $\mathcal{U}_0 - \mathcal{V}_0\mathcal{V}_1^{-1}\mathcal{U}_1$ is essentially identical to the corresponding part in the proof of Theorem~\ref{thm: exact recovery} with $\freq_c$ replaced by $\freq_c'$. Verifying that $\abs{(\gaborOp^*c_0)(t)} < 1$ for all $t \in \torus \setminus \measSupport$, where $\measSupport = \{t_\measSpikeIdx\}_{\measSpikeIdx = 1}^L$, is also done in a fashion similar to the proof of Theorem~\ref{thm: exact recovery} (see Section \ref{subsection: show that the magnitude is strictly smaller than one outside the support}).

\newpage
\renewcommand{\abstractname}{Acknowledgments}
\begin{abstract}
	The authors are indebted to H.~G.~Feichtinger for valuable comments, in particular, for pointing out an error in an earlier version of the manuscript, J.-P.~Kahane for answering questions on results in \cite{Kahane2011}, M.~Lerjen for his technical support with the numerical results, and C.~Chenot for inspiring discussions. We also acknowledge the detailed and insightful comments of the anonymous reviewers.
\end{abstract}

\bibliographystyle{IEEEtran} 
\bibliography{ref}

% Generated by IEEEtran.bst, version: 1.13 (2008/09/30)
\begin{thebibliography}{10}
\providecommand{\url}[1]{#1}
\csname url@samestyle\endcsname
\providecommand{\newblock}{\relax}
\providecommand{\bibinfo}[2]{#2}
\providecommand{\BIBentrySTDinterwordspacing}{\spaceskip=0pt\relax}
\providecommand{\BIBentryALTinterwordstretchfactor}{4}
\providecommand{\BIBentryALTinterwordspacing}{\spaceskip=\fontdimen2\font plus
\BIBentryALTinterwordstretchfactor\fontdimen3\font minus
  \fontdimen4\font\relax}
\providecommand{\BIBforeignlanguage}[2]{{%
\expandafter\ifx\csname l@#1\endcsname\relax
\typeout{** WARNING: IEEEtran.bst: No hyphenation pattern has been}%
\typeout{** loaded for the language `#1'. Using the pattern for}%
\typeout{** the default language instead.}%
\else
\language=\csname l@#1\endcsname
\fi
#2}}
\providecommand{\BIBdecl}{\relax}
\BIBdecl

\bibitem{Aubel2014}
C.~Aubel, D.~Stotz, and H.~B\"olcskei, ``Super-resolution from short-time
  {F}ourier transform measurements,'' in \emph{Proceedings of IEEE
  International Conference on Acoustics, Speech, and Signal Processing
  (ICASSP)}, Florence, Italy, May 2014, pp. 36--40.

\bibitem{Donoho1991}
D.~L. Donoho, ``Super-resolution via sparsity constraints,'' \emph{SIAM Journal
  on Mathematical Analysis}, vol.~23, no.~5, pp. 1303--1331, Sep. 1992.

\bibitem{Candes2012}
E.~J. Cand\`es and C.~Fernandez-Granda, ``Towards a mathematical theory of
  super-resolution,'' \emph{Communications on Pure and Applied Mathematics},
  vol.~67, no.~6, pp. 906--956, June 2014.

\bibitem{FernandezGranda2015}
\BIBentryALTinterwordspacing
C.~Fernandez-Granda, ``Super-resolution of point sources via convex
  programming,'' \emph{Information and Inference: A Journal of the IMA}, Apr.
  2016. [Online]. Available:
  \url{http://imaiai.oxfordjournals.org/content/early/2016/04/20/imaiai.iaw005.abstract}
\BIBentrySTDinterwordspacing

\bibitem{Logan1965}
B.~F. Logan, ``Properties of high-pass signals,'' Ph.D. dissertation, Columbia
  University, New York, NY, USA, 1965.

\bibitem{Logan1977}
------, ``Bandlimited functions bounded below over an interval,'' \emph{Notices
  of the American Mathematical Society}, vol.~24, p. A331, 1977.

\bibitem{Beurling1966}
A.~Beurling, ``Local harmonic analysis with some applications to differential
  operators,'' in \emph{The Collected Works of Arne Beurling: Volume~2,
  Harmonic Analysis}, L.~Carleson, P.~Malliavin, J.~Neuberger, and J.~Werner,
  Eds.\hskip 1em plus 0.5em minus 0.4em\relax Boston, MA, USA: Birkh\"auser,
  1966, pp. 299--315.

\bibitem{Beurling1989-1}
------, ``Balayage of {Fourier-Stieltjes} transforms,'' in \emph{The Collected
  Works of Arne Beurling: Volume~2, Harmonic Analysis}, L.~Carleson,
  P.~Malliavin, J.~Neuberger, and J.~Werner, Eds.\hskip 1em plus 0.5em minus
  0.4em\relax Boston, MA, USA: Birkh\"auser, 1989, pp. 341--350.

\bibitem{Beurling1989-2}
------, ``{V.~I}nterpolation for an interval on {$\R^1$}. {1.~A} density theorem.
  {Mittag-Leffler Lectures on Harmonic Analysis},'' in \emph{The Collected
  Works of Arne Beurling: Volume~2, Harmonic Analysis}, L.~Carleson,
  P.~Malliavin, J.~Neuberger, and J.~Werner, Eds.\hskip 1em plus 0.5em minus
  0.4em\relax Boston, MA, USA: Birkh\"auser, 1989, pp. 351--359.

\bibitem{Donoho1992}
D.~L. Donoho and B.~F. Logan, ``Signal recovery and the large sieve,''
  \emph{SIAM Journal on Applied Mathematics}, vol.~52, no.~2, pp. 577--591,
  Apr. 1992.

\bibitem{Vetterli2002}
M.~Vetterli, P.~Marziliano, and T.~Blu, ``Sampling signals with finite rate of
  innovation,'' \emph{IEEE Transactions on Signal Processing}, vol.~50, no.~6,
  pp. 1417--1428, June 2002.

\bibitem{Dragotti2007}
P.~L. Dragotti, M.~Vetterli, and T.~Blu, ``Sampling moments and reconstructing
  signals of finite rate of innovation: {Shannon} meets {Strang-Fix},''
  \emph{IEEE Transactions on Signal Processing}, vol.~55, no.~5, pp.
  1741--1757, May 2007.

\bibitem{Tang2012}
G.~Tang, B.~N. Bhaskar, P.~Shah, and B.~Recht, ``Compressed sensing off the
  grid,'' \emph{IEEE Transactions on Information Theory}, vol.~59, no.~11, pp.
  7465--7490, Nov. 2013.

\bibitem{Bredies2012}
K.~Bredies and H.~K. Pikkarainen, ``Inverse problems in spaces of measures,''
  \emph{ESAIM: Control, Optimisation and Calculus of Variations}, vol.~19,
  no.~1, pp. 190--218, Jan. 2013.

\bibitem{Castro2012}
Y.~de~Castro and F.~Gamboa, ``Exact reconstruction using {B}eurling minimal
  extrapolation,'' \emph{Journal of Mathematical Analysis and Applications},
  vol. 395, no.~1, pp. 336--354, Nov. 2012.

\bibitem{Duval2013}
V.~Duval and G.~Peyr\'e, ``Exact support recovery for sparse spikes
  deconvolution,'' \emph{Journal of the Society for the Foundations of
  Computational Mathematics}, pp. 1--41, Oct. 2014.

\bibitem{Kahane2011}
J.-P. Kahane, ``Analyse et synth\`ese harmoniques,'' in \emph{Histoires de
  math\'ematiques (Journ\'ees X-UPS 2011)}, \'Ecole Polytechnique, Palaiseau,
  France, May 2011, pp. 17--53.

\bibitem{Schmidt1986}
R.~O. Schmidt, ``Multiple emitter location and signal parameter estimation,''
  \emph{IEEE Transactions on Antennas and Propagation}, vol.~34, no.~3, pp.
  276--280, Mar. 1986.

\bibitem{Roy1986}
R.~Roy, A.~Paulraj, and T.~Kailath, ``{ESPRIT} -- {A} subspace rotation
  approach to estimation of parameters of cisoids in noise,'' \emph{IEEE
  Transactions on Acoustics, Speech, and Signal Processing}, vol.~34, no.~5,
  pp. 1340--1342, Oct. 1986.

\bibitem{AuYeung2015}
E.~Au-Yeung and J.~J. Benedetto, ``Generalized {F}ourier frames in terms of
  balayage,'' \emph{Journal of Fourier Analysis and Applications}, vol.~21,
  no.~3, pp. 472--508, June 2015.

\bibitem{Osborne1975}
M.~S. Osborne, ``On the {S}chwartz-{B}ruhat space and the {P}aley-{W}iener
  theorem for locally compact abelian groups,'' \emph{Journal of Functional
  Analysis}, vol.~19, no.~1, pp. 40–--49, May 1975.

\bibitem{Dunford1988}
N.~Dunford and J.~T. Schwartz, \emph{Linear operators - {P}art {I}: {G}eneral
  theory}.\hskip 1em plus 0.5em minus 0.4em\relax Hoboken, NJ, USA: Wiley
  Classics Library, 1988.

\bibitem{Brezis2010}
H.~Brezis, \emph{{F}unctional analysis, {S}obolev spaces and partial
  differential equations}.\hskip 1em plus 0.5em minus 0.4em\relax New York, NY,
  USA: Springer Science, 2010.

\bibitem{Groechenig2000}
K.~Gr\"ochenig, \emph{Foundations of Time-Frequency Analysis}, ser. Appl.
  Numer. Harmonic Anal., J.~J. Benedetto, Ed.\hskip 1em plus 0.5em minus
  0.4em\relax Boston, MA, USA: Birkh\"auser, 2000.

\bibitem{Blu2008}
T.~Blu, P.-L. Dragotti, M.~Vetterli, P.~Marziliano, and L.~Coulot, ``Sparse
  sampling of signal innovations,'' \emph{IEEE Signal Processing Magazine},
  vol.~25, no.~2, pp. 31--40, Mar. 2008.

\bibitem{Barbotin2014}
Y.~Barbotin, ``Parametric estimation of sparse channels: {T}heory and
  applications,'' Ph.D. dissertation, \'Ecole Polytechnique F\'ed\'erale de
  Lausanne (EPFL), Lausanne, Switzerland, Jan. 2014.

\bibitem{Benedetto2016}
\BIBentryALTinterwordspacing
J.~J. Benedetto and W.~Li, ``Super-resolution by means of {B}eurling minimal
  extrapolation,'' \emph{arXiv}, Mar. 2016. [Online]. Available:
  \url{http://arxiv.org/abs/1601.05761}
\BIBentrySTDinterwordspacing

\bibitem{Borwein2005}
J.~Borwein and Q.~Zhu, \emph{Techniques of Variational Analysis}, ser. {CMS}
  Books in Mathematics, J.~Borwein and K.~Dilcher, Eds.\hskip 1em plus 0.5em
  minus 0.4em\relax New York, NY, USA: Springer, 2005.

\bibitem{Rudin1987}
W.~Rudin, \emph{Real and Complex Analysis}, 3rd~ed., P.~R. Devine, Ed.\hskip
  1em plus 0.5em minus 0.4em\relax New York, NY, USA: McGraw-Hill, 1987.

\bibitem{Dumitrescu2007}
B.~Dumitrescu, \emph{Positive trigonometric polynomials and signal processing
  applications}.\hskip 1em plus 0.5em minus 0.4em\relax Dordrecht, The
  Netherlands: Springer, 2007.

\bibitem{Colmez2009}
P.~Colmez, \emph{\'El\'ements d'analyse et d'alg\`ebre (et de th\'eorie des
  nombres)}.\hskip 1em plus 0.5em minus 0.4em\relax Palaiseau: Les \'editions
  de l'\'Ecole Polytechnique, 2009.

\bibitem{Kubrusly2012}
C.~S. Kubrusly, \emph{Spectral theory of operators on Hilbert spaces}.\hskip
  1em plus 0.5em minus 0.4em\relax New York, NY, USA: Birkh\"auser, 2012.

\bibitem{Follard1992}
G.~B. Folland, \emph{Fourier analysis and its applications}, ser. Pure and
  Applied Undergraduate Texts.\hskip 1em plus 0.5em minus 0.4em\relax Pacific
  Grove, CA, USA: American Mathematical Society, 1992, vol.~4.

\end{thebibliography}

\end{document}